\keywords{Petri nets, Data nets, State equation, Linear equations, Sets with atoms, Orbit finite sets}
\theoremstyle{plain} 
\definecolor{mycol1}{RGB}{245, 121, 58}
\definecolor{mycol2}{RGB}{169, 90, 161}
\definecolor{mycol3}{RGB}{15, 32, 128}
\newcommand{\mycolorTrzy}{mycol3}
\newcommand{\mycolorDwa}{mycol2}
\newcommand{\mycolorOne}{mycol1}
\newcommand{\red}[1]{\textcolor{\mycolorTrzy}{#1}}
\newcommand{\green}[1]{\textcolor{\mycolorDwa}{#1}}
\newcommand{\greenWork}[1]{{#1}}
\newcommand{\redWork}[1]{{#1}}
\newcommand{\blue}[1]{\textcolor{mycol1}{#1}}
\newcommand{\blueWork}[1]{{#1}}
\newcommand{\brownWork}[1]{{#1}}
\newcommand{\decproblem}[4]{
\vspace{2mm}
\begin{tcolorbox}
\hspace{-0.5cm}\begin{tabular}{r|c}
{\sc #1} &\\
      {{\sc input}:}
& \begin{minipage}[t]{#4\textwidth} #2.
   \end{minipage}
   \\[5mm]
   {{\sc output}:} & \begin{minipage}[t]{#4\textwidth} #3
   \end{minipage}
\end{tabular}
\end{tcolorbox}
}
\newcommand{\para}[1]{{\bf #1.}}
\newcommand{\binomial}[2]{\genfrac(){0pt}{1}{#1}{#2}}
\newcommand{\PTIME}{{\sc \bf P{\text -}Time}\xspace}
\newcommand{\NPTIME}{{\sc \bf NP}\xspace}
\newcommand{\ETIME}{{\sc \bf Exp{\text -}Time}\xspace}
\newcommand{\NETIME}{{\sc \bf NExp{\text -}Time}\xspace}
\newcommand{\AckermanCOMPLETE}{Ackermann-complete}
\newcommand{\kset}[2]{{[#2]^{#1}}}
\newcommand{\ktuple}[2]{#2^{#1}}
\newcommand{\setof}[2]{\{ #1 \mid #2\}}
\newcommand{\size}[1]{|#1|}
\newcommand{\abs}[1]{\lvert#1\rvert}
\newcommand{\inorm}[1]{\lVert#1\rVert_{\infty}}
\newcommand{\norm}[2]{\lVert#1\rVert_{#2}}
\newcommand{\myeq}[2]{\mathrel{\stackrel{\makebox[0pt]{\mbox{\normalfont\tiny 
{#2}}}}{#1}}}
\newcommand{\eqdef}{\myeq{=}{def}}
\newcommand{\set}[1]{\{#1\}}
\newcommand{\setSizeK}{\brownWork{k}}
\newcommand{\dimension}{{\brownWork{d}}}
\newcommand{\Numbers}{\mathbb{K}}
\newcommand{\N}{{\mathbb N}}
\newcommand{\Z}{{\mathbb Z}}
\newcommand{\Q}{{\mathbb Q}}
\newcommand{\ALG}{ALG}
\newcommand{\Auth}{\Pi}
\newcommand{\reductionMatrix}[3]{\overline{
        \lceil#1,#2,#3\rceil}}
\newcommand{\simple}{simple}
\newcommand{\groupG}{\Gamma}
\newcommand{\Proj}{{\mathfrak{P}}}
\newcommand{\support}[1]{{\text{\sc support}({#1})}}
\newcommand{\smooth}[1]{smooth(#1)}
\newcommand{\perm}[1]{\text{\sc Perm}(#1)}
\newcommand{\eqs}[1]{\text{\sc Eq}(#1)}
\newcommand{\cut}[2]{{#1}_{|-#2}} 
\newcommand{\revcut}[2]{{#1}_{|+#2}}
\newcommand{\setLdwa}{\breve{\setL}}
\newcommand{\da}{\blueWork{\alpha}}
\newcommand{\db}{\blueWork{\beta}}
\newcommand{\dc}{\blueWork{\gamma}}
\newcommand{\dd}{\blueWork{\delta}}
\newcommand{\de}{\blueWork{\varepsilon}}
\renewcommand{\vec}[1]{\redWork{\mathbf #1}}
\newcommand{\target}{\vec v}
\newcommand{\sums}[2]{#1\text{-\sc Sums}(#2)}
\newcommand{\sumsBez}[1]{#1\text{-\sc Sums}}
\newcommand{\EStikz}[5]{
\begin{tikzpicture}
[place/.style={circle,draw=mycol1!50,fill=mycol1!20,thick,inner sep=0pt,minimum size=4mm}]
\node[place]      (A2)            {$#2$};
\node[place]      (B2)       [below =of A2]  {$#3$};
\node[place]      (C2)      [right =of A2] {$#4$};
\node[place]      (D2)      [below =of C2] {$#5$};

\draw (A2) to node [left] {$-#1$} (B2) to node [above] {$#1$} (D2) to node [right] {$-#1$} (C2) 
to node [above] {$#1$} (A2);
\end{tikzpicture}
}
\newcommand{\VStikz}[4]{
\begin{tikzpicture}
[place/.style={circle,draw=mycol1!50,fill=mycol1!20,thick,inner sep=0pt,minimum size=4mm}]
\node[place]      (A1)         {$#2$};
\node[place]      (B1)       [left=of A1]  {$#3$};
\node[place]      (C1)      [right=of A1] {$#4$};
\draw (B1) to node [above] {$#1$} (A1) to node [above] {$-#1$} (C1) ;
\end{tikzpicture}
}
\newcommand{\setInput}{\mathcal{I}}
\newcommand{\setA}{\mathcal A}
\newcommand{\setB}{{\mathcal B}}
\newcommand{\setC}{{\mathcal C}}
\newcommand{\setD}{\mathcal D}
\newcommand{\setS}{\mathcal{S}}
\newcommand{\setX}{\mathcal{X}}
\newcommand{\setY}{\mathcal{Y}}
\newcommand{\systemOfEquationsU}{\mathcal{U}}
\newcommand{\setU}{\mathcal{U}}
\newcommand{\setP}{\mathcal{P}}
\newcommand{\setM}{\mathcal{M}}
\newcommand{\setL}{\mathcal{L}}
\newcommand{\familyL}{\mathcal{F}}
\newcommand{\setG}{\mathcal{G}}
\newcommand{\familyF}{\mathcal{F}}
\newcommand{\setHypergraphsH}{\blueWork{\mathcal{H}}}
\newcommand{\he}[1]{\text{\sc Edges}(#1)}
\newcommand{\setVertices}{{\redWork{V}}}
\newcommand{\setVerticesW}{\redWork{W}}
\newcommand{\placeHolder}{{\Box}}
\newcommand{\setWeight}[2]{%
    \IfEqCase{#1}{%
        {2}{{\blueWork{\mathcal E_{#2}}}}%
        {1}{{\blueWork{\mathcal V_{#2}}}}%
    }[\PackageError{tree}{Undefined option to tree: #1}{}]%
}%
\newcommand{\setSimpleGraph}[2]{%
    \IfEqCase{#1}{%
        {1}{{\redWork{ \mathcal{S}_{#2}^{\bullet} }}}%
        {2}{{\redWork{ \mathcal{S}_{#2}^{\bullet - \bullet} }}}%
    }[\PackageError{tree}{Undefined option to tree: #1}{}]%
}%
\newcommand{\setSupport}{\greenWork{\setVertices}}
\newcommand{\weight}[1]{%
    \IfEqCase{#1}{%
	{non}{{\redWork{\Lambda}}}%
 	{emptyset}{{{\redWork{\Lambda}}_{\emptyset}}}%
	{e-X}{{{\redWork{\Lambda}}_{\edge\setminus \setX}}}%
	{setA}{{{\redWork{\Lambda}}_{\setA}}}%
	{setA'}{{{\redWork{\Lambda}}_{\setA'}}}%
	{setB}{{{\redWork{\Lambda}}_{\setB}}}%
	{setC}{{{\redWork{\Lambda}}_{\setC}}}%
	{setB_1}{{{\redWork{\Lambda}}_{\setB_1}}}%
	{setB_l}{{{\redWork{\Lambda}}_{\setB_l}}}%
	{x}{{\redWork{\Lambda}}_{#1}}%
	 {YcupX}{{\redWork{\Lambda}}_{\setY\cup \setX}}%
	{setX}{{\redWork{\Lambda}}_{\setX}}%
	{setX'}{{\redWork{\Lambda}}_{\setX'}}%
	{setX'_1}{{\redWork{\Lambda}}_{\setX'_1}}%
	{setX'_2}{{\redWork{\Lambda}}_{\setX'_2}}%
	{setX'_j}{{\redWork{\Lambda}}_{\setX'_j}}%
	{setX'_n}{{\redWork{\Lambda}}_{\setX'_n}}%
	{setX-da}{{\redWork{\Lambda}}_{\setX\setminus\{\da\}}}%
	{setX-da'}{{\redWork{\Lambda}}_{\setX\setminus\{\da'\}}}%
	{{setX}-{da,da'}}{{\redWork{\Lambda}}_{\setX\setminus\{\da,\da'\}}}%
	{setY}{{\redWork{\Lambda}}_{\setY}}%
	{setY_j}{{\redWork{\Lambda}}_{\setY_j}}%
	{setY_1}{{\redWork{\Lambda}}_{\setY_1}}%
	{setY_2}{{\redWork{\Lambda}}_{\setY_2}}%
	{setY_n}{{\redWork{\Lambda}}_{\setY_n}}%
	{setY_l}{{\redWork{\Lambda}}_{\setY_l}}%
	{setL}{{\redWork{\Lambda}}_{{\setL}}}%
	{L-Y}{{\redWork{\Lambda}}_{\setL\setminus \setY_{\setL}}}%
	{N+Y}{{\redWork{\Lambda}}_{\setL' \cup \setY_{\setL}}}%
	{N+Y-Y}{{\redWork{\Lambda}}_{\setL' \cup \setY_{\setL}\setminus \setY_{\setL}}}%
	{setLdwa}{{{\redWork{\Lambda}}_{\redWork{\breve{\setL}}}}}
    {setLdwa-Y}{{{\redWork{\Lambda}}_{\redWork{\breve{\setL}\setminus \setY_{\setL}}}}}
	{dc dd}{{\redWork{\Lambda}}_{\{\dc,\dd\}}}%
	{{ x, y}}{{\redWork{\Lambda}}_{\{x,y\}}}%
	{hypergraphH}{{{\redWork{\Lambda}}_{\hypergraphH}}}%
	{hypergraphG}{{{\redWork{\Lambda}}_{\hypergraphG}}}%
	{revcut{F'}{v}}{{{\redWork{\Lambda}}_{\revcut{F'}{v}}}}%
	{revcut{F'}{da}}{{{\redWork{\Lambda}}_{\revcut{F'}{\da}}}}%
	{revcut{F'}{v'}}{{{\redWork{\Lambda}}_{\revcut{F'}{v'}}}}%
	{revcut{F}{v}}{{{\redWork{\Lambda}}_{\revcut{F}{v}}}}%
	%
	{revcut{simpleGraph{m}{a}}{da}}{{{\redWork{\Lambda}}_{\revcut{\simpleGraph{m}{\vec{a}}}{\da}}}}%
	{revcut{simpleGraph{m}{a}}{da'}}{{{\redWork{\Lambda}}_{\revcut{\simpleGraph{m}{\vec{a}}}{\da'}}}}%
	{simpleGraph{m+1}{a}}{{{\redWork{\Lambda}}_{\simpleGraph{m+1}{\vec{a}}}}}%
	{simpleGraph{m}{a}}{{{\redWork{\Lambda}}_{\simpleGraph{m}{\vec{a}}}}}%
	{Zcup{v}}{{{\redWork{\Lambda}}_{Z\cup\{v\}}}}%
	{Zcup{da}}{{{\redWork{\Lambda}}_{Z\cup\{\da\}}}}%
	{Ycup{da}}{{{\redWork{\Lambda}}_{\setY\cup\{\da\}}}}%
	%
	{da}{{\redWork{\Lambda}}_{\da}}%
	{db}{{\redWork{\Lambda}}_{\db}}%
	{db_i}{{\redWork{\Lambda}}_{\db_i}}
	{dc_i}{{\redWork{\Lambda}}_{\dc_i}}%
	{dc}{{\redWork{\Lambda}}_{\dc}}%
	{dd}{{\redWork{\Lambda}}_{\dd}}%
	{de}{{\redWork{\Lambda}}_{\de}}%
	{da'}{{\redWork{\Lambda}}_{\da'}}%
	{db'}{{\redWork{\Lambda}}_{\db'}}%
	{dc'}{{\redWork{\Lambda}}_{\dc'}}%
	{dd'}{{\redWork{\Lambda}}_{\dd'}}%
	{set{da}}{\redWork{\Lambda}_{\{\da\}}}
	{{da,db}}{{\redWork{\Lambda}}_{\{\da,\db\}}}%
	{{da,dc}}{{\redWork{\Lambda}}_{\{\da,\dc\}}}%
	{{da,dd}}{{\redWork{\Lambda}}_{\{\da,\dd\}}}%
	{{db,dc}}{{\redWork{\Lambda}}_{\{\db,\dc\}}}%
	{{db,dd}}{{\redWork{\Lambda}}_{\{\db,\dd\}}}%
	{{dc,dd}}{{\redWork{\Lambda}}_{\{\dc,\dd\}}}%
	{edge}{{{\redWork{\Lambda}}_{\edge}}}%
	{edge'}{{{\redWork{\Lambda}}_{\edge'}}}%
	{edgeL}{{{\redWork{\Lambda}}_{\redWork{\setL}}}}%
	{edgeK}{{{\redWork{\Lambda}}_{\redWork{\widetilde{\setL}}}}}%
    {edgeKe}{{{\redWork{\Lambda}}_{\redWork{\widetilde{\setL}}'}}}%
	{red{bullet}}{{\redWork{\Lambda}}_{\red{\bullet}}}%
	{blue{bullet}}{{\redWork{\Lambda}}_{\blue{\bullet}}}%
	{green{bullet}}{{\redWork{\Lambda}}_{\green{\bullet}}}%
	{{red{bullet},green{bullet}}}{{\redWork{\Lambda}}_{\{\red{\bullet},\green{\bullet}\}}}
	{{red{bullet},blue{bullet}}}{{\redWork{\Lambda}}_{\{\red{\bullet},\blue{\bullet}\}}}
	{{blue{\bullet},green{bullet}}}{{\redWork{\Lambda}}_{\{\blue{\bullet},\green{\bullet}\}}}
       }[\PackageError{tree}{Undefined option to tree: #1}{}]%
}%
\newcommand{\hypergraphH}{\redWork{\mathbb H}}
\newcommand{\hypergraphG}{\redWork{\mathbb G}}
\newcommand{\hypergraphF}{\redWork{\mathbb F}}
\newcommand{\hypergraphK}{\redWork{\mathbb O}}
\newcommand{\hypergraphS}{\redWork{\mathbb S}}
\newcommand{\simpleGraph}[2]{%
    \IfEqCase{#1}{%
	{0}{{\redWork{\mathbb S_{#2}^{0} }}}%
        {1}{{\redWork{\mathbb S_{#2}^{\bullet} }}}%
        {2}{{\redWork{\mathbb S_{#2}^{\bullet - \bullet} }}}%
        {m}{{\redWork{\mathbb S_{#2}^{m} }}}
        {m+1}{{\redWork{\mathbb S_{#2}^{m+1} }}}
        {newl+1}{{\redWork{\mathbb S}}}
        {any}{{\redWork{\mathbb S_{}^{*} }}}
        {setX}{{\redWork{\mathbb S_{#2}^{\size{\setX}} }}}
        {setX'}{{\redWork{\mathbb S_{#2}^{\size{\setX'}} }}}
    }[\PackageError{tree}{Undefined option to tree: #1}{}]%
}%
\newcommand{\Vertices}[1]{Vert(#1)}
\newcommand{\newl}{m}
\newcommand{\edge}{\blueWork{e}}
\newcommand{\edgeK}{\widetilde{\setL}}
\newcommand{\swap}[3]{{{\tau_{\{#1,#2\}}(#3)}}}
\newcommand*\changed[3]{#2}
\newcommand*\changedd[3]{#2}
\begin{document}

\newcommand{\PHtodo}[1]{{\textcolor{red}{#1}}}

\title{Linear equations for unordered data vectors\texorpdfstring{ in $\kset{\setSizeK}{\setD}\to{}\Z^{\dimension}$}{}.}

\author{Piotr Hofman\lmcsorcid{0000-0001-9866-3723}}	
\author{Jakub R\'o\.zycki}
\address{Faculty of Mathematics, Informatics, and Mechanics\\
 University of Warsaw\\
 Warszawa, Polska}	
\email{piotr.hofman@uw.edu.pl, j.rozycki@student.uw.edu.pl}  
\thanks{This work is supported by Polish NCN grant nr. UMO-2016/21/D/ST6/01368.}	

%






\begin{abstract}
  \noindent Following a recently considered generalisation of linear equations to unordered-data vectors
and to ordered-data vectors,
we perform a further generalisation to \changed{k-element-sets-of-unordered-data vectors}{data vectors that are functions from k-element subsets of the 
unordered-data set to vectors of integer numbers}{11}.
These generalised equations naturally appear in the analysis of vector addition systems 
(or Petri nets) extended so that each token carries a set of unordered data.
We show that nonnegative-integer solvability of linear equations is in \changed{nondeterministic-exponential-time}{nondeterministic exponential time}{11}
while integer solvability is in \changed{polynomial-time}{polynomial time}{11}.
\end{abstract}

\maketitle


\section{Introduction.}

\para{Diophantine linear equations}
The solvability problem for systems of linear Diophantine equations is defined as follows:
given a finite input set of $\dimension$-dimensional integer vectors $\setInput = \{\vec v_1\ldots \vec v_m\} \subseteq \ktuple{\dimension}{\Z}$, and
a target vector $\vec v \in \ktuple{\dimension}{\Z}$, we ask if there is a solution $(n_1\ldots n_m)$
such that
\begin{align}\label{eq:Dio}
n_1 \cdot \vec v_1 + \ldots + n_m \cdot \vec v_m = \vec v.
\end{align}
Restricting solutions $n_1\ldots n_m$ \changed{to integers $\Z$ or nonnegative integers $\N$}{to the set $\Z$ of integers or the set $\N$ of nonnegative 
integers}{18}, we may speak of
$\Z$-solvability \changed{or}{and}{19} $\N$-solvability, respectively.
The former problem is in \PTIME,
while the latter is equivalent to \emph{integer linear programming}, a well-known 
\changed{$\NPTIME-complete$}{\NPTIME-complete}{NULL} problem~\cite{Karp21NP-complete-Problems}.

This paper is a continuation of a line of research that investigates generalisations
of the solvability problem to data vectors~\cite{DBLP:conf/lics/HofmanLT17,DBLP:conf/concur/HofmanL18}\changed{, i.e.,}{\ i.e.}{22} 
on a high level of abstraction, to vectors indexed by orbit-finite sets instead of finite ones
(for an introduction to orbit-finite sets, also known as \emph{sets with atoms}, 
see~\cite{DBLP:conf/lics/BojanczykKL11, DBLP:conf/lics/BojanczykKLT13}).
In the simplest setting, given a fixed countable infinite set $\setD$ of data values,
a \emph{data vector} is a function $\vec{a}\colon \setD \to \ktuple{\dimension}{\Z}$. Addition and scalar multiplication are defined pointwise i.e. 
\changed{$\vec{a}{\bf +}\vec{a'}(\da)=\vec{a}(\da)+\vec{a'}(\da)$}
{$(\vec{a}{\bf +}\vec{a'})(\da)=\vec{a}(\da)+\vec{a'}(\da)$}{26} for every $\da\in \setD$.
\changed{In the presence of data, the solvability problems are defined analogously}{The solvability problems over input sets $\setInput$ of data vectors are 
defined analogously}{27}, with the important difference
that the input set $\setInput$ of data vectors \changed{\emph{is orbit-finite}\changed{, i.e.,}{\ i.e.}{22}}{}{27} 
is the closure, under data permutations\changed{/data renaming}{}{35}, of a finite set of data vectors. 
Given such an \changed{orbit-finite}{infinite, but finite up to data permutation}{27} set $\setInput$, and a target data vector $\vec v$, we ask if there are 
data vectors $\vec v_1\ldots \vec v_m\in \setInput$ and numbers $n_1\ldots n_m$, such that
the equality~\eqref{eq:Dio} is satisfied.

\begin{exa}
We consider data vectors $\setD \xrightarrow{} \Z$. For every two different data values $\dd, \de\in \setD$, let
$\vec v_{\dd \de}(x) = 1$ if $x=\dd$ or $x=\de$, and $\vec v_{\dd \de}(x) = 0$ otherwise.
Let $\setInput = \setof{\vec v_{\dd \de}\colon \setD \to \Z}{\dd,\de\in\setD,\dd\neq \de}$. The set $\setInput$ is closed under data permutation. Indeed, for 
any data permutation 
(bijection) 
$\pi\colon\setD\xrightarrow{} \setD$ and any $\vec{v_{\dd \de}}\in \setInput$ we have $\vec{v_{\dd \de}}\circ \pi = \vec{v_{\pi^{-1}(\dd) \pi^{-1}(\de)}}$ 
which is 
also an element of 
$\setInput$.

Finally, let \changed{$\vec v(\db) = 2$}{$\vec{v_{\db}}(\db) = 2$}{70} and \changed{$\vec v(x) = 0$}{$\vec{v_{\db}}(x) = 0$}{70} for $x\neq \db$, where 
$\db\in\setD$ is some fixed data
value.
On the one hand\changed{ side}{}{44}, this instance admits a $\Z$-solution, \changed{since $\vec a$}{since $\vec{v_{\db}}$}{44} is presentable as
\changed{
\begin{align*}
\vec v = \vec v_{\db \dd} + \vec v_{\db \de} - \vec v_{\dd \de},
\end{align*}
}
{
\begin{align*}
\vec{v_{\db}} = \vec v_{\db \dd} + \vec v_{\db \de} - \vec v_{\dd \de},
\end{align*}}{70}

for any two different data values $\dd,\de$ different than $\db$.
 On the other hand\changed{ side}{}{44}, there is no $\N$-solution, as there is no similar presentation
 of \changed{$\vec v$}{$\vec{v_{\db}}$}{70} in terms of data vectors $\vec v_{\dd \de}$ that \changed{use}{uses}{47} nonnegative coefficients. Simply, every 
vector that is a sum of \changedd{of }{}{null}data vectors from the 
family $\setInput$ must be strictly positive for at least two data values. 
Furthermore, if \changed{$\vec v(\db) = 3$}{$\vec{v_{\db}}(\db) = 3$}{70} instead of $2$, then there is no $\Z$-solution, too. Indeed, notice that $\sum_{\da 
\in \setD} \vec{w}(\da)$ is always 
an even number if $\vec{w}$ is a sum of data vectors from the family $\setInput.$
\qed
\end{exa}

The above simple example is covered by the theory developed in~\cite{DBLP:conf/lics/HofmanLT17}. \changed{Here}{Here,}{169,170} we extend the \changed{previous 
result}{results from~\cite{DBLP:conf/lics/HofmanLT17}}{53} to
data vectors in $\kset{\setSizeK}{\setD}\xrightarrow{} \ktuple{\dimension}{\Z}$, where $\kset{\setSizeK}{\setD}$ stands for the \changed{${\setSizeK}$ 
elements}{${\setSizeK}$-element}{54} subsets of 
$\setD$. The complexity of 
analysis of data vectors in $\kset{\setSizeK}{\setD}\xrightarrow{} \ktuple{\dimension}{\Z}$ can already be observed 
for $\setSizeK=2$ and $\dimension=1.$
\begin{exa}\label{ex:two}
Consider data vectors \changed{of a form}{in}{56} $\kset{2}{\setD} \to{} \Z$ where $\kset{2}{\setD} =\{ \{\dd,\de\}\colon \dd,\de\in \setD,\ \dd\neq \de\}$.
You can think of them as weighted graphs with vertices labelled with elements of $\setD$.
Suppose the set $\setInput$ is a set of triangles with weights of all edges \changed{equal $1$}{equal to $1$}{58}\changed{, i.e.}{\ i.e.}{22}

\begin{equation*}
\vec{v_{\dc \dd \de}}(x)=\begin{cases}
          1 \quad &\text{if} \, x \in\{\{\dc,\dd\},\{\dd,\de\},\{\de,\dc\}\} \\
          0 \quad &\text{otherwise}\\
     \end{cases}
  \end{equation*}
and $\setInput=\setof{\vec v_{\dc \dd \de}\colon \kset{2}{\setD} \to \Z}{\dc, \dd, \de\in\setD, \dd\neq \de \neq \dc \neq \dd}$ (see Figure~\ref{fig:77}). 

\begin{figure}[h]
\begin{tikzpicture}
[place/.style={circle,draw=\mycolorOne!50,fill=\mycolorOne!20,thick,inner sep=0pt,minimum size=4mm}]
\node[place] (DA) at (0, 0) {$\da$};
\node[place] (DB) at (2, 2) {$\db$};
\node[place] (DC) at (4, 0) {$\dc$};
\draw[-, color=black] (DA) to node [left] {$1$} (DB);
\draw[-, color=black] (DA) to node [below ] {$1$} (DC);
\draw[-, color=black] (DB) to node [right ] {$1$} (DC);
\end{tikzpicture}
\caption{\changedd{Graph}{The graph}{null} representing the data vector $\vec{v_{\da\db\dc}}$.}\label{fig:77}
\end{figure}

The set $\setInput$ is closed under 
data permutations, indeed for any 
data 
permutation $\pi$ and any $\vec v_{\dc \dd \de}\in \setInput$ the data vector 
$\changed{}{\vec{v_{\dc\dd\de}}\circ \pi =}{NULL}\vec v_{\pi^{-1}(\dc) \pi^{-1}(\dd) \pi^{-1}(\de)}\in \setInput$.
Finally, we want to know if $\setInput$ and the following target data vector $\vec{v_{\dc \dd}}$ \changed{admits}{admits}{70} a $\Z$-solution   
\begin{equation*}
\vec{v_{\dc \dd}}(x)=\begin{cases}
          6 \quad &\text{if} \, x = \{\dc,\dd\} \\
          0 \quad &\text{otherwise}\\
     \end{cases}
  \end{equation*}
(\changed{$v_{\dc \dd}$}{$\vec{v}_{\dc \dd}$}{NULL} is a single edge with weight 6).
The answer is yes, but it is not trivial,
\[
\vec{v_{\dc\dd}}= (\vec{v_{\db\dd\dc}} - \vec{v_{\db\dd\da}} + \vec{v_{\dd\dc\de}} - \vec{v_{\dd\de\da}} + \vec{v_{\db\da\de}} - \vec{v_{\db\dc\de}} )+
(\vec{v_{\dd\db\dc}}  - \vec{v_{\dc\db\da}} + \vec{v_{\dd\de\dc}} - \vec{v_{\dc\de\da}} + \vec{v_{\de\db\da}}  - \vec{v_{\de\db\dd}}) + 2\vec{V_{\da\dd\dc}}
\]
\changed{}{where $\da,\db,$ and $\de$ are any data values different than $\dc,\dd$.}{75}

The idea behind the above sum is presented in Figure~\ref{fig:6}.

\begin{figure}[h]
\newcommand{\trojkat}[8]{
\node      (A#8)    at ( #1 , #2 )     {};
\node      (B#8)    at ( #3, #4 )     {};
\node      (C#8)    at ( #5 , #6 )     {};
\draw[-, color=#7] (A#8) to node [above right] {} (B#8);
\draw[-, color=#7] (A#8) to node [below right] {} (C#8);
\draw[-, color=#7] (B#8) to node [below ] {} (C#8);
}

\begin{center}
\begin{tikzpicture}
[place/.style={circle,draw=\mycolorOne!50,fill=\mycolorOne!20,thick,inner sep=0pt,minimum size=4mm}]

\node[place] (DA) at (9.95, 10) {$\da$};
\node[place] (DB) at (12.3, 10) {$\db$};
\node[place] (DC) at (7.7, 10) {$\dc$};
\node[place] (DD) at (9.95, 12.1) {$\dd$};
\node[place] (DE) at (9.95, 7.8) {$\de$};

\trojkat{10}{10.15}{10}{12}{12}{10.15}{\mycolorTrzy}{1}
\trojkat{10}{9.85}{10}{7.9}{12}{9.85}{\mycolorOne}{2}
\trojkat{9.9}{10.15}{9.9}{12}{7.95}{10.15}{\mycolorTrzy}{3}
\trojkat{9.9}{9.85}{9.9}{7.9}{7.95}{9.85}{\mycolorOne}{4}

\draw[-, color=\mycolorOne] (7.9, 10.23) to node [above right] {} (9.72,11.98);
\draw[-, color=\mycolorOne] (10.20, 11.98) to node [above right] {} (12.10,10.20);
\draw[-, color=\mycolorOne] (8, 10.03) to node [above right] {} (12.02,10.03);

\draw[-, color=\mycolorTrzy] (7.9, 9.78) to node [above right] {} (9.72,7.96);
\draw[-, color=\mycolorTrzy] (10.20, 7.96) to node [above right] {} (12.10,9.78);
\draw[-, color=\mycolorTrzy] (8, 9.97) to node [above right] {} (12.02,9.97);

\node (R) at (13,10) {$=$};

\node[place] (DA1) at (13.95, 10) {$\da$};
\node[place] (DD1) at (13.95, 12.1) {$\dd$};
\node[place] (DE1) at (13.95, 7.8) {$\de$};

\draw[-, color=black] (DA1) to node [right] {$2$} (DD1);
\draw[-, color=black] (DA1) to node [right] {$-2$} (DE1);

\node (TEXT) at (18,10) {
\begin{minipage}{5cm}
First we construct a gadget $\vec{g_{\dd\da\de}}$ as presented on the left. Blue color denotes addition and orange subtraction of an edge.
$\vec{g_{\da\dd\de}}= \vec{v_{\da\db\dd}} - \vec{v_{\da\db\de}} + \vec{v_{\da\dc\dd}} - \vec{v_{\da\dc\de}} + \vec{v_{\dc\db\de}} - \vec{v_{\dc\db\dd}}$

Next, we use such gadgets combined with triangles in the following way \changed{$g_{\dc\dd\da}+g_{\dd\dc\da} + 2\vec{v_{\da\dc\dd}}=\vec{v_{\dc\dd}}.$}
{$\vec{g}_{\dc\dd\da}+\vec{g}_{\dd\dc\da} + 2\vec{v_{\da\dc\dd}}=\vec{v_{\dc\dd}}.$
}{86}
\end{minipage}
};

\end{tikzpicture}
\end{center}
\caption{\changed{Ida}{Idea}{88} behind \changedd{}{the}{null} construction in Example~\ref{ex:two}.}\label{fig:6}
\end{figure}

\changed{}{If we ask about $\N$-solvability then the answer is no. If we add triangles then the number of nonzero edges will be greater than $2$, so there 
is no way of reaching $\vec{v}_{\dc\dd}$.}{89}
\changed{What}{But, what}{89} if we change $6$ to $3$? The answer is postponed to Section~\ref{sec:tohyp}.
 \qed
\end{exa}

\subsection{Related work and our contribution.}
The above-discussed, simplest extension of the $\Z$-solvability problem to data vectors of the form $\setD\xrightarrow{} \ktuple{\dimension}{\Z}$ 
is in \PTIME, and the $\N$-solvability is \changed{$\NPTIME-complete$}{\NPTIME-complete}{20,92}~\cite{DBLP:conf/lics/HofmanLT17}.
Further known results concern the more general case of \emph{ordered} data domain $\setD$~\cite{DBLP:conf/concur/HofmanL18}\changed{: while}{. For ordered data 
we assume that the set of data forms a dens linear order and the set of data permutations is restricted to the set of order preserving bijections 
$\setD\to{}\setD$. In the ordered data case}{93} the $\Z$-solvability problem remains in \PTIME,\changed{}{\ while}{93} the complexity of the 
$\N$-solvability is
equivalent to the reachability problem of vector addition systems with states (VASS),
or Petri nets, and hence \changed{$\AckermanCOMPLETE$}{\AckermanCOMPLETE}{20}~\cite{DBLP:journals/corr/abs-2104-12695, DBLP:journals/corr/abs-2104-13866}.
The increase of complexity caused by the order in data is thus remarkable.
An example of a result that builds on top of \changed{the}{}{97} \cite{DBLP:conf/lics/HofmanLT17} is 
\cite{Continuous-Reachability-for-Unordered-Data-Petri-Nets-is-in-PTime}\changed{;}{,}{97}
where the continuous reachability problem for unordered data nets is shown to be in \PTIME. The question if the continuous reachability problem is 
solvable for the ordered data 
domain remains open.
It is particularly interesting due to \cite{Bonnet2010ComparingPD}, 
where the coverability problem in timed data nets~\cite{DBLP:conf/apn/AbdullaN01} is 
proven 
to be 
interreducible with the coverability problem in ordered data nets.

In this paper, we perform a further generalisation to ${\setSizeK}$-element subsets of unordered data\changed{, i.e.,}{\ i.e. we}{22} 
consider data vectors of the form $\kset{\setSizeK}{\setD} \xrightarrow{} \ktuple{\dimension}{\Z}$, where $\kset{\setSizeK}{\setD}$ stands for
the ${\setSizeK}$-element subsets of $\setD$.
We prove two main results: first, for every fixed $\setSizeK\geq 1$ 
the complexity of the $\Z$-solvability problem again remains polynomial.
\changed{Second, we prove decidability and provide an upper $\NETIME$ complexity bound for 
the $\N$-solvability problem.}
{Second, we present a \NETIME\ algorithm for the $\N$-solvability problem.}{106}
This is done by an improvement of techniques developed in~\cite{DBLP:conf/lics/HofmanLT17}. Namely, we \changed{non-trivially}{nontrivially}{107} extend 
\changed{theorems}{Theorems}{107,119} 11 and 15 
from~\cite{DBLP:conf/lics/HofmanLT17}.
\begin{itemize}
 \item To address $\N$-solvability, we reprove \changed{the theorem}{Theorem}{109} 11 from~\cite{DBLP:conf/lics/HofmanLT17} in a more general setting (the proof 
is 
slightly modified). Next\changed{}{,}{110} 
we 
combined it 
with new idea to obtain a reduction to the (easier) $\Z$-solvability,
witnessing a nondeterministic exponential blowup.
\item Our approach to $\Z$-solvability is an extension \changed{the theorem}{of Theorem}{112} 15 from~\cite{DBLP:conf/lics/HofmanLT17}. 
Precisely, if we reformulate the $\Z$-solvability question in terms of 
(weighted) hypergraphs, then there is a natural way to lift the 
characterisation of $\Z$-solvability proposed in Theorem 15~\cite{DBLP:conf/lics/HofmanLT17}. 
The main contribution of this paper is a new tool-box developed to prove the lifted theorem 15 from~\cite{DBLP:conf/lics/HofmanLT17}. 
The new characterisation is easily checkable in polynomial 
time, \changed{what gives us}{resulting in}{117} the algorithm.
\end{itemize}

\changed{The analogous of \changed{theorems}{Theorems}{107,119} 11 and 15 from~\cite{DBLP:conf/lics/HofmanLT17} are not present in 
\cite{DBLP:conf/concur/HofmanL18},}
{
No analogous of \changed{theorems}{Theorems}{107,119} 11 and 15 from~\cite{DBLP:conf/lics/HofmanLT17} appear in 
\cite{DBLP:conf/concur/HofmanL18} or are otherwise known,
}{119}
thus we are pessimistic about 
applicability of the studied here approach in case of the ordered data domain.

As we comment in \changed{the conclusions}{Conclusions}{122},
we believe that elaboration of the techniques of this paper allows also tackling the case
of tuples of unordered data. 

\para{Motivation}
Our motivation for this research is two-fold. 
\changed{From}{On the one hand, from}{124} a foundational research perspective, our results are a part of a wider research program
aiming at lifting computability results in finite-dimensional linear algebra to
its orbit-finite-dimensional counterpart. 
Up to now, research \changed{is focused}{has been focused}{127} on understanding 
\changed{expressibility}{solvability}{124}\changed{}{~\cite{DBLP:conf/lics/HofmanLT17, DBLP:conf/concur/HofmanL18}}{127}, but there are other natural questions 
about 
definitions of bases, dimension, linear transformations, etc. 
\changed{We are focused on expressibility, as it gives us a tool for the analysis of systems with 
data.}{On the other hand, we are interested in the analysis of systems with data and solvability may be a useful tool.}{124}
 We highlight three areas where understanding of $\N/\Z$-\changed{expressibility}{solvability}{124} may be crucial for further development.

{\bf Unordered Data nets reachability/coverability}~\cite{Nets-with-Tokens-which-Carry-Data}. The model can be seen as a special type of Coloured Petri 
nets~\cite{ColoredPetriNets}.
Data nets are an extension of Petri nets where every token caries a tuple of data values. \changed{In addition}{In addition,}{134} each 
transition is equipped with a Boolean formula that connects data of tokens that are consumed and data of tokens that are produced (for unordered data the 
formula may use $=$ and $\neq$). To fire a transition we take a valuation satisfying the formula and according to it we remove and produce tokens. 

\begin{exa}
Consider the following simple net with $3$ places and one transition. The initial marking has $3$ tokens each with two data values. 

\begin{center}
\begin{tikzpicture}
[place/.style={circle,draw=\mycolorOne!50,fill=\mycolorOne!20,thick,inner sep=0pt,minimum size=17mm},
token/.style={circle,draw=black!100,fill=black!100,thick,inner sep=0pt,minimum size=4mm}]
 \tikzstyle{transition}=[rectangle,thick,draw=\mycolorOne!75, fill=\mycolorOne!20,minimum size=10mm]
\node[place]      (A1)         {$ $};
\node[token]      (TOK1)    at (0,0.35)     {\color{white}{\small \bf $\da,\db$}};
\node[token]      (TOK2)    at (0,-0.35)     {\color{white}{\small \bf $\da,\dc$}};
\node[place]      (B1)      at (0,-2)  {$ $};

\node[token]      (TOK3)    at (0,-1.8)     {\color{white}{\small \bf $\db,\dd$}};

\node[transition]      (T1)     at (4.5,-1) {\scriptsize $y = z \land x=x' \land u=u' $};
\node[place]      (C1)     at (9,-1) {};
\draw[->] (A1) to node [above right] {$(x,y)$} (T1);
\draw[->] (B1) to node [below right] {$(z,u)$} (T1);
\draw[->] (T1) to node [below ] {$(x',u')$} (C1);
\end{tikzpicture}
\end{center}

If we valuate $x=\da, y=\db, z=\db, u=\dd, x'=\da, u'=\dd$ then we may fire the transition and get the new marking.

\begin{center}
\begin{tikzpicture}
[place/.style={circle,draw=\mycolorOne!50,fill=\mycolorOne!20,thick,inner sep=0pt,minimum size=17mm},
token/.style={circle,draw=black!100,fill=black!100,thick,inner sep=0pt,minimum size=4mm}]
 \tikzstyle{transition}=[rectangle,thick,draw=\mycolorOne!75, fill=\mycolorOne!20,minimum size=10mm]
\node[place]      (A1)         {$ $};
\node[token]      (TOK2)    at (0,-0.2)     {\color{white}{\small \bf $\da,\dc$}};
\node[place]      (B1)      at (0,-2)  {$ $};

\node[transition]      (T1)     at (4.5,-1) {\scriptsize $y = z \land x=x' \land u=u' $};
\node[place]      (C1)     at (9,-1) {};
\node[token]      (TOK4)    at (9,-0.8)     {\color{white}{\small \bf $\da,\dd$}};
\draw[->] (A1) to node [above right] {$(x,y)$} (T1);
\draw[->] (B1) to node [below right] {$(z,u)$} (T1);
\draw[->] (T1) to node [below ] {$(x',u')$} (C1);
\end{tikzpicture}
\end{center}

\changed{If we chose}{If we had chosen}{152} another valuation $x=\da, y=\db, z=\db, u=\dc, x'=\da, u'=\dc$ then the formula would hold, 
but we would lack tokens that can be consumed so we 
could not fire the transition with this valuation.

\end{exa}

The reachability and coverability questions can be formulated as usual for Petri nets.
It is not hard to imagine that some workflow or a flow of data through a program can be modelled with data nets.
Unfortunately, in this richer model the reachability and coverability problems are undecidable~\cite{10.1007/978-3-319-39086-4_3} already for $k=2$.
For $k=1$\changed{, i.e.}{\ i.e.}{22} data vectors in \changed{$\setD\xrightarrow{} \ktuple{\dimension}{\Z}$}{$\setD\xrightarrow{} 
\ktuple{\dimension}{\Z}$,}{160} the status of reachability is unknown and coverability is decidable but known to 
be Ackermann-hard ~\cite{DBLP:conf/birthday/LazicT17}.
This is not a satisfying answer for engineers and in this case we should look for over and under approximations of the reachability relation, or for some 
techniques that will help in the analysis of industrial cases. One of the classic over-approximations of the Petri nets reachability relation is so-called 
integer reachability\changed{(}{}{165} or Marking Equation\changed{ Lemma}{}{165} in~\cite{desel_esparza_1995} \changed{lemma 2.12)}{Lemma 2.12,}{165} where the 
number of tokens in some places may go negative during the run. It can be 
encoded as integer programming and solved in \NPTIME. Its analogue for data nets can be stated as $\N$-solvability \changed{of an orbit finite linear 
equations.}{over an input set of data vectors, where the set is closed under data permutations.}{167} 

\changed{Here}{Here,}{169,170} we should mention that the integer reachability is a member of a wider family of algebraic techniques for Petri nets. We refer 
to~\cite{SilvaTC96} for 
an exhaustive overview of linear-algebraic and integer-linear-programming techniques 
in the analysis of Petri nets. The usefulness of these techniques is confirmed by multiple applications including, for instance, recently proposed
efficient tools for the coverability problem of Petri 
nets~\cite{DBLP:conf/rp/GeffroyLS16,DBLP:conf/tacas/BlondinFHH16}. 

{\bf $\pi$-calculus.}
Another formalism close to unordered data nets are $\nu$-nets (unordered data nets \changed{equipped with a global freshens test.}{additionaly equipped with 
an operation of creating a new datum that is not present in the current configuration of the net; in other words, a transition may force creation of data that 
are globally fresh/unique}{175}). \changed{In \cite{Velardo-pi}}{In \cite{Velardo-pi},}{175} 
Rosa-Velardo 
observes that they are equivalent to so-called \emph{multiset rewriting with name binding} systems which, as he showed, are a formalism equivalent to 
$\pi$-calculus. 
\changed{Thus}{Thus,}{177} one may try to transfer algebraic techniques for data nets to $\pi$-calculus. This is a long way, 
but there is no possibility to start it without a good understanding of integer solutions of linear equations with data. 

\changed{Here}{Here,}{169,170} it is \changed{worth to mention that}{worth mentioning that}{180} 
in $\pi$\changed{ calculus}{-calculus}{186} we use constructs like $\bar{c}\langle y\rangle.P$ which send a datum $y$ trough the channel $c$, thus the sending 
operation is 
parameterized with 
a pair of data values i.e. the name of the channel and the data value. \changed{Thus}{Thus,}{177} one \changed{can not}{cannot}{357} expect that already 
existing results for data vectors 
$\setD\xrightarrow{} \Z^\dimension$~\cite{DBLP:conf/lics/HofmanLT17} will be sufficient and we will need at least theory for $\setD^2\xrightarrow{} 
\Z^\dimension$, for 
example, to count messages that are sent and received. 
\changed{
But even pairs of data may not be enough. Rosa-Velardo encodes $\pi$-calculus in $\nu$-nets using derivatives, which essentially are terms of bounded depth. 
The bound depends on the process definition. The number of data values that parametrize each derivative is bounded. 
In the encoding, each derivative corresponds to a place while particular instantiations of derivatives we represent with tokens. 
Thus, if use this encoding, then we should expect arbitrary tuples of data and not only pairs. 
}{
In fact even $\setD^2\xrightarrow{}\Z^\dimension$ may be not enough. In~\cite{Velardo-pi}, the fundamental concept for the encoding are derivatives, which 
essentially are terms of bounded depth labelled with data values. In his encoding, Rosa-Velardo represent each derivative with a token, so each token has to 
carry all data that are needed to identify the derivative. For a given process definition, he produces a $\nu$-net with tokens with bounded but 
arbitrary high number of data values. Thus, if one wants to use linear algebra with data to describe some properties of the produced $\nu$-net, he will have to 
work with data vectors in $\ktuple{\setSizeK}{\setD}\to{}\ktuple{\dimension}{\Z}$, for $\setSizeK$ grater than $2$. 
}{186-189}

{\bf \changed{Parikh'}{Parikh\textquotesingle}{190}s theorem.} Finally, we may try to lift \changed{Parikh'}{Parikh\textquotesingle}{190}s theorem from 
context-free grammars 
and finite automata to context-free grammars with data~\cite{DBLP:journals/acta/ChengK98} and 
register automata~\cite{DBLP:journals/tcs/KaminskiF94}. It is not 
clear to what extent it is possible but there are some 
promising results~\cite{RAPARIKH}.
If we want to use this lifted \changed{Parikh'}{Parikh\textquotesingle}{190}s theorem then we have to work with semilinear sets with data and be able to check 
things like membership or 
\changed{non-emptiness}{nonemptiness}{107} of the intersection. \changed{Here}{Here,}{169,170} one more time techniques to solve systems of linear equations 
with data will be inevitable.

\para{Outline}
In Section~\ref{sec:eq} we introduce the setting and define the problems.
Next, in Section~\ref{sec:tohyp} we provide the polynomial-time procedure for the $\Z$-solvability 
problem: the hypergraph reformulation and an effective characterisation
 of hypergraph solvability.
In Section~\ref{sec:prove} \changed{due to}{for}{200} pedagogical reasons we present 
the proof of the characterisation if data vectors are restricted 
to $\kset{2}{\setD}\xrightarrow{} \ktuple{\dimension}{\Z}$. 
 After this, in Sections~
\ref{sec:Proof:thm:core},~\ref{sec:matrix},~\ref{sec:simple_hypergraphs},~\ref{sec:Expressing with simple 
 hypergraphs},~\ref{sec:The construction of simple hypergraphs},~\ref{sec:core_proof} we provide the full proof of the characterisation. 
 The proof follows the same steps\changed{,}{}{202}
as the proof of the case $\kset{2}{\setD}\xrightarrow{} \ktuple{\dimension}{\Z}$, but is much more \changed{involving on}{involved at}{204} the technical level.
Next, in Section~\ref{sec:red} we present a reduction from $\N$- to $\Z$-solvability. 
Finally, Section~\ref{sec:conclusions} concludes this work.

\section{Linear equations with data.}\label{sec:eq}

In this section, we introduce the setting of linear equations with data and formulate our results. 
For a gentle introduction of the setting, we start by recalling classical linear equations.

Let $\Z$ and $\N$ denote integers and nonnegative integers, respectively.
Classical linear equations are of the form
\[
a_1 x_1 + \ldots + a_m x_m = a,
\] 
where $x_1 \ldots x_m$ are variables (unknowns), and $a_1 \ldots a_m\in \Z$ are integer coefficients.
For a finite system $\systemOfEquationsU$ of such equations over the same variables 
$x_1\ldots x_m$, a solution of ${\systemOfEquationsU}$
is a vector $(n_1\ldots n_m) \in \Z^m$ such that the valuation 
$x_1 \mapsto n_1\ldots$, $x_m \mapsto n_m$ satisfies all equations in $\systemOfEquationsU$.
It is well known that \emph{integer solvability problem}
($\Z$-solvability problem)\changed{, i.e.,}{\ i.e.}{22} the question whether $\systemOfEquationsU$ has a solution $(n_1\ldots n_m) \in \Z^m$,
is decidable in \PTIME.
In the sequel we are often interested in nonnegative integer solutions $(n_1\ldots n_m) \in \N^m$,
but one may consider also other solution domains than $\N$.
It is well known that the \emph{nonnegative-integer solvability problem} ($\N$-solvability problem) 
of linear equations\changed{, i.e.}{\ i.e.}{22}~the question whether 
${\systemOfEquationsU}$ has a nonnegative-integer solution, is NP-complete~(for hardness see~\cite{Karp21NP-complete-Problems}; NP-membership is a consequence 
of 
\cite{Pottier}).
The complexity remains the same for other natural variants of this problem, for instance, for inequalities 
instead of equations (a.k.a.~integer linear programming).
The $\Numbers$-solvability problem (where $\Numbers \in\{\Z, \N\}$) 
is equivalently formulated as follows: for a given finite set of coefficient vectors 
$\setInput = \{\vec v_1\ldots \vec v_m\} \subseteq \ktuple{\dimension}{\Z}$ and a target vector $\vec v \in \ktuple{\dimension}{\Z}$ 
 (we use bold font to distinguish vectors from other elements), check whether 
$\vec v$ is an \emph{$\sumsBez{\Numbers}$} of $\setInput$\changed{, 
i.e.,}{\ i.e. }{22} 
\begin{align}\label{eq:lineq}
\vec{v} \in \sums{\Numbers}{\setInput} = \setof{n_1 \cdot \vec v_1 + \ldots + n_m \cdot \vec v_m}{n_1\ldots n_m \in \Numbers}.
\end{align}
The dimension $\dimension$ corresponds to the number of equations in $\systemOfEquationsU$.

\para{Data vectors}
Linear equations can be naturally extended with data.
In this paper, we assume that the data domain $\setD$ is a countable infinite set,
whose elements are called \emph{data values}.
The bijections $\rho\colon \setD\to\setD$ are called \emph{data permutations}.
For a set $\setX$ and $\setSizeK\in\N$, by $\kset{\setSizeK} \setX$ we denote the set of all \changed{${\setSizeK}$-elements}{${\setSizeK}$-element}{54} 
subsets of $\setX$
(called \emph{${\setSizeK}$-sets} in short).
Data permutations \emph{lift naturally} to ${\setSizeK}$-sets of data values:
$\rho(\{\da_1\ldots \da_{\setSizeK}\}) = \{\rho(\da_1)\ldots \rho(\da_{\setSizeK})\}$. 

Fix a positive integer $\setSizeK\geq 1$.
A \emph{data vector} is a function $\vec{v}\colon \kset{\setSizeK}{\setD} \rightarrow \ktuple{\dimension}{\Z}$
such that $\vec v(x) = \vec{0}\in \ktuple{\dimension}{\Z}$ for all but finitely many $x \in \kset{\setSizeK}{\setD}$.
(Again, we use bold font to distinguish data vectors from other elements.)
\changed{The numbers ${\setSizeK}$ and $\dimension$ we call}{We call the numbers ${\setSizeK}$ and $\dimension$}{237} the \emph{arity} and the \emph{dimension} of $\vec v$, respectively.

The vector addition and scalar multiplication are lifted to data vectors pointwise:
$(\vec{v}+\vec{w})(x) \eqdef \vec{v}(x)+\vec{w}(x)$, and
$(c\cdot \vec v)(x)\eqdef c\vec{v}(x)$.
Further\changed{}{,}{241} for a data permutation $\pi\colon\setD\xrightarrow{} \setD$ by $\vec{v}\circ \pi$ we mean the data vector defined as follows 
$(\vec{v}\circ \pi) (\changed{\da_1, \da_2\ldots \da_{\setSizeK}}{\{\da_1, \da_2\ldots \da_{\setSizeK}\}}{242}) \eqdef 
\vec{v}(\changed{\pi(\da_1),\pi(\da_2)\ldots \pi(\da_{\setSizeK})}{\{\pi(\da_1),\pi(\da_2)\ldots \pi(\da_{\setSizeK})\}}{242})$ for all $\da_1, \da_2\ldots 
\da_{\setSizeK}\in \setD$. It is the natural lift 
of normal function composition.
For a set $\setInput$ of data vectors we define
\[
\perm{\setInput} = \setof{\vec v \circ \pi}{\vec v \in \setInput, \pi \changed{\in \text{data permutations}}{\text{ is a data permutation}}{245} }.
\]
A data vector $\target$ 
is said to be \changedd{an}{a}{247} \emph{$\Numbers$-permutation sum} of a finite set of data vectors $\setInput$ if
(we deliberately overload the symbol $\sums{\Numbers}{\placeHolder}$ and use it for data vectors, 
while in~\eqref{eq:lineq} it is used for (plain) vectors)
\begin{equation}
\target \in \sums{\Numbers}{\perm{\setInput}} =
\setof{n_1 \cdot \vec{v_1} + \ldots + n_m \cdot \vec{v_m}}
{\\ \vec{v_1}\ldots\vec{v_m} \in \perm{\setInput}, n_1\ldots n_m \in \Numbers}. 
\end{equation}
We investigate the following decision problems (for $\Numbers \in\{\Z,\N\}$):

\decproblem{$\Numbers$-solvability}
{A finite set $\setInput$ of data vectors and a target data vector $\vec v$, all of the same arity and dimension}
{Is $\vec v$ \changedd{an}{a}{247} $\Numbers$-permutation sum of $\setInput$?}
{.6}

\changed{}{The insightful reader may notice that, in the motivating examples we consider $\Numbers$-sums of an input set of data vectors that is closed under 
data permutations. Here, we switch to the $\Numbers$-solvability problem, which is defined by $\Numbers$-permutation sum. But observe that, to formalise 
expressibility by $\Numbers$-sums as a problem we have to provide a finite representation of the input set. That is why as an input to the problem we take a 
finite set of data vectors $\setInput$, and we consider $\Numbers$-sums of its closure under data permutations i.e. of the set $\perm{\setInput}$. This is why 
we consider $\Numbers$-permutation sums instead of pure $\Numbers$-sums.}{253}  
  
For complexity estimations we assume binary encoding of numbers appearing in the input to all decision problems discussed in this paper.
Our main results are the following complexity bounds:
\begin{thm}\label{thm:problem2}
For every fixed arity ${\setSizeK}\in \N$, the $\Z$-solvability problem is in \PTIME.
(The dependency \changed{form}{on}{261} ${\setSizeK}$ is exponential).
\end{thm}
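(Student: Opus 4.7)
The plan is to follow the blueprint of the outline: reformulate the $\Z$-solvability question for data vectors in $\kset{\setSizeK}{\setD}\to\ktuple{\dimension}{\Z}$ as a problem about finitely-supported weighted hypergraphs, lift the characterisation of $\Z$-solvability from Theorem 15 of~\cite{DBLP:conf/lics/HofmanLT17} (which treats $\setSizeK=1$) to $\setSizeK$-uniform hypergraphs, and verify that the lifted characterisation is checkable in polynomial time for each fixed $\setSizeK$.

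First I would identify each data vector $\vec v$ with a weighted $\setSizeK$-uniform hypergraph whose vertices are the data values appearing in its support and whose hyperedges, the ${\setSizeK}$-subsets of these vertices, are labelled by integer vectors in $\ktuple{\dimension}{\Z}$. Under this identification, applying a data permutation becomes relabelling of vertices, and a $\Z$-linear combination becomes stacking of hypergraphs with weights added on common hyperedges. Since $\setInput$ is finite, the input hypergraphs have bounded size; however the closure $\perm{\setInput}$ introduces arbitrary fresh vertices, which is the source of the difficulty.

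Second I would formulate and prove the lifted characterisation. For $\setSizeK=1$, Theorem 15 of~\cite{DBLP:conf/lics/HofmanLT17} characterises $\Z$-solvability by a finite family of ``local'' classical $\Z$-solvability instances, one per data value type. For general $\setSizeK$ the characterisation must additionally account for the incidence pattern among overlapping ${\setSizeK}$-subsets; I expect it to state that $\target\in\sums{\Z}{\perm{\setInput}}$ iff, for every ``type'' of ${\setSizeK}$-subset of a bounded set of relevant data values (determined by the supports of $\setInput$ and $\target$), a canonical finite local linear system is $\Z$-solvable. Only polynomially many relevant data values arise (for fixed $\setSizeK$), each type is described by at most exponentially-in-$\setSizeK$ data, and $\Z$-solvability of each finite local system is decidable in polynomial time via Hermite/Smith normal form; chaining these checks yields the claimed polynomial-time procedure, with exponential dependence on $\setSizeK$ coming from the number of local types.

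The main obstacle, and the place where most of the technical work will go, is proving the nontrivial direction of the characterisation: that locally consistent combinations can always be glued into a single global $\Z$-combination of permuted input vectors. This is exactly what the new ``simple hypergraph'' toolbox of Sections~\ref{sec:simple_hypergraphs}--\ref{sec:core_proof} is designed to supply. Example~\ref{ex:two} already illustrates the phenomenon: producing the single weighted edge $\vec{v_{\dc\dd}}$ from triangles requires the delicate gadget depicted in Figure~\ref{fig:6}, and the characterisation must guarantee that such gadgets always exist whenever the local conditions hold. For pedagogical clarity I would first carry out the full gluing argument in the case $\setSizeK=2$, $\dimension=1$ (Section~\ref{sec:prove}), and then promote it to the general case by a careful induction on $\setSizeK$ using the simple-hypergraph machinery.
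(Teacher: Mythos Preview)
Your proposal is correct and follows essentially the same approach as the paper: reduce to weighted $\setSizeK$-hypergraphs (Lemma~\ref{lem:red}), establish the local characterisation (Theorem~\ref{thm:core}), and check it in polynomial time. The one place where your description is vaguer than the paper is the form of the local characterisation: rather than ``types'' of $\setSizeK$-subsets, the paper's condition~\eqref{eq:loc} ranges over \emph{all} subsets $\setX$ of the target's vertex set with $\size{\setX}\leq\setSizeK$ (not just $\size{\setX}=\setSizeK$), asking whether $\weight{setX}(\hypergraphH)$ lies in the $\Z$-span of the corresponding weights from $\setHypergraphsH$; this stratification by $\size{\setX}=0,1,\ldots,\setSizeK$ is what drives both the gluing argument and the inductive construction of simple hypergraphs.
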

\begin{thm}\label{thm:problem1}
For every fixed arity ${\setSizeK}\in \N$, the $\N$-solvability problem is in \NETIME.
\end{thm}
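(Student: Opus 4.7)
The plan is to reduce $\N$-solvability to $\Z$-solvability at the cost of a nondeterministic exponential blowup, and then invoke Theorem~\ref{thm:problem2}. The first step is a structural lemma generalizing Theorem~11 of~\cite{DBLP:conf/lics/HofmanLT17} (originally stated for $\setSizeK=1$) to arbitrary $\setSizeK$. Such a lemma should assert: if $\target$ admits an $\N$-permutation sum of $\setInput$, then there is one whose ``named'' data-value support (outside the support of $\target$) has size bounded polynomially in the input, with the remaining ``anonymous'' contributions consisting of symmetric aggregates of data vectors over fresh data values. The key property of the symmetric contributions is that their effective coefficients may be taken in $\Z$ rather than in $\N$ without loss of generality, because sufficiently many symmetric copies over fresh data can realize any integer combination as a genuine nonnegative one.

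With such a decomposition in hand, the algorithm proceeds by nondeterministically guessing a finite ``named'' support $S\subseteq\setD$ containing the support of $\target$, and, for each input data vector $\vec v\in\setInput$, the specific instantiations over $S$ that appear with bounded, guessed nonnegative coefficients. The remaining anonymous part of the solution, given by symmetric aggregates, is encoded as a finite set of new data vectors whose coefficients range over $\Z$. Since a $\setSizeK$-element subset of $S\cup\{\text{fresh symbols}\}$ can be instantiated in $O(|S|^\setSizeK)$ ways, and the total number of bounded contributions is also exponential in the input, the resulting $\Z$-solvability instance has exponential size.

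By Theorem~\ref{thm:problem2}, $\Z$-solvability on that instance is decided in time polynomial in its (exponential) size, hence in exponential time with respect to the original input. Combined with the nondeterministic guessing, this places $\N$-solvability in $\NETIME$. Correctness in one direction follows from the structural lemma; correctness in the other direction follows by expanding each symmetric aggregate into an actual nonnegative combination of permuted elements of $\setInput$ using a polynomial number of fresh data values per aggregate.

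The main obstacle is the generalization of Theorem~11 from $\setSizeK=1$ to arbitrary $\setSizeK$. For $\setSizeK=1$ each data vector touches individual data values in isolation, whereas for $\setSizeK\geq 2$ a single $\setSizeK$-set can link multiple named data values, and orbits of $\setSizeK$-sets can overlap in combinatorially intricate ways. Bounding the named support of a minimal solution requires a new analysis that controls how many distinct named values can be forced to coexist across overlapping $\setSizeK$-sets, and a careful exchange argument to replace ``almost-fresh'' data values by genuinely fresh ones. A secondary technical point is to guarantee that the symmetrization preserves nonnegativity after the $\Z$-to-$\N$ conversion for $\setSizeK\geq 2$, which requires producing enough fresh witnesses per symmetric class for the combinatorics of $\setSizeK$-subsets to work out.
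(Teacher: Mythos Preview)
Your high-level strategy—nondeterministically guess a bounded part of the solution and reduce the remainder to $\Z$-solvability—matches the paper's. But the decomposition you propose (named support versus symmetric/anonymous aggregates) is not the one the paper uses, and the obstacles you flag as the main difficulties are bypassed by a different idea.

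The paper partitions $\setInput$ into \emph{reversible} and \emph{nonreversible} data vectors: $\vec y\in\setInput$ is reversible in $\perm{\setInput}$ if $-\vec y\in\sums{\N}{\perm{\setInput}}$. For reversible vectors, $\N$- and $\Z$-coefficients are interchangeable \emph{by definition}, so the $\Z$-to-$\N$ conversion you call a ``secondary technical point'' is automatic on that side of the partition and needs no fresh-witness combinatorics at all. The actual work is to bound the total contribution of the nonreversible part. This is done by the homomorphism $\Proj(\vec y)=\sum_{x\in\kset{\setSizeK}{\setD}}\vec y(x)\in\Z^{\dimension}$, together with the key lemma that $\vec y$ is reversible in $\perm{\setInput}$ if and only if $\Proj(\vec y)$ is reversible in the finite set $\Proj(\setInput)\subseteq\Z^{\dimension}$; the nontrivial direction is a short smoothing argument summing $\vec y\circ\pi$ over all permutations of the common support. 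Once reversibility is characterized inside $\Z^{\dimension}$, Pottier's bounds on minimal nonnegative solutions of ordinary linear systems give an exponential bound on the sum of nonreversible coefficients, and this bound transfers back verbatim to the data-vector side. The algorithm then guesses the (exponentially bounded) nonreversible part over an exponentially-sized data support and hands the reversible remainder to $\Z$-solvability.

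This route never needs to analyse overlapping $\setSizeK$-sets, minimal named supports, or exchange arguments replacing almost-fresh by fresh data; the projection collapses all the data structure to a single $\Z^{\dimension}$ problem, and the dependence on $\setSizeK$ enters only through the size of the guessed support. Your proposed path via a direct structural bound on named supports may be feasible, but it attacks a harder combinatorial problem than the paper actually needs to solve, and your proposal does not supply that argument.
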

For \changed{a}{the}{263} special case of the solvability problems when the arity ${\setSizeK} = 1$, 
the \PTIME\ and \NPTIME\ complexity bounds, respectively, have been shown in~\cite{DBLP:conf/lics/HofmanLT17}.
Thus, according to Theorem~\ref{thm:problem2}, 
in the case of $\Z$-solvability the complexity remains polynomial for every ${\setSizeK} > 1$.
In the case of $\N$-solvability the \changed{$\NPTIME-hardness$}{\NPTIME-hardness}{20} carries over to every ${\setSizeK}>1$, and
hence a complexity gap remains open between \NPTIME\ and \NETIME.

\section{Proof of Theorem~\ref{thm:problem2}.}\label{sec:tohyp}

We start by reformulating the problem in terms of (undirected) weighted uniform hypergraphs
(Lemma~\ref{lem:red} below).

Fix a positive integer ${\setSizeK} \geq 1$.
By a \emph{${\setSizeK}$-hypergraph} we mean a pair $\hypergraphH = (\setVertices, \mu)$ where $\setVertices\subset \setD$ 
is a finite set called vertices
and $\mu\colon \kset{\setSizeK}{\setVertices} \to \ktuple{\dimension}{\Z}$ is a weight function (when ${\setSizeK}$ 
is not relevant we skip it and write a hypergraph).
As before, \changed{the numbers ${\setSizeK}$ and $\dimension$ we call}{we call the numbers ${\setSizeK}$ and $\dimension$}{237}  the arity and the dimension of $\hypergraphH$, respectively. 
\changed{In the case of arity $\setSizeK=2$}{When $\setSizeK=2$,}{274} we speak of \emph{graphs} instead of hypergraphs.
(Note however that the (hyper)graphs we consider are always \emph{weighted}, with weights from $\ktuple{\dimension}{\Z}$.)

Because vertices are\changed{ essentially}{}{277} data then instead of usual $u,v$ for vertices we will use Greek letters $\da,\db,\dc,\dd,\de\ldots$. 
\changedd{}{Also, for a hypergraph we denote the set of its vertices $\setVertices$ by $$\Vertices{\hypergraphH}= \setVertices.$$}{405} 

The set of hyperedges is then defined as
\[
\he \hypergraphH = \setof{{\edge} \in \kset{\setSizeK} \setVertices}{\mu({\edge}) \neq {\vec{0}}}.
\]

When $\da\in {\edge}$ for $\da\in \setVertices$ and ${\edge} \in \he \hypergraphH$, we say that the vertex $\da$ 
is \emph{\changed{incident}{adjacent}{NULL}} with the 
hyperedge ${\edge}$.
The degree of $\da$ is the  number of hyperedges \changed{incident}{adjacent}{NULL} with $\da$. 
\changed{Vertices of degree 0 we call}{We call vertices of degree $0$}{237} \emph{isolated}.
Two hypergraphs are \emph{isomorphic} if there is a bijection between their sets of 
vertices that preserves the value of the weight function. 
Two hypergraphs are \emph{equivalent} if they are isomorphic after
removing their isolated vertices.
For a \changedd{set}{family}{null} ${\setHypergraphsH}$ of hypergraphs, by $\eqs {\setHypergraphsH}$ we denote the set of all hypergraphs equivalent
to ones from ${\setHypergraphsH}$. If ${\setHypergraphsH}=\{\hypergraphH\}$ then instead of $\eqs{\{\hypergraphH\}}$ we write $\eqs{\hypergraphH}$.

Scalar multiplication and addition are defined naturally for hypergraphs.
First, for $c \in \Z$ and a hypergraph $\hypergraphH = (\setVertices, \mu)$, 
let $c \cdot \hypergraphH \eqdef (\setVertices, c\cdot \mu)$.
Second, given two hypergraphs $\hypergraphG = (\setVerticesW, \widetilde{\mu})$ and $\hypergraphH = (\setVertices, \mu)$
of the same arity ${\setSizeK}$ and dimension $\dimension$,
we first add isolated vertices to both hypergraphs to make their 
vertex sets equal to the union \changed{$\setVertices \cup \setVerticesW$}{$\setVerticesW \cup \setVertices$}{292},
thus obtaining $\hypergraphG' = (\setVerticesW\cup \setVertices, \widetilde{\mu}')$ and $\hypergraphH' = 
(\setVerticesW\cup \setVertices, \mu')$ with the 
accordingly extended
weight functions $\widetilde{\mu}', \mu'\colon \kset{\setSizeK}{{(\setVerticesW \cup \setVertices)}} \to \ktuple{\dimension}{\Z}$, 
and then define $\hypergraphG + \hypergraphH \eqdef (\setVerticesW\cup \setVertices, \widetilde{\mu}' + \mu')$.
Using these operations we define $\Numbers$-sums 
of a family ${\setHypergraphsH}$ of hypergraphs of the same arity and dimension
(again, we overload the symbol $\sums \Numbers {\placeHolder}$ further and use it for hypergraphs):
\[
\sums \Numbers {{\setHypergraphsH}} = 
\setof{c_1 \cdot \hypergraphH_1 + \ldots + c_l \cdot \hypergraphH_l}
{c_1\ldots c_l \in \Numbers, \hypergraphH_1\ldots \hypergraphH_l \in {\setHypergraphsH}}.
\]

We say that a hypergraph $\hypergraphH$ is a $\Numbers$-sum of ${\setHypergraphsH}$ \emph{up to equivalence} if 
$\hypergraphH$ is a $\Numbers$-sum of hypergraphs equivalent to elements of ${\setHypergraphsH}$: 
$
\hypergraphH \in \sums \Numbers {\eqs {\setHypergraphsH}}.
$
\begin{exa}
We illustrate the $\Z$-sums in arity $\setSizeK=2$\changed{, i.e.,}{\ i.e.}{22} using graphs.
Consider the following graph $\hypergraphG$ consisting of 3 vertices and 2 edges:

\begin{tikzpicture}
[place/.style={circle,draw=\mycolorOne!50,fill=\mycolorOne!20,thick,inner sep=0pt,minimum size=4mm}]
\node[place]      (A1)         {$\da$};
\node[place]      (B1)       [left=of A1]  {$\db$};
\node[place]      (C1)      [right=of A1] {$\dc$};
\draw (B1) to node [above] {$\vec{x}$} (A1) to node [above] {$\vec{y}$} (C1) ;
\end{tikzpicture}

\noindent
Let $\vec x, \vec y \in \ktuple{\dimension}{\Z}$ be arbitrary vectors.
\changed{Here}{Here, there}{169,170} are two examples of graphs which can be presented
as a $\Z$-sum of $\{\hypergraphG\}$ up to equivalence, using a sum
of two graphs equivalent to $\hypergraphG$: 

\begin{tikzpicture}
[place/.style={circle,draw=\mycolorOne!50,fill=\mycolorOne!20,thick,inner sep=0pt,minimum size=4mm}]
\begin{scope}[node distance=7mm and 7mm]
\node[place]      (B1i)      at (0,-1) {$\db$};
\node[place]      (A1i)      [above right =of B1i]   {$\da$};
\node[place]      (C1i)      [below right=of A1i] {$\dc$};
\draw (B1i) to node [above, sloped] {$2\vec x$} (A1i) to node [above, sloped] {$\vec y$} (C1i) 
to node [below] {$\vec y$} (B1i);

\node(R) at (3,-0.5) {=};

\node[place]      (Ai)    at (5,-0.5)                     {$\db$};
\node[place]      (Bi)       [left=of Ai]  {$\da$};
\node[place]      (Ci)      [right=of Ai] {$\dc$};
\draw (Bi) to node [above] {$\vec x$} (Ai) to node [above] {$\vec y$} (Ci);

\node(R1) at (7,-0.5) {+};

\node[place]      (Aii)    at (9,-0.5)                     {$\da$};
\node[place]      (Bii)       [left=of Aii]  {$\db$};
\node[place]      (Cii)      [right=of Aii] {$\dc$};
\draw (Bii) to node [above] {$\vec x$} (Aii) to node [above] {$\vec y$} (Cii);
\end{scope}
\end{tikzpicture}

\noindent
and a difference of two such graphs:

\begin{tikzpicture}
[place/.style={circle,draw=\mycolorOne!50,fill=\mycolorOne!20,thick,inner sep=0pt,minimum size=4mm}]
\begin{scope}[node distance=7mm and 7mm]
\node[place]      (A2ii)  at (-1,-2.5)     {$\da$};
\node[place]      (B2ii)       [right=of A2ii]  {$\db$};
\node[place]      (C2ii)      [right =of B2ii] {$\dc$};
\node[place]      (D2ii)      [right=of C2ii] {$\dd$};

\draw (A2ii) to node [above] {$\vec x$} (B2ii);
\draw (C2ii) to node [above] {$-\vec x$} (D2ii);

\node(Rii) at (3,-2.5) {=};

\node[place]      (Aii)    at (5,-2.5)                     {$\db$};
\node[place]      (Bii)       [ left=of Aii]  {$\da$};
\node[place]      (Cii)      [right=of Aii] {$\dc$};
\draw (Bii) to node [above] {$\vec x$} (Aii) to node [above] {$\vec y$} (Cii);

\node(R1ii) at (7,-2.5) {-};

\node[place]      (Cii)    at (9,-2.5)   {$\dc$};
\node[place]      (Aii)   [left =of Cii]  {$\db$};
\node[place]      (Bii)       [right=of Cii]  {$\dd$};
\draw (Aii) to node [above] {$\vec y$} (Cii) to node [above] {$\vec x$} (Bii);
\end{scope}
\end{tikzpicture}
\end{exa}

We are now ready to formulate the \emph{hypergraph $\Z$-sum problem},
to which \changed{the}{}{312} $\Z$-solvability is going to be reduced:

\decproblem{hypergraph $\Z$-sum}
{A finite set ${\setHypergraphsH}$ of hypergraphs and a target hypergraph $\hypergraphH$, all of the same arity and dimension}
{Is $\hypergraphH$ a $\Z$-sum of ${\setHypergraphsH}$ up to equivalence?}
{.60}

\begin{lem}\label{lem:red}
The $\Z$-solvability problem reduces in logarithmic space to the hypergraph $\Z$-sum problem.
The reduction preserves the arity and dimension.
\end{lem}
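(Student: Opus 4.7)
The plan is to identify data vectors with their induced hypergraphs and show that this identification translates permutation sums into sums up to equivalence.

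To every data vector $\vec{v}\colon\kset{\setSizeK}{\setD}\to\ktuple{\dimension}{\Z}$ I associate its \emph{support hypergraph} $\hypergraphH_{\vec{v}}=(\setVertices_{\vec{v}},\mu_{\vec{v}})$, where $\setVertices_{\vec{v}}\subset\setD$ is the (finite) set of data values that occur in some $\setSizeK$-set $\edge$ with $\vec{v}(\edge)\neq\vec{0}$, and $\mu_{\vec{v}}$ is the restriction of $\vec{v}$ to $\kset{\setSizeK}{\setVertices_{\vec{v}}}$. Given the input instance $(\setInput,\target)$ of $\Z$-solvability with $\setInput=\{\vec{v_1},\ldots,\vec{v_m}\}$, the reduction outputs the instance $(\{\hypergraphH_{\vec{v_1}},\ldots,\hypergraphH_{\vec{v_m}}\},\hypergraphH_{\target})$ of the hypergraph $\Z$-sum problem. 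This transformation only reads the finite nonzero part of each vector, so it is clearly computable in logarithmic space and preserves arity and dimension.

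Next I would verify correctness via two observations. \emph{First} (permutations vs.\ equivalences): for any data permutation $\pi$, the hypergraph $\hypergraphH_{\vec{v}\circ\pi}$ is isomorphic to $\hypergraphH_{\vec{v}}$ via $\pi$ restricted to supports, hence equivalent to it; conversely, any hypergraph equivalent to $\hypergraphH_{\vec{v}}$ has support in bijection with $\setVertices_{\vec{v}}$ via some weight-preserving map, which (since $\setD$ is infinite and both supports are finite) extends to a data permutation $\pi$ such that $\hypergraphH_{\vec{v}\circ\pi}$ equals the given hypergraph up to isolated vertices. Thus $\eqs{\hypergraphH_{\vec{v}}}$ matches, modulo isolated vertices, the hypergraphs arising from $\perm{\{\vec{v}\}}$. \emph{Second} (algebraic operations): the hypergraph sum $\hypergraphG+\hypergraphH$ was defined by first padding vertex sets with isolated vertices and then adding weight functions pointwise; this is exactly pointwise addition of the underlying data vectors, so $\hypergraphH_{\vec{v}+\vec{w}}$ equals $\hypergraphH_{\vec{v}}+\hypergraphH_{\vec{w}}$ up to isolated vertices, and similarly $\hypergraphH_{c\cdot\vec{v}}=c\cdot\hypergraphH_{\vec{v}}$.

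Combining the two observations, $\target\in\sums{\Z}{\perm{\setInput}}$ if and only if $\hypergraphH_{\target}\in\sums{\Z}{\eqs{\{\hypergraphH_{\vec{v_1}},\ldots,\hypergraphH_{\vec{v_m}}\}}}$, which is exactly the hypergraph $\Z$-sum question for the produced instance. There is no serious obstacle in the argument; the only point requiring care is the converse direction of the first observation, namely ensuring that a weight-preserving bijection between finite supports always lifts to a bijection of the whole (infinite) $\setD$, which uses nothing more than the infiniteness of the data domain and the finiteness of supports. Bookkeeping of isolated vertices has to be done explicitly, but it is harmless because the hypergraph $\Z$-sum problem is posed up to equivalence.
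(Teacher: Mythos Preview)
Your proposal is correct and takes essentially the same approach as the paper: both encode a data vector by its support hypergraph (vertices are the data values appearing in nonzero $\setSizeK$-sets, weight function is the restriction of the vector) and claim the obvious equivalence. Your write-up is in fact more detailed than the paper's, which merely states the encoding and asserts the iff without elaborating on the permutation/equivalence correspondence or the bookkeeping of isolated vertices.
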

\begin{proof}
The reduction encodes each data vector $\vec v$ by a hypergraph $\hypergraphH = (\setVertices, \mu)$, where
\[
\setVertices \quad = \quad \bigcup \{x\in \kset{\setSizeK}{\setD} \mid \vec{v}(x)\not={\vec{0}}\} 
\]
and $\mu$ is the restriction of $\vec v$ to $\kset{\setSizeK}{\setVertices}$.
In this way\changed{}{,}{323} a set $\setInput$ of data vectors and a target data vector $\vec v$ are transformed into
a set of ${\setHypergraphsH}$ hypergraphs and a target hypergraph $\hypergraphH$ such that
$\vec v$ is a $\Z$-permutation sum of $\setInput$ if, and only if\changed{}{,}{325} $\hypergraphH$ is a $\Z$-sum of ${\setHypergraphsH}$
up to equivalence.
\end{proof}
Thus, from now on, we concentrate on solving the hypergraph $\Z$-sum problem. One may ask why we perform such a reduction. 
The reasons are of pedagogical nature, namely graphs are more convenient for examples and proofs by pictures.  

As the next step, we formulate our core technical result (Theorem~\ref{thm:core}). \changed{The theorem
provides a local characterisation of the hypergraph $\Z$-sum problem  
and in consequence it enables solving the problem in polynomial time.}{In the theorem we state that the hypergraph $\Z$-sum problem
is equivalent to a local $\Z$-sum problem, defined in the following paragraph. It is easy to design a polynomial time algorithm for the local $\Z$-sum problem. 
This gives us the proof of Theorem~\ref{thm:problem2}.}{331}
Let $\hypergraphH = (\setVertices, \mu)$ be a hypergraph. For a set $\setX \subset \setD$ we consider
$\setof{\edge\in \he \hypergraphH}{\setX \subseteq \edge}$ ---
the set of all hyperedges that include $\setX$ ---
and define the weight of $\setX$ as the sum of weights of all these edges:
\[
\weight{setX}(\hypergraphH) \eqdef \sum_{\edge\in \he \hypergraphH, \setX\subseteq \edge} \mu(\edge).
\]

In particular when $\setX\not\subseteq \setVertices$ then $\weight{setX}(\hypergraphH) =\vec{0}$. 
Further, $\weight{emptyset}(\hypergraphH)$ is the sum of weights of all 
hyperedges of $\hypergraphH$. 
When $\setX = \{\da\}$, then $\weight{setX}(\hypergraphH)$ is the sum of weights of all hyperedges \changed{incident}{adjacent}{NULL} with $\da$
(we write $\weight{da}$ 
instead of 
$\weight{set{da}}$
). Finally,
when the cardinality of $\setX$ equals ${\setSizeK}$ i.e. $\size{\setX} = {\setSizeK}$, $\weight{setX}(\hypergraphH)= \mu(\setX)$
is the weight of \changedd{a}{the}{null} hyperedge $\setX$\changed{, given as a 
defining component of $\hypergraphH$}{}{341}.

\begin{exa}
Let $\vec x, \vec y, \vec z \in \ktuple{\dimension}{\Z}$ be arbitrary vectors.
As an illustration, consider the graph $\hypergraphG$ on the left below, with the weights of chosen subsets of its vertices
listed on the right:

\begin{figure*}[h]
\begin{tikzpicture}
[place/.style={circle,draw=\mycolorOne!50,fill=\mycolorOne!20,thick,inner sep=0pt,minimum size=4mm}]
\node[place]      (A)                       {$\da$};
\node[place]      (B)       [below left=of A]  {$\db$};
\node[place]      (C)      [below right=of A] {$\dc$};
\draw (A) to node [above] {$\vec x$} (B) to node [above] {$\vec y$} (C) to node [above] {$\vec z$} (A);

\node (D) at (4,-0.5) {
\begin{minipage}[t][2cm][b]{0,5\textwidth}
\ \\
 \begin{align*}
     &\weight{{da,db}}(\hypergraphG)=\vec x\\
     &\weight{db}(\hypergraphG)=\vec x+\vec y\\
     &\weight{dc}(\hypergraphG)=\vec z+ \vec y\\
     &\weight{emptyset}(\hypergraphG)=\vec x+\vec y+\vec z.
 \end{align*}
\end{minipage}
};
\end{tikzpicture}
\end{figure*}
\end{exa}

Weights \changed{}{of sets }{351}are important as they form a family
of homomorphisms from hypergraphs to $\ktuple{\dimension}{\Z}$. Namely,
for any subset of vertices $\setX$ we have that 
\[\weight{setX}(\hypergraphH + \hypergraphH') = \weight{setX}(\hypergraphH) + \weight{setX}(\hypergraphH')\]
and \
\[c \weight{setX}(\hypergraphH)= \weight{setX}(c\hypergraphH)\text{, for any }c\in \Z.\]
This \changed{allow}{allows}{356} us to design a partial test for the hypergraph 
$\Z$-sum problem. If there is a set of vertices $\setX$ such that
$\weight{setX}(\hypergraphH)$ \changed{can not}{cannot}{357} be expressed as a $\Z$-sum of vectors in $\weight{setX}(\eqs {\setHypergraphsH})$ then
$\hypergraphH\not\in \sums \Z {\eqs {\setHypergraphsH}}$. This motivates the next definition.

We say that a ${\setSizeK}$-hypergraph $\hypergraphH\changedd{ = (\setVertices, \mu)}{}{405}$ is 
\emph{locally} a $\Z$-sum of a family $\setHypergraphsH$ of hypergraphs if
for every $\setX\subseteq \changedd{\setVertices}{\Vertices{\hypergraphH}}{405}$ of \changed{the }{}{360}cardinality $\size{\setX}\leq \setSizeK$, its weight 
$\weight{setX}(\hypergraphH)$ 
is a $\Z$-sum
of weights of $\size{\setX}$-element subsets of vertex sets of hypergraphs from $\setHypergraphsH$

\changedd{\begin{equation}
\label{eq:loc}
\weight{setX}(\hypergraphH) \in \sums \Z {\setof{\weight{setX'}(\hypergraphH ')}{\hypergraphH'= (\setVertices', \mu') \in {\setHypergraphsH}, 
\setX'\subseteq \setVertices', \size{\setX'} = \size{\setX}}}.
\end{equation}
}
{\begin{equation}
\label{eq:loc}
\weight{setX}(\hypergraphH) \in \sums \Z {\setof{\weight{setX'}(\hypergraphH ')}{\hypergraphH' \in {\setHypergraphsH}, 
\setX'\subseteq \Vertices{\hypergraphH'}, \size{\setX'} = \size{\setX}}}.
\end{equation}
}{405}
Note that we only consider hypergraphs $\hypergraphH'\in {\setHypergraphsH}$, and we do not need to consider hypergraphs 
$\hypergraphH' \in \eqs {\setHypergraphsH}$ equivalent to ones from ${\setHypergraphsH}$ as 
\changedd{
\[\setof{\weight{setX'}(\hypergraphH')}{\hypergraphH' = (\setVertices', \mu') \in {\setHypergraphsH},
\setX'\subseteq \setVertices', \size{\setX'} = \size{\setX}} = \setof{\weight{setX}(\hypergraphH')}{\hypergraphH' \in \eqs {\setHypergraphsH}}.\]
}{
\[\setof{\weight{setX'}(\hypergraphH')}{\hypergraphH' \in {\setHypergraphsH},
\setX'\subseteq \Vertices{\hypergraphH'}, \size{\setX'} = \size{\setX}} = \setof{\weight{setX}(\hypergraphH')}{\hypergraphH' \in \eqs {\setHypergraphsH}}.\]
}{405}
\begin{thm}\label{thm:core}
The following conditions are equivalent,
 for a finite set ${\setHypergraphsH}$ of hypergraphs and a hypergraph $\hypergraphH$, all of the same arity and dimension:
 \begin{enumerate}
 \item $\hypergraphH$ is a $\Z$-sum of $\eqs{{\setHypergraphsH}}$;
 \item $\hypergraphH$ is locally a $\Z$-sum of ${\setHypergraphsH}$.
 \end{enumerate}
\end{thm}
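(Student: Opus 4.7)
The easy direction (1) $\Rightarrow$ (2) is immediate from linearity of the weight functions $\weight{setX}$ together with the observation that equivalent hypergraphs agree on weights after a suitable renaming of vertices. Concretely, if $\hypergraphH = \sum_i c_i \hypergraphH_i$ with each $\hypergraphH_i\in\eqs{\setHypergraphsH}$, then for every $\setX\subseteq\setVertices$ with $\size{\setX}\leq\setSizeK$ we have $\weight{setX}(\hypergraphH)=\sum_i c_i\weight{setX}(\hypergraphH_i)$. For each $i$ let $\hypergraphH_i'\in\setHypergraphsH$ be equivalent to $\hypergraphH_i$: either $\setX$ meets an isolated vertex of $\hypergraphH_i$ (in which case $\weight{setX}(\hypergraphH_i)=\vec{0}$), or $\setX$ corresponds via the equivalence to some $\setX'\subseteq\setVertices'$ of the same cardinality with $\weight{setX}(\hypergraphH_i)=\weight{setX'}(\hypergraphH_i')$. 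Collecting these gives exactly an expression of $\weight{setX}(\hypergraphH)$ of the form~\eqref{eq:loc}.

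The substantive direction is (2) $\Rightarrow$ (1), which I would prove constructively. The plan is a downward induction on $j=\setSizeK, \setSizeK-1,\ldots, 0$, maintaining the invariant that after subtracting a suitable $\Z$-combination of hypergraphs from $\eqs{\setHypergraphsH}$, the current residual hypergraph $\hypergraphH^{(j)}$ satisfies $\weight{setX}(\hypergraphH^{(j)})=\vec{0}$ for all $\setX$ with $\size{\setX}\geq j$, while still being locally a $\Z$-sum of $\setHypergraphsH$. Once $j=0$, the residual has zero total weight summed over any single vertex subset of every size up to $\setSizeK$; in particular every hyperedge weight is $\vec{0}$, so the residual is $\vec{0}$ and the accumulated combination witnesses (1).

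The inductive step is the key. Given the invariant at level $j+1$, pick any $\setX$ of size $j$ and use the local hypothesis at level $j$ to write $\weight{setX}(\hypergraphH^{(j+1)})$ as a $\Z$-combination of weights $\weight{setX'}(\hypergraphH')$ with $\hypergraphH'\in\setHypergraphsH$ and $\size{\setX'}=j$. The task is to realize this at the hypergraph level: each summand must be replaced by an actual hypergraph in $\eqs{\setHypergraphsH}$ whose contribution to $\weight{setX}$ is the intended one, but whose contribution to every $\weight{setY}$ with $\size{\setY}>j$ is zero. This is exactly what the ``simple hypergraph'' gadgets of Sections~\ref{sec:simple_hypergraphs}--\ref{sec:The construction of simple hypergraphs} provide, and the mechanism is the one illustrated by the triangle-gadget computation of Example~\ref{ex:two} and Figure~\ref{fig:6}: by alternating signs on translates of a given hypergraph around extra ``auxiliary'' vertices, contributions at strictly larger-size subsets cancel in pairs, while the contribution at the chosen $j$-subset survives.

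The main obstacle is the gadget construction at the heart of the inductive step. For $\setSizeK=2$ the alternating triangle trick is enough and is direct, but for general $\setSizeK$ the cancellations at intermediate cardinalities $j<\size{\setY}<\setSizeK$ interact nontrivially: one must build a family of simple hypergraphs that realizes a prescribed weight at level $j$ and simultaneously cancels at every intermediate level. Designing these combinatorial gadgets, and proving that the $\Z$-span of their level-$j$ weights exhausts everything permitted by the local hypothesis, is precisely what the technical sections develop; the rest of the argument is the inductive scaffolding described above.
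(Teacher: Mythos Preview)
Your easy direction is fine and matches the paper. The hard direction, however, has the induction running the wrong way, and this is fatal rather than cosmetic.

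You propose a downward induction maintaining $\weight{setX}(\hypergraphH^{(j)})=\vec 0$ for all $\size{\setX}\ge j$, and you ask for gadgets in $\sums{\Z}{\eqs{\setHypergraphsH}}$ that hit a prescribed $j$-level weight while contributing zero to every $\weight{setY}$ with $\size{\setY}>j$. Such gadgets do not exist: any $\setSizeK$-hypergraph with $\weight{setY}=\vec 0$ for all $\size{\setY}=\setSizeK$ has every hyperedge weight zero and is trivial. The alternating-sign construction you describe (as in Lemmas~\ref{lem:exists_edge}, \ref{lem:exists_vertex}, \ref{lem:simple graph raising}) does the opposite: it produces $(m,\vec a)$-simple hypergraphs that are $(m-1)$-isolated, i.e.\ zero at all \emph{smaller} levels (Property~6 of Definition~\ref{def:simple_hypergraphs}), while freely creating new nonzero weights at levels $\ge m$.

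Accordingly, the paper's induction runs upward: one makes the residual $m$-isolated for $m=0,1,\ldots,\setSizeK$ by subtracting $(m{+}1,\cdot)$-simple hypergraphs, which by design do not disturb the already-cleared lower levels (Lemma~\ref{lem:adding simple graphs doesn't change some weights}). There is also a genuinely new ingredient you are missing at each step: after pushing all nonzero $(m{+}1)$-level weights onto at most $2m{+}1$ vertices (the ``pre $(m{+}1)$-isolated'' condition, Definition~\ref{def:almost m isolation}), one must argue they are forced to vanish. This is not a cancellation-in-pairs argument but a rank computation: the constraints from the vanished $m$-level form a linear system whose matrix is the reduction matrix $\reductionMatrix{2m+1}{m+1}{m}$, and Lemma~\ref{lem:main_matrix} (via the spectrum of the Kneser graph $K_{2m+1,m}$) shows it has full rank, so the only solution is zero. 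Without this algebraic step the inductive passage from $m$-isolated to $(m{+}1)$-isolated does not close.
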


\noindent
Before we embark on proving the result (in the next section) we first discuss how it
implies Theorem~\ref{thm:problem2}.
Recall that the arity ${\setSizeK}$ is fixed. 
Instead of checking if $\hypergraphH\changedd{ = (\setVertices, \mu)}{}{405}$ is a $\Z$-sum of $\eqs{{\setHypergraphsH}}$, the 
algorithm checks if
$\hypergraphH$ is \emph{locally} a $\Z$-sum of ${\setHypergraphsH}$. 
Observe that the condition~\eqref{eq:loc} amounts to solvability of a 
(classical) system of linear equations\changed{, as in~\eqref{eq:lineq}}{}{374}. \changedd{}{Let $\setVertices$ be the set of vertices of the 
hypergraph $\hypergraphH$.}{405}
Therefore, the algorithm tests $\Z$-solvability of a 
system of the corresponding 
$\dimension\cdot (1+\size{\setVertices}+\size{\kset{2}{\setVertices}}+\ldots \size{\kset{\setSizeK}{\setVertices}})$
linear equations,
$\dimension$ for every subset $\setX \subseteq \setVertices$ of the cardinality at most ${\setSizeK}$.
The number of equations is exponential in ${\setSizeK}$, but due to fixing ${\setSizeK}$
it is polynomial in the input hypergraphs $\hypergraphH$ and ${\setHypergraphsH}$.
Thus, Theorem~\ref{thm:problem2} is proved once we prove Theorem~\ref{thm:core}.

\begin{exa}
Let us \changed{recall}{continue}{379} Example~\ref{ex:two}. For the target we have that
$\weight{emptyset}(\changed{a}{\vec{v}}{379-381}_{\dc \dd})= \weight{dc} (\changed{a}{\vec{v}}{379-381}_{\dc \dd})= $
${\weight{dd}}(\changed{a}{\vec{v}}{379-381}_{\dc \dd})=$
${\weight{dc dd}}(\changed{a}{\vec{v}}{379-381}_{\dc \dd})= 6$, 
and for the triangle we have  
$\weight{emptyset}(\changed{a}{\vec{v}}{379-381}_{\dd \dc \de})=3$, $\weight{x}(\changed{a}{\vec{v}}{379-381}_{\dd \dc \de}) =2 $ for any $x\in \{\dd, \dc, \de\}$,
and $\weight{{ x, y}}(\changed{a}{\vec{v}}{379-381}_{\dd \dc \de})=1$ for any $\{x, y\}\subset \{\dd,\dc,\de\}$.

As $6 = 3 + 3$ and $6= 2+ 2+ 2$ and $6 = 1+ 1+ 1+1 +1 +1$ we see that
the target is locally a $\Z$-sum of the triangle. Hence \changed{}{the }{383}target is a 
\changed{$\Z-$sum}{$\Z$-sum}{383} of the triangle up to equivalence. Moreover, if we change 
$6$ to a smaller positive number then the target graph will not be 
a $\Z$-sum of the triangle up to equivalence. For example\changed{}{\ if, we change $6$ to}{NULL} $3$\changed{\ can not be 
expressed as a sum of twos.}{, then $\vec{v}_{\dc \dd}$ is not a $\Z$-sum of the \changed{triangle}{triangles}{NULL} as ${\weight{dd}}(\vec{v}_{\dc \dd})=3$  
is not divisible by $2$
i.e. the weight of any single vertex of the triangle. 
}{386}
\end{exa}

\section{Proof of Theorem~\ref{thm:core} (the case for \texorpdfstring{$\setSizeK=2$}{k=2}).}\label{sec:prove}

The implication $1 \implies 2$ is immediate. 
\changedd{Indeed, suppose $\hypergraphH = (\setVertices, \mu) = c_1 \cdot \hypergraphH_1 + \ldots + c_l \cdot \hypergraphH_l$, 
where $c_i \in \Z$ and
$\hypergraphH_i \changed{= (\setVertices_i, \mu_i)}{}{NULL} \in \eqs {\setHypergraphsH}$.
Let $\setX\subseteq \setVertices$ be a subset of cardinality $\size{\setX} \leq \setSizeK$.
We  have
\[
\weight{setX}(\hypergraphH) = c_1 \cdot \weight{setX}(\hypergraphH_1) + \ldots + c_l \cdot \weight{setX}(\hypergraphH_l),
\]
which implies,
for some hypergraphs $\hypergraphH'_i = \in {\setHypergraphsH}$ equivalent to 
$\hypergraphH_i$, and subsets
$\setY_i \subseteq \changed{\setVertices'_i}{\setVertices_i}{NULL}$ of cardinality $\size{\setY_i} = \size{\setX}$, 
the following equality holds
\[
\weight{setX}(\hypergraphH) = c_1 \cdot \weight{setY_1}(\hypergraphH_1') + \ldots + c_l \cdot \weight{setY_l}(\hypergraphH_l').
\]
As $\setX$ was chosen arbitrarily, this shows that $\hypergraphH$ is locally a $\Z$-sum of ${\setHypergraphsH}$.
}
{
Indeed, suppose $\hypergraphH = c_1 \cdot \hypergraphH_1 + \ldots + c_l \cdot \hypergraphH_l$, 
where $c_i \in \Z$ and
$\hypergraphH_i \changedd{\changed{= (\setVertices_i, \mu_i)}{}{NULL}}{}{405} \in \eqs {\setHypergraphsH}$.
Let $\setX\subseteq \Vertices{\hypergraphH}$ be a subset of cardinality $\size{\setX} \leq \setSizeK$.
We  have
\[
\weight{setX}(\hypergraphH) = c_1 \cdot \weight{setX}(\hypergraphH_1) + \ldots + c_l \cdot \weight{setX}(\hypergraphH_l),
\]
which implies,
for some hypergraphs $\hypergraphH'_i  \in {\setHypergraphsH}$ equivalent to 
$\hypergraphH_i$, and subsets
$\setY_i \subseteq \Vertices{\hypergraphH'_i}$ of cardinality $\size{\setY_i} = \size{\setX}$, 
the following equality holds
\[
\weight{setX}(\hypergraphH) = c_1 \cdot \weight{setY_1}(\hypergraphH_1') + \ldots + c_l \cdot \weight{setY_l}(\hypergraphH_l').
\]
As $\setX$ was chosen arbitrarily, this shows that $\hypergraphH$ is locally a $\Z$-sum of ${\setHypergraphsH}$.
}{405}

The proof of the converse implication $2 \implies 1$ is \changedd{much }{}{null}more involved.\changed{}{ The case for arity $\setSizeK =1$ is considered in 
\cite{DBLP:conf/lics/HofmanLT17}.}{398}
Here, for pedagogical reasons, we provide a simplified version of the proof for arity $\setSizeK =2$\changed{, i.e.,}{\ i.e.}{22} for (undirected) graphs. 
\changed{The case for arity $\setSizeK =1$ is considered in 
\cite{DBLP:conf/lics/HofmanLT17}.}{}{398}
Consequently, we speak of edges instead of hyperedges.
We recall that the graphs we consider are actually $\ktuple{\dimension}{\Z}$-\emph{weighted} graphs.

Let $\hypergraphH\changedd{ = (\setVertices , \mu)}{}{405}$ be a graph
and assume that $\hypergraphH$ is locally a $\Z$-sum of ${\setHypergraphsH}$.
We are going to demonstrate that $\hypergraphH$ is equivalent to a $\Z$-sum of $\eqs{{\setHypergraphsH}}$.
Since we work with arity $\setSizeK=2$, 
the assumption amounts to the following conditions, 
for every vertex $\da \in \changedd{\setVertices}{\Vertices{\hypergraphH}}{405}$ and every edge $\edge \in \he \hypergraphH$:
\begin{equation}
\label{eq:ass0}
\weight{emptyset}(\hypergraphH) \in \sums \Z {\setof{\weight{emptyset}(\hypergraphH')\;}{\hypergraphH' \changedd{= (\setVertices', \mu')}{}{405} \in 
{\setHypergraphsH}}} 
\end{equation}

\begin{equation}\label{eq:ass1}
\weight{da}(\hypergraphH) \in \sums \Z {\setof{\weight{da'}(\hypergraphH')}{ \hypergraphH' \changedd{= (\setVertices', \mu')}{}{405} \in {\setHypergraphsH}, 
\da' \in \changedd{\setVertices'}{\Vertices{\hypergraphH'}}{405}}} 
\end{equation}

\begin{equation}\label{eq:ass2}
\weight{edge}(\hypergraphH) \in \sums \Z {\setof{\weight{edge'}(\hypergraphH')\,}{ \hypergraphH' \changedd{= (\setVertices', \mu')}{}{405} \in 
{\setHypergraphsH}, 
\edge' \in \he {\hypergraphH'}}}.
\end{equation}

\begin{clm}\label{cl:zero_total_weight}
W.l.o.g.~we may assume that $\hypergraphH \changedd{= (\setVertices, \mu)}{}{405}$ satisfies $\weight{emptyset}(\hypergraphH) = {\vec{0}}$.
\end{clm}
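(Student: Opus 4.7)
The plan is to use assumption~\eqref{eq:ass0} to subtract from $\hypergraphH$ a $\Z$-combination of hypergraphs from $\setHypergraphsH$ that cancels its total weight, while preserving the local $\Z$-sum condition. Since the target conclusion of the theorem is only that $\hypergraphH$ is a $\Z$-sum of $\eqs{\setHypergraphsH}$, adding back the subtracted combination (whose summands already lie in $\setHypergraphsH \subseteq \eqs{\setHypergraphsH}$) will recover the desired conclusion for the original $\hypergraphH$. Thus reducing to the zero-total-weight case is genuinely ``without loss of generality.''

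Concretely, I would first pick, using~\eqref{eq:ass0}, integers $c_1,\ldots,c_l \in \Z$ and hypergraphs $\hypergraphH_1,\ldots,\hypergraphH_l \in \setHypergraphsH$ with $\weight{emptyset}(\hypergraphH) = \sum_{i=1}^l c_i \weight{emptyset}(\hypergraphH_i)$, and set $\hypergraphH^\star \eqdef \hypergraphH - \sum_{i=1}^l c_i \hypergraphH_i$. By $\Z$-linearity of $\weight{emptyset}(\cdot)$ (observed just before the statement), this immediately gives $\weight{emptyset}(\hypergraphH^\star) = \vec{0}$. Next I would verify that $\hypergraphH^\star$ still satisfies the local $\Z$-sum condition with respect to $\setHypergraphsH$: for any subset $\setX$ of the (possibly enlarged) vertex set of $\hypergraphH^\star$ of cardinality at most $\setSizeK$, linearity of $\weight{setX}(\cdot)$ yields
\[
\weight{setX}(\hypergraphH^\star) \;=\; \weight{setX}(\hypergraphH) - \sum_{i=1}^l c_i\, \weight{setX}(\hypergraphH_i),
\]
and each summand on the right is, by \eqref{eq:ass1}--\eqref{eq:ass2} when $\setX$ lies in the original vertex set of the hypergraph in question (and trivially $\vec{0}$ otherwise), a $\Z$-combination of weights of $\size{\setX}$-element subsets of vertex sets of hypergraphs from $\setHypergraphsH$.

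I do not expect any serious obstacle here; the whole argument is a direct application of the $\Z$-linearity of the maps $\weight{setX}(\cdot)$. The only mild subtlety to keep in mind is that $\hypergraphH^\star$ may carry additional vertices inherited from the $\hypergraphH_i$, but since the conclusion of the theorem is stated up to equivalence (i.e., modulo isolated vertices), this causes no harm in reassembling a presentation of $\hypergraphH$ once one is obtained for $\hypergraphH^\star$.
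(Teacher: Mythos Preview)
Your proposal is correct and follows essentially the same approach as the paper: both use assumption~\eqref{eq:ass0} to find a $\Z$-combination $\hypergraphH' = \sum_i c_i \hypergraphH_i \in \sums{\Z}{\setHypergraphsH}$ with $\weight{emptyset}(\hypergraphH') = \weight{emptyset}(\hypergraphH)$, replace $\hypergraphH$ by $\hypergraphH - \hypergraphH'$, and note that both the target conclusion and the local $\Z$-sum hypothesis are preserved under this replacement. Your write-up is in fact slightly more explicit than the paper's about why the local condition survives and about the harmless extra vertices.
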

\begin{proof}
Indeed, due to the assumption~\eqref{eq:ass0} and due the following equality
(note that the symbol $\sums \Z {\placeHolder}$ applies to vectors on the left,
and to graphs on the right)
\begin{equation}
\sums \Z {\setof{\weight{emptyset}(\hypergraphH')\;}{\hypergraphH' \in {\setHypergraphsH}}} = 
{\setof{\weight{emptyset}(\hypergraphH')\;}{\hypergraphH' \in \sums \Z {\setHypergraphsH}}},
\end{equation}
there is a graph $\hypergraphH' \changedd{= ({\setVertices}', \mu')}{}{405} \in \sums \Z {{\setHypergraphsH}}$
with $\weight{emptyset}(\hypergraphH') = \weight{emptyset}(\hypergraphH)$.
Therefore 
$\hypergraphH \in \sums \Z {\eqs{{\setHypergraphsH}}}$ if, and only if\changed{}{,}{325} $\hypergraphH-\hypergraphH' \in \sums \Z {\eqs{{\setHypergraphsH}}}$, and hence
we can replace $\hypergraphH$ by $\hypergraphH-\hypergraphH'$.
\changed{}{Note that, if $\hypergraphH$ is locally a ${\Z}$-sum of ${\setHypergraphsH}$ then $\hypergraphH-\hypergraphH'$ is also locally a ${\Z}$-sum of 
$\setHypergraphsH$.}{NULL}
\end{proof}

We proceed in two steps.
We start by defining a class of particularly simple graphs, called \emph{$\hypergraphH$-\simple} graphs,
and argue that $\hypergraphH$ is a $\Z$-sum of these graphs (Lemma~\ref{lem:SG} below).
Then we prove that every $\hypergraphH$-\simple\ graph is representable as a $\Z$-sum of ${\setHypergraphsH}$ up to equivalence
(Lemma~\ref{lem:SGH} below). 
\changed{Composing the two claims we get that $\hypergraphH$ is a $\Z$-sum of ${\setHypergraphsH}$ up to equivalence.}
{We may compose these two lemmas because of Lemma~\ref{lem:sumofupto}. By this we get that 
$\hypergraphH$ is a $\Z$-sum of ${\setHypergraphsH}$ up to equivalence.
}{419}

\changed{This structure of the proof is correct due to the following simple lemma:}{}{421}

\begin{lem}\label{lem:sumofupto}
For a family of ${\setSizeK}$-hypergraphs \changed{${\setHypergraphsH}$}{$\setG$}{442} and ${\setSizeK}$-hypergraphs \changed{$\hypergraphH, \hypergraphH_1\ldots \hypergraphH_l$}{$\hypergraphG, \hypergraphG_1\ldots \hypergraphG_l$}{442}, all of the 
same dimension,
If \changed{$\hypergraphH_1\ldots \hypergraphH_l \in \sums \Z { \eqs {\setHypergraphsH}}$}{$\hypergraphG_1\ldots \hypergraphG_l \in \sums \Z { \eqs {\setG}}$}{422}
and \changed{$\hypergraphH \in \sums \Z {\eqs{\{\hypergraphH_1\ldots 
\hypergraphH_l\}}}$}{$\hypergraphG \in \sums \Z {\eqs{\{\hypergraphG_1\ldots 
\hypergraphG_l\}}}$}{422} then
\changed{$\hypergraphH \in \sums \Z { \eqs {\setHypergraphsH}}$.}{$\hypergraphG \in \sums \Z { \eqs {\setG}}$.}{422}
\end{lem}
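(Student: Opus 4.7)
The plan is to prove the lemma by pure substitution, pushing the two layers of $\Z$-sums ``up to equivalence'' into a single one. Unrolling the hypotheses, fix witnesses
\[
\hypergraphG \;=\; \sum_i c_i\,\hypergraphH_i, \qquad \hypergraphH_i \sim \hypergraphG_{j_i} \text{ for some } j_i\in\{1,\ldots,l\},
\]
and for each $j$,
\[
\hypergraphG_j \;=\; \sum_k d_{j,k}\,\hypergraphK_{j,k}, \qquad \hypergraphK_{j,k}\in\eqs{\setG}.
\]
The task then reduces to producing, for each index $i$, an actual presentation of $\hypergraphH_i$ itself (not just of the equivalent $\hypergraphG_{j_i}$) as a $\Z$-sum of elements of $\eqs{\setG}$, after which substitution immediately yields $\hypergraphG \in \sums{\Z}{\eqs{\setG}}$.

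For this transport step, choose a data permutation $\pi_i\colon\setD\to\setD$ whose restriction to the non-isolated vertices of $\hypergraphG_{j_i}$ realises the weight-preserving bijection witnessing $\hypergraphG_{j_i}\sim\hypergraphH_i$. Applying $\pi_i$ termwise gives
\[
\pi_i(\hypergraphG_{j_i}) \;=\; \sum_k d_{j_i,k}\,\pi_i(\hypergraphK_{j_i,k}),
\]
and each $\pi_i(\hypergraphK_{j_i,k})$ still lies in $\eqs{\setG}$, because equivalence of hypergraphs is preserved under data permutations (so $\eqs{\setG}$ is closed under them). By the choice of $\pi_i$, the hypergraph $\pi_i(\hypergraphG_{j_i})$ and $\hypergraphH_i$ carry the same non-isolated part and can differ only in which isolated vertices they name in their vertex sets. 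Reconciling these names by adding isolated vertices to, or deleting isolated vertices from, individual summands $\pi_i(\hypergraphK_{j_i,k})$---both operations preserving equivalence class and hence membership in $\eqs{\setG}$---produces an exact presentation $\hypergraphH_i = \sum_k d_{j_i,k}\,\hypergraphK_{j_i,k}'$ with $\hypergraphK_{j_i,k}'\in\eqs{\setG}$.

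Substituting termwise into $\hypergraphG = \sum_i c_i\hypergraphH_i$ gives the single expression
\[
\hypergraphG \;=\; \sum_{i,k} c_i\, d_{j_i,k}\,\hypergraphK_{j_i,k}',
\]
witnessing $\hypergraphG\in\sums{\Z}{\eqs{\setG}}$. The main obstacle, and the only content of the lemma beyond bookkeeping, is the reconciliation of isolated vertices between $\pi_i(\hypergraphG_{j_i})$ and $\hypergraphH_i$; once one isolates the two closure properties of $\eqs{\setG}$---closure under data permutations and closure under addition/deletion of isolated vertices of an individual hypergraph---the proof becomes a formal substitution argument.
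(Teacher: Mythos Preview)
Your argument is correct and proceeds differently from the paper's. The paper packages the lemma as the operator identity $\sums{\Z}{\eqs{\sums{\Z}{\eqs{\setG}}}}=\sums{\Z}{\eqs{\setG}}$ and derives it by chaining three closure facts stated as ``trivial'': $\eqs{\hypergraphG_i+\hypergraphG_j}\subseteq\eqs{\hypergraphG_i}+\eqs{\hypergraphG_j}$ (Minkowski sum), $\sums{\Z}{\sums{\Z}{\familyF}}=\sums{\Z}{\familyF}$, and $\eqs{\eqs{\familyF}}=\eqs{\familyF}$. You instead unroll both layers explicitly, transport each inner presentation along a chosen data permutation $\pi_i$, and substitute. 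Your route exposes the actual mechanism (closure of $\eqs{\setG}$ under data permutations and under adjusting isolated vertices); the paper's is terser and highlights the operator-level structure. The first of the paper's three facts is precisely your transport-and-reconcile step in abstract form, so the two proofs have the same content organised differently.

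One small wrinkle, shared with the paper: when you propose to remove an extra vertex of $\pi_i(\hypergraphG_{j_i})$ by deleting it from each individual summand $\pi_i(\hypergraphK_{j_i,k})$, you implicitly assume that a vertex isolated in the sum is isolated in every summand, which can fail under cancellation. The paper's fact~(1) glosses over exactly the same point. In context this is harmless---the hypergraph formalism is a proxy for data vectors, which ignore isolated vertices, and throughout the paper membership in $\sums{\Z}{\eqs{\setG}}$ is effectively read up to equivalence---but strictly at the level of exact vertex sets the reconciliation step needs a word more care than either argument supplies.
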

(Lemma~\ref{lem:sumofupto} is expressible, more succinctly, as 
\[
\sums \Z {\eqs{\sums \Z { \eqs {\changed{\setHypergraphsH}{\setG}{422}}}}} = \sums \Z { \eqs {\changed{\setHypergraphsH}{\setG}{422}}}.)
\]
\begin{proof}
 This is simply because of three trivial facts:
 \begin{enumerate}
 \item $\eqs{{\hypergraphG}_i+{\hypergraphG}_j}\subseteq \eqs{{\hypergraphG}_i} + \eqs{{\hypergraphG}_j}$ where ${\hypergraphG}_i, {\hypergraphG}_j$ are any 
two 
 ${\setSizeK}$-hypergraphs of the same dimension, and \changed{}{the }{429}second plus 
is \changed{}{the }{429}Minkowski sum.
\item $\sums \Z \familyF= \sums \Z {\sums \Z \familyF}$, for any $\familyF$ a family of ${\setSizeK}$-hypergraphs of the same 
dimension.
\item $\eqs \familyF= \eqs {\eqs \familyF}$, for any $\familyF$ a family of ${\setSizeK}$-hypergraphs of the same dimension.
 \end{enumerate}
 Now, 
 \begin{align*}
    \sums \Z {\eqs{\sums \Z { \eqs {\setHypergraphsH}}}} \subseteq  &&\text{ \changed{due to}{because of}{200} 1 }\\
    \sums \Z {\sums \Z {\eqs {\eqs {\setHypergraphsH}}}}= &&\text{ \changed{due to}{because of}{200} 2 and 3}\\
    \sums \Z {\eqs {\setHypergraphsH}}.&&
 \end{align*}
 
 The inclusion in the opposite direction is trivial.
\end{proof}

\begin{defi}\label{def:simple2graphs}
For every vector $\vec a \in \ktuple{\dimension}{\Z}$ we define an \emph{$\vec a$-edge simple graph} $\simpleGraph{2}{\vec a}$
(shown on the left) and an \emph{$\vec a$-vertex simple graph} $\simpleGraph{1}{\vec a}$ (shown on the right):

\begin{minipage}{0.45\linewidth}\label{edge_simple}
\EStikz{\vec a}{}{}{}{}
\end{minipage}
\begin{minipage}{0.45\linewidth}
\VStikz{\vec a}{}{}{}
\end{minipage}

\noindent
We do not specify names of vertices, as \changed{anyway }{}{443}we will consider these graphs up to equivalence.
\changed{Both types of graphs we call}{We call both types of graphs}{237} \emph{\simple\ graphs}.
\end{defi}
Let 
\begin{align}\label{eq:defsetWeight2}
\setWeight{2}{\hypergraphH} = \sums \Z {\setof{\weight{edge}(\hypergraphH)}{\edge \in \he{\hypergraphH}}} \subseteq \ktuple{\dimension}{\Z}
\end{align}
and 
\begin{align}\label{eq:defsetWeight1}
\setWeight{1}{\hypergraphH} = \sums \Z {\setof{\weight{da}(\hypergraphH)}{\da \in \changedd{{\setVertices}}{\Vertices{\hypergraphH}}{405}}} \subseteq 
\ktuple{\dimension}{\Z}.
\end{align}
Now, let $\setSimpleGraph{1} \hypergraphH \eqdef \setof{\simpleGraph{1}{\vec a}}{\vec a \in \setWeight{1}{\hypergraphH}}$ and 
$\setSimpleGraph{2}{\hypergraphH} \eqdef \setof{\simpleGraph{2}{\vec a}}{\vec a \in \setWeight{2}{\hypergraphH}}.$

\begin{lem}\label{lem:SG}
$\hypergraphH \in \sums \Z {\eqs{\setSimpleGraph{1}{\hypergraphH} \cup \setSimpleGraph{2}{\hypergraphH}}}$.
\end{lem}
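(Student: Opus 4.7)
I would write $\hypergraphH$ as an explicit $\Z$-linear combination of equivalents of simple graphs, contributing one edge-simple graph per edge of $\hypergraphH$ and cleaning up the resulting residual with vertex-simple graphs. Throughout I would use the assumption $\weight{emptyset}(\hypergraphH) = \vec{0}$ available by Claim~\ref{cl:zero_total_weight}.

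\textbf{Construction and residual.} Fix two fresh data values $\di, \dk \notin \setVertices$. For each edge $\edge = \{\da, \db\} \in \he{\hypergraphH}$ with weight $\vec{w}_\edge := \mu(\edge)$, let $C_\edge$ be the 4-cycle on $\da, \db, \di, \dk$ with cyclic order chosen so that the $\{\da,\db\}$-edge of $C_\edge$ carries weight $+\vec{w}_\edge$; the remaining three edges carry alternating signs of $\vec{w}_\edge$. Then $C_\edge$ is equivalent to $\simpleGraph{2}{\vec{w}_\edge}$, and since $\vec{w}_\edge \in \setWeight{2}{\hypergraphH}$ by definition, it is a legitimate element of $\eqs{\setSimpleGraph{2}{\hypergraphH}}$. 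Setting $C := \sum_{\edge \in \he{\hypergraphH}} C_\edge$ and $E := C - \hypergraphH$, I would observe three features of $E$ that follow from the construction together with the zero-sum assumption: every edge of $E$ is incident to $\di$ or $\dk$; the central edge $\{\di, \dk\}$ accumulates weight $\sum_\edge \vec{w}_\edge = \weight{emptyset}(\hypergraphH) = \vec{0}$ and thus vanishes; and the incident edge weights at each of $\di$ and $\dk$ also sum to $\vec{0}$. Hence $E$ is a ``double balanced star'', and the identity $\hypergraphH = C - E$ reduces the lemma to writing $E$ as a $\Z$-sum of simple-graph equivalents.

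\textbf{Main obstacle.} Any balanced star decomposes into V-shapes centred at its centre via the standard reference-leaf trick, which would give $E$ as a $\Z$-sum of vertex-simple graphs. The obstacle, however, is that the V-shape weights obtained in this naive way lie a priori only in $\setWeight{2}{\hypergraphH}$, whereas vertex-simple graphs in $\setSimpleGraph{1}{\hypergraphH}$ must have weights in the (possibly strictly smaller) submodule $\setWeight{1}{\hypergraphH}$. To overcome this I would modify the naive decomposition by inserting further edge-simple 4-cycles --- permitted since their weights live in $\setWeight{2}{\hypergraphH}$ --- which redistribute weight among the star edges so that every surviving V-shape has weight expressible as a $\Z$-combination of the generators $\weight{da}(\hypergraphH)$, $\da \in \setVertices$, of $\setWeight{1}{\hypergraphH}$. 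The key identity I would try to exploit is that at each $\da \in \setVertices$ the ``$\di$-side'' and ``$\dk$-side'' contributions to the stars sum to $-\weight{da}(\hypergraphH) \in \setWeight{1}{\hypergraphH}$, so that matching these pairs correctly via auxiliary 4-cycles is, I expect, the technical heart of the argument.
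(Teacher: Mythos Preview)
Your approach is correct and genuinely different from the paper's. You match off edges first with one edge-simple $4$-cycle per edge, then repair the residual double star with vertex-simple graphs; the paper does the opposite order. It first uses vertex-simple V-shapes (one per vertex $\db$, with weight $\weight{db}(\hypergraphH)\in\setWeight{1}{\hypergraphH}$, routed through two fresh vertices) to reduce to the case where every vertex weight of $\hypergraphH$ is $\vec 0$; then it iteratively adds edge-simple $4$-cycles, each time cancelling the currently largest edge in a fixed well-founded order, until at most three nonisolated vertices remain; a small linear system then forces the residual to be trivial. The paper's order is slightly cleaner because edge-simple graphs have all vertex weights $\vec 0$, so the second phase never disturbs the invariant established in the first, and no extra consolidation step is needed. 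Your order pays for this with exactly the obstacle you identified --- the naive star decomposition produces V-shape weights in $\setWeight{2}{\hypergraphH}$ rather than $\setWeight{1}{\hypergraphH}$ --- but your proposed fix is sound: for each $\da\in\setVertices$ an auxiliary $4$-cycle on $\da,\di,\da_0,\dk$ (with $\da_0\in\setVertices$ fixed) of weight $p_\da\in\setWeight{2}{\hypergraphH}$ transfers the $\di$-side weight to the $\dk$-side, leaving a single star at $\dk$ whose leaf at $\da$ carries $-\weight{da}(\hypergraphH)\in\setWeight{1}{\hypergraphH}$; the accumulated corrections at $\da_0$ and $\di$ vanish because $\sum_\da p_\da=-\weight{emptyset}(\hypergraphH)=\vec 0$. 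What your more explicit construction buys is a one-shot decomposition without induction on a well-order; what the paper's buys is a template that generalises verbatim to the arbitrary-$k$ case treated in Sections~\ref{sec:Proof:thm:core}--\ref{sec:core_proof}, where the iterative ``cancel the largest nonzero $\weight{setX}$ and descend'' argument is reused essentially unchanged.
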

\begin{proof} 
\changedd{ Suppose $\hypergraphH= ({\setVertices}, \mu)$.}{Let $\setVertices$ be the set of vertices of the hypergraph $\hypergraphH$.}{405}
The proof of the lemma is done in steps.

\begin{clm}\label{cl:grapha_one}
There is a ${\hypergraphG}\in \sums{\Z}{\eqs{\setSimpleGraph{1}{\hypergraphH}}}$ such that for any vertex $\da\in \setD$ \changed{holds}{it holds that}{451} 
$\weight{da}({\hypergraphG} + 
\hypergraphH) 
= {\vec{0}}$.
\end{clm}
We will use it to further simplify our problem, as $\hypergraphH \in \sums \Z {\eqs{\setSimpleGraph{1}{\hypergraphH} \cup \setSimpleGraph{2}{\hypergraphH}}}$ 
if\changed{}{,}{325} and only if\changed{}{,}{325} ${\hypergraphG} + \hypergraphH \in \sums \Z {\eqs{\setSimpleGraph{1}{\hypergraphH} \cup 
\setSimpleGraph{2}{\hypergraphH}}}$.

\begin{proof}[Proof of the claim.]
Let $\da, \da'$ \changed{are}{be}{456} two vertices not in ${\setVertices}$ and let $\familyF\subseteq 
\eqs{\setSimpleGraph{1}{\hypergraphH}}$ be the family of all 
$\simpleGraph{1}{\vec{a}}$ \simple\ graphs (depicted below) 
\begin{figure}[h!]
\begin{minipage}{0.45\linewidth}
\VStikz{\vec a}{\da}{\da'}{\db}
\end{minipage}
\end{figure}
where $\vec{a}= \weight{db}(\hypergraphH)$ for $\db \in {\setVertices}$. 

We define ${\hypergraphG}= \sum_{\hypergraphF\in \familyF} \hypergraphF$. As $\familyF\subset 
\eqs{\setSimpleGraph{1}{\hypergraphH}}$\changed{}{,}{461} the 
graph \changed{${\hypergraphG}\in \sums \Z {\eqs{\setSimpleGraph{1}{\hypergraphH}}}$}{${\hypergraphG}$ is in $\sums \Z 
{\eqs{\setSimpleGraph{1}{\hypergraphH}}}$}{461}. Moreover, 
\begin{enumerate}
 \item for every $\db \in {{\setVertices}}$ \changed{holds}{it holds that}{451} 
$-\weight{db}(\hypergraphH)=\weight{db}({\hypergraphG})$;
 \item $\weight{da}({\hypergraphG})={\vec{0}}$ as for each $\hypergraphF\in \familyF$\changed{}{,}{463} $\weight{da}(\hypergraphF) ={\vec{0}}$;
 \item $\weight{da'}({\hypergraphG})={\vec{0}}$ as $\weight{da'}({\hypergraphG})=\sum_{\db \in {\setVertices}} 
\weight{db}(\hypergraphH) = 
2\weight{emptyset}(\hypergraphH)={\vec{0}}$,
where the second equality reflects the fact that every edge has two ends and the last equality is due to Claim~\ref{cl:zero_total_weight}.
\end{enumerate}
As $\weight{x}$ is a homomorphism we get that $\weight{x}(\hypergraphH+{\hypergraphG}) ={\vec{0}}$ for every $x\in {\setVertices}\cup \{\da,\da'\}$. \changedd{The same 
holds for other vertices as they are isolated.}{}{null}
\end{proof}

\changed{Due to}{Because of}{200} the previous claim \changed{w.l.g}{w.l.o.g}{468}\changed{}{\ we may restrict our self to the following case.}{468-478} 
\begin{clm} We assume that $\hypergraphH$ has \changed{following 
properties}{the following property}{468} \changed{$\weight{db}(\hypergraphH)=\vec{0}$ 
for every $\db\in \setVertices$}{for every $\db\in \setVertices$, it holds that $\weight{db}(\hypergraphH)=\vec{0}$}{468-469}.
\end{clm}
\changed{Here}{Here,}{169,170} there is one issue that should be discussed. Suppose $\hypergraphH'= \hypergraphH + \hypergraphG$ as in \changed{the}{}{471} 
Claim~\ref{cl:grapha_one}. The issue is \changed{}{that}{471}
$\setSimpleGraph{1}{\hypergraphH} \cup \setSimpleGraph{2}{\hypergraphH} = \setSimpleGraph{1}{\hypergraphH'} \cup \setSimpleGraph{2}{\hypergraphH'}$ may not 
hold.
So if we prove the lemma for $\hypergraphH'$ and $\setSimpleGraph{1}{\hypergraphH'} \cup \setSimpleGraph{2}{\hypergraphH'}$ it does not necessarily carry 
\changed{on }{}{472}
to $\hypergraphH$ and $\setSimpleGraph{1}{\hypergraphH} \cup \setSimpleGraph{2}{\hypergraphH}$.
Fortunately, it is sufficient for us if $\sums{\Z}{\setSimpleGraph{1}{\hypergraphH'} \cup \setSimpleGraph{2}{\hypergraphH'}} \subseteq 
\sums{\Z}{\setSimpleGraph{1}{\hypergraphH} \cup 
\setSimpleGraph{2}{\hypergraphH}}$. The last inclusion holds. 
Indeed, for every $\db\in \changedd{\setVertices'}{\Vertices{\hypergraphH'}}{405}$ \changed{holds}{it holds that}{451} $\weight{db}(\hypergraphH')$ is a sum of 
weights of vertices in $\hypergraphH$ thus 
 $\sums{\Z}{\setSimpleGraph{1}{\hypergraphH'}} \subseteq 
 \sums{\Z}{\setSimpleGraph{1}{\hypergraphH}}$ and similarly $\sums{\Z}{\setSimpleGraph{2}{{\hypergraphH'}}} \subseteq 
\sums{\Z}{\setSimpleGraph{2}{\hypergraphH}}$.
Thus, we do not loose generality \changed{due to}{because of}{200} the proposed restriction.

\begin{clm}
There is a \changed{hypergraph}{graph}{480} ${\hypergraphG\changed{}{'}{NULL}}\in \sums \Z {\eqs{\setSimpleGraph{2}{\hypergraphH}}}$ 
 such that\changedd{ ${\hypergraphG\changed{}{'}{NULL}}=(\setVertices' ,\mu')$}{}{405}, $\hypergraphH + \hypergraphG\changed{}{'}{NULL}$ has at most $3$ 
\changed{non-isolated}{nonisolated}{107} vertices, and 
$\weight{db}({\hypergraphH} + {\hypergraphG\changed{}{'}{NULL}})={\vec{0}}$ \changed{for ever}{for every}{481} $\db\in\changedd{ {\setVertices}\cup 
\setVertices'}{\Vertices{\hypergraphH+\hypergraphG'}}{405}$.
\end{clm}

We will use it to further simplify our problem, as $\hypergraphH \in \sums {\Z}  {\eqs{\setSimpleGraph{1}{\hypergraphH} \cup 
\setSimpleGraph{2}{\hypergraphH}}}$ if\changed{}{,}{325} and only if\changed{}{,}{325} 
${\hypergraphH} + {\hypergraphG} \in \sums {\Z}  {\eqs{\setSimpleGraph{1}{\hypergraphH} \cup 
\setSimpleGraph{2}{\hypergraphH}}}$. 

\begin{proof}[Proof of the claim.]

\changed{We construct ${\hypergraphG}'$ gradually as a sum ${\hypergraphG}'=\sum_{i=1}^{last} {\hypergraphG}_i'$ in parallel with a sequence ${\hypergraphH_i}$ 
where $ {\hypergraphH_0}=  {\hypergraphH}$ and
$ {\hypergraphH_{i+1}}= {\hypergraphH_i} + {\hypergraphG}_{i+1}'$.}{
We construct ${\hypergraphG}'$ gradually as a sum of a sequence of graphs ${\hypergraphG}_1',{\hypergraphG}_2'\ldots{\hypergraphG}_{last}'$.
The sequence ${\hypergraphG}_i'$ is constructed in parallel with a sequence ${\hypergraphH_i}$ 
where $ {\hypergraphH_0}=  {\hypergraphH}$ and
$ {\hypergraphH_{i+1}}= {\hypergraphH_i} + {\hypergraphG}_{i+1}'$.}{486}

The main property of the sequence $ {\hypergraphH_i}$ is that 
$ {\hypergraphH_i}>  {\hypergraphH_{i+1}}$ in some \changed{well founded}{well-founded}{488} 
\changed{quasi order}{quasi-order}{488} on graphs. \changed{Let $\ALG$ be an algorithm}{Suppose that, there is $\ALG$ an algorithm}{486-494}, that takes as an 
input $ {\hypergraphH_{i-1}}$ and produces the next ${\hypergraphG}_i'\in 
\eqs{\setSimpleGraph{2}{\hypergraphH}}$. The 
precondition of the algorithm $\ALG$ is that 
$ {\hypergraphH_{i-1}}$ has more than $3$ \changed{non-isolated}{nonisolated}{107} vertices. 
\changed{Thus}{Now}{486-494}, due to the well-foundedness of the quasi order, the sequence ${\hypergraphG}_i'$ is finite. 
Further, the graph $ {\hypergraphH_{last}}$ has at most $3$ 
\changed{non-isolated}{nonisolated}{107} vertices \changed{due to}{because of}{200} the precondition of $\ALG$. 
To this end, we need to define the order on graphs and provide the algorithm $\ALG$.

\para{Order on graphs} We assume an arbitrary total order $<$ on vertices \changed{{\setVertices}}{${\setVertices}$}{495}.
We lift the order to an order on edges $\{\da, \db\}$. We define it as the lexicographic order on 
pairs $(\da, \db)$ satisfying $\da > \db$.	
Finally, the order is extended to a quasi-order on graphs: ${\hypergraphG} < {\hypergraphG}'$ iff
$\edge < \edge'$, where $\edge$ and $\edge'$ are the largest edges in ${\hypergraphG}$ and ${\hypergraphG}'$, respectively.

\para{The algorithm $\ALG$}
	\changed{Suppose,}{Suppose}{499} \changed{$(\da,\db)$}{$\{\da,\db\}$}{499} is the \changed{biggest}{largest}{499} edge in the graph $ {\hypergraphH_i}$ 
and that $ {\hypergraphH_i}$ has at least $4$ \changed{non-isolated}{nonisolated}{107} 
vertices. 
	Observe that 
	$\weight{da}( {\hypergraphH_i})={\vec{0}}$. Indeed, $\weight{da}( {\hypergraphH_i})=
	\weight{da}(\hypergraphH+\sum_{j=1}^{i}{\hypergraphG}_j')=\weight{da}(\hypergraphH)+\sum_{j=1}^{i}\weight{da}({\hypergraphG}_j')= 
	{\vec{0}} + \sum_{j=1}^{i}\weight{da}({\hypergraphG}_j')$ but each \changed{${\hypergraphG}_j'\in\eqs{\setSimpleGraph{2}{\hypergraphH}}$}{${\hypergraphG}_j'$ belongs to $\eqs{\setSimpleGraph{2}{\hypergraphH}}$}{503} and
	by the definition of edge simple graphs $\weight{de}(\setSimpleGraph{2}{\hypergraphH})={\vec{0}}$ for every vertex $\de$. 
	As a consequence there must be at least one vertex $\dc\not\in \{\da,\db\}$ such 
	that $(\da,\dc)$ is an edge in $ {\hypergraphH_i}$. As $ {\hypergraphH_i}$ has at least $4$ \changed{non-isolated}{nonisolated}{107} vertices, 
	there is $\dd\not\in \{\dc,\da,\db\}$ a \changed{non-isolated}{nonisolated}{107} vertex in 
	$ {\hypergraphH_i}$. We define \changed{${\hypergraphG}_{i+1}$}{${\hypergraphG}_{i+1}'$}{509} as follows 
	
	\begin{minipage}{0.45\linewidth}
\EStikz{\vec{a}}{\da}{\db}{\dc}{\dd} 
\end{minipage}
	for $\vec{a} = \weight{{da,db}}( {\hypergraphH_i}).$

	Observe that edges $(\da,\dc), (\db,\dd), (\dc,\dd)$ are smaller than the edge $(\da,\db)$ and \changed{$({\hypergraphH_i} + 
{\hypergraphG}_{i+1})(\{\da,\db\})=\vec{0}$}{$({\hypergraphH_i} + 
{\hypergraphG'}_{i+1})(\{\da,\db\})=\vec{0}$}{512} so 
	$ {\hypergraphH_i} + \changed{{\hypergraphG}_{i+1}}{{\hypergraphG'}_{i+1}}{512} <  {\hypergraphH_i}$.
\end{proof}

After usage of the claim we get a graph $\hypergraphH'\eqdef \hypergraphH + \changed{\sum_{i=0}^{last} {\hypergraphG}_i'}{{\hypergraphG}'}{514}$. We know that 
$\hypergraphH\in \sums \Z {\eqs{\setSimpleGraph{1}{\hypergraphH}\cup \setSimpleGraph{2}{\hypergraphH}}}$ if\changed{}{,}{325} 
and 
only if\changed{}{,}{325}
$ \hypergraphH'\in \sums \Z {\eqs{\setSimpleGraph{1}{\hypergraphH}\cup \setSimpleGraph{2}{\hypergraphH}}}$.
The graph $\hypergraphH'$ has at most $3$ \changed{non-isolated}{nonisolated}{107} \changed{nodes}{vertices}{NULL}, $\weight{emptyset}(\hypergraphH') 
={\vec{0}},$ and
$\weight{da}(\hypergraphH')={\vec{0}}$ for any vertex $\da$.

Suppose that the set of \changed{non-isolated}{nonisolated}{107} vertices of $\hypergraphH'$ is a subset of $\{\da,\db,\dc\}$. We may write the following system 
of equations:
 \begin{align}
 \begin{split}
 \weight{da}(\hypergraphH') = \weight{{da,db}}(\hypergraphH') + \weight{{da,dc}}(\hypergraphH') = {\vec{0}}\\
 \weight{db}(\hypergraphH') = \weight{{da,db}}(\hypergraphH') + \weight{{db,dc}}(\hypergraphH') = {\vec{0}}\\
 \weight{dc}(\hypergraphH') = \weight{{da,dc}}(\hypergraphH') + \weight{{db,dc}}(\hypergraphH') = {\vec{0}}
 \end{split}
 \label{eq:unique_solutions}
 \end{align}
The only solution is $\weight{{da,db}}(\hypergraphH') = \weight{{db,dc}}(\hypergraphH') = \weight{{da,dc}}(\hypergraphH') = {\vec{0}}$. So $\hypergraphH'$ 
is the empty graph.
In consequence, $\hypergraphH = - \sum_{i=1}^{last}{\hypergraphG}_i \in \sums \Z {\eqs{\setSimpleGraph{1}{\hypergraphH}\cup 
\setSimpleGraph{2}{\hypergraphH}}}$, which completes
the proof of Lemma~\ref{lem:SG}.
\end{proof}

\section*{Representation of \texorpdfstring{$\hypergraphH$}{H}-\simple\ graphs}\label{subsec:simple_graps} 

As the second step\changed{}{,}{527} we prove that every $\hypergraphH$-\simple\ \changed{graphs}{graph}{528} is
representable as a $\Z$-sum of ${\setHypergraphsH}$ up to equivalence (Lemma~\ref{lem:SGH}). We start \changed{from}{with}{529} two preparatory lemmas.

\begin{lem}\label{lem:exists_edge}
Let $\edge\in \he{{\hypergraphG}}$ for a graph ${\hypergraphG}\changedd{ = ({\setVertices\changed{}{'}{NULL}}, \mu\changed{}{'}{NULL})}{}{405}$, and let 
$\weight{edge}({\hypergraphG}) = \vec a$.
Then $\simpleGraph{2}{\vec a} \in \sums \Z {\eqs{\{\hypergraphG\}}}$.
\end{lem}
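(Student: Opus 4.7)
The plan is to express $\simpleGraph{2}{\vec a}$ as a signed $\Z$-combination of four carefully renamed copies of $\hypergraphG$, arranged so that every contribution other than the four target edges cancels.

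Write $\edge = \{\da, \db\}$, and choose two fresh data values $\da', \db' \in \setD \setminus \setVertices'$ with $\da' \neq \db'$. Let $\hypergraphG_1, \hypergraphG_2, \hypergraphG_3, \hypergraphG_4$ denote, respectively, $\hypergraphG$ itself, the copy of $\hypergraphG$ with $\da$ renamed to $\da'$, the copy with $\db$ renamed to $\db'$, and the copy with both $\da \mapsto \da'$ and $\db \mapsto \db'$. Each of these is obtained from $\hypergraphG$ by a bijection on vertex names, hence is isomorphic --- and in particular equivalent --- to $\hypergraphG$, so every $\hypergraphG_i$ lies in $\eqs{\{\hypergraphG\}}$.

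The heart of the argument is then to verify, edge by edge, that the combination $\hypergraphG_1 - \hypergraphG_2 - \hypergraphG_3 + \hypergraphG_4$ equals $\simpleGraph{2}{\vec a}$ up to isolated vertices. The case analysis is short. An edge lying entirely inside $\setVertices' \setminus \{\da, \db\}$ appears with identical weight in all four summands, and so contributes $+1 - 1 - 1 + 1 = 0$. An edge $\{\da, \dc\}$ with $\dc \in \setVertices' \setminus \{\da, \db\}$ survives only in $\hypergraphG_1$ and $\hypergraphG_3$ (the two copies that do not rename $\da$), contributing $+1 - 1 = 0$; the edges $\{\da', \dc\}$, $\{\db, \dc\}$, $\{\db', \dc\}$ cancel by symmetric arguments. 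The only uncancelled contributions come from the four copies of $\edge$ itself, which realise as $\{\da, \db\}, \{\da', \db\}, \{\da, \db'\}, \{\da', \db'\}$ with weights $+\vec a, -\vec a, -\vec a, +\vec a$ respectively.

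After deleting isolated vertices, the result is a $4$-cycle on $\{\da, \db, \da', \db'\}$ whose edges alternate $\pm \vec a$, which is exactly $\simpleGraph{2}{\vec a}$ up to equivalence. There is no genuine obstacle here: the construction is fully explicit, with the only point requiring attention being the freshness of $\da', \db'$, which guarantees that the four renamed copies agree on every ``bulk'' vertex of $\hypergraphG$ and disagree precisely on the two endpoints of $\edge$.
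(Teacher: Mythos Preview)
Your proof is correct and is essentially the same as the paper's: both take the inclusion--exclusion combination $\hypergraphG - \hypergraphG_{\da\to\da'} - \hypergraphG_{\db\to\db'} + \hypergraphG_{\da\to\da',\,\db\to\db'}$ over four renamed copies of $\hypergraphG$ and observe that everything cancels except the four-cycle on $\{\da,\db,\da',\db'\}$. Your edge-by-edge case analysis is in fact slightly more explicit than the paper's own sketch.
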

\begin{proof}
\changedd{}{Let $\setVertices'$ be the set of vertices of the graph $\hypergraphG'$.}{405}
\changedd{Let}{Suppose}{405} $\edge = \{\da, \db\} \subseteq {\setVertices\changed{}{'}{NULL}}$, $\weight{edge}({\hypergraphG}) = \vec a$, and let $\dc, \dd 
\notin 
{\setVertices\changed{}{'}{NULL}}$ \changed{are}{be}{532} two additional 
fresh
vertices outside of ${\setVertices\changed{}{'}{NULL}}$. We consider three graphs ${\hypergraphG}_1, {\hypergraphG}_2, {\hypergraphG}_{12}$ which differ from 
${\hypergraphG}$
only by:
\begin{itemize}
\item replacing $\da$ with $\dc$ (in case of ${\hypergraphG}_1$),
\item replacing $\db$ with $\dd$ (in case of ${\hypergraphG}_2$),
\item replacing both $\da$ and $\db$ with $\dc$ and $\dd$, respectively (in case of ${\hypergraphG}_{12}$).
\end{itemize}

\noindent
Clearly all the three graphs are equivalent to ${\hypergraphG}$.
We claim that ${\hypergraphG} - {\hypergraphG}_1 - {\hypergraphG}_2 + {\hypergraphG}_{12}$ yields:

\EStikz{\vec a}{\da}{\dd}{\db}{\dc}

\noindent
that is a graph equivalent to $\simpleGraph{2}{\vec a}$.
Indeed, the above graph operations cancel out 
all edges \changed{non-incident}{nonadjacent}{107} to the four vertices $\da, \db, \dc, \dd$,
as well as all edges \changed{incident}{adjacent}{NULL} to only one of them.
The two horizontal $\vec a$-weighted edges originate from ${\hypergraphG}$ and ${\hypergraphG}_{12}$,
while the two remaining vertical ones originate from $-{\hypergraphG}_1$ and $-{\hypergraphG}_2$.
\end{proof}

\begin{lem}\label{lem:exists_vertex}
Let $\da\changedd{\in {\setVertices\changed{}{'}{NULL}}}{}{405}$ be a vertex of a graph ${\hypergraphG}\changedd{ = ({\setVertices\changed{}{'}{NULL}}, 
\mu\changed{}{'}{NULL})}{}{405}$, and let 
$\weight{da}({\hypergraphG}) = \vec a$.
Then $\simpleGraph{1}{\vec a} \in \sums \Z {\eqs{\{\hypergraphG\}}}$.
\end{lem}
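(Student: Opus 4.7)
The plan is to proceed by induction on $k$, the number of neighbours of $\da$ in $\hypergraphG$. For the base cases: if $k=0$, then $\vec a=\vec 0$ and $\simpleGraph{1}{\vec 0}$ has only zero-weight edges, so it is equivalent to the empty graph and equals $0\cdot\hypergraphG$; if $k=1$ with $\da$'s unique neighbour $\db$ joined by an edge of weight $\vec a$, I pick a fresh $\dc\notin \setVertices'$ and take $\hypergraphG_1$ to be $\hypergraphG$ with $\da$ replaced by $\dc$. Then $\hypergraphG_1\in\eqs{\{\hypergraphG\}}$ and in $\hypergraphG-\hypergraphG_1$ all edges not incident to $\{\da,\dc\}$ cancel, leaving exactly the two edges $\{\da,\db\}$ with weight $\vec a$ and $\{\dc,\db\}$ with weight $-\vec a$; this graph is equivalent to $\simpleGraph{1}{\vec a}$.

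For the inductive step, assume $k\ge 2$, let $\db_1,\db_2$ be two distinct neighbours of $\da$ with edge weights $\vec a_1,\vec a_2$ (both nonzero by definition of $\he{\hypergraphG}$), and apply Lemma~\ref{lem:exists_edge} to $\{\da,\db_1\}$ to obtain $\simpleGraph{2}{\vec a_1}\in\sums\Z{\eqs{\{\hypergraphG\}}}$. I realise this edge-simple graph by a representative whose four-cycle passes cyclically through $\db_1,\da,\db_2,x$ (with $x$ fresh), carrying edge weights $-\vec a_1,\vec a_1,-\vec a_1,\vec a_1$ around the cycle. Set $\hypergraphG':=\hypergraphG+\simpleGraph{2}{\vec a_1}\in\sums\Z{\eqs{\{\hypergraphG\}}}$. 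Then the edge $\{\da,\db_1\}$ vanishes in $\hypergraphG'$, no new neighbour of $\da$ is created (both cycle edges incident to $\da$ end in $\db_1,\db_2$, which are already neighbours of $\da$), and $\weight{da}(\hypergraphG')=\weight{da}(\hypergraphG)=\vec a$ because $\simpleGraph{2}{\vec a_1}$ contributes $\vec 0$ to every vertex weight. Hence the degree of $\da$ in $\hypergraphG'$ is at most $k-1$, so the induction hypothesis applied to $(\hypergraphG',\da)$ yields $\simpleGraph{1}{\vec a}\in\sums\Z{\eqs{\{\hypergraphG'\}}}$, and since $\hypergraphG'\in\sums\Z{\eqs{\{\hypergraphG\}}}$, Lemma~\ref{lem:sumofupto} lifts this to $\simpleGraph{1}{\vec a}\in\sums\Z{\eqs{\{\hypergraphG\}}}$.

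The only real subtlety is the choice of representative of $\simpleGraph{2}{\vec a_1}$: its four-cycle must be arranged so that the two cycle edges touching $\da$ go to vertices that are already neighbours of $\da$ (namely $\db_1,\db_2$), so as not to add any new neighbour of $\da$. With this arrangement, one cycle edge cancels $\{\da,\db_1\}$ while the other merges into $\{\da,\db_2\}$ (possibly zeroing it out as well), guaranteeing the strict drop in degree that makes the induction well-founded. I expect this choice of representative -- rather than the one-shot inclusion--exclusion that suffices for Lemma~\ref{lem:exists_edge} -- to be the main technical point of the argument.
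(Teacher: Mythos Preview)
Your proof is correct and uses the same two ingredients as the paper: the swap trick (subtracting a copy of $\hypergraphG$ with $\da$ renamed to a fresh vertex) and Lemma~\ref{lem:exists_edge} to merge neighbours of $\da$. The only difference is the order of operations. The paper first performs the swap, obtaining a double-star on $\da$, its fresh copy $\db$, and the original neighbours $\dc_1,\ldots,\dc_n$, and then adds $n-1$ edge-simple graphs (each a four-cycle through $\da,\db,\dc_i,\dc_n$) to collapse all $\dc_i$ into $\dc_n$. You instead collapse first---using an edge-simple four-cycle routed through two existing neighbours $\db_1,\db_2$ of $\da$ to drop the degree of $\da$ by one---and swap only in the base case $k=1$. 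Unwinding your induction yields the same number of edge-simple corrections and the same final three-vertex path; the two arguments are really the same construction traversed in opposite directions.
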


\begin{proof}
\changedd{}{Let $\setVertices'$ be a set of vertices of the graph $\hypergraphG$.}{405} \changedd{Let}{Suppose}{405} a graph ${\hypergraphG}'$ differ from 
${\hypergraphG}$ only by
replacing $\da$ with a fresh vertex $\db \notin {\setVertices\changed{}{'}{NULL}}$.
The graph ${\hypergraphG} - {\hypergraphG}'$ has the following shape (with blue square vertices representing
the set ${\setVertices\changed{}{'}{NULL}} \setminus \{\da\} = \{\dc_1, \dc_2\ldots \dc_n\}$):

\begin{tikzpicture}
[place/.style={circle,draw=\mycolorOne!50,fill=\mycolorOne!20,thick,inner sep=0pt,minimum size=4mm},
stplace/.style={rectangle,draw=\mycolorTrzy!20,fill=\mycolorTrzy!20,thick,inner sep=0pt,minimum size=4mm}
]
\node[place]    (V)     at ( 0,0.3)    {$\da$};
\node[place]      (V1)       at (0,-1.3)  {$\db$}; 
\node[stplace]      (A1)      at ( -1.5,-0.5)   {$\dc_1$};
\node[stplace]      (A11)     at ( 0,-0.5)  {$\dc_2$};
\node[]      (A111d2)    at ( 1.5,-0.5)  {$\ldots$};
\node[stplace]      (A1111)    at ( 4,-0.5)  {$\dc_n$};

\draw (V) to node [above, sloped] {$\vec v_1$} (A1) to node [below, sloped] {$-\vec v_1$} (V1) ;
\draw (V) to node [right] {$\vec v_2$} (A11) to node [right] {$-\vec v_2$} (V1) ;
\draw [bend left=15] (V) to node [ above, sloped] {$\vec v_n$} (A1111);
\draw [bend left=15] (A1111) to node [below, sloped] {$-\vec v_n$} (V1) ;
\end{tikzpicture}

\noindent
Relying on Lemma~\ref{lem:exists_edge}, we add to the graph ${\hypergraphG} - {\hypergraphG}'$ the following graphs equivalent to
$\simpleGraph{2}{\vec v_i}$, for $i = 1, 2\ldots n-1$:

\begin{tikzpicture}
[place/.style={circle,draw=\mycolorOne!50,fill=\mycolorOne!20,thick,inner sep=0pt,minimum size=4mm},
stplace/.style={rectangle,draw=\mycolorTrzy!20,fill=\mycolorTrzy!20,thick,inner sep=0pt,minimum size=4mm}
]
\node[place]    (Vd)         [below= of A11] {$\da$};
\node[place]      (V1d)       [below=of Vd]  {$\db$};
\node[stplace]      (A1d)      at (-1.5,-2.5)   {$\dc_i$};
\node[stplace]      (A1111d)    at (4,-2.5)  {$\dc_n$};

\draw (Vd) to node [above, sloped] {$-\vec v_i$} (A1d) to node [below, sloped] {$\vec v_i$} (V1d) ;
\draw [bend left=15] (Vd) to node [above, sloped] {$\vec v_i$} (A1111d);
\draw [bend left=15] (A1111d) to node [below, sloped] {$-\vec v_i$} (V1d) ;
\end{tikzpicture}

\noindent
This results in collapsing all blue vertices into one:

\begin{tikzpicture}
[place/.style={circle,draw=\mycolorOne!50,fill=\mycolorOne!20,thick,inner sep=0pt,minimum size=4mm},
stplace/.style={rectangle,draw=\mycolorTrzy!20,fill=\mycolorTrzy!20,thick,inner sep=0pt,minimum size=4mm}
]
\node[place]    (V)         {$\da$};
\node[place]    (V1)   at (7,0)    {$\db$};
\node[stplace]      (A1)      at ( 3.5,0)   {$\dc_n$};
\draw (V) to node [above] {$\vec v_1+\ldots+\vec v_n$} (A1) to node [above] {$-(\vec v_1+\ldots+\vec v_n)$} (V1) ;
\end{tikzpicture}

\noindent
Since $\vec a = \vec v_1 + \ldots + \vec v_n$, the resulting graph is equivalent to $\simpleGraph{1}{\vec a}$, as required.
\end{proof}

\begin{lem}\label{lem:SGH}
\changedd{Every}{Suppose $\hypergraphH$ is locally a $\Z$-sum of ${\setHypergraphsH}$, then every}{null}
 $\hypergraphH$-\simple\ graph is a $\Z$-sum of ${\setHypergraphsH}$ up to equivalence:
$\setSimpleGraph{1} \hypergraphH \subseteq \sums \Z {\eqs{{\setHypergraphsH}\}}}$ and $\setSimpleGraph{2} \hypergraphH \subseteq \sums \Z 
{\eqs{{\setHypergraphsH}\}}}$.
\end{lem}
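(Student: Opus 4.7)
My plan is to deduce the lemma from Lemmas~\ref{lem:exists_edge} and~\ref{lem:exists_vertex}, the local assumptions~\eqref{eq:ass1} and~\eqref{eq:ass2}, and Lemma~\ref{lem:sumofupto}. The extra ingredient I will first need is that simple graphs behave linearly up to equivalence: whenever several edge-simple graphs $\simpleGraph{2}{\vec{b}_1},\ldots,\simpleGraph{2}{\vec{b}_n}$ are laid out on the same four vertices in matching orientation, their $\Z$-combination $\sum_j c_j \simpleGraph{2}{\vec{b}_j}$ is equivalent to $\simpleGraph{2}{\sum_j c_j \vec{b}_j}$, and analogously for vertex-simple graphs sharing three common vertices. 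This follows immediately from the pointwise definitions of addition and scalar multiplication of graphs.

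Next I will handle $\setSimpleGraph{2}{\hypergraphH}$. Fixing $\simpleGraph{2}{\vec{a}}$ with $\vec{a}\in \setWeight{2}{\hypergraphH}$, I will expand $\vec{a}$ first through~\eqref{eq:defsetWeight2} as a $\Z$-combination of edge-weights $\weight{edge}(\hypergraphH)$ of $\hypergraphH$, and then through~\eqref{eq:ass2} as a $\Z$-combination of edge-weights $\weight{edge'}(\hypergraphH')$ of hypergraphs $\hypergraphH' \in \setHypergraphsH$. Collecting terms writes $\vec{a} = \sum_j c_j \vec{b}_j$ where each $\vec{b}_j$ is an edge-weight of some $\hypergraphH'_j \in \setHypergraphsH$. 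The preliminary observation then realises $\simpleGraph{2}{\vec{a}}$ as a $\Z$-sum of graphs equivalent to the $\simpleGraph{2}{\vec{b}_j}$, and Lemma~\ref{lem:exists_edge} gives $\simpleGraph{2}{\vec{b}_j} \in \sums \Z {\eqs{\{\hypergraphH'_j\}}} \subseteq \sums \Z {\eqs{\setHypergraphsH}}$ for every $j$. A single appeal to Lemma~\ref{lem:sumofupto} will then place $\simpleGraph{2}{\vec{a}}$ in $\sums \Z {\eqs{\setHypergraphsH}}$.

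The case of $\setSimpleGraph{1}{\hypergraphH}$ proceeds identically. I will expand $\vec{a}\in \setWeight{1}{\hypergraphH}$ through~\eqref{eq:defsetWeight1} and~\eqref{eq:ass1} into a $\Z$-combination of vertex-weights of hypergraphs in $\setHypergraphsH$, invoke Lemma~\ref{lem:exists_vertex} in place of Lemma~\ref{lem:exists_edge}, and close with Lemma~\ref{lem:sumofupto}. I do not expect any real obstacle, since the substantive content has been pushed into the two preparatory lemmas. The only delicate point is the alignment of vertex labels when merging several simple graphs into one, and that is exactly what the preliminary observation in the first paragraph takes care of.
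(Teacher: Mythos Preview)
Your proposal is correct and uses the same ingredients as the paper---the local assumptions~\eqref{eq:ass1}/\eqref{eq:ass2}, Lemmas~\ref{lem:exists_edge} and~\ref{lem:exists_vertex}, and Lemma~\ref{lem:sumofupto}---but combines them in a different order. The paper first assembles a single graph $\hypergraphG_{\vec a}\in\sums{\Z}{\eqs{\setHypergraphsH}}$ whose weight at a chosen vertex (resp.\ edge) equals $\vec a$, using the homomorphism property of $\weight{da}$ (resp.\ $\weight{edge}$), and then applies Lemma~\ref{lem:exists_vertex} (resp.\ Lemma~\ref{lem:exists_edge}) \emph{once} to that composite graph. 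You instead apply the existence lemma to each individual $\hypergraphH'_j\in\setHypergraphsH$ first, obtaining many simple graphs $\simpleGraph{1}{\vec b_j}$ (resp.\ $\simpleGraph{2}{\vec b_j}$), and then combine those via your ``linearity up to equivalence'' observation. Your route requires the extra (easy) observation that aligned simple graphs add linearly; the paper's route avoids it but needs the homomorphism trick to push the linear combination inside $\weight{da}$. Both are equally short and valid.
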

\begin{proof}
\changedd{Let $\hypergraphH = ({\setVertices}, \mu)$.}{}{405} We have to prove that for any $\vec{a}\in 
\changed{\setWeight{2}{\hypergraphH}}{\setWeight{1}{\hypergraphH}}{NULL}$ 
\changed{}{($\setWeight{1}{\hypergraphH}$ is defined in Equation~\ref{eq:defsetWeight1})}{571} \changed{holds}{it holds that}{451} $\simpleGraph{1}{\vec{a}}\in 
\sums{\Z}{\eqs{\setHypergraphsH}}$ and that for any $\vec{b}\in \changed{\setWeight{1}{\hypergraphH}}{\setWeight{2}{\hypergraphH}}{NULL}$ 
\changed{}{($\setWeight{2}{\hypergraphH}$ is 
defined in Equation~\ref{eq:defsetWeight2})}{571} \changed{holds}{it holds that}{451} $\simpleGraph{2}{\vec{b}}\in 
\sums{\Z}{\eqs{\setHypergraphsH}}$. We prove only $\simpleGraph{1}{\vec{a}}\in 
\sums{\Z}{\eqs{\setHypergraphsH}}$ as the second proof is the almost same. 
\changed{Due to}{Because of}{200} Lemma~\ref{lem:sumofupto} we know that 
\[
\sums{\Z}{\eqs{\sums{\Z}{\eqs{\setHypergraphsH}}}}=\sums{\Z}{\eqs{\setHypergraphsH}}.
\]
Thus, \changed{due to}{because of}{200} Lemma~\ref{lem:exists_vertex} it is sufficient to prove that there is a graph $\hypergraphG_{\vec{a}}\in \sums{\Z}{\eqs{\setHypergraphsH}}$ such 
that $\weight{da}({\hypergraphG}_{\vec{a}})=\vec{a}$ \changedd{}{for some $\da\in \Vertices{\hypergraphG_{\vec{a}}}$.}{null} 

As $\vec a\in \changed{\setWeight{2}{\hypergraphH}}{\setWeight{1}{\hypergraphH}}{NULL}$ we know \changed{}{that}{581} 
\begin{align}\label{eq:a}
\vec a = z_1 \cdot \vec a_1 + \ldots + z_l \cdot \vec a_l,
\end{align}
where $z_i\in \Z$ and $\vec{a_i}$ is $\weight{db_i}(\hypergraphH)$ for some $\db_i\in \changedd{{\setVertices}}{\Vertices{\hypergraphH}}{405}$.
By the assumption~\eqref{eq:ass1}, for every $i$ we know 
\[
\vec a_i \in \sums \Z {\setof{\weight{dc_i}(\hypergraphH')}{\hypergraphH'\changedd{ = ({\setVertices}', \mu')}{}{405} \in {\setHypergraphsH}, \dc_i \in 
\changedd{{\setVertices}'}{\Vertices{\hypergraphH'}}{405}}}.
\]
we may use the equivalence relation and rewrite it as follows
\[
\vec a_i \in \sums \Z {\setof{\weight{da}(\hypergraphH')}{\hypergraphH'\changedd{ = ({\setVertices}', \mu') }{}{405}\in \eqs{{\setHypergraphsH}}}}.\\
\]
\text{ \changed{We concretize it}{Thus}{585}}
\[
\vec a_i = z_{i,1} \weight{da}(\hypergraphH'_{i,1}) + z_{i,2} \weight{da}(\hypergraphH'_{i,2}) +\ldots + z_{i,h(i)} \weight{da}(\hypergraphH'_{i,\changed{h}{h(i)}{586}}) \text{ and 
further }
\]
\[
\vec{a} = \sum_{i=1}^{l} z_i \left( z_{i,1} \weight{da}(\hypergraphH'_{i,1}) + z_{i,2} \weight{da}(\hypergraphH'_{i,2}) +\ldots + z_{i, h(i)} 
\weight{da}(\hypergraphH'_{i,h(i)}) \right) 
\]
\text{ $\weight{da}$ is a homomorphism, so}\\
\[
\vec{a} = \weight{da}\left(\sum_{i=1}^{l} z_i \left( z_{i,1} \hypergraphH'_{i,1} + z_{i,2} \hypergraphH'_{i,2} +\ldots + z_{i, h(i)} \hypergraphH'_{i,h(i)} 
\right)\right) 
\]
\text{ so we  define }\\
\[
{\hypergraphG}_{\vec{a}} \eqdef \sum_{i=1}^{l} z_i \left( z_{i,1} \hypergraphH'_{i,1} + z_{i,2} \hypergraphH'_{i,2} +\ldots + z_{i, h(i)} 
\hypergraphH'_{i,h(i)}\right). 
\]
The construction of elements of $\setSimpleGraph{2}{\hypergraphH}$ relays on Lemma~\ref{lem:exists_edge} instead of Lemma~\ref{lem:exists_vertex}.
\end{proof}

Combining Lemmas~\ref{lem:SG}~and~\ref{lem:SGH} we know that $\hypergraphH$ is
a $\Z$-sum of $\hypergraphH$-\simple\ graphs, each of which in turn is a $\Z$-sum of ${\setHypergraphsH}$ up to equivalence. 
By Lemma~\ref{lem:sumofupto}, $\hypergraphH$ is a $\Z$-sum of ${\setHypergraphsH}$ up to equivalence, as required.
The proof of Theorem~\ref{thm:core}, for graphs, is thus completed.

\section{Proof of Theorem~\ref{thm:core} (the outline).}\label{sec:Proof:thm:core}
Now\changed{}{,}{NULL} we are ready to prove the implication $2 \implies 1$ from Theorem~\ref{thm:core} in full generality.
The proof is split into five parts. We want to mimic the approach presented in Section~\ref{sec:prove}, thus we need to generalise \changed{}{a }{608}few things:
\begin{enumerate}
 \item In Section~\ref{sec:matrix} we build an algebraic background to be able to solve systems of equations that in the general case correspond to 
Equations~\ref{eq:unique_solutions}.
\item In Section~\ref{sec:simple_hypergraphs} we introduce simple hypergraphs that generalise graphs defined in Definition~\ref{def:simple2graphs}.
\item In~Section~\ref{sec:Expressing with 
simple hypergraphs} we show that our target hypergraph is a $\Z$-sum of simple 
hypergraphs. \changed{}{This generalises Lemma~\ref{lem:SG}.}{NULL}
\item In Section~\ref{sec:The construction of simple hypergraphs} we prove that simple hypergraphs are $\Z$-sums of 
$\eqs{{\setHypergraphsH}}$. \changed{}{This generalises Subsection~Representation of $\hypergraphH$-\simple\ graphs i.e. 
Lemmas~\ref{lem:exists_edge},~\ref{lem:exists_vertex},~and~\ref{lem:SGH}.}{NULL}
\item Finally, in Section~\ref{sec:core_proof} we complete the proof of~Theorem~\ref{thm:core}.
\end{enumerate}

\section{Reduction matrices.}\label{sec:matrix}

In this section we introduce a notion of reduction matrices and 
prove a key 
lemma about their rank, Lemma~\ref{lem:main_matrix}. They are 
$0,1$ matrices related to adjacency matrices of Kneser graphs. We recall that by \changed{$x-$set}{$x$-set}{619} we mean a set with $x$ elements.

\begin{defi}
\label{def:reduction matrix}
Let $a\in \N$ and $\setA$ be an $a$-set.
\changed{Matrix is a \emph{reduction matrix} for $a \geq b \geq c$, $a,b,c \in \mathbb{N}$, denoted by $\reductionMatrix{a}{b}{c}$, if}{
For $a \geq b \geq c$, $a,b,c \in \mathbb{N}$, we define a matrix $\reductionMatrix{a}{b}{c}$ as follows:
}{620}
\begin{enumerate}
    \item \changed{}{it has $\binom{a}{b}$ columns and $\binom{a}{c}$ rows, }{622}
    \item columns and rows are indexed with $b$-element subsets of $\setA$ and $c$-element subsets of $\setA$, respectively,
    \item $\reductionMatrix{a}{b}{c}[\setC,\setB]=1$ if $\setC\subseteq \setB$ and $0$ otherwise, 
    where $\setC$ is an index of a row and $\setB$ is an index of a column.
\end{enumerate}
\changed{}{The definition is up to reordering of rows and columns.}{622}
\end{defi}

\begin{exa}
\changed{}{Suppose $\setA=\{\da,\db,\dc,\dd\}$.}{626}
\begin{enumerate}
    \item
    $\begin{bmatrix}
        \reductionMatrix{4}{3}{1}
    \end{bmatrix}
    =
    \begin{bmatrix}
	&\{\dc,\db,\da\} &\{\dd,\db,\da\} &\{\dd,\dc,\da\} & \{\dd,\dc,\db\} \\
        \{\da\} & 1 & 1 & 1 & 0 \\
        \{\db\} & 1 & 1 & 0 & 1 \\
        \{\dc\} & 1 & 0 & 1 & 1 \\
        \{\dd\} & 0 & 1 & 1 & 1
    \end{bmatrix}$ \changed{}{\begin{minipage}{4.5cm}
 up to a permutation of rows and columns.
                              \end{minipage}}{628}
    
    \item
    $\begin{bmatrix}
        \reductionMatrix{4}{2}{1}
    \end{bmatrix}
    =
    \begin{bmatrix}
        & \{\db,\da\} &\{\dc,\da\}&\{\dc,\db\}&\{\dd,\da\}&\{\dd,\db\}&\{\dd,\dc\}\\
       \{\da\}& 1 & 1 & 1 & 0 & 0 & 0 \\
       \{\db\}& 1 & 0 & 0 & 1 & 1 & 0 \\
       \{\dc\}& 0 & 1 & 0 & 1 & 0 & 1 \\
       \{\dd\}& 0 & 0 & 1 & 0 & 1 & 1 
    \end{bmatrix}$\changed{}{\begin{minipage}{3.5cm}
                               up to a permutation of rows and columns.
                             \end{minipage}}{628}
\end{enumerate}
\end{exa}

The below lemma expresses the important property of reduction matrices.
\begin{lem}
\label{lem:main_matrix}
Any $\reductionMatrix{2{\setSizeK}+1}{{\setSizeK}+1}{{\setSizeK}}$ reduction matrix has maximal rank.
\end{lem}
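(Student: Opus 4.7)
The matrix $M := \reductionMatrix{2{\setSizeK}+1}{{\setSizeK}+1}{{\setSizeK}}$ is square of size $\binom{2{\setSizeK}+1}{{\setSizeK}} = \binom{2{\setSizeK}+1}{{\setSizeK}+1}$, so ``maximal rank'' means $M$ is nonsingular. The plan is to relate $M$ to the adjacency matrix of a Kneser graph, whose spectrum is classical, thereby matching the hint in the sentence preceding the lemma.

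The complementation bijection $\setB \mapsto \setA \setminus \setB$ sends $({\setSizeK}+1)$-subsets of $\setA$ to ${\setSizeK}$-subsets. Let $\Pi$ be the corresponding permutation matrix on column indices. Then $M' := M\Pi$ is a square matrix with both rows and columns indexed by ${\setSizeK}$-subsets of $\setA$, and
\[
M'[\setC, \setC'] \ = \ M[\setC, \setA \setminus \setC'] \ = \ \begin{cases} 1 & \text{if } \setC \cap \setC' = \emptyset,\\ 0 & \text{otherwise.}\end{cases}
\]
So $M'$ is the adjacency matrix of the Kneser graph $K(2{\setSizeK}+1, {\setSizeK})$, and $\det M = \pm \det M'$ because $\Pi$ is a permutation matrix. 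It therefore suffices to show that $K(2{\setSizeK}+1,{\setSizeK})$ has no zero eigenvalue.

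The eigenvalues of the Kneser graph $K(n,{\setSizeK})$ are classically
\[
\theta_j \ = \ (-1)^j \binom{n-{\setSizeK}-j}{{\setSizeK}-j}, \qquad j = 0, 1, \ldots, {\setSizeK},
\]
which for $n = 2{\setSizeK}+1$ simplify via $\binom{{\setSizeK}+1-j}{{\setSizeK}-j} = {\setSizeK}+1-j$ to
\[
\theta_j \ = \ (-1)^j ({\setSizeK}+1-j),
\]
namely ${\setSizeK}+1, -{\setSizeK}, {\setSizeK}-1, -({\setSizeK}-2), \ldots, \pm 1$. All are nonzero, so $M'$ and hence $M$ is nonsingular.

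The one nontrivial ingredient is the Kneser spectrum, and writing it out will be the main step in a self-contained proof. A convenient alternative is to work instead with $M M^T$: a direct count gives $(M M^T)[\setC_1,\setC_2] = \binom{i+1}{i-{\setSizeK}+1}$ where $i = |\setC_1 \cap \setC_2|$, which equals ${\setSizeK}+1$ on the diagonal, $1$ when $i = {\setSizeK}-1$, and $0$ when $i \le {\setSizeK}-2$. Thus $M M^T = ({\setSizeK}+1)\,I + A$, where $A$ is the adjacency matrix of the Johnson graph $J(2{\setSizeK}+1,{\setSizeK})$; the classical Johnson eigenvalues $({\setSizeK}-j)({\setSizeK}+1-j) - j$ shifted by ${\setSizeK}+1$ combine to $({\setSizeK}+1-j)^2$, all strictly positive, yielding the same conclusion without invoking complements.
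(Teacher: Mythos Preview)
Your main argument is exactly the paper's: relabel columns by complementation to identify $M$ with the adjacency matrix of the Kneser graph $K_{2\setSizeK+1,\setSizeK}$, then invoke that Kneser adjacency matrices have no zero eigenvalue (the paper cites Godsil, Theorem~9.4.3, whereas you write out the eigenvalues explicitly). Your Johnson-graph alternative via $MM^T = (\setSizeK+1)I + A(J(2\setSizeK+1,\setSizeK))$ with eigenvalues $(\setSizeK+1-j)^2$ is a genuinely different and self-contained route that the paper does not take; it is a nice addition, avoiding the complementation trick at the cost of a second classical spectrum.
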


The proof relies on \changed{the}{a}{640} result from the spectral theory of Kneser graphs.

\begin{defi}\changed{{\emph Kneser graph.}}{}{NULL}
Let $\setA$ be an $a$-set.
The \changed{Kneser graph}{\emph{ Kneser graph}}{641} $K_{a,c}$ is the graph whose vertices are $c$-element subsets of $\setA$, 
and where two vertices $x,y$ are adjacent if\changed{}{,}{325} and only if\changed{}{,}{325} $x\cap y=\emptyset$. 
\end{defi}

The \emph{\changed{incidence}{adjacency}{NULL} matrix} \changed{}{of }{644}a graph $({\setX},E)$ is a ${\setX}\times {\setX}$-matrix $M$ such that
$M[x,y]=1$ if \changed{$x,y$ are adjacent}{there is an edge between $x$ and $y$}{645} and $0$ otherwise.

\begin{thmC}[{\cite[\changed{(page. 200)}{}{646}Theorem 9.4.3]{Godsil}}]
All eigenvalues of the \changed{incidence}{adjacency}{NULL} matrix for a Kneser graph are \changed{non-zero}{nonzero}{107}.
\end{thmC}
\begin{cor}\label{cor:Kneser}
The rank of the \changed{incidence}{adjacency}{NULL} matrix for a Kneser graph is maximal.\qed
\end{cor}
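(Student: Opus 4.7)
The plan is to derive the corollary from the cited theorem via a standard linear-algebra argument. The adjacency matrix $M$ of a Kneser graph $K_{a,c}$ is indexed by the $c$-element subsets of the underlying $a$-set, and since the graph is undirected, $M$ is a real symmetric matrix. By the spectral theorem for real symmetric matrices, $M$ is diagonalisable over $\mathbb{R}$, so there exists an orthogonal change of basis in which $M$ becomes the diagonal matrix $D$ whose diagonal entries are exactly the eigenvalues of $M$, listed with their algebraic multiplicities.

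Next, I would use the fact that the rank is preserved under multiplication by invertible matrices (in particular, under the orthogonal change of basis), so $\mathrm{rank}(M) = \mathrm{rank}(D)$. For a diagonal matrix, the rank equals the number of nonzero diagonal entries, that is, the number of nonzero eigenvalues counted with multiplicity. By the quoted Theorem 9.4.3 from \cite{Godsil}, every eigenvalue of $M$ is nonzero, and hence every diagonal entry of $D$ is nonzero. Therefore $\mathrm{rank}(D)$ equals the total number of diagonal entries, which is the dimension of $M$, i.e.\ $\binom{a}{c}$.

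This shows that $\mathrm{rank}(M) = \binom{a}{c}$, which is the maximal possible rank of a $\binom{a}{c}\times \binom{a}{c}$ matrix. There is essentially no obstacle here; the proof is a three-line application of the spectral theorem combined with the cited result, so the corollary follows immediately. \qed
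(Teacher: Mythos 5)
Your proof is correct and is exactly the standard argument the paper implicitly relies on (the corollary is stated with no written proof, being immediate from the cited theorem): the adjacency matrix is real symmetric, hence diagonalisable, so its rank equals the number of nonzero eigenvalues, which is full by Theorem 9.4.3. Nothing further is needed.
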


\begin{proof}[Proof of Lemma~\ref{lem:main_matrix}]
We relabel columns of $\reductionMatrix{2{\setSizeK}+1}{{\setSizeK}+1}{{\setSizeK}}$ in the following way. If a column is labelled with
the set $\setB\subset \setA$ we relabel it to $\setA\setminus \setB$. Now\changed{}{,}{NULL} both rows and columns are labelled with
${\setSizeK}$-subsets of $\setA$.  \changed{}{We denote}{NULL} the relabelled matrix \changed{}{by}{NULL} $M'$.
On the one hand, observe that if a ${\setSizeK}$-set $\setC$ is a subset of a \changed{$\setSizeK+1$-}{$(\setSizeK+1)$-}{779}set $\setB$ then
$\setC\cap (\setA\setminus \setB)=\emptyset$, in which case $M'[\setC, \setA\setminus \setB]=1$. 
On the other hand if a ${\setSizeK}$-set $\setC$ is not a subset of a \changed{$\setSizeK+1$-}{$(\setSizeK+1)$-}{779}set $\setB$ then
$\setC\cap (\setA\setminus \setB)\neq \emptyset$, and $M'[ \setC, \setA\setminus \setB]=0$.
So we see that $M'$ is the \changed{incidence}{adjacency}{NULL} matrix of 
the Kneser graph $K_{2{\setSizeK}+1,{\setSizeK}}$. But \changed{due to}{, because of}{200} Corollary~\ref{cor:Kneser}\changed{}{,}{655} it has maximal rank, and the same holds for $M$
as relabelling of columns does not change the rank of the matrix. 
\end{proof}

\section{Simple hypergraphs.}\label{sec:simple_hypergraphs}

\changed{}{
\begin{defi}\label{def:m isiolation}
Let ${\hypergraphH}\changedd{=(\setVertices,\mu)}{}{405}$ be a ${\setSizeK}$-hypergraph\changed{}{ and $\setVertices$ be the set of its vertices}{405}. We say 
that it is \emph{$m$-isolated} 
if for any subset $\setX \subseteq {\setVertices}$ such that $\size{\setX}\leq m$ \changed{holds}{it holds that}{451} 
$\weight{setX}({\hypergraphH})=\vec{0}$. 
\end{defi}}{666}

\begin{defi}\label{def:simple_hypergraphs}
Let \changedd{}{$\hypergraphG$ be a ${\setSizeK}$-hypergraph and $\setVertices'$ be its set of vertices. Suppose }{405}$0\leq m \leq \setSizeK$ \changed{}{and 
$\vec{a}\in \ktuple{\dimension}{\Z}$}{658}. We call \changedd{a ${\setSizeK}$-hypergraph 
${\hypergraphG}=({\setVertices\changed{}{'}{NULL}},\mu\changed{}{'}{NULL})$}{$\hypergraphG$}{405} \emph{$(m, \vec{a})$-\simple} 
if there exist pairwise disjoint sets $\setA, \setB, \setC$ such that
\begin{enumerate}
 \item ${\setVertices\changed{}{'}{NULL}} = \setA \cup \setB \cup \setC$.
 \item $\setA = \{\da_1, \da_2 \ldots \da_m\}$ and $\setB = \{\db_1, \db_2 \ldots \db_m\}$ are $m$-sets.
 \item $\size{\setC} = 2({\setSizeK}-m)-1$ or $\setC=\emptyset$ if $m=\setSizeK$.
 \item $\forall \setX\subseteq \setA\cup \setB$, such that $\setX = \{x_1, x_2 \ldots x_m\}$, 
 where $x_i \in \{\da_i,\db_i\}$, the equality $\weight{setX}({\hypergraphG})= (-1)^{\size{\setB \cap \setX}}\cdot \vec{a}\ $ holds.
 \item For any other $\setX \subseteq {\setVertices\changed{}{'}{NULL}}$ such that $\size{\setX} = m$ the equality $\weight{setX}({\hypergraphG}) = \vec{0}$ 
holds.
 \changed{\item $\forall{\setX\subseteq {\setVertices}}$, such that $\size{\setX}<m$ the equality $\weight{setX}({\hypergraphG})=\vec{0}$ holds.}
 {\item $\hypergraphG$ is $(m-1)$-isolated.}{666}
\end{enumerate}
\end{defi}

This definition \changed{looks horrible}{is hard}{667} so we analyse it using examples, and explain the required properties.
\changed{}{Also, observe that contrary to $\simpleGraph{1}{\vec a}$ and $\simpleGraph{2}{\vec a}$ $(m, \vec{a})$-\simple\ hypergraphs 
are not fully specified up to isomorphism. Further, it is not clear that they exist for all $(m, \vec{a})$. In the example below we comment on this as 
well.}{667}

\begin{exa}
From the left to the right: a \changed{$(0,\vec{a})-\simple$}{$(0,\vec{a})$-\simple}{20} $2$-hypergraph ${\hypergraphG}_0$, where 
$\vec{a}=\vec{x}+\vec{y}+\vec{z}$;
 a \changed{$(1,\vec{a})-\simple$}{$(1,\vec{a})$-\simple}{20} $2$-hypergraph ${\hypergraphG}_1$;
 a \changed{$(2,\vec{a})-\simple$}{$(2,\vec{a})$-\simple}{20} $2$-hypergraph ${\hypergraphG}_2.$ The sets $\red{\setA},\blue{\setB},\green{\setC}$ are marked 
with colours.    

\begin{tikzpicture}
[place/.style={circle,draw=\mycolorOne!50,fill=\mycolorOne!20,thick,inner sep=0pt,minimum size=3mm}]
\node[place, color=\mycolorDwa] (A) {};
\node[place, color=\mycolorDwa] (B) [below left=of A] {};
\node[place, color=\mycolorDwa] (C) [below right=of A] {};
\draw (A) to node [above, sloped] {$\vec{x}$} (B) to node [above] {$\vec{y}$} (C) to node [above, sloped] {$\vec{z}$} (A);

\node[place, color=\mycolorDwa] (A1) [right =of C] {};
\node[place, color=\mycolorTrzy] (B1) [above left=of A1] {};
\node[place, color=\mycolorOne] (C1) [above right=of A1] {};
\draw (B1) to node [above, sloped] {a} (A1) to node [above, sloped] {-a} (C1) ;

\node[place, color=\mycolorTrzy] (A2) [right =of C1] {};
\node[place, color=\mycolorTrzy] (B2) [below =of A2] {};
\node[place, color=\mycolorOne] (C2) [right =of A2] {};
\node[place, color=\mycolorOne] (D2) [below =of C2] {};

\draw (A2) to node [left] {a} (B2) to node [above] {-a} (D2) to node [right] {a} (C2) to node [above] {-a} (A2);
\end{tikzpicture} 

\begin{enumerate}
 \item If $m=0$ then $\setA, \setB$ are empty, \changed{condition}{Property}{685-688} 2 states that $\setC$ has $3$ vertices, 
 \changed{condition}{Property}{685-688} $4$ that 
$\weight{emptyset}({\hypergraphG}_0)=\vec{a}$, 
\changed{conditions}{Properties}{685-688} $5$ and 
$6$ 
are 
\changed{empty}{trivial}{670}.
 \item If $m=1$ then $\size{\setA}=\size{\setB}=\size{\setC}=1$, \changed{conditions}{Properties}{685-688} $4$ and $5$ provide 3 equations 
 \begin{align}
\vec{a}=\weight{red{bullet}}({\hypergraphG}_1)=\weight{{red{bullet},green{bullet}}}({\hypergraphG}_1)+\weight{{red{bullet},blue{bullet}}}({
\hypergraphG}_1) \\ 
-\vec{a}=\weight{blue{bullet}}({\hypergraphG}_1)=\weight{{blue{\bullet},green{bullet}}}({\hypergraphG}_1)+\weight{{red{bullet},blue{bullet}}}
({\hypergraphG}_1) \\
\vec{0}=\weight{green{bullet}}({\hypergraphG}_1)=\weight{{blue{\bullet},green{bullet}}}({\hypergraphG}_1)+\weight{{red{bullet},green{bullet}}}
({\hypergraphG}_1). 
  \end{align}
  It is not hard to derive from them that $\vec{a}=\weight{{red{bullet},green{bullet}}}({\hypergraphG}_1)$, $-\vec{a} = 
\weight{{blue{\bullet},green{bullet}}}({\hypergraphG}_1)$, 
and $\vec{0}= 
\weight{{red{bullet},blue{bullet}}}({\hypergraphG}_1)$.
\changed{The condition}{Property}{685-688} $6$ states that $\weight{emptyset}({\hypergraphG}_1)=\vec{0}$.

\item If $m=2$ then $\size{\setA}=\size{\setB}=2$ and $\setC=\emptyset$. Conditions $4$ and $5$ define weights of all edges, the \changed{condition}{Property}{685-688} $4$ is responsible 
for 
edges with \changed{non-zero}{nonzero}{107} weights and $5$ for the edges with the weight $\vec{0}$. \changed{The condition}{Property}{685-688} $6$ says that 
the weight of $\emptyset$ and weights of single 
vertices 
are $\vec{0}$. 
\end{enumerate}
\qed

\changed{}{\begin{rem}
 Note that $(m,\vec{a})$-\simple\ hypergraphs are not defined uniquely. For example, \changed{}{the}{711} $(0,\vec{a})$-\simple\ hypergraph from the example above is not fully 
defined\changed{}{, as for any $\vec{x}$ and $\vec{y}$ we may choose appropriate $\vec{z}=\vec{a}-\vec{x}-\vec{y}$}{712}. 
 The existence of all \simple\ hypergraphs is proven later (we show how to construct them). 
\end{rem}
}{667}

\end{exa}

\changedd{Now, why we characterise \simple\  hypergraphs using such complicated \changed{conditions}{Properties}{685-688}?}{}{689} 
\changed{}{In the following we justify our design.}{703}
The most important property of \simple\  hypergraph is the last one. It implies the following lemma:
\begin{lem}\label{lem:adding simple graphs doesn't change some weights}
Let $\hypergraphG\changedd{=(\setVertices', \mu')}{}{405}$ be a ${\setSizeK}$-hypergraph and $\simpleGraph{m}{\vec{a}}$ be an 
\changed{$(m,\vec{a})-\simple$}{$(m,\vec{a})$-\simple}{20} ${\setSizeK}$-hypergraph. 
Then for any $\setX \subseteq \changedd{{\setVertices'}}{\Vertices{\hypergraphG}}{405}, \size{\setX} < m$ we have $\weight{setX}(\hypergraphG) = 
\weight{setX}(\hypergraphG+ 
\simpleGraph{m}{\vec{a}})$.
\end{lem}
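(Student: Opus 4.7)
The plan is to reduce the claim immediately to a statement about the simple hypergraph alone, and then read off the result from property (6) of Definition~\ref{def:simple_hypergraphs}.

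First I would invoke the observation made earlier in the paper that for every fixed $\setX$ the map $\hypergraphF \mapsto \weight{setX}(\hypergraphF)$ is a homomorphism with respect to hypergraph addition. Applied to $\hypergraphG$ and $\simpleGraph{m}{\vec{a}}$ this gives
\[
\weight{setX}(\hypergraphG + \simpleGraph{m}{\vec{a}}) \;=\; \weight{setX}(\hypergraphG) + \weight{setX}(\simpleGraph{m}{\vec{a}}),
\]
so it suffices to prove $\weight{setX}(\simpleGraph{m}{\vec{a}}) = \vec{0}$ whenever $\size{\setX} < m$.

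Next I would unfold Definition~\ref{def:simple_hypergraphs}: by property~(6), the hypergraph $\simpleGraph{m}{\vec{a}}$ is $(m-1)$-isolated. Since $\size{\setX} < m$ is equivalent to $\size{\setX} \leq m-1$, Definition~\ref{def:m isiolation} yields directly $\weight{setX}(\simpleGraph{m}{\vec{a}}) = \vec{0}$, which combined with the equation above gives the desired equality $\weight{setX}(\hypergraphG) = \weight{setX}(\hypergraphG + \simpleGraph{m}{\vec{a}})$. There is no real obstacle here; the only thing to verify carefully is that the additivity of $\weight{setX}$ applies even when $\setX$ is not contained in the vertex set of one of the summands, which is handled by the convention stated right after the definition of $\weight{setX}$, namely $\weight{setX}(\hypergraphF) = \vec{0}$ whenever $\setX \not\subseteq \setVertices(\hypergraphF)$.
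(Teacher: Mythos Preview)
Your proof is correct and follows exactly the same approach as the paper: use that $\weight{setX}$ is a homomorphism and invoke Property~6 of Definition~\ref{def:simple_hypergraphs} (i.e., $(m-1)$-isolation) to conclude $\weight{setX}(\simpleGraph{m}{\vec{a}})=\vec{0}$. Your additional remark about the convention $\weight{setX}(\hypergraphF)=\vec{0}$ when $\setX\not\subseteq\support{\hypergraphF}$ is a useful clarification but not something the paper's one-line proof spells out.
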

\begin{proof}
 Indeed, $\weight{setX}$ is a homomorphism and
 $\weight{setX}(\simpleGraph{m}{\vec{a}})=\vec{0}$ \changed{due to}{because of}{200} \changed{property}{Property}{NULL} 
6~\changed{}{of}{694}~Definition~\ref{def:simple_hypergraphs}.  
\end{proof}

\changed{It justifies}{Property $6$, by this lemma, allows for}{703} the following 
structure of the proof of \changed{the theorem}{Theorem~\ref{thm:core}}{NULL}. 
We gradually simplify \changed{your}{the}{696} target hypergraph by adding $(i,\vec{a})$-\simple\ hypergraphs for 
growing $i$. In the step $i$ we start from \changed{a hypergraph 
with $\weight{setX}(\hypergraphH)=\vec{0}$ for every $\setX\subset \setVertices$ with at most $i-1$ vertices.}{an $(i-1)$-isolated hypergraph.}{666} Using 
$(i,\vec{a})$-\simple\ hypergraphs we 
simplify it to a hypergraph with $\weight{setX}(\hypergraphH)=\vec{0}$ for all $\setX\subset \setVertices$ with exactly $i$ vertices.
The property $6$ guaranties that 
while we perform the step $i$, we do not ruin our work from previous steps i.e. we reach \changed{a hypergraph \changed{such}{such that}{700} 
$\weight{setX}(\hypergraphH)=\vec{0}$ for every 
$\setX\subset \setVertices$ with at most $i$ vertices.}{an $i$-isolated hypergraph.}{666}
Eventually, we reach a hypergraph with all the weights equal \changed{}{to}{702} $\vec{0}$ i.e. the empty hypergraph. 

Now\changed{}{, let us justify the design of}{703} \changed{conditions}{Properties}{685-688} $4$ and $5$. We already mentioned that using $(i,\vec{a})$-\simple\ hypergraphs we want to reduce to $\vec{0}$ all weights of sets in 
$\kset{i}{{\setVertices}}$; thus it is good to keep weights on the level $i$ as 
simple as possible\changed{, i.e.}{\ i.e.}{22} $\vec{a},\vec{0},-\vec{a}$. 

\changed{Other}{The designs of}{703} \changed{conditions}{Properties $1$-$3$}{685-688} \changed{are}{is}{703} a trade-off between two things: 
\begin{itemize}
\item \changed{we wanted}{we want}{707} to have \simple\  hypergraphs as small as possible in terms of number of vertices and number of 
\changed{non-zero}{nonzero}{107} 
hyperedges, 
\item \changed{required \simple\  hypergraphs must be in $\sums{\Z}{\eqs{{\setHypergraphsH}}}$.}{
for the proof of Theorem~\ref{thm:core}, we need that \simple\ hypergraphs, from which we construct the target hypergraph $\hypergraphH$, may be constructed from
hypergraphs in the family $\setHypergraphsH$.
}{709}
\end{itemize}

\changed{\begin{rem}
 Note that $(m,\vec{a})$-\simple\ hypergraph are not defined uniquely. For example $(0,\vec{a})$-\simple\ hypergraph from the example above is not fully 
defined. 
 It is also not clear if \simple\  hypergraphs exist, this is proven later (we show how to construct them). 
\end{rem}}{}{667}

\changed{}{Now, that Properties $1$-$6$ are commented, we may go back to the proof of the Theorem~\ref{thm:core}.}{703}

We \changed{also}{}{NULL} introduce \changed{$2$}{two}{714} other notions to work with families of \simple\  hypergraphs.
\begin{defi}\label{def:simplification}
Let $\setG$ be a family of ${\setSizeK}$-hypergraphs\changed{. $\hypergraphG=({\setVertices}_{\hypergraphG}, \mu_{\hypergraphG})$ for every $\hypergraphG\in \setG$}{, for every $\hypergraphG\in \setG$ we denote the set of its vertices by $\setVertices_{\hypergraphG}$.}{715} 
\changed{For every $0\leq m\leq \setSizeK$\changed{}{,}{716} let ${\groupG}_m$ be \changed{a}{the}{716} group generated by 
$\setof{\weight{setX}(\hypergraphG)}{ \changed{\text{ for every }}{}{716} \hypergraphG\in \setG, \setX\in 
\kset{\changed{i}{m}{716}}{{\setVertices}_{\hypergraphG}}}$.
A family of ${\setSizeK}$-hypergraphs $\setS$ is a \emph{simplification of the family} 
$\setG$ if for every $0\leq m\leq\setSizeK$ and 
every $\vec{g}\in 
{\groupG}_m$ it contains an \changed{$(m,\vec{g})-\simple$}{$(m,\vec{g})$-\simple}{20} ${\setSizeK}$-hypergraph.}
{
A family of ${\setSizeK}$-hypergraphs $\setS$ is a \emph{simplification of the family} $\setG$ if  
for every $0\leq m\leq\setSizeK$ and 
every $\vec{g}$ in 
$\Z$-sums of $\setof{\weight{setX}(\hypergraphG)}{ \changed{\text{ for every }}{}{716} \hypergraphG\in \setG, \setX\in 
\kset{\changed{i}{m}{716}}{{\setVertices}_{\hypergraphG}}}$ it contains an 
\changed{$(m,\vec{g})-\simple$}{$(m,\vec{g})$-\simple}{20} ${\setSizeK}$-hypergraph.
}{716 and (last) 716}
\end{defi}
\begin{defi}\label{def:simple family for}
The family $\setS$ is \emph{\changed{simple for itself}{self-simplified}{720}} if $\setS$ is a simplification of $\setS$.
\changed{A family $\setS$ is \emph{simple for} a given ${\setSizeK}$-hypergraph 
$\hypergraphH=({\setVertices},\weight{hypergraphH})$}{For a family of
${\setSizeK}$-hypergraphs $\setHypergraphsH$, 
\changedd{\emph{a family $\setS$ is simple for} $\setHypergraphsH$}{\emph{a family $\setS$ is $\setHypergraphsH$-simplified}}{1169} 
}{NULL} if 
$\setS$ 
is \changed{a family simple for itself}{self-simplified}{720} and is
 a simplification of $\setHypergraphsH$.
\end{defi}

\begin{rem}\label{rem:simple family}
To produce a \changedd{simple}{$\setHypergraphsH$-simplified}{1169} family,
it suffices to design an algorithm $alg(\setS)$ that produces a simplification of its input. 
\changed{As a \changed{simple}{self-simplified}{721} family we 
take}{The following family is \changedd{self}{$\setHypergraphsH$}{1169}-simplified}{724}  
\[
\bigcup_{i\in \N} alg^{i+1}(\setHypergraphsH) \text{ where }
alg^i \text{ is i-th iteration of the}\ alg.
\]
Note that the produced family is not necessarily finite. 
\end{rem}

\begin{exa}
\changed{Figure}{The picture below}{729} presents an example of \changedd{a simple family $\setG$ for the graph $\hypergraphH$}{an 
$\{\hypergraphH\}$-simplified family $\setG$}{1169}.
$z \in \Z$ and the sets $\red{\setA},\blue{\setB},\green{\setC}$ are marked with colours. \changedd{}{Like in Definition~\ref{def:simplification} 
$\setVertices_{\hypergraphG}=\Vertices{\hypergraphG}$ for any hypergraph $\hypergraphG\in \setG$.}{null}

\begin{tikzpicture}
[place/.style={circle,draw=\mycolorOne!50,fill=\mycolorOne!20,thick,inner sep=0pt,minimum size=3mm}]
\node (G) at (-2,-0.5) {${\hypergraphH} =$}; 

\node[place] (A) {};
\node[place] (B) [below left=of A] {};
\node[place] (C) [below right=of A] {};
\draw (A) to node [above, sloped] {$\vec{x_1}$} (B) to node [above] {$\vec{x_2}$} (C) to node [above, sloped] {$\vec{x_3}$} (A);

\node[place, color=\mycolorDwa] (A1) at (-2, -3) {};
\node[place, color=\mycolorDwa] (B1) [below left=of A1] {};
\node[place, color=\mycolorDwa] (C1) [below right=of A1] {};
\draw (A1) to node [above, sloped] {$z\cdot \vec{x_1}$} (B1) to node [above] {$z\cdot \vec{x_2}$} (C1) to node [above, sloped] {$z\cdot \vec{x_3}$} (A1);

\node[place, color=\mycolorDwa] (A2) at (-2, -5) {};
\node[place, color=mycol3] (B2) [below left=of A2] {};
\node[place, color=\mycolorOne] (C2) [below right=of A2] {};
\draw (B2) to node [above, sloped] {$\vec{a}$} (A2) to node [above, sloped] {$-\vec{a}$} (C2) ;

\node (X0) at (2.2, -4) {for \changed{any}{all}{NULL} $z\in \Z$.};

\node (X1) at (5.3, -6) {\changed{where $\vec{a}$ is}{for all $\vec{a}$}{NULL} in $\sums{\Z}{\{ \vec{x_1}+ \vec{x_2},\ \vec{x_2}+\vec{x_3},\ 
\vec{x_3}+\vec{x_1}\}}$.};

\node[place, color=mycol3] (A21) at (-2.7, -8.5) {};
\node[place, color=mycol3] (B21) [above =of A21] {};
\node[place, color=\mycolorOne] (C21) [right =of A21] {};
\node[place, color=\mycolorOne] (D21) [above =of C21] {};

\draw (A21) to node [left] {$\vec{y}$} (B21) to node [above] {$-\vec{y}$} (D21) to node [right] {$\vec{y}$} (C21) to node [above] {$-\vec{y}$} (A21);

\node (X1) at (4, -8) {\changed{where $\vec{y}$ is}{for all $\vec{y}$}{NULL} in $\sums{\Z}{\{ \vec{x_1},\ \vec{x_2},\ \vec{x_3}\}}$.};

\end{tikzpicture}

\changed{}{
The family is infinite. Depicted \changedd{}{three types of}{null} graphs represent $(0,z(\vec{x_1}+\vec{x_2}+\vec{x_3}))$-simple hypergraphs, 
$(1,\vec{a})$-simple hypergraphs,
$(2,\vec{y})$-simple hypergraphs. To see that the family is self-simplified we observe the few following facts:
\begin{itemize}
 \item $\setof{\weight{emptyset}(\hypergraphG)}{ \hypergraphG\in \setG} = \setof{z(\vec{x_1}+\vec{x_2}+\vec{x_3})}{z\in \Z}=
 {\sums{\Z}{\setof{z(\vec{x_1}+\vec{x_2}+\vec{x_3})}{z\in \Z}}} $. Indeed, 
 $\weight{emptyset}(\hypergraphG) = \vec{0}$ for $\hypergraphG$ of the second or the third type of graphs. So the family contains all required 
$(0,\placeHolder)$-\simple\ graphs (triangles). 
 \item $\setof{\weight{da}(\hypergraphG)}{ \hypergraphG\changedd{=(\setVertices_{\hypergraphG}, \mu_{\hypergraphG})}{}{405}\in \setG, \da\in 
\setVertices_{\hypergraphG}} = 
\setof{\vec{a}}{\vec{a}\in\sums{\Z}{\{ \vec{x_1}+ \vec{x_2},\ \vec{x_2}+\vec{x_3},\ \vec{x_3}+\vec{x_1}\}}}$. 
This is because of, 
 \begin{itemize}
  \item  $\weight{da}(\hypergraphG) = \vec{0}$ for any $\hypergraphG\changedd{=(\setVertices_{\hypergraphG}, \mu_{\hypergraphG})}{}{405}$ of the third type of 
graphs and any 
$\da\in \setVertices_{\hypergraphG}$. 
  \item  $\weight{da}(\hypergraphG) = \vec{0}$ for $\hypergraphG$ of the second type of graphs and $\da$ being the middle vertex.
  \item  it is easy to check that $\weight{da}(\hypergraphG) \in  \setof{\vec{a}}{\sums{\Z}{\{ \vec{x_1}+ \vec{x_2},\ \vec{x_2}+\vec{x_3},\ 
\vec{x_3}+\vec{x_1}\}}}$ \changedd{also}{}{null} for other cases.
 \end{itemize}
 Thus, the family contains all required 
$(1,\vec{a})$-\simple\ graphs (the second type). 
 \item $\setof{\weight{{da,db}}(\hypergraphG)}{ \hypergraphG\changedd{=(\setVertices_{\hypergraphG}, \mu_{\hypergraphG})}{}{405}\in \setG,
 \{\da,\db\}\in \kset{2}{\setVertices_{\hypergraphG}}} = 
 \setof{\vec{a}}{\sums{\Z}{\{ \vec{x_1},\ \vec{x_2},\ \vec{x_3}\}}}$. This is also easy to check.\\ 
\changedd{ Because of $\sums{\Z}{\sums{\Z}{\setX}}=\sums{\Z}{\setX}$ for any set $\setX$, we see that all needed \simple\ graphs are elements of the family $\setG$. 
 Thus, the family contains all required 
 $(2,\vec{y})$-\simple\ graphs (the square type).}{}{null} 
 \end{itemize} 
}{731}
\end{exa}

\begin{lem}\label{lem:adding_simple_graphs}
Let \changedd{}{$\hypergraphH$ be a $\setSizeK$-hypergraph and}{1169} $\setS$ be \changedd{a family simple for a \changed{}{$\setSizeK$-}{749}hypergraph 
${\hypergraphH}$}{an $\hypergraphH$-simplified family}{1169}. Suppose $\hypergraphG\in \sums{\Z}{\eqs{\setS}}$ then the family 
$\setS$ is \changedd{simple for 
$\{{\hypergraphH}+\hypergraphG\}$}{$\{{\hypergraphH}+\hypergraphG\}$-simplified}{1169}.  
\end{lem}
\begin{proof}
As $\setS$ is self-simplified, we only need prove that $\setS$ is a simplification of $\{\hypergraphH+\hypergraphG\}$.
Let $\setVertices_{\hypergraphH+\hypergraphG}=\Vertices{\hypergraphH}\cup \Vertices{\hypergraphG}$, and $m\leq 
\setSizeK$. 
Let $\vec{g}\in \sums{\Z}{{\setof{\weight{setX}(\hypergraphH + \hypergraphG)}{  \setX\in 
\kset{{m}}{\setVertices_{\hypergraphH+\hypergraphG}}}}}$.
We have to prove that $\setS$ contains an $(m,\vec{g})$-\simple\ hypergraph.
Suppose that $\vec{g}=c_1\vec{a_1}+c_2\vec{a_2}\ldots +c_l\vec{a_l}$ where 
$\vec{a_i}\in {\setof{\weight{setX}(\hypergraphH + \hypergraphG)}{  \setX\in 
\kset{{m}}{\setVertices_{\hypergraphH+\hypergraphG}}}}
$ and $c_i\in \Z$.
Observe that it is sufficient to prove that $\setS$ contains an $(m,\vec{a_i})$-\simple\ hypergraph for each $\vec{a_i}$. This is because $\setS$ is 
self-simplified, precisely.
If in $\setS$ there are $(m,\vec{a_1})$-\simple\ and $(m,\vec{a_2})$-\simple\ hypergraphs then $c_1\vec{a_1}+c_2\vec{a_2}\in 
\sums{\Z}{{\setof{\weight{setX}(\hypergraphG')}{\hypergraphG'\in \setS\  \setX\in 
\kset{{m}}{\Vertices{\hypergraphG'}}}}}$. But, as $\setS$ is self-simplified we have that an $(m,c_1\vec{a_1}+c_2\vec{a_2})$-\simple\ hypergraph is an element 
 of $\setS$. 

The fact that $(m, \vec{a_i})$-\simple\ hypergraphs are elements of $\setS$ is easy. Suppose that $\vec{a_i}= \weight{setX}({\hypergraphH}+\hypergraphG)$. 
Observe that 
$\weight{setX}({\hypergraphH}+\hypergraphG)=\weight{setX}({\hypergraphH}) + \weight{setX}(\hypergraphG)$. One more time, as $\setS$ is self-simplified it is 
sufficient to show that in $\setS$ there are $(m,\weight{setX}({\hypergraphH}))$-\simple\ and $(m,\weight{setX}(\hypergraphG))$-\simple\ hypergraphs.
The first one is in $\setS$ as $\setS$ is $\hypergraphH$-simplified. To show that the second is an element of $\setS$ we have to use the fact 
that $\hypergraphG\in \sums{\Z}{\eqs{\setS}}$, then $\weight{setX}(\hypergraphG)\in  \sums{\Z}{{\setof{\weight{setX}(\hypergraphG')}{\hypergraphG'\in \setS\ 
 \setX\in 
\kset{{m}}{\Vertices{\hypergraphG'}}}}}$. But as 
$\setS$ is 
self-simplified we have that an $(m, \weight{setX}(\hypergraphG))$-\simple\ hypergraph is an element of $\setS$.
\end{proof}

\section{Expressing \texorpdfstring{${\hypergraphH}$}{a hypergraph} with simple hypergraphs}\label{sec:Expressing with simple hypergraphs}

Our goal in this section is to prove Theorem~\ref{th:expresibility with simple}\changed{ which is a~slightly stronger version of the the following claim.\\
\noindent{\bf Claim:}
If $\setS$ is a simple family for a ${\setSizeK}$-hypergraph ${\hypergraphH}$ then ${\hypergraphH}\in \sums{\Z}{\eqs{\setS}}$. \\
We need a stronger version to use in the proof of existence of \simple\  hypergraphs. 
The stronger version requires the notion of a support of a sum.}{.}{753-757}

\begin{defi}
\changedd{For a hypergraph ${\hypergraphH}=({\setVertices},\mu)$ its support is 
${\setVertices}$. We denote it $\support{{\hypergraphH}}$.}{}{405} 
Let ${\hypergraphH}\in \sums{\Z}{\eqs{\setHypergraphsH}}$. We say that 
\emph{${\setVertices\changed{}{'}{759}}$ supports $\hypergraphH$ for the family 
$\setHypergraphsH$} 
if there is a solution \changed{of}{to}{760} the following equation
\[
 {\hypergraphH}=\sum_{i>0} a_i {\hypergraphH}_i \text{ where } a_i\in \Z, {\hypergraphH}_i\in \eqs\setHypergraphsH, \text{ and } 
\changedd{\support{{\hypergraphH}_i}}{\Vertices{{\hypergraphH}_i}}{405}\subseteq 
{\setVertices\changed{}{'}{759}} \text{ for all } i.
\]
\end{defi}

\changed{}{\changedd{For the second part of the definition, one}{One}{405} should think that $\setVertices'$ \changedd{may include}{includes}{null} all 
\changedd{supports}{vertices}{405} of the elements of the sum.
\changedd{This naming is 
motivated by the literature where support is used for objects but also for sets of objects, expressions build from objects \changedd{e.t.c.}{etc.}{758} 
For the 
convenience of 
the reader in the following text whenever we speak about support of the hypergraph we use $\support{}$ symbol and all occurrences of word support in the text 
are used as ``\ldots supports \ldots for a family \ldots ''.}{}{null}}{759}

\begin{thm}\label{th:expresibility with simple}
Let $\hypergraphH\changedd{=(\setVertices, \mu)}{}{405}$ be a ${\setSizeK}$-hypergraph\changedd{}{ and $\setVertices$ be its set of vertices}{405}. Further, 
\changedd{let $\setS$ be a simple family for ${\hypergraphH}$}{let $\setS$ be an $\{\hypergraphH\}$-simplified family of hypergraphs}{1169}. 
Then\changed{,}{}{765} 
${\hypergraphH}\in 
\sums{\Z}{\eqs{\setS}}$. 
Moreover, if $\size{{\setVertices}}>2k-1$ then ${\setVertices}$ supports ${\hypergraphH}$ for the family $\setS$. 
\end{thm}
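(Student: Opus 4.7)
The plan is to generalise the proof of Lemma~\ref{lem:SG} by gradually reducing $\hypergraphH$ to the empty hypergraph through additions of elements of $\eqs{\setS}$, organised in $\setSizeK+1$ phases indexed by $m = 0, 1, \ldots, \setSizeK$. Phase~$m$ starts from a hypergraph that is $(m-1)$-isolated in the sense of Definition~\ref{def:m isiolation} (with the convention that every hypergraph is $(-1)$-isolated) and uses $(m,\cdot)$-\simple\ hypergraphs from $\eqs{\setS}$ to produce an $m$-isolated hypergraph. The crucial invariant is that once lower-level isolation is achieved, it is preserved through later phases: this follows from Lemma~\ref{lem:adding simple graphs doesn't change some weights}, since adding an $(m,\cdot)$-\simple\ hypergraph does not affect the weight of any set of size strictly below $m$. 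When $m = \setSizeK$ is reached, the hypergraph is $\setSizeK$-isolated, which means all its $\setSizeK$-edge weights $\mu(\edge)$ vanish, i.e., the hypergraph is empty, so the original $\hypergraphH$ is realised as a $\Z$-sum of elements of $\eqs{\setS}$.

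Phase~$0$ reproduces Claim~\ref{cl:zero_total_weight}: since $\setS$ is a simplification of $\{\hypergraphH\}$, it contains a $(0,\weight{emptyset}(\hypergraphH))$-\simple\ hypergraph which, subtracted from $\hypergraphH$, kills the total weight. In Phase~$m$ for $1 \leq m \leq \setSizeK-1$, for each $m$-subset $\setX$ of the current vertex set I would add to $\hypergraphH$ an $(m,\cdot)$-\simple\ hypergraph from $\eqs{\setS}$ placed so that one of its halves ($\setA$ or $\setB$ from Definition~\ref{def:simple_hypergraphs}) coincides with $\setX$ and the other half, together with the $\setC$-part, is drawn from a common pool of additional vertices shared across the whole phase. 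The $\setX$-mirror contribution cancels $\weight{setX}(\hypergraphH)$ directly; however, the same $(m,\cdot)$-\simple\ hypergraph also deposits $\pm\weight{setX}(\hypergraphH)$ on each of the $2^m-1$ other mirror $m$-sets. Summing over all $\setX \in \kset{m}{\setVertices}$, these spurious contributions accumulate at each common-pool $m$-set to combinations of weights $\weight{setY}(\hypergraphH)$ with $\size{\setY} < m$, which vanish by the $(m-1)$-isolation hypothesis. This is the direct generalisation of the identity $\sum_\db \weight{db}(\hypergraphH) = 2\weight{emptyset}(\hypergraphH) = \vec{0}$ used in Claim~\ref{cl:grapha_one}.

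Phase~$\setSizeK$ is the iterative edge-elimination phase, generalising the algorithm $\ALG$: fix a total order on vertices, lift it to a lexicographic well-founded order on $\setSizeK$-edges, and as long as at least $2\setSizeK$ nonisolated vertices remain pick the currently largest edge $\edge$ and cancel $\weight{edge}(\hypergraphH)$ by adding a $(\setSizeK,\weight{edge}(\hypergraphH))$-\simple\ hypergraph with $\edge$ as one half and $\setSizeK$ other nonisolated vertices as the other half. Each addition replaces the maximal edge by strictly smaller edges in the lex order, so the procedure terminates with a residual hypergraph carrying at most $2\setSizeK-1$ nonisolated vertices. Being $(\setSizeK-1)$-isolated, its edge weights $\mu(\setX)$ satisfy the square linear system
\[
\sum_{\setX \in \kset{\setSizeK}{\setVertices},\ \setY \subseteq \setX} \mu(\setX) \;=\; \vec{0}, \qquad \setY \in \kset{\setSizeK-1}{\setVertices},
\]
whose coefficient matrix is (a relabelling of) $\reductionMatrix{2\setSizeK-1}{\setSizeK}{\setSizeK-1}$; by Lemma~\ref{lem:main_matrix} applied with parameter $\setSizeK-1$ this matrix has maximal rank, so the only solution is trivial and the residual hypergraph is empty. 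For the \emph{moreover} clause, observe that an $(m,\cdot)$-\simple\ hypergraph occupies at most $2\setSizeK$ vertices, so when $\size{\setVertices} > 2\setSizeK-1$ the set $\setVertices$ itself always contains enough room to host the additional parts of every \simple\ hypergraph we add, and the entire $\Z$-combination stays supported on $\setVertices$.

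The main obstacle I anticipate is Phase~$m$ for $1 < m < \setSizeK$. In the $\setSizeK=2$ case a $(1,\cdot)$-\simple\ graph has only two weighted vertices, so the cancellation of spurious mirror contributions at the common additional vertex is a single equation ($2\weight{emptyset}(\hypergraphH) = \vec{0}$) and is immediate. For general $m$, an $(m,\cdot)$-\simple\ hypergraph distributes its weight over $2^m$ mirror $m$-sets with intricate sign pattern, and verifying that collecting all these contributions across all $m$-subsets $\setX \subseteq \setVertices$ yields sums that provably vanish under $(m-1)$-isolation is the most delicate combinatorial bookkeeping of the argument. It is precisely where conditions~4--6 of Definition~\ref{def:simple_hypergraphs}---the $\pm\vec{a}$ pattern on mirror sets, the vanishing on non-mirror $m$-sets, and the $(m-1)$-isolation of the \simple\ hypergraph itself---are designed to interact to produce the required cancellation.
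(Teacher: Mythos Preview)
Your overall architecture---phases $m=0,\ldots,\setSizeK$, each achieving $m$-isolation while preserving $(m-1)$-isolation via Lemma~\ref{lem:adding simple graphs doesn't change some weights}---matches the paper exactly. Phase~$0$ and Phase~$\setSizeK$ are also essentially right. The gap is in the middle phases $1<m<\setSizeK$, and it is a real one, not just bookkeeping.

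Your direct-cancellation scheme does not make the spurious contributions vanish. Take $m=2$. With a fixed pool $\setB=\{\db_1,\db_2\}$ and a fixed ordering on $\setVertices$, processing $\setX=\{\da,\da'\}$ (with $\da<\da'$) deposits $\pm\weight{setX}(\hypergraphH)$ on the mixed mirror sets $\{\da,\db_2\}$ and $\{\db_1,\da'\}$. Summing over all $\setX$, the set $\{\dc,\db_2\}$ accumulates $\sum_{\da'>\dc}\weight{\{\dc,\da'\}}(\hypergraphH)$ while $\{\db_1,\dc\}$ accumulates $\sum_{\da<\dc}\weight{\{\da,\dc\}}(\hypergraphH)$. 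Only their \emph{sum} equals $(k-1)\weight{\dc}(\hypergraphH)=\vec{0}$; individually they need not vanish. So after your Phase~$2$ the hypergraph is not $2$-isolated, and since the residual nonzero $2$-weights live on sets meeting both $\setVertices$ and the pool, there is no small set of size $\leq 3$ containing all of them---you cannot invoke the reduction-matrix argument either. The $m=1$ case only works because a $(1,\cdot)$-\simple\ hypergraph has no mixed mirror sets at all.

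The paper avoids this by running the iterative improvement procedure (your Phase~$\setSizeK$) at \emph{every} level, not just the top one. Concretely, Lemma~\ref{lem:isolating a vertex} takes an $m$-isolated hypergraph and, by repeatedly cancelling the maximal $(m{+}1)$-set with nonzero weight against a smaller disjoint $(m{+}1)$-set in $\setVertices$, produces a hypergraph that is only \emph{pre} $(m{+}1)$-isolated: all nonzero $(m{+}1)$-weights sit inside a set of size $\leq 2m{+}1$. Then Lemma~\ref{lem:small k-1-isolated graphs vanish} (the reduction-matrix argument, applied with $\reductionMatrix{2m+1}{m+1}{m}$) upgrades pre $(m{+}1)$-isolation to genuine $(m{+}1)$-isolation. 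So the Kneser-matrix rank lemma is invoked at every level, not just once at the end, and the support-on-$\setVertices$ claim falls out because both halves $\setL,\setL'$ and the $\setC$-part are always chosen inside $\setVertices$.
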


\changed{Proof}{The proof}{766} of this theorem requires a few definitions and lemmas stated below. We start with them
and then we prove the theorem while proofs of lemmas are postponed.

\changed{}{Let us recall (Definitiont~\ref{def:m isiolation}) that, a ${\setSizeK}$-hypergraph ${\hypergraphH}\changedd{=(\setVertices,\mu)}{}{405}$ is 
\emph{m-isolated} 
if for any subset $\setX \subseteq \changedd{{\setVertices}}{\Vertices{\hypergraphH}}{405}$ such that $\size{\setX}\leq m$ it holds that
$\weight{setX}({\hypergraphH})=\vec{0}$.}{666}

\begin{rem}\label{rmk:fully isolated graph vanishes}
If ${\hypergraphH}$ is a ${\setSizeK}$-isolated ${\setSizeK}$-hypergraph, then ${\hypergraphH}$ is equivalent to the empty hypergraph \changed{}{i.e. it is a 
union of isolated vertices}{770}.
\end{rem}

\begin{defi}\label{def:almost m isolation}
Let ${\hypergraphH}\changedd{=(\setVertices,\mu)}{}{405}$ be a ${\setSizeK}$-hypergraph\changed{}{ and $\setVertices$ be its set of vertices}{405}. We say that 
\changedd{it}{$\hypergraphH$}{405} is \emph{\changed{almost}{pre}{777} m-isolated} if the following two 
conditions are satisfied:
\begin{itemize}
    \item ${\hypergraphH}$ is $(m-1)$-isolated,
    \item there is $\setX\subseteq {\setVertices}$ a set of vertices such that $\size{\setX}\leq 2m-1$ and 
for any $\setY \in \kset{m}{{\setVertices}}$ such that $\setY\not\subseteq \setX$ \changed{holds}{it holds that}{451} $\weight{setY}({\hypergraphH})=\vec{0}$. 
\end{itemize}
\end{defi}

\begin{lem}\label{lem:small k-1-isolated graphs vanish}
If ${\hypergraphH}$ is a ${\setSizeK}$-hypergraph that is \changed{almost}{pre}{777} $m$-isolated then it is $m$-isolated.
\end{lem}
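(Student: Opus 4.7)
The plan is to apply Lemma~\ref{lem:main_matrix} to the linear system that the weights $\weight{setY}(\hypergraphH)$ of $m$-subsets $\setY$ must satisfy. The crucial identity, obtained by a routine double count of hyperedges containing a given $(m-1)$-set, is
\[
\sum_{\setY \supsetneq \setA,\ |\setY|=m,\ \setY\subseteq\setVertices} \weight{setY}(\hypergraphH) \;=\; (\setSizeK-m+1)\cdot \weight{setA}(\hypergraphH)
\]
for every $(m-1)$-subset $\setA$ of $\setVertices$; each hyperedge $\edge \supseteq \setA$ contributes to exactly $|\edge\setminus\setA| = \setSizeK-m+1$ summands on the left. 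Since $\hypergraphH$ is $(m-1)$-isolated, the right-hand side is $\vec{0}$.

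First, I would reduce to the case $|\setX|=2m-1$ exactly. If $|\setVertices|<2m-1$, pad $\setVertices$ with isolated vertices: this leaves every weight unchanged and preserves both the pre $m$-isolation and the $m$-isolation. Then enlarge $\setX$ to a $(2m-1)$-subset of the (padded) vertex set. Enlarging $\setX$ only weakens the requirement in Definition~\ref{def:almost m isolation}, so pre $m$-isolation is kept.

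Next, for each $(m-1)$-subset $\setA\subseteq\setX$, split the vanishing sum above according to whether $\setY\subseteq\setX$ or not. Terms with $\setY\not\subseteq\setX$ vanish by pre $m$-isolation, yielding
\[
\sum_{\setY \supsetneq \setA,\ |\setY|=m,\ \setY\subseteq\setX} \weight{setY}(\hypergraphH) \;=\; \vec{0}.
\]
Collecting these equations, one per $(m-1)$-subset $\setA$ of $\setX$, gives exactly the linear system $M\vec{w}=\vec{0}$, where $M=\reductionMatrix{2m-1}{m}{m-1}$ and $\vec{w}$ is the tuple of values $\weight{setY}(\hypergraphH)\in\ktuple{\dimension}{\Z}$ indexed by $m$-subsets $\setY\subseteq\setX$.

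Finally, Lemma~\ref{lem:main_matrix} applied with $\setSizeK$ replaced by $m-1$ says that $M$ has maximal rank; as ${2m-1 \choose m-1} = {2m-1 \choose m}$, $M$ is square, hence invertible over $\Q$, and the system forces $\vec{w}=\vec{0}$. Combined with pre $m$-isolation (which handles $m$-sets $\setY\not\subseteq\setX$) and $(m-1)$-isolation (which handles subsets of size less than $m$), this yields the $m$-isolation of $\hypergraphH$. The only delicate step is the reduction to $|\setX|=2m-1$, needed to fit Lemma~\ref{lem:main_matrix} precisely; once $\setX$ has the right size, the conclusion is a one-line matrix inversion.
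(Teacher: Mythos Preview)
Your proof is correct and follows essentially the same route as the paper: derive the vanishing linear relations from $(m-1)$-isolation via the double-counting identity (the paper packages this as Lemma~\ref{lem:weights of sums are proportional}), restrict the sum to $m$-subsets of $\setX$ using pre $m$-isolation, and invoke Lemma~\ref{lem:main_matrix} on the resulting square system $\reductionMatrix{2m-1}{m}{m-1}\,\vec{w}=\vec{0}$. You are in fact slightly more careful than the paper in explicitly padding/enlarging so that $|\setX|=2m-1$ exactly, which is needed for Lemma~\ref{lem:main_matrix} to apply on the nose.
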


\begin{lem}\label{lem:isolating a vertex}
Let ${\hypergraphH}\changedd{=(\setVertices, \mu)}{}{405}$ be an $m$-isolated ${\setSizeK}$-hypergraph\changedd{}{ and $\setVertices$ be its 
set of vertices.}{405}\changed{,}{}{778} \changedd{and $\setS$ be a simple 
family for ${\hypergraphH}$.}{Suppose $\setS$ is an $\{\hypergraphH\}$-simplified family of hypergraphs.}{1169}
Then there is a hypergraph $\hypergraphG\in \sums{\Z}{\eqs{\setS}}$, such that 
${\hypergraphH}+\hypergraphG$ is \changed{almost}{pre}{777} \changed{$m+1$-}{$(m+1)$-}{779}isolated. Moreover, if $\size{{\setVertices}}>2k-1$ then 
${\setVertices}$ supports $\hypergraphG$ for the 
family $\setS$.
\end{lem}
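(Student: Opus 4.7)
The plan is to extend the algorithm $\ALG$ from Section~\ref{sec:prove} (graph case, $\setSizeK=2$) to arbitrary $m$ and $\setSizeK$. By Lemma~\ref{lem:adding simple graphs doesn't change some weights}, subtracting any $(m+1,\vec g)$-simple hypergraph preserves $m$-isolation (Property~6 of Definition~\ref{def:simple_hypergraphs}), so the construction need only control the weights of $(m+1)$-subsets of the current hypergraph.

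First I would fix a total order $<$ on $\setVertices$ and extend it to $\kset{m+1}{\setVertices}$ by the max-lex order (compare subsets by largest element, then by second-largest, and so on). Starting from $\hypergraphH_0=\hypergraphH$, iterate: while some $(m+1)$-subset of non-zero weight has maximum vertex above the $(2m+1)$-st smallest vertex of $\setVertices$, let $\setY=\{\da_1>\cdots>\da_{m+1}\}$ be the max-lex largest such subset, set $\vec g=\weight{setY}(\hypergraphH_i)$, and use Definition~\ref{def:simple family for} to pick an $(m+1,\vec g)$-simple hypergraph $\hypergraphS_i\in\setS$ whose set $\setA$ equals $\setY$, whose paired set $\setB=\{\db_1,\ldots,\db_{m+1}\}$ is disjoint from $\setA$ and satisfies $\db_i<\da_i$ for every $i$, and whose $\setC$ lies in $\setVertices\setminus(\setA\cup\setB)$. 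Set $\hypergraphH_{i+1}=\hypergraphH_i-\hypergraphS_i$; then $\weight{setY}(\hypergraphH_{i+1})=\vec 0$, and for every other $(m+1)$-subset $\setW$ of non-zero weight in $\hypergraphS_i$ the comparison $\setW<_{\mathrm{lex}}\setA$ is immediate: if $i^*$ is the smallest index with $w_{i^*}=\db_{i^*}$, then the $i^*$-th largest element of $\setW$ is strictly below $\da_{i^*}$, so the max-lex comparison is decided at position $i^*$ in favour of $\setA$. Consequently the max-lex largest ``bad'' subset of non-zero weight strictly decreases, the process terminates, and the $2m+1$ smallest vertices of $\setVertices$ witness pre-$(m+1)$-isolation.

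The main technical obstacle is establishing that such a paired $\setB$ always exists. This reduces via Hall's marriage theorem to checking $|\{v\in\setVertices\setminus\setA : v<\da_i\}|\geq m+2-i$ for every $i$. The non-trivial positions are handled by combining two ingredients: the max-lex maximality of $\setY$, which forces $\da_1$ sufficiently high in the order to provide enough small vertices for the large-index Hall inequalities; and the $m$-isolation of $\hypergraphH_i$, which through the identity $\sum_{v\notin Z}\Lambda_{Z\cup\{v\}}(\hypergraphH_i)=(\setSizeK-m)\,\Lambda_Z(\hypergraphH_i)=\vec 0$ for each $(m-1)$-subset $Z\subsetneq\setY$ guarantees the existence of further vertices $v\notin\setY$ with $\Lambda_{Z\cup\{v\}}(\hypergraphH_i)\neq\vec 0$, each of which is forced below $\setY$ in the order by maximality and hence populates the required Hall sets. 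The boundary case $|\setVertices|\leq 2m+1$ is immediate with $\hypergraphG=\mathbf 0$ and $\setX=\setVertices$.

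For the moreover clause, when $|\setVertices|>2\setSizeK-1$ each $\hypergraphS_i$ has support of size $|\setA|+|\setB|+|\setC|=2(m+1)+2(\setSizeK-m-1)-1=2\setSizeK-1<|\setVertices|$, so $\setC$ fits inside $\setVertices\setminus(\setA\cup\setB)$; hence every summand of $\hypergraphG=-\sum_i\hypergraphS_i$ is supported in $\setVertices$, witnessing that $\setVertices$ supports $\hypergraphG$ for $\setS$.
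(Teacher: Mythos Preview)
Your scaffolding matches the paper's: iteratively subtract $(m{+}1,\cdot)$-simple hypergraphs from $\eqs{\setS}$, preserving $m$-isolation via Property~6 of Definition~\ref{def:simple_hypergraphs}, and argue termination along a well-founded order until every nonzero $(m{+}1)$-weight lies inside a fixed $(2m{+}1)$-set. Your max-lex decrease argument is correct, and the support bookkeeping for the ``moreover'' clause is fine (note the edge case $m{+}1=\setSizeK$, where $\setC=\emptyset$ and the simple hypergraph has $2\setSizeK$ vertices, still $\le|\setVertices|$). You should also invoke Lemma~\ref{lem:adding_simple_graphs} to keep $\setS$ simple for each $\hypergraphH_i$, but that is routine.

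The real gap is the Hall verification. First a correction: your identity needs $Z$ to be an $m$-subset of $\setY$, not an $(m{-}1)$-subset --- with $|Z|=m{-}1$, every $\Lambda_{Z\cup\{v\}}$ already vanishes by $m$-isolation and nothing is learned. With $Z=\setY\setminus\{\da_j\}$, max-lex maximality of $\setY$ does force any witness $v_j\notin\setY$ with $\Lambda_{Z\cup\{v_j\}}\neq\vec 0$ to satisfy $v_j<\da_j$; but the Hall inequality at position $i$ demands $m{+}2{-}i$ \emph{distinct} vertices below $\da_i$, and the witnesses $v_i,\ldots,v_{m+1}$ obtained from $j=i,\ldots,m{+}1$ may all coincide. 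Your sketch provides no mechanism to separate them, so the existence of $\setB$ remains open precisely at the point you flag as the main obstacle. The paper sidesteps this entirely: it does \emph{not} prove that an improvement is always available. It uses a coarser partial domination order on sets, improves while it can, and then proves \emph{a posteriori} that a hypergraph admitting no further improvement is already pre-$(m{+}1)$-isolated. That last step is where the substance lies: for a putative offending $\setL$ one isolates a suitable suffix $\setY_{\setL}\subsetneq\setL$, passes to the cut $\cut{(\hypergraphH_{last})}{\setY_{\setL}}$, and applies Lemma~\ref{lem:small k-1-isolated graphs vanish} (hence ultimately the Kneser-rank Lemma~\ref{lem:main_matrix}) to force $\Lambda_{\setL}=\vec 0$. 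Rescuing your direct Hall route would require machinery of comparable strength.
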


\begin{proof}[Proof of Theorem~\ref{th:expresibility with simple}]
\changed{Without loss of generality we may}{We}{781} assume that 
$\size{{\setVertices}}>2k-1$, as otherwise we extend \changed{it}{$\hypergraphH$}{781} with a 
few isolated vertices.

\changed{We construct}{Suppose we have}{NULL} a sequence of hypergraphs $\hypergraphG_i\in \sums{\Z}{\eqs{\setS}}$
\changed{ 
such that \changed{}{for every $j\leq\setSizeK$ the hypergraph}{783} 
${\hypergraphH}-\sum_{i=0}^j \hypergraphG_i$ is $j$-isolated.
Also, for every \changed{$i\leq j$}{$i \leq\setSizeK$}{NULL} the set 
${\setVertices}$ supports $\hypergraphG_i$ for the family $\setS$.  
We define ${\hypergraphG}$ as $\sum_{i=0}^{\setSizeK} \hypergraphG_i$. Observe that $\hypergraphH = \hypergraphG$ as $\hypergraphH-\sum_{i=0}^{\setSizeK} 
\hypergraphG_i$
is a $\setSizeK$-isolated $\setSizeK$-hypergraph, \changed{so}{ and every $\setSizeK$-isolated $\setSizeK$-hypergraph}{NULL} is empty \changed{due to}{because 
of}{200} Remark~\ref{rmk:fully isolated graph vanishes}. Moreover, $\setVertices$ supports $\hypergraphG$ for the 
family $\setS$ as for every $i\leq\setSizeK$ the set $\setVertices$ supports $\hypergraphG_i$ for the family $\setS$.
}
{
which satisfies following properties.
\begin{itemize}
  \item For every $j\leq\setSizeK$ the hypergraph ${\hypergraphH}-\sum_{i=0}^j \hypergraphG_i$ is $j$-isolated.
 \item For every \changed{$i\leq j$}{$i \leq\setSizeK$}{NULL} the set ${\setVertices}$ supports $\hypergraphG_i$ for the family $\setS$. 
\end{itemize}
We define ${\hypergraphG}$ as $\sum_{i=0}^{\setSizeK} \hypergraphG_i$.
Now\changed{}{,}{NULL} observe that 
\begin{itemize}
 \item $\hypergraphG\in \sums{\Z}{\eqs{\setS}}$ as each $\hypergraphG_i$ is in $\sums{\Z}{\eqs{\setS}}$.
 \item $\changedd{\support{\hypergraphG}\subseteq \setVertices}{\Vertices{\hypergraphG}\subseteq \setVertices}{1169}$ as 
$\changedd{\support{\hypergraphG_i}\subseteq \setVertices}{\Vertices{\hypergraphG_i}\subseteq \setVertices}{1169}$ for each $i\leq \setSizeK$.
 \item $\setVertices$ supports $\hypergraphG$ for the family $\setS$ as $\setVertices$ supports $\hypergraphG_i$ for the family $\setS$ for each $i\leq 
\setSizeK$.
 \item $\hypergraphH = \hypergraphG$ up to equivalence. Because of the first property we know that $\hypergraphH - \hypergraphG$ is a $\setSizeK$-isolated 
$\setSizeK$-hypergraph, \changed{so}{ 
and every $\setSizeK$-isolated 
$\setSizeK$-hypergraph}{NULL} is the empty hypergraph, \changed{due to}{because of}{200} Remark~\ref{rmk:fully isolated graph vanishes}.
\end{itemize}
Thus, if we have the sequence $\hypergraphG_i$ then we are done. Now, we show how the sequence $\hypergraphG_i$ may be constructed.
}{783,790}
The construction is via induction on $\changed{j}{i}{790}$.
As $\setS$ is \changedd{a simple family for ${\hypergraphH}$}{an $\{\hypergraphH\}$-simplified family of hypergraphs}{1169} then there is 
$\hypergraphG_0 \in{\eqs{\setS}}$ and $\changedd{\support{\changed{\hypergraphG}{\hypergraphG_0}{NULL}}}{\Vertices{\hypergraphG_0}}{1169}\subset 
{\setVertices}$ such that ${\hypergraphH}-\hypergraphG_0$ 
is 
$0$-isolated. 
This creates the induction base.

For the inductive step we reason as follows. First we observe that \changed{due to}{because of}{200} Lemma~\ref{lem:adding_simple_graphs} the family $\setS$ is 
\changedd{simple for 
${\hypergraphH}-\sum_{j=0}^{i}\hypergraphG_j$}{$\left({\hypergraphH}-\sum_{j=0}^{i}\hypergraphG_j\right)$-simplified}{1169}.
Thus, we may use Lemma~\ref{lem:isolating a vertex} for \changed{a}{the}{795} hypergraph
${\hypergraphH}-\sum_{j=0}^i \hypergraphG_j$ and the family $\setS$.
As a consequence we get $\hypergraphG_{i+1}\in \sums{\Z}{\eqs{\setS}}$ such that
$({\hypergraphH}-\sum_{j=0}^i \hypergraphG_j)-\hypergraphG_{i+1}$ is \changed{almost}{pre}{777} \changed{$i+1$-}{$(i+1)$-}{779}isolated.
Moreover, $\setVertices$ supports $\hypergraphG_{i+1}$ for the family $\setS$ (the property 2).
Further, Lemma~\ref{lem:small k-1-isolated graphs vanish} implies 
that $({\hypergraphH}-\sum_{j=0}^i \hypergraphG_j)-\hypergraphG_{i+1}$ is \changed{$i+1$-}{$(i+1)$-}{779}isolated (the property 1). 

\changedd{Note that $\setVertices$ supports $\hypergraphG_{i+1}$ for the family $\setS$ thus $\setVertices$ supports $({\hypergraphH}-\sum_{j=0}^{i+1} \hypergraphG_j)$ 
for the family $\setS$, too.}{}{null}
This ends the inductive step.
\end{proof}

\section*{ Proof of Lemma~\ref{lem:small k-1-isolated graphs vanish}.}

Before we prove Lemma~\ref{lem:small k-1-isolated graphs vanish} we prove \changedd{a simple}{an easy}{null} 
lemma about the $\weight{setX}$ functions. \changed{The main tool in the proof of Lemma~\ref{lem:small k-1-isolated graphs vanish} 
are reduction matrices defined in Section~\ref{sec:matrix}.}{}{804}

\begin{lem}\label{lem:weights of sums are proportional}
Suppose ${\hypergraphH}=(\setVertices, \mu)$ is a ${\setSizeK}$-hypergraph, $\setX \in \kset{m}{{\setVertices}}$\changed{}{,}{806} and $m\leq l \leq\setSizeK$. 
Let $\familyF = \setof{\setY \in \kset{l}{{\setVertices}}}{ \setX \subseteq \setY}$, be \changed{a}{the}{807} family of \changed{$l-$}{$l$-}{807}element 
supersets of $\setX$. 
Then:
\[
 \sum\limits_{\setY \in \familyF}\weight{setY}({\hypergraphH}) = \binomial{k-m}{l-m} \weight{setX}({\hypergraphH}).
\]
 \end{lem}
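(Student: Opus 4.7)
The proof is a straightforward double-counting argument, unpacking the definition of $\weight{setX}$ and swapping the order of summation.

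The plan is first to expand the left-hand side using the definition
\[
\weight{setY}(\hypergraphH) \;=\; \sum_{\edge \in \he{\hypergraphH},\ \setY \subseteq \edge} \mu(\edge),
\]
so that
\[
\sum_{\setY \in \familyF} \weight{setY}(\hypergraphH) \;=\; \sum_{\setY \in \familyF}\ \sum_{\edge \in \he{\hypergraphH},\ \setY \subseteq \edge} \mu(\edge).
\]
Next I would swap the order of summation. The pairs $(\setY,\edge)$ with $\setY\in\familyF$ and $\setY\subseteq\edge$ are exactly the pairs for which $\setX\subseteq\setY\subseteq\edge$ and $\size{\setY}=l$. In particular, only edges $\edge$ with $\setX\subseteq\edge$ contribute, so
\[
\sum_{\setY \in \familyF} \weight{setY}(\hypergraphH) \;=\; \sum_{\edge \in \he{\hypergraphH},\ \setX \subseteq \edge}\ \Big|\{\setY\in\familyF : \setY\subseteq\edge\}\Big|\cdot \mu(\edge).
\]

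The key combinatorial step is now to count, for a fixed hyperedge $\edge$ with $\setX\subseteq\edge$, the number of $\setY\in\familyF$ with $\setY\subseteq\edge$. Such a $\setY$ is obtained by choosing its remaining $l-m$ elements from $\edge\setminus\setX$, a set of size $\setSizeK-m$ (since $\edge\in\kset{\setSizeK}{\setVertices}$). Hence there are exactly $\binom{\setSizeK-m}{l-m}$ such sets $\setY$, independently of $\edge$. Substituting this count, the binomial coefficient factors out and the sum becomes
\[
\binom{\setSizeK-m}{l-m} \sum_{\edge \in \he{\hypergraphH},\ \setX \subseteq \edge} \mu(\edge) \;=\; \binom{\setSizeK-m}{l-m}\cdot \weight{setX}(\hypergraphH),
\]
which is the desired equality.

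There is no real obstacle in this proof; the only thing to be careful about is that when $\edge\in\he{\hypergraphH}$ but $\setX\not\subseteq\edge$, no $\setY\in\familyF$ satisfies $\setY\subseteq\edge$ (since $\setY\supseteq\setX$), so such edges are correctly dropped when we change the order of summation. The case $l=m$ gives the trivial identity $\weight{setX}(\hypergraphH)=\weight{setX}(\hypergraphH)$ (with $\binom{\setSizeK-m}{0}=1$), and for $l=\setSizeK$ we get $\weight{setX}(\hypergraphH)=\sum_{\setY\supseteq\setX,\ \size{\setY}=\setSizeK}\mu(\setY)$, matching the original definition of $\weight{setX}$.
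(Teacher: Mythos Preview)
Your proof is correct and follows essentially the same double-counting argument as the paper: both show that for each hyperedge $\edge$ with $\setX\subseteq\edge$, the weight $\mu(\edge)$ is counted exactly $\binom{\setSizeK-m}{l-m}$ times on the left-hand side. You make the swap of summation explicit, whereas the paper phrases it as ``$\mu(\edge)$ appears the same number of times on both sides,'' but the content is identical.
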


\begin{proof}
\changed{}{Let us recall definition of $\weight{setX}(\hypergraphH)$. It is the sum of all edges $\edge$ 
in $\hypergraphH$ such that $\setX\subseteq \edge$.}{NULL}
Let $\edge$ be a hyperedge in ${\hypergraphH}$ such that $\setX \subseteq \edge$.
It suffices to prove that $\mu(\edge)$ 
appears the same number of times on \changed{the}{}{813} both sides of the equation. 
\changed{\[
\sum\limits_{\setY \in \familyF} \left( \sum_{\edge\in \kset{\setSizeK}{{\setVertices}}, \setY\subseteq \edge} \mu(\edge) \right)=
{k-m \choose l-m} \sum_{\edge\in \kset{\setSizeK}{{\setVertices}}, \setX\subseteq \edge} \mu(\edge).
\]}{}{812,816}
On the right\changed{-hand}{}{44} side $\edge$ is added $\binomial{k-m}{l-m}$ times\changed{}{, as $\mu(e)$ appears once in the expression 
$\weight{setX}({\hypergraphH})$}{NULL}.
On the left\changed{-hand}{}{44} side the number of times when \changed{$\edge$}{$\mu(\edge)$}{NULL} is added is equal to the number of 
$l$-element supersets of $\setX$ that are included in $\edge$. \changed{}{This is because $\mu(e)$ appears once in the expression 
$\weight{setY}({\hypergraphH})$ if $\setY\subseteq e$.}{NULL} But this is equal to
$\binomial{\size{\edge}-\size{\setX}}{l-\size{\setX}}=\binomial{k-m}{l-m}$, as 
required.
\end{proof}

\changed{}{The main tool in the proof of Lemma~\ref{lem:small k-1-isolated graphs vanish} 
\changed{are}{is}{804} reduction matrices defined in Section~\ref{sec:matrix}.}{804}

\begin{proof}[Proof of Lemma~\ref{lem:small k-1-isolated graphs vanish}] 
Let $\setX$ be a set of vertices as in Definition~\ref{def:almost m isolation}. 
We \changed{only}{}{823} need to prove that for any $\setX'\subseteq \setX$ such that $\size{\setX'}=m$ \changed{holds}{it holds that}{451} 
$\weight{setX'}({\hypergraphH})=\vec{0}$.

Let $\setY_1, \setY_2 \ldots \setY_n$ be all \changed{}{the}{825} $(m-1)$-subsets of $\setX$ and let $\setX'_1, \setX'_2 \ldots \setX'_{n'}$ be all
\changed{}{ the}{825} $m$-subsets of $\setX$.

\changed{Due to}{Because of}{200} Lemma~\ref{lem:weights of sums are proportional}\changed{}{,}{827} 
we know that\changed{}{,}{827} for any $\setY_i$\changed{}{,}{827} the equation
\[
\sum\limits_{\setX'_j \supset \setY_i} \weight{setX'_j}({\hypergraphH}) =\changed{((2m-1)-(m-1))}{\binom{k-m}{m - (m-1)}}{829}\cdot 
\weight{setY_j}({\hypergraphH})\changed{\text{ holds.}}{}{828}
\]
\changed{}{holds. }{828}But ${\hypergraphH}$ is $m-1$ isolated so $\weight{setY_j}({\hypergraphH})=\vec{0}$.

This system of equation may be rewritten in matrix form
\[
 Cu = \vec{0} \text{ where } 
\]
\begin{displaymath}
u \eqdef \begin{bmatrix} 
\weight{setX'_1}({\hypergraphH}) \\ \weight{setX'_2}({\hypergraphH}) \\ \vdots \\ \weight{setX'_n}({\hypergraphH})
\end{bmatrix} 
\end{displaymath} and 
\changed{$C\eqdef \reductionMatrix{2m-1}{m}{m-1}$ ($\reductionMatrix{\bullet}{\bullet}{\bullet}$ are 
defined in Definition~\ref{def:reduction matrix}).}{
$C$ is the matrix with $\binom{\size{\setX}}{m}=\binom{2m-1}{m}$ columns indexed with \changed{$m$-elements}{$m$-element}{54} 
subsets of the set $\setX$, $\binom{{\size{\setX}}}{m-1}=\binom{2m-1}{m-1}$ 
rows indexed with \changed{$(m-1)$-elements}{$(m-1)$-element}{54} subsets of the set $\setX$, such that each 
individual entry represents inclusion between the index of the row and the index of the column. So up to permutation of rows and columns $C$ 
is the $\reductionMatrix{2m-1}{m}{m-1}$ matrix ($\reductionMatrix{\bullet}{\bullet}{\bullet}$ \changed{are}{is}{840} 
defined in Definition~\ref{def:reduction matrix}).
}{840}

But according to Lemma~\ref{lem:main_matrix} the rank of the matrix $C$ is maximal, 
which implies $u = \vec{0}$ is the only solution of the system of equations. \changed{Thus }{Thus, }{177}$\weight{setX'_j}({\hypergraphH}) = \vec{0}$ 
for any $j\leq n'$ and consequently ${\hypergraphH}$ is $m$-isolated.
\end{proof}

\section*{Proof of Lemma~\ref{lem:isolating a vertex}.}

The proof of Lemma~\ref{lem:isolating a vertex} requires some preparation.

\begin{defi}\label{def:cut vertex}
\changed{Suppose $\setX \subset {\setVertices}, \size{\setX} <\setSizeK$. 
We define $\cut{{\hypergraphH}}{\setX} \eqdef ({\setVertices} \setminus \setX, {\cut{\mu}{\setX}})$, where 
${\cut{\mu}{\setX}}(\edge)$ is a function from $\kset{k-\size{\setX}}{({\setVertices}-\setX)}$ 
 to $\ktuple{\dimension}{\Z}$ and is defined as follows
\[
{\cut{\mu}{\setX}}(\edge)\eqdef \mu(\edge\cup{\setX})\text{ where } 
\edge\in \kset{k-\size{\setX}}{({\setVertices}-\setX)}. 
\]
The above operation is called \emph{$\setX$-cut} of ${\hypergraphH}$. 
The reverse operation called \emph{enrich} ${\hypergraphH}$ with $\setX$ is denoted by $\revcut{{\hypergraphH}}{\setX}$ 
and its effect is a minimal in the sense of inclusion\footnote{Inclusion of the sets of hyperedges.} 
$(k+\size{\setX})$-hypergraph ${\hypergraphG}$ such that $\cut{{\hypergraphG}}{\setX}={\hypergraphH}$.
}
{
Let $\hypergraphF=(\setVerticesW,\mu')$ is a $\setSizeK$-hypergraph.
Suppose $\setX \subset {\setVerticesW}, \size{\setX} < \setSizeK$. 
We define $\cut{{\hypergraphF}}{\setX} \eqdef ({\setVerticesW} \setminus \setX, {\cut{\mu'}{\setX}})$, where 
${\cut{\mu'}{\setX}}(\edge)$ is a function from $\kset{\setSizeK-\size{\setX}}{({\setVerticesW}-\setX)}$ 
 to $\ktuple{\dimension}{\Z}$ and is defined as follows
\[
{\cut{\mu'}{\setX}}(\edge)\eqdef \mu'(\edge\cup{\setX})\text{ where } 
\edge\in \kset{\setSizeK-\size{\setX}}{({\setVerticesW}-\setX)}. 
\]
The above operation is called \changed{$\setX$ \emph{cut}}{\emph{$\setX$-cut}}{851} of ${\hypergraphF}$. \\
For $\setX\cap \setVerticesW=\emptyset$ we define the reverse operation called \changed{\emph{enrich}}{\emph{enriching}}{851} 
${\hypergraphF}$ with $\setX$. It is denoted by $\revcut{{\hypergraphF}}{\setX}$ 
and its effect is the minimal in the sense of inclusion\footnote{\changed{Inclusion of the sets of hyperedges.}{we say that a hypergraph includes a 
second 
hypergraph if the set of vertices of the 
first hypergraph includes the set of vertices of the second one and every hyperedge of the second hypergraph is also a hyperedge of the first 
hypergraph.}{852}} 
$(\setSizeK+\size{\setX})$-hypergraph ${\hypergraphF'}$ such that $\cut{{\hypergraphF'}}{\setX}={\hypergraphF}$.
}{847}

\changed{If $\setX = \{\da\}$ is a singleton, then we simplify the notation 
$\cut{{\hypergraphG}}{\da} \eqdef \cut{{\hypergraphG}}{\{\da\}}$ and $\revcut{{\hypergraphG}}{\da} \eqdef \revcut{{\hypergraphG}}{\{\da\}}$.
}
{
If $\setX = \{\da\}$ is a singleton, then we simplify the notation 
$\cut{{\hypergraphF}}{\da} \eqdef \cut{{\hypergraphF}}{\{\da\}}$ and $\revcut{{\hypergraphF}}{\da} \eqdef \revcut{{\hypergraphF}}{\{\da\}}$.
}{847}

\end{defi}

\begin{exa}
\changed{removed figure}{}{857}

\begin{tikzpicture}[
line cap=butt,line join=bevel]
\pgfmathsetmacro{\B}{2}
\pgfmathsetmacro{\H}{2}
 \draw[\mycolorDwa,thick] (-\B,-\B,0) -- (\B,-\B,0) --  (-\B,\B,0) -- cycle;
 \node [font=\large\bfseries] at (-\B,\B+0.3,0) {$\da$};
 \node (aux1) at (0,0,2/3*\H) {\color{\mycolorTrzy}{$\bullet$}};
 \node (aux2) at (0,-2/3*\B,2/3*\H) {\color{\mycolorOne}{$\bullet$}};
 \node (aux4) at (-2/3*\B,0,2/3*\H) {\color{\mycolorDwa}{$\bullet$}};
 
 \coordinate (aux3) at (-2/3*\B,2/5*\B,0);
 \draw[\mycolorOne,thick,fill=\mycolorTrzy,fill opacity=0.3] (-\B,\B,0) -- (0,0,2*\H) -- (\B,-\B,0) -- cycle;
 \draw[\mycolorOne,thick,fill=\mycolorOne,fill opacity=0.7] (-\B,-\B,0) -- (\B,-\B,0) --  (0,0,2*\H) -- cycle -- cycle;
 \draw[\mycolorOne,thick,fill=\mycolorDwa,fill opacity=0.4] (-\B,-\B,0) -- (0,0,2*\H) -- (-\B,\B,0) -- cycle;
 \begin{scope}[
 ]
  \draw (aux1) -- ++ (2,0.2) node[right,font=\itshape] {\color{\mycolorTrzy}{a}};
  \draw (aux4) -- ++ (-1.5,0.3) node[left,font=\itshape] {\color{\mycolorDwa}{b}};
  \draw (aux3) -- ++ (1,1) node[above right,font=\itshape] {c};
  \draw (aux2) -- ++ (1,-1) node[below right,font=\itshape] {\color{\mycolorOne}{d}};
 \end{scope}
 
 \coordinate (A) at (5+-\B,-\B,0) ;
 \coordinate (B) at (5+\B,-\B,0) ;
 \coordinate (C) at (5,0,2*\H) {};
 
 \draw [\mycolorDwa] (A) -- (C) coordinate[midway](A1) ;
 \node[ left = 0.1 cm and 0 cm of A1,font=\itshape] {\color{\mycolorDwa}{b}};
 \draw [\mycolorTrzy] (B) -- (C)coordinate[midway](B1);
 \node[ above = 0 cm and 0 cm of B1,font=\itshape] {\color{\mycolorTrzy}{a}};
 \draw [black] (B) -- (A) coordinate[midway](C1);
 \node[below= 0.1 cm and 0 cm of C1,font=\itshape] {c};

\end{tikzpicture}

On the left\changed{-hand}{}{44} side there is a $3$-hypergraph ${\hypergraphF}$ and
on the right\changed{ hand}{}{44} side 
there is a $2$-hypergraph
$\cut{{\hypergraphF}}{\da}$. \changed{}{We use letters $a,b,c,d$ to name corresponding hyperedges.}{863} The orange bottom hyperedge \changed{}{$d$}{863} 
disappears as it does not contain $\da$.
$\revcut{\cut{{\hypergraphF}}{\da}}{\da}$ would look like ${\hypergraphF}$ but without the orange hyperedge, as the \changed{enrich}{enriching}{851} operation
takes the minimal hypergraph among all such that cutting $\da$ returns $\cut{{\hypergraphF}}{\da}$.

\end{exa}

\begin{lem}\label{lem:isolation_down}
Let ${\changed{\hypergraphH}{\hypergraphF}{847}}=(\changed{\setVertices}{\setVerticesW}{847}, \changed{\mu}{\mu'}{847})$ \changed{}{be a}{869} 
${\setSizeK}$-hypergraph\changed{,}{}{869} and $\setX\in \kset{m}{{\changed{\setVertices}{\setVerticesW}{847}}}$ for $m<\setSizeK$.
Then for any nonempty \changed{}{set }{451}$\setY\subseteq {\changed{\setVertices}{\setVerticesW}{847}}$ such that $\setY\cap \setX=\emptyset$ 
\changed{holds}{it holds that}{451} 
$\weight{YcupX}({\changed{\hypergraphH}{\hypergraphF}{847}})=\weight{setY}(\cut{{\changed{\hypergraphH}{\hypergraphF}{847}}}{\setX})$.
In particular, if $\changed{\hypergraphH}{\hypergraphF}{847}$ is $l$-isolated and $m<l$ then $\cut{{\changed{\hypergraphH}{\hypergraphF}{847}}}{\setX}$ is 
$(l-m)$-isolated.
\end{lem}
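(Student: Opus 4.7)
The plan is to reduce everything to a bijection between the hyperedges of $\hypergraphF$ containing $\setY\cup\setX$ and the hyperedges of $\cut{\hypergraphF}{\setX}$ containing $\setY$.

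Recall from Definition~\ref{def:cut vertex} that $\cut{\mu'}{\setX}(\edge') = \mu'(\edge'\cup\setX)$ for $\edge'\in\kset{\setSizeK-m}{(\setVerticesW\setminus\setX)}$. So I would first observe the map
\[
\Phi\colon \{\edge \in \kset{\setSizeK}{\setVerticesW} : \setY\cup\setX\subseteq \edge\} \longrightarrow \{\edge'\in\kset{\setSizeK-m}{(\setVerticesW\setminus\setX)} : \setY\subseteq \edge'\},
\qquad \Phi(\edge)=\edge\setminus\setX,
\]
is a bijection (with inverse $\edge'\mapsto \edge'\cup\setX$), and that along it $\mu'(\edge)$ coincides with $\cut{\mu'}{\setX}(\Phi(\edge))$. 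Expanding the defining sums
\[
\weight{YcupX}(\hypergraphF)=\sum_{\edge\in\he\hypergraphF,\ \setY\cup\setX\subseteq\edge}\mu'(\edge),\qquad
\weight{setY}(\cut{\hypergraphF}{\setX})=\sum_{\edge'\in\he{\cut{\hypergraphF}{\setX}},\ \setY\subseteq\edge'}\cut{\mu'}{\setX}(\edge'),
\]
and matching terms via $\Phi$ gives the first claim. (Edges with weight $\vec{0}$ can be harmlessly added on either side, so restricting to $\he\hypergraphF$ vs.~the full $\kset{\setSizeK}{\setVerticesW}$ makes no difference.)

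For the second part, assume $\hypergraphF$ is $l$-isolated and $m<l$, and take an arbitrary $\setY'\subseteq \setVerticesW\setminus\setX$ with $\size{\setY'}\leq l-m$. If $\setY'\neq\emptyset$, the first part yields $\weight{setY'}(\cut{\hypergraphF}{\setX})=\weight{Y'cupX}(\hypergraphF)$, and since $\size{\setY'\cup\setX}=\size{\setY'}+m\leq l$, the $l$-isolation of $\hypergraphF$ forces this to be $\vec{0}$. If $\setY'=\emptyset$, the same bijection (now with $\setY=\emptyset$, which is excluded from the first-part statement but whose argument works verbatim) gives
\[
\weight{emptyset}(\cut{\hypergraphF}{\setX}) \;=\; \sum_{\edge'\in\kset{\setSizeK-m}{(\setVerticesW\setminus\setX)}}\mu'(\edge'\cup\setX) \;=\; \weight{setX}(\hypergraphF),
\]
which vanishes because $\size{\setX}=m<l$.

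There is no serious obstacle here: the lemma is a bookkeeping identity about how weights of supersets of $\setX$ in $\hypergraphF$ become weights of their $\setX$-free parts in $\cut{\hypergraphF}{\setX}$. The only small care needed is the separate handling of the empty $\setY'$ in the isolation statement, since the first part is phrased only for nonempty $\setY$.
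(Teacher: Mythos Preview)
Your proof is correct and follows essentially the same route as the paper: both unfold the defining sums and use the re-indexing $\edge \mapsto \edge\setminus\setX$ (equivalently, your bijection $\Phi$) together with the identity $\mu'(\edge)=\cut{\mu'}{\setX}(\edge\setminus\setX)$. Your treatment is in fact slightly more explicit than the paper's, which leaves the isolation corollary (in particular the $\setY'=\emptyset$ case) to the reader.
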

 \begin{proof}
\changed{Note that $\edge$ and $\edge'$ are parameters of the sums below.}{In the equation below, $e$ and $e'$ are indexes, while $\setVerticesW, \setY$ and 
$\setX$ are fixed.}{874, 880} 
\begin{align*} 
 \weight{YcupX}({\changed{\hypergraphH}{\hypergraphF}{847}})=\\
 \sum_{\edge \in \kset{\setSizeK}{{\changed{\setVertices}{\setVerticesW}{847}}}, \setY\cup 
\setX\subseteq \edge} \changed{\mu}{\mu'}{847}(\edge)=\\
  \sum_{\edge\in \kset{\setSizeK}{{\changed{\setVertices}{\setVerticesW}{847}}}, \setY\cup \setX\subseteq \edge} 
\weight{e-X}(\cut{{\changed{\hypergraphH}{\hypergraphF}{847}}}{\setX})=\\
  \sum_{\changed{\edge\setminus \setX) \kset{\setSizeK-\size{\setX}}{\setVerticesW}}
  {(\edge\setminus \setX) \kset{\setSizeK-\size{\setX}}{\setVerticesW\setminus \setX}}{875-883}
  , \setY\subseteq (\edge\setminus \setX)} 
\weight{e-X}(\cut{{\changed{\hypergraphH}{\hypergraphF}{847}}}{\setX})= \\
  \changed{ \text{By $\edge'$ we denote hyperedges of the hypergraph $\cut{{\changed{\hypergraphH}{\hypergraphF}{847}}}{\setX}$.}}{}{875-883}\\
  \sum_{\changed{}{\edge'\in \kset{\setSizeK-\size{\setX}}{\setVerticesW\setminus \setX},}{875-883} \setY\subseteq \edge'} 
\weight{edge'}(\cut{{\changed{\hypergraphH}{\hypergraphF}{847}}}{\setX})=\\
\weight{setY}(\cut{{\changed{\hypergraphH}{\hypergraphF}{847}}}{\setX}) \tag*{\qedhere}
 \end{align*}
 \end{proof}

\begin{proof}[Proof of Lemma~\ref{lem:isolating a vertex}]
Without loss of generality, we assume that $\size{{\setVertices}}>2\setSizeK-1$. Indeed, if 
$\size{{\setVertices}}\leq 2\setSizeK-1$ then we extend ${\hypergraphH}$ with a few isolated vertices.

In the proof, instead of constructing $\hypergraphG$ directly, we build a finite sequence of hypergraphs 
\changed{${\hypergraphH}_i$}{${\hypergraphH}_0\ldots{\hypergraphH}_{last}$}{888}, such that:
\begin{itemize}
 \item ${\hypergraphH}_0={\hypergraphH}$, 
 \item ${\hypergraphH}_{last}$ is 
\changed{almost}{pre}{777} \changed{${\newl}+1$-}{$({\newl}+1)$-}{779}isolated,
\changed{and}{
\item there is a well-quasi order $>$ on hypergraphs such that $\hypergraphH_i >\hypergraphH_{i+1}$ for each $0\leq i < last$, and
}{885-890}
\item each 
${\hypergraphH}_{i+1}={\hypergraphH}_i-\simpleGraph{newl+1}{}_i$ 
where $\simpleGraph{newl+1}{}_i$ is a $({\newl}+1, {\placeHolder})$-\simple\ hypergraph such that $\simpleGraph{newl+1}{}_i\in \eqs{\setS}$ and 
$\changedd{\support{\simpleGraph{newl+1}{}_i}}{\Vertices{\simpleGraph{newl+1}{}_i}}{1169}\subseteq 
{\setVertices}$.  
\end{itemize}
Intuitively, we build a sequence of improvements \changed{}{$\simpleGraph{newl+1}{}_i$}{885-890} and consequently 
\changed{$\hypergraphG=-\sum_{i}\simpleGraph{newl+1}{}_i.$}{ the hypergraph, that we need to build,  $\hypergraphG$ will be set to
$-\sum_{i}\simpleGraph{newl+1}{}_i.$}{891}

First, we define \changed{a}{the}{885-890} well-founded quasi-order on hypergraphs, 
\changed{next}{then}{892} we present a single improvement i.e. how to obtain ${\hypergraphH}_{i+1}$ from ${\hypergraphH}_i$,
finally we show that ${\hypergraphH}_{i+1}<{\hypergraphH}_i$ in the defined quasi-order and 
that if the hypergraph \changed{can not}{cannot}{894} be improved anymore then it is
\changed{almost}{pre}{777} \changed{${\newl}+1$-}{$({\newl}+1)$-}{779}isolated.

{\bf Order}. First\changed{}{,}{896} we impose \changed{a}{an arbitrary}{896} \changed{well founded}{well-founded}{896} linear order on the set of all vertices.
\changed{A set of vertices $\setX_1$ 
is smaller than a set $\setX_2$}{For two sets of vertices $\setX_1$ and $\setX_2$, we write $\setX_1<\setX_2$}{897} if there is a bijection $f\colon 
\setX_1\rightarrow 
\setX_2$ such that $\da \leq f(\da)$ for every $\da\in \setX_1$. 
Finally, \changed{a hypergraph ${\hypergraphH}'=({\setVertices}',\mu') \leq_{{\newl}+1}{\hypergraphH}$}{for two hypergraphs $\hypergraphH=(\setVertices,\mu)$ 
and $\hypergraphH'=(\setVertices',\mu')$ we write $\hypergraphH'\leq_{{\newl}+1} \hypergraphH$}{898} if for every 
$\setX'\in\kset{{\newl}+1}{{\setVertices}'}$ such that $\weight{setX'}({\hypergraphH}')\neq \vec{0}$ there is a set $\setX\in \kset{{\newl}+1}{{\setVertices}}$ 
such that 
$\weight{setX}({\hypergraphH})\neq \vec{0}$ and \changed{$\setX$ is bigger than $\setX'$}{$\setX'\leq \setX$}{899}.

The strict inequality $<_{{\newl}+1}$ holds if ${\hypergraphH}'\leq_{{\newl}+1} {\hypergraphH}$ but ${\hypergraphH}'\not\geq_{{\newl}+1} {\hypergraphH}$.

\begin{asm}
From now on until the end of this proof\changed{}{,}{901} whenever we enumerate vertices of some set \changed{$\da_1,\da_2\ldots \da_n$}{$\{\da_1,\da_2\ldots 
\da_n\}$,}{902} 
we assume that $\da_i<\da_j$ for $i<j$.
\end{asm}

{\bf Improvement}. We define the improvement for ${\hypergraphH}_i$.
Let $\familyL\subset \kset{{\newl}+1}{\changedd{\support{{\hypergraphH}_i}}{\Vertices{{\hypergraphH}_i}}{1169}}$ \changed{$\setL\in \familyL \implies 
\weight{setL}({\hypergraphH}_i)\neq \vec{0}$}{be the 
family of all sets such that for every set 
$\setL\in \familyL$ it holds that $\weight{setL}({\hypergraphH}_i)\neq \vec{0}$}{903}. Note that, if ${\newl}+1=\setSizeK$ then $\familyL$ is the set of 
hyperedges of  
${\hypergraphH}_i$.

Suppose $\setL\in \familyL$ is maximal in $\familyL$ and there is a set $\setL'\in\kset{{\newl}+1}{{\setVertices}}$ disjoint from $\setL$ such that 
$\setL'<\setL$.
Let $\setL=\{\da_1, \da_2 \ldots \da_{{\newl}+1}\}$. Let $\simpleGraph{newl+1}{}_i\in\eqs{\setS}$ be an 
$({\newl}+1,\weight{setL}({\hypergraphH}_i))$-\simple\ hypergraph \changed{supported by ${\setVertices}$}{such that 
$\changedd{\support{\simpleGraph{newl+1}{}_i}}{\Vertices{\simpleGraph{newl+1}{}_i}}{1169}\subseteq 
{\setVertices}$, and}{759, 907}  
such that \changed{(according to Definition~\ref{def:simple_hypergraphs})}{(using the notation of Definition~\ref{def:simple_hypergraphs})}{907} 
$\setA=\setL$ and $\setB=\setL'=\{f^{-1}(\da_1),f^{-1}(\da_2)\ldots $ $f^{-1}(\da_{{\newl}+1})\}$, where $f$ witnesses $\setL'<\setL$. \changed{It}{The 
hypergraph $\simpleGraph{newl+1}{}_i$}{908} exists \changed{due 
to}{because of}{200} 
Lemma~\ref{lem:adding_simple_graphs}.

Let ${\hypergraphH}_{i+1}\eqdef {\hypergraphH}_i-\simpleGraph{newl+1}{}_i$.
We claim ${\hypergraphH}_{i+1}<_{{\newl}+1} {\hypergraphH}_i$. 
\changed{
  Indeed, $\weight{setL}({\hypergraphH}_{i+1})=\vec{0}$ and $\setL$ \changed{was}{is}{911} maximal in $\familyL$ so it \changed{}{is}{911} sufficient to 
  prove that for any 
  $\edgeK\in\kset{{\newl}+1}{\support{\simpleGraph{newl+1}{}_i}}$ such that $\edgeK\neq \setL$ and $\weight{edgeK}(\simpleGraph{newl+1}{}_i)\neq \vec{0}$ 
\changed{holds}{it holds that}{451} 
  $\edgeK< \setL$. 
  But $\weight{edgeK}(\simpleGraph{newl+1}{}_i)\neq \vec{0}$ if\changed{}{,}{325} and only if\changed{}{,}{325} $\edgeK$ is a subset 
  of $\setL'\cup \setL$ that contains exactly one element from each pair of vertices 
  $(\da_i,f^{-1}(\da_i))$ (Definition~\ref{def:simple_hypergraphs}). As $\da_i>f^{-1}(\da_i)$ for every $i$ then trivially $\setL=\{\da_1,\da_2\ldots 
  \da_{{\newl}+1}\}>\edgeK$, as required.
}{
First, we show inequality, next we show that it is strict. For the inequality we take any subset $\edgeK\in\kset{{\newl}+1}{{\setVertices}}$ such that
$\weight{edgeK}(\hypergraphH_{i+1})\neq \vec{0}$ and we find a subset $\edgeK'\in\kset{{\newl}+1}{{\setVertices}}$ such that
$\weight{edgeKe}(\hypergraphH_{i  })\neq \vec{0}$ and         $\edgeK\leq \edgeK'$. We have to consider two cases:
\begin{itemize}
 \item $\weight{edgeK}(\hypergraphH_{i+1})\neq \vec{0}$ because of $\weight{edgeK}(\hypergraphH_{i})\neq \vec{0}$ and
\item $\weight{edgeK}(\hypergraphH_{i+1})\neq \vec{0}$ because of $\weight{edgeK}(\simpleGraph{newl+1}{}_i)\neq \vec{0}.$
 \end{itemize} 
In the first case $\edgeK'=\edgeK$. In the second case $\edgeK'=\setL$. We observe that $\edgeK\leq \setL$ as $\edgeK=\{x_1,x_2\ldots x_{\newl+1}\}$ 
where $x_i\in \{\da_i, f^{-1}(\da_i)\}$ and $\setL=\{\da_1,\da_2\ldots \da_{\newl+1}\}$ which is the greatest set (with respect to the order $<$) among such 
sets.\\
Now\changed{}{,}{NULL} to prove strictness of the inequality we observe that $\weight{setL}(\hypergraphH_{i+1})=\vec{0}$ 
and that there is no set $\edgeK$ such that $\setL\leq 
\edgeK$ and $\weight{edgeK}{(\hypergraphH_{i+1})}\neq 0$, so $\hypergraphH_{i}\not\leq\hypergraphH_{i+1}.$
The formal proof requires the same case analysis as the proof of the \changed{non-strict}{nonstrict}{107} inequality. In the first case, we use maximality of 
$\setL$ in the family 
$\familyL$. In the second case, we use the fact that $\setL$ is is the greatest set (with respect to the order $<$) among possible sets.
}{911-913,915}

{\bf Reduced form.}
We call a hypergraph \emph{reduced} if one \changed{can not}{cannot}{357} improve it any further.
If we consequently improve a given hypergraph then eventually we reach ${\hypergraphH}_{last}$ which is reduced.
Indeed, every improvement goes down in the quasi-order on hypergraphs and the quasi-order is trivially well-founded.
What remains to prove is the following claim.

\begin{clm}
A reduced hypergraph is \changed{almost}{pre}{777} \changed{${\newl}+1$-}{$({\newl}+1)$-}{779}isolated. 
\end{clm}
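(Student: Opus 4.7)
The goal is to exhibit a set $\setX \subseteq \setVertices$ with $\size{\setX} \leq 2\newl+1$ such that every $\setY \in \kset{\newl+1}{\setVertices}$ with $\setY \not\subseteq \setX$ has zero weight. I would take $\setX = \{v_1, v_2, \ldots, v_{2\newl+1}\}$, the smallest $2\newl+1$ vertices in the fixed linear order on $\setVertices$, and prove by contradiction that every $\setL \in \familyL$ is contained in $\setX$.

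Suppose some $\setL \in \familyL$ has an element outside $\setX$, and let $\setL^* \in \familyL$ be a maximal element (in the partial order $\leq$) dominating $\setL$. Writing $\setL^* = \{v_{i^*_1} < v_{i^*_2} < \ldots < v_{i^*_{\newl+1}}\}$, pointwise dominance gives $i^*_{\newl+1} > 2\newl+1$. I would then construct a set $\setL' \in \kset{\newl+1}{\setVertices}$ disjoint from $\setL^*$ with $\setL' < \setL^*$, which contradicts reducedness of $\setL^*$. A direct counting argument shows such $\setL'$ exists precisely when $i^*_l \geq 2l$ for every $l \in \{1, \ldots, \newl+1\}$: the number of vertices in $\setVertices \setminus \setL^*$ strictly less than $v_{i^*_l}$ equals $i^*_l - l$, and we need at least $l$ such vertices to fill the $l$-th slot of $\setL'$. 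Reducedness of $\setL^*$ allows this to fail only when some $i^*_l \leq 2l-1$; since $i^*_{\newl+1} > 2\newl+1$, the failure must occur at some $l \leq \newl$, so the naive construction of $\setL'$ from $\setL^*$ alone is not enough.

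To convert the setup into a contradiction, I would exploit the $\newl$-isolation of the reduced hypergraph. For the problematic index $l$, the $\newl$-subset $\setY = \setL^* \setminus \{v_{i^*_l}\}$ has $\weight{setY} = \vec{0}$, so the weights of $(\newl+1)$-sets of the form $\setY \cup \{v\}$ sum to zero as $v$ ranges over $\setVertices$. Since $\setL^*$ is a nonzero term in this sum, at least one further $\setL_1 \in \familyL$ of the form $\setY \cup \{v\}$ exists, with $v$ at a position different from $i^*_l$. Extending each such $\setL_1$ to its own maximal element, iterating this argument across the problematic $\newl$-subsets of $\setL^*$, and invoking the linear-algebraic machinery of Section~\ref{sec:matrix} (specifically Lemma~\ref{lem:main_matrix} applied to the reduction matrix $\reductionMatrix{2\newl+1}{\newl+1}{\newl}$) is intended to produce a maximal element $\setL^{**}$ whose position vector satisfies $i^{**}_l \geq 2l$ for every $l$, which then admits the required $\setL'$. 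The main obstacle is precisely this coordination step: showing that among the maximal elements generated through repeated applications of $\newl$-isolation, at least one is guaranteed to have its positions spread out widely enough for the improvement to apply, which is where the maximal-rank property of the reduction matrix should enter essentially.
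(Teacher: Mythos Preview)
Your setup is sound through step 5: choosing $\setX=\{v_1,\ldots,v_{2\newl+1}\}$, passing to a maximal $\setL^*\in\familyL$, and noting that an improvement exists exactly when $i^*_l\geq 2l$ for all $l$. The gap is the ``coordination step''. Invoking Lemma~\ref{lem:main_matrix} for $\reductionMatrix{2\newl+1}{\newl+1}{\newl}$ does not do what you want: that rank statement kills weights once they are already confined to a $(2\newl+1)$-set, but it does not manufacture a spread-out maximal element from an iteration of neighbours. Your iteration produces further sets $\setL_1,\setL_2,\ldots$ sharing an $\newl$-subset with $\setL^*$, but there is no control forcing any of them (or their maximal extensions) to satisfy $i_l\geq 2l$ everywhere; they may all fail at the same or different indices, and nothing terminates.

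The paper sidesteps this by changing the extremal object and by \emph{lowering the arity}. For each $\setL=\{\da_{i_1}<\cdots<\da_{i_{\newl+1}}\}\in\familyL$ define its ``spread tail'' $\setY_{\setL}$ as the longest suffix $\{\da_{i_j},\ldots,\da_{i_{\newl+1}}\}$ with $i_j\geq 2j,\ldots,i_{\newl+1}\geq 2(\newl+1)$; it is nonempty since $i_{\newl+1}>2\newl+1$. Pick $\setL$ maximising $\size{\setY_{\setL}}$. If $\setY_{\setL}=\setL$ then $\setL$ can be improved, contradicting reducedness; so $\setY_{\setL}\subsetneq\setL$. Now cut: set $\hypergraphG'=\cut{(\hypergraphH_{last})}{\setY_{\setL}}$ and $\newl'=\size{\setL}-\size{\setY_{\setL}}$. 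By Lemma~\ref{lem:isolation_down}, $\hypergraphG'$ is $(\newl'-1)$-isolated. The maximality of $\size{\setY_{\setL}}$ (not maximality of $\setL$ in the order) forces every $\newl'$-set with nonzero weight in $\hypergraphG'$ to lie inside $\{\da_1,\ldots,\da_{2\newl'-1}\}$: any $\newl'$-set $\edgeK$ reaching $\da_{2\newl'}$ or beyond yields $\edgeK\cup\setY_{\setL}$ with a strictly longer spread tail, hence zero weight in $\hypergraphH_{last}$, hence zero weight for $\edgeK$ in $\hypergraphG'$. Thus $\hypergraphG'$ is pre $\newl'$-isolated, and Lemma~\ref{lem:small k-1-isolated graphs vanish} (this is where the reduction matrix enters, at parameter $2\newl'-1$) gives $\newl'$-isolation of $\hypergraphG'$. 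Hence $\weight{setL}(\hypergraphH_{last})=\Lambda_{\setL\setminus\setY_{\setL}}(\hypergraphG')=\vec{0}$, the desired contradiction. The missing idea in your attempt is precisely this cut-and-descend: rather than coordinating several maximal elements at level $\newl+1$, one drops to level $\newl'$ where the confinement hypothesis needed for the matrix argument is already in hand.
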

{\it Proof of the claim.}
First observe that ${\hypergraphH}_{last}$ is ${\newl}$-isolated as ${\hypergraphH}$ and each $\simpleGraph{newl+1}{}_i$ are ${\newl}$-isolated (by 
Definition~\ref{def:simple_hypergraphs} 
every $({\newl}+1,{\placeHolder})$-\simple\ 
hypergraph 
is ${\newl}$-isolated).

Observe \changed{}{ that}{929} $\changedd{\support{{\hypergraphH}_{last}}}{\Vertices{{\hypergraphH}_{last}}}{1169}\subseteq {\setVertices}$. Suppose 
${\setVertices} =\{\da_1,\da_2\ldots 
\da_{\size{{\setVertices}}}\}$.  
Let $\familyL \subseteq \kset{{\newl}+1}{{\setVertices}}$ be \changed{a}{the}{NULL} family of sets of vertices such that for any $\setL\in \familyL$ we have 
$\weight{setL}({\hypergraphH}_{\changed{i}{last}{NULL}})\neq \vec{0}$ and $\setL$ 
contains a vertex
with index greater than $2{\newl}+1$. Our goal is to prove that $\familyL$ is empty; this implies the claim.

We prove it \changed{via}{by}{933} contradiction. Suppose $\familyL\neq \emptyset.$
For \changed{a}{any}{922} set $\setL\in \familyL$, $\setL=\{\da_{i_1},\da_{i_2}\ldots \da_{i_{{\newl}+1}}\}$ we introduce $\setY_{\setL}\subseteq \setL$ 
the maximal (in the sense of inclusion) set of vertices 
$\{\da_{i_j},\da_{i_{j+1}}\ldots\da_{i_{{\newl}+1}}\}$
such that $i_j\geq 2j,$ $i_{j+1}\geq 2(j+1),$ $\ldots i_{{\newl}+1}\geq 2({\newl}+1)$. By definition \changed{}{of the family $\familyF$,}{935} for every 
$\setL\in \familyF$ the set \changed{$\setY_{\setL}\neq 
\emptyset.$}{$\setY_{\setL}$ is nonempty.}{936}

Let $\setL\in \familyL$ be a set such that $\setY_{\setL}$ is maximal 
in the terms of size\changed{, i.e.}{\ i.e.}{22} for every $\edgeK\in \familyL$ \changed{holds}{it holds that}{451} $\size{\setY_{\setL}}\geq 
\size{\setY_{\edgeK}}$.
We show the contradiction by proving $\weight{setL}({\hypergraphH}_{last})=\vec{0}$. 

Observe that $\setL\neq \setY_{\setL}$ as in this case ${\hypergraphH}_{last}$ would not be reduced.

\changed{Take,}{Let}{940} ${\hypergraphG}'\eqdef 
\changed{\cut{{\hypergraphH}_{last}}{{\setY_{\setL}}}}{\cut{{(\hypergraphH}_{last})}{{\setY_{\setL}}}}{940}$ (recall Definition~\ref{def:cut vertex}). As 
$\setL\neq 
\setY_{\setL}$
we know that $\weight{setL}({\hypergraphH}_{last})=\weight{L-Y}({\hypergraphG}')$. So, if we prove that ${\hypergraphG}'$ is 
$(\size{\setL}-\size{\setY_{\setL}})$-isolated then 
$\vec{0}=\weight{L-Y}({\hypergraphG})=\weight{setL}({\hypergraphH}_{last})$ and we have the contradiction with the assumption 
that 
$\weight{setL}({\hypergraphH}_{last})\neq \vec{0}$.

So what remains, is to prove the following claim.
\begin{clm}
${\hypergraphG}'$ is  $(\size{\setL}-\size{\setY_{\setL}})$-isolated. 
\end{clm}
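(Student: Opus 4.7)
The plan is to establish the claim by showing that $\hypergraphG'$ is pre-$(\size{\setL}-\size{\setY_{\setL}})$-isolated and then invoking Lemma~\ref{lem:small k-1-isolated graphs vanish} to upgrade this to full $(\size{\setL}-\size{\setY_{\setL}})$-isolation. Writing $s := \size{\setY_{\setL}}$ for brevity, the difference $(\newl+1) - s$ is at least $1$ thanks to $\setL \neq \setY_{\setL}$. Lemma~\ref{lem:isolation_down} converts the $\newl$-isolation of $\hypergraphH_{last}$ into $(\newl-s)$-isolation of $\hypergraphG'$, which is exactly half of what pre-isolation needs; the remaining task will be to exhibit a set of vertices of size at most $2(\newl+1-s)-1$ that contains every $(\newl+1-s)$-set whose weight in $\hypergraphG'$ is nonzero.

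My candidate witness is $\setX' := \{\da_1, \ldots, \da_{2(\newl+1-s)-1}\}$. Because elements of $\setY_{\setL}$ have indices at least $2(\newl+2-s)$, the set $\setX'$ is disjoint from $\setY_{\setL}$ and hence a subset of $\setVertices_{\hypergraphG'}$. To verify that $\setX'$ covers every nonzero-weight $(\newl+1-s)$-set of $\hypergraphG'$, I will argue by contradiction: if some $\setZ \in \kset{\newl+1-s}{\setVertices\setminus\setY_{\setL}}$ with $\setZ \not\subseteq \setX'$ has nonzero weight in $\hypergraphG'$, then $\edgeK := \setZ \cup \setY_{\setL}$ has nonzero weight in $\hypergraphH_{last}$ (by Lemma~\ref{lem:isolation_down}) and contains a high-index vertex from $\setY_{\setL}$, so $\edgeK \in \familyL$. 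I will then derive $\size{\setY_{\edgeK}} \geq s + 1$, contradicting the maximality of $\size{\setY_{\setL}}$.

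The assumption $\setZ \not\subseteq \setX'$ produces some $\da_c \in \setZ$ with $c \geq 2(\newl+1-s)$. Writing the sorted indices of $\edgeK$ as $j_1 < \cdots < j_{\newl+1}$, the job is to verify the suffix conditions $j_l \geq 2l$ for every $l \in \{\newl+1-s, \newl+2-s, \ldots, \newl+1\}$. The key ingredient is the elementary monotonicity of order statistics: for $\setA \subseteq \setB$ the $k$-th largest element of $\setB$ is at least the $k$-th largest of $\setA$. For $l > \newl+1-s$, I will apply this with $\setA = \setY_{\setL}$ and $k = \newl+2-l$; the $k$-th largest index of $\setY_{\setL}$ is $i_l$, which is at least $2l$ by the very definition of $\setY_{\setL}$. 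For the borderline $l = \newl+1-s$, I will apply the same principle with $\setA = \setY_{\setL} \cup \{\da_c\}$ and $k = s+1$, which yields $j_{\newl+1-s} \geq \min(c, i_{\newl+2-s}) \geq 2(\newl+1-s)$. All suffix conditions of length $s+1$ are thereby met, so $\size{\setY_{\edgeK}} \geq s+1$, and Lemma~\ref{lem:small k-1-isolated graphs vanish} closes the claim. The main obstacle will be orchestrating this monotonicity-of-order-statistics argument cleanly against the suffix characterisation of $\setY_{\setL}$; the index bookkeeping is delicate but not deep.
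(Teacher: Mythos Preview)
Your proposal is correct and follows essentially the same approach as the paper: both prove that $\hypergraphG'$ is pre-$(\newl+1-s)$-isolated (with the same witness set $\{\da_1,\ldots,\da_{2(\newl+1-s)-1}\}$) and then appeal to Lemma~\ref{lem:small k-1-isolated graphs vanish}; both derive the contradiction by forming the union $\setZ\cup\setY_{\setL}$ (called $\setLdwa$ in the paper) and showing its $\setY$-suffix has size at least $s+1$. Your order-statistics phrasing of the index bookkeeping is a slightly cleaner packaging of exactly the counting argument the paper spells out as ``at least one element not smaller than $\da_{2(\newl+1)}$, at least two elements not smaller than $\da_{2\newl}$, \ldots''.
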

\changed{Track the proof on the Figure~\ref{fig:3}.}{}{946}

Let ${\newl}'=(\size{\setL}-\size{\setY_{\setL}})$.
It suffices to prove that ${\hypergraphG}'$ is \changed{almost}{pre}{777} ${\newl}'$-isolated. 
\changed{Indeed, then}{Then}{NULL} \changed{due to}{because of}{200} Lemma~\ref{lem:small k-1-isolated graphs vanish} 
\changed{}{the hypergraph}{NULL} ${\hypergraphG}'$ is ${\newl}'$-isolated.

Note that ${\hypergraphG}'$ is ${\newl}'-1$-isolated which is inherited from ${\hypergraphH}_{last}$, 
\changed{due to}{because of}{200} Lemma~\ref{lem:isolation_down}. So it remains to analyse weights 
of ${\newl}'$-subsets of ${\setVertices}$. 
\changed{We have to prove that every}{Our goal is to show that, for every}{950} $\edgeK\in\kset{{\newl}'}{{\setVertices}}$ if 
$\weight{edgeK}({\hypergraphG'})\neq \vec{0}$ then it is a subset of $\{\da_1\ldots 
\da_{2{\newl}'-1}\}$.

Let us take $\edgeK=\{\da_{i'_1},\da_{i'_2}\ldots \da_{i'_{{\newl}'}}\}$, such that $ {i'_{{\newl}'}}\geq 2{\newl}'$.
We prove that $\weight{edgeK}({\hypergraphG'})=\vec{0}$.
\changed{Observe\changed{,}{\ that}{954} $\setY_{\setL}\subseteq \setY_{\setL} \cup \{\da_{i'_{{\newl}'}}\}\subseteq
\setY_{(\edgeK\cup \setY_{\setL})}$.
Indeed, $\setY_{\setL}$ guarantees that in $\edgeK\cup \setY_{\setL}$ there is at least one element not smaller than $2({\newl}+1)$, at least 
two elements not smaller than $2{\newl}$\ldots at least $\size{\setY_{\setL}}$
 elements not smaller than $2({\newl}'+1)$. Moreover, because of ${i'_{{\newl}'}}\geq 2{\newl}'$, there are at least $\size{\setY_{\setL}}+1$ elements
 not smaller than $2{\newl}'$.
 }
 {
 Let us consider the set $\setLdwa=\edgeK\cup \setY_{\setL}$. We claim that $\size{\setY_{\setL}}< \size{\setY_{\setLdwa}}$.
 To see this we observe that in the set $\setLdwa$, because of the definition of the set $\setY_{\setL}$,
  there is at least one element not smaller than $\da_{2({\newl}+1)}$, at least 
 two elements not smaller than $\da_{2{\newl}}$\ldots at least $\size{\setY_{\setL}}$
  elements not smaller than $\da_{2({\newl}'+1)}$. Moreover, because\changed{ of}{}{956} ${i'_{{\newl}'}}\geq 2{\newl}'$, there are at least 
$\size{\setY_{\setL}}+1$ elements
  not smaller than $\da_{2{\newl}'}$. \changedd{This implies that $({\newl}+1)- ({\newl}'-1)\leq \size{\setY_{\setLdwa}}$ where $\size{\setY_{\setL}}=({\newl}+1)- 
{\newl}'$.}{So $\setY_{\setL}\cup\{\da_{2{\newl}'}\}\subseteq \setY_{\setLdwa}$.}{null}   
 }{NULL}

\changed{Due to}{Because of}{200} maximality of \changed{$\setL$}{$\setY_{\setL}$}{NULL}\changed{}{,}{958} we conclude that 
$\weight{setLdwa}({\hypergraphH}_{last})=\vec{0}$, thus 
$\weight{setLdwa-Y}(\cut{{\hypergraphH}_{last}}{{\setY_{\setL}}})
=\weight{edgeK}({\hypergraphG'})=\vec{0}$.
\changed{Therefore every $\edgeK$ an ${\newl}'$-set 
of vertices in ${\hypergraphG'}$, for which
}
{Therefore for every 
${\newl}'$-set of vertices $\edgeK$ 
in $\kset{{\newl}'}{\setVertices}$, such that}{959} $\weight{edgeK}\changed{}{(\hypergraphG')}{959}\neq \vec{0}$, \changed{}{it}{959} is a subset of $\{\da_1, 
\da_2\ldots \da_{2{\newl}'-1}\}$.
This ends the proof that ${\hypergraphG}'$ is \changed{almost}{pre}{777} ${\newl}'$-isolated.

Finally, as it was written earlier, from Lemma~\ref{lem:small k-1-isolated graphs vanish} 
we derive that ${\hypergraphG}'$ is ${\newl}'$-isolated.
\changed{
\noindent
\begin{figure}[t]
\begin{tikzpicture}
[place/.style={circle,draw=\mycolorOne!50,fill=\mycolorOne!20,thick,inner sep=0pt,minimum size=1mm}]
\pgfmathsetmacro{\B}{0.9}
\node[] (ZERO) at (0,1) {};
\node[place, color=black] (A0) at (0,0) {};
\node[place, color=black] (A1) at (1*\B,0) {};
\node[place, color=black] (A2) at (2*\B,0) {};
\node[place, color=black] (A3) at (3*\B,0) {};
\node[place, color=black] (A4) at (4*\B,0) {};
\node[place, color=black] (A5) at (5*\B,0) {};
\node[place, color=black] (A6) at (6*\B,0) {};
\node[place, color=black] (A7) at (7*\B,0) {};
\node[place, color=black] (A8) at (8*\B,0) {};
\node[place, color=black] (A9) at (9*\B,0) {};
\node[place, color=black] (A10) at (10*\B,0) {};
\node[place, color=black] (A11) at (11*\B,0) {};
\node[place, color=black] (A12) at (12*\B,0) {};
\node[place, color=black] (A13) at (13*\B,0) {};
\node[place, color=black] (A14) at (14*\B,0) {};
\node[place, color=black] (A15) at (15*\B,0) {};
\node[] (B0) at (0*\B,0.5) {$v_{1}$};
\node[] (B1) at (1*\B,0.5) {$v_{2}$};
\node[] (B2) at (2*\B,0.5) {$v_{3}$};
\node[] (B3) at (3*\B,0.5) {$v_{4}$};
\node[] (B4) at (4*\B,0.5) {$v_{5}$};
\node[] (B5) at (5*\B,0.5) {$v_{6}$};
\node[] (B6) at (6*\B,0.5) {$v_{7}$};
\node[] (B7) at (7*\B,0.5) {$v_{8}$};
\node[] (B8) at (8*\B,0.5) {$v_{9}$};
\node[] (B9) at (9*\B,0.5) {$v_{10}$};
\node[] (B10) at (10*\B,0.5) {$v_{11}$};
\node[] (B11) at (11*\B,0.5) {$v_{12}$};
\node[] (B12) at (12*\B,0.5) {$v_{13}$};
\node[] (B13) at (13*\B,0.5) {$v_{14}$};
\node[] (B14) at (14*\B,0.5) {$v_{15}$};
\node[] (B15) at (15*\B,0.5) {$v_{16}$};
\node[place, color=mycol2] (C0) at (0*\B,-1) {};
\node[place, color=mycol2] (C2) at (2*\B,-1) {};
\node[place, color=mycol2] (C3) at (3*\B,-1) {};
\node[place, color=mycol2] (C5) at (5*\B,-1) {};
\node[place, color=\mycolorOne] (C9) at (9*\B,-1) {};
\node[place, color=\mycolorOne] (C12) at (12*\B,-1) {};
\node[place, color=\mycolorOne] (C14) at (14*\B,-1) {};
\node[place, color=\mycolorOne] (D0) at (0*\B,-2) {};
\node[place, color=\mycolorOne] (D1) at (1*\B,-2) {};
\node[place, color=\mycolorDwa] (D8) at (8*\B,-2) {};
\node[place, color=\mycolorDwa] (D15) at (15*\B,-2) {};
\draw[\mycolorDwa, very thick, fill=mycol2, opacity=0.15] (-0.5*\B,-0.65) rectangle (17*\B,-1.35);
\draw[\mycolorDwa, very thick, fill=\mycolorOne, opacity=0.35] (7.5*\B,-0.75) rectangle (15*\B,-1.25);
\node[] (E1) at (16*\B,-1) {\red{$\setL$}};
\node[] (E2) at (13*\B,-1) {\textcolor{\mycolorOne}{$\setY_{\setL}$}};
\draw[\mycolorOne, very thick, fill=\mycolorOne, opacity=0.15] (-0.5*\B,-1.65) rectangle (17*\B,-2.35);
\draw[\mycolorOne, very thick, , fill=\mycolorDwa, opacity=0.35] (6.5*\B,-1.75) rectangle (16*\B,-2.25);
\node[] (F1) at (16.5*\B,-2) {\blue{$\setL'$}};
\node[] (F2) at (13*\B,-2) {\textcolor{\mycolorDwa}{$\setY_{\setL'}$}};
\end{tikzpicture} 
\caption{}\label{fig:3}
\end{figure}
}{}{946}
\end{proof}

\section{The construction of simple  hypergraphs.}\label{sec:The construction of simple hypergraphs}

\changed{}{The whole section is devoted to proving the following theorem:
\begin{thm}\label{th:simple_graphs}
Suppose ${\hypergraphG}$ is a ${\setSizeK}$-hypergraph.
Then there is $\setS'$ \changedd{a simple family for $\{{\hypergraphG}\}$}{an $\{{\hypergraphG}\}$-simplified family of hypergraphs}{1169}. Moreover 
$\setS'\subseteq \sums{\Z}{\eqs{{\hypergraphG}}}.$
\end{thm}
\begin{proof}
\changed{}{We want to construct the family $\setS'$. To achieve this,}{975}
\changed{Due to}{because of}{200} Remark~\ref{rem:simple family}\changed{}{,}{975} it is sufficient to propose an algorithm that for any 
${\setSizeK}$-hypergraph ${\hypergraphG}$ produces $\setS$ 
a simplification of $\{{\hypergraphG}\}$ such that $\setS\subseteq \sums{\Z}{\eqs{{\hypergraphG}}}$. Existence \changed{of a such algorithm}{of such 
an algorithm}{976} is 
a consequence of the following lemmas.
\changed{
\begin{lem}\label{lem:simple_graphs}
Let ${\hypergraphG}=({\setVertices}',\mu')$ be a ${\setSizeK}$-hypergraph and 
 $\setX \in \kset{m}{{\setVertices}'}$ where $m\leq \setSizeK$. Let $\vec{a}=\weight{setX}(\hypergraphG)$.
Then 
there is a \changed{$(m,\vec{a})-\simple$}{$(m,\vec{a})$-\simple}{979,980} ${\setSizeK}$-hypergraph 
${\simpleGraph{m}{\vec{a}}}$ such that ${\simpleGraph{m}{\vec{a}}}\in \sums{\Z}{\eqs{{\hypergraphG}}}$.
\end{lem}}
{}{983}
\changed{\begin{lem}\label{lem:simple_sum_is_simple}
\changed{}{Let ${\hypergraphG}$ be a ${\setSizeK}$-hypergraph.}{981} For any $0\leq m\leq \setSizeK$\changed{}{,}{980} if 
\changed{$(m,\vec{a})-$simple}{$(m,\vec{a})$-\simple}{979,980} and $(m,\vec{b})$-\simple\ 
hypergraphs are elements of $\sums{\Z}{\eqs{{\hypergraphG}}}$ then
\changed{$(m,\vec{a}+\vec{b})-$simple}{$(m,\vec{a}+\vec{b})$-\simple}{979,980} hypergraph is also a member of $\sums{\Z}{\eqs{{\hypergraphG}}}$.
\end{lem}}
{
\begin{lem}\label{lem:simple_sum_is_simple}
Let ${\hypergraphG}$ be a ${\setSizeK}$-hypergraph and $0\leq m\leq \setSizeK$. Suppose there are 
$(m,\vec{a})$-\simple\ and $(m,\vec{b})$-\simple\ 
hypergraphs that are elements of $\sums{\Z}{\eqs{{\hypergraphG}}}$. Then
$(m,\vec{a}+\vec{b})$-\simple\ hypergraph is also a member of $\sums{\Z}{\eqs{{\hypergraphG}}}$.
\end{lem}
}{980-981}
\changed{}{
\begin{lem}\label{lem:simple_graphs}
Let \changedd{${\hypergraphG}=({\setVertices}',\mu')$}
{
${\hypergraphG}$ 
}{1169}be a ${\setSizeK}$-hypergraph and 
\changedd{}{ $\setVertices'$ be its set of vertices. Suppose, }{1169}
 $\setX \in \kset{m}{{\setVertices}'}$ where $m\leq \setSizeK$. Let $\vec{a}=\weight{setX}(\hypergraphG)$.
Then 
there is a \changed{$(m,\vec{a})-\simple$}{$(m,\vec{a})$-\simple}{979,980}\ ${\setSizeK}$-hypergraph 
${\simpleGraph{m}{\vec{a}}}$ such that ${\simpleGraph{m}{\vec{a}}}\in \sums{\Z}{\eqs{{\hypergraphG}}}$.
\end{lem}}
{983}
Indeed, using this two lemmas we can produce all elements in the simplification.
\changed{}{Suppose, $m\leq \setSizeK$ and 
$\vec{a}\in \sums{\Z}{{\setof{\weight{setX}(\hypergraphG)}{\setX\in \kset{m}{\setVertices'}}}}$.
We have to show that $(m,\vec{a})$-\simple\ hypergraph is in $\sums{\Z}{\eqs{\hypergraphG}}$.
Suppose, $\vec{a}=c_1\vec{a_1}+c_2\vec{a_2}\ldots c_l\vec{a_l}$
where $\vec{a_i}\in$
$\setof{\weight{setX}(\hypergraphG)}{{\setX\in \kset{m}{\setVertices'}}}$ and $c_i\in \Z$.
Because of Lemma~\ref{lem:simple_sum_is_simple} it suffices to prove that 
$(m,\vec{a_i})$-\simple\ hypergraphs are in $\sums{\Z}{\eqs{\hypergraphG}}$. But this is exactly
Lemma~\ref{lem:simple_graphs}.}{NULL}
So it remains to provide proofs of lemmas~\ref{lem:simple_sum_is_simple}~and~\ref{lem:simple_graphs}.
The proof of Lemma~\ref{lem:simple_sum_is_simple} is easy so we start from it and then we concentrate on the \changedd{much }{}{null}more complicated proof of 
Lemma~\ref{lem:simple_graphs}. 
\end{proof}
}{973}

\changed{
We start \changed{from}{with}{964} an operator that is used in later proofs.
\begin{defi}\label{def:tou_trans}
Suppose ${\hypergraphG}=({\setVertices}', \mu')$ is a ${\setSizeK}$-hypergraph and $\da \in {\setVertices}', \da'\notin 
{\setVertices}'$ 
are two vertices. 
Let $\sigma_{\da} \colon {\setVertices}' \cup \{\da'\} \rightarrow {\setVertices}' \cup \{\da'\}$ such that:
\begin{displaymath}
\sigma_{\da}(x) \eqdef \left\{ \begin{array}{ll}
\da' & \textrm{when $x = {\da}$}\\
\da & \textrm{when $x = {\da}'$}\\
x & \textrm{otherwise}
\end{array} \right.
\end{displaymath}
A \emph{swap of $\da$ and $\da'$ in ${\hypergraphG}$} 
is defined as 
$\swap{\da}{\da'}{\hypergraphG} \eqdef {\hypergraphG} \circ \sigma_{\da}$.
\end{defi}
}{}{965}

\changed{
\begin{thm}\label{th:simple_graphs}
Suppose ${\hypergraphG}$ is a ${\setSizeK}$-hypergraph.
Then there is $\setS'$ a simple family for $\{{\hypergraphG}\}$. Moreover $\setS'\subseteq \sums{\Z}{\eqs{{\hypergraphG}}}.$
\end{thm}
\changed{Due to}{Because of}{200} Remark~\ref{rem:simple family} it is sufficient to propose an algorithm that for any ${\setSizeK}$-hypergraph ${\hypergraphG}$ 
produces $\setS$ 
a simplification of $\{{\hypergraphG}\}$ such that $\setS\subseteq \sums{\Z}{\eqs{{\hypergraphG}}}$. Existence of a such algorithm is 
a consequence of the following lemmas.
\begin{lem}\label{lem:simple_graphs}
Let ${\hypergraphG}=({\setVertices}',\mu')$ be a ${\setSizeK}$-hypergraph and 
 $\setX \in \kset{m}{{\setVertices}'}$ where $m\leq \setSizeK$. Let $\vec{a}=\weight{setX}(\hypergraphG)$.
Then 
there is a \changed{$(m,\vec{a})-\simple$}{$(m,\vec{a})$-\simple}{20} ${\setSizeK}$-hypergraph 
${\simpleGraph{m}{\vec{a}}}$ such that ${\simpleGraph{m}{\vec{a}}}\in \sums{\Z}{\eqs{{\hypergraphG}}}$.
\end{lem}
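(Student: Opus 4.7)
My plan is to prove the lemma by downward induction on $m$, from $m = \setSizeK$ down to $m = 0$. In every case the construction starts with an inclusion--exclusion sum of ``swap-copies'' of $\hypergraphG$, and for $m < \setSizeK$ a collapsing phase reduces the support to the prescribed size. In the base case $m = \setSizeK$, the set $\setX = \{\da_1, \ldots, \da_\setSizeK\}$ is itself a hyperedge with $\vec{a} = \mu'(\setX)$. I pick fresh vertices $\db_1, \ldots, \db_\setSizeK \notin \setVertices'$, and for every $I \subseteq \{1, \ldots, \setSizeK\}$ I let $\hypergraphG_I$ be the hypergraph obtained from $\hypergraphG$ by replacing each $\da_i$ ($i \in I$) with $\db_i$, so that $\hypergraphG_I \in \eqs{\{\hypergraphG\}}$. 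I then take $\simpleGraph{m}{\vec{a}} \eqdef \sum_I (-1)^{|I|} \hypergraphG_I$. A M\"obius-type computation shows that the total contribution of a hyperedge $\edge$ of $\hypergraphG$ carries the factor $\sum_{I' \subseteq \{1,\ldots,\setSizeK\} \setminus S(\edge)} (-1)^{|I'|}$, where $S(\edge) = \{i : \da_i \in \edge\}$, and this vanishes unless $S(\edge) = \{1,\ldots,\setSizeK\}$, which (together with $|\edge| = \setSizeK$) forces $\edge = \setX$. What survives is precisely the $(\setSizeK, \vec{a})$-simple pattern with $\setA = \setX$, $\setB = \{\db_1, \ldots, \db_\setSizeK\}$ and $\setC = \emptyset$.

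For the inductive step, I apply the same construction to an $m$-subset $\setX$ with $m < \setSizeK$, using fresh vertices $\db_1, \ldots, \db_m$, and obtain a hypergraph $\hypergraphG^* \in \sums{\Z}{\eqs{\hypergraphG}}$. The same M\"obius calculation shows that only hyperedges of the form $\setY_T \cup Z$ survive in $\hypergraphG^*$, where $\setY_T = \{x_1, \ldots, x_m\}$ with $x_i = \db_i$ iff $i \in T$, and $Z \subseteq \setVertices' \setminus \setX$ is of size $\setSizeK - m$, with weight $(-1)^{|T|}\mu'(\setX \cup Z)$. A four-case analysis of an arbitrary $m$-subset $\setY$ (consistency with the pairs $\{\da_i,\db_i\}$ and the split between $\setA \cup \setB$ and the rest) then yields $\weight{setY}(\hypergraphG^*) = (-1)^{|\setB \cap \setY|}\vec{a}$ when $\setY = \setY_T$ and $\vec{0}$ otherwise, and the same calculation on subsets of size $< m$ shows that $\hypergraphG^*$ is $(m-1)$-isolated. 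Hence $\hypergraphG^*$ already satisfies properties~2, 4, 5 and~6 of Definition~\ref{def:simple_hypergraphs}.

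The main obstacle is property~3: the support of $\hypergraphG^*$ outside $\setA \cup \setB$ can be arbitrarily large, whereas an $(m, \vec{a})$-simple hypergraph requires $|\setC| = 2(\setSizeK - m) - 1$. To cure this, I would collapse the support using the inductive hypothesis. By Lemma~\ref{lem:adding simple graphs doesn't change some weights}, adding an $(m', \vec{b})$-simple hypergraph with $m' > m$ preserves all $l$-weights with $l \leq m$, and in particular the $m$-weight pattern on $\setA \cup \setB$. By Lemma~\ref{lem:simple_sum_is_simple} and the inductive hypothesis, every $(m', \vec{b})$-simple hypergraph whose $\vec{b}$ is a $\Z$-linear combination of the $m'$-set weights of $\hypergraphG$ already lies in $\sums{\Z}{\eqs{\hypergraphG}}$. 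I would then mimic the proof of Lemma~\ref{lem:isolating a vertex}: fix a well-founded order on vertices outside $\setA \cup \setB$ and iteratively cancel the ``largest'' offending hyperedge by adding a suitable higher-arity simple hypergraph that redirects its weight toward a fixed target set $\setC$ of size $2(\setSizeK - m) - 1$. Termination follows from well-foundedness, and the matrix machinery of Section~\ref{sec:matrix} (Lemma~\ref{lem:main_matrix} combined with Lemma~\ref{lem:small k-1-isolated graphs vanish}) ensures that the residual support outside $\setA \cup \setB$ can indeed be forced into a set of the prescribed size. By Lemma~\ref{lem:sumofupto} the resulting hypergraph is in $\sums{\Z}{\eqs{\hypergraphG}}$, completing the construction.
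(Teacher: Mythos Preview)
Your inclusion--exclusion construction is correct, and for $m=\setSizeK$ it coincides with what the paper's iterated enrich operation produces. The computation of the $l$-weights of $\hypergraphG^*$ for $l\le m$ is also right: $\hypergraphG^*$ is $(m-1)$-isolated and has the prescribed pattern of $m$-weights on $\setA\cup\setB$.

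The gap is in the collapsing step. Your plan is to ``mimic'' Lemma~\ref{lem:isolating a vertex}, but that lemma takes an $m$-isolated hypergraph and, using $(m{+}1,\cdot)$-simple pieces, drives it toward pre-$(m{+}1)$-isolation. Your $\hypergraphG^*$ is \emph{not} $m$-isolated; its $m$-weights at the sets $\setY_T$ are precisely the $\pm\vec a$ you want to keep. So you are not in the hypothesis of that lemma, and you cannot simply invoke it. What you actually need is a correction $\hypergraphD\in\sums{\Z}{\eqs{\hypergraphG}}$ that is $m$-isolated (so it preserves the $m$-pattern) and such that $\hypergraphG^*-\hypergraphD$ has support outside $\setA\cup\setB$ of size $2(\setSizeK-m)-1$. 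Your downward induction only hands you $(m',\vec b)$-simple $\setSizeK$-hypergraphs for $m'>m$ with $\vec b$ in the span of the $m'$-set weights of $\hypergraphG$; it is not clear, and you do not argue, that sums of these suffice to perform this support reduction. Trying to subtract a concrete target $(m,\vec a)$-simple $\hypergraphS_0$ and then express $\hypergraphG^*-\hypergraphS_0$ via the partial (levels $>m$) version of Theorem~\ref{th:expresibility with simple} fails too, because the $m'$-set weights of $\hypergraphS_0$ are multiples of $\vec a$, and $\vec a$ need not lie in the $\Z$-span of the $m'$-set weights of $\hypergraphG$ (only $\binom{\setSizeK-m}{m'-m}\vec a$ does, by Lemma~\ref{lem:weights of sums are proportional}). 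In short, the well-founded ``redirect the largest offending hyperedge'' picture is appealing but you have not exhibited the $m$-isolated correction that does the redirecting and lies in $\sums{\Z}{\eqs{\hypergraphG}}$.

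The paper avoids this difficulty by running the induction on the arity $\setSizeK$ rather than on $m$. For $|\setX|\ge 1$ one picks $\da\in\setX$, passes to the $(\setSizeK{-}1)$-hypergraph $\cut{\hypergraphG}{\da}$, obtains an $(|\setX|{-}1,\vec a)$-simple $(\setSizeK{-}1)$-hypergraph there by the inductive hypothesis, and then \emph{enriches} via Lemma~\ref{lem:simple graph raising}: the hypergraph $\revcut{\simpleGraph{m-1}{\vec a}}{\da}-\revcut{\simpleGraph{m-1}{\vec a}}{\da'}$ is automatically $(|\setX|,\vec a)$-simple of exactly the right size, and Point~2 of that lemma places it in $\sums{\Z}{\eqs{\hypergraphG}}$. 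No collapsing is needed. The case $\setX=\emptyset$ is handled separately (Lemma~\ref{lem:induction step two}) by an iterative vertex-removal procedure that again uses the full theory for $(\setSizeK{-}1)$-hypergraphs on the cut $\cut{\hypergraphF}{\da}$. If you unwind your collapsing problem through $\cut{\hypergraphG^*}{\setA}=\cut{\hypergraphG}{\setX}$, you will see that what it really asks for is the $m=0$ case for $(\setSizeK{-}m)$-hypergraphs --- precisely what the paper's induction on $\setSizeK$ supplies and your downward induction on $m$ does not.
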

\begin{lem}\label{lem:simple_sum_is_simple}
For any $0\leq m\leq \setSizeK$ if $(m,\vec{a})-$simple and $(m,\vec{b})$-\simple\ 
hypergraphs are elements of $\sums{\Z}{\eqs{{\hypergraphG}}}$ then
$(m,\vec{a}+\vec{b})-$simple hypergraph is also a member of $\sums{\Z}{\eqs{{\hypergraphG}}}$.
\end{lem}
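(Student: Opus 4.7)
The plan is to construct an $(m,\vec{a}+\vec{b})$-\simple\ hypergraph as the pointwise sum of two aligned copies of the given \simple\ hypergraphs for $\vec{a}$ and $\vec{b}$, and then conclude via Lemma~\ref{lem:sumofupto}.

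First I would fix witnesses $\hypergraphG_{a}\in \sums{\Z}{\eqs{\{\hypergraphG\}}}$, an $(m,\vec{a})$-\simple\ hypergraph with defining partition $\setA_1\cup\setB_1\cup\setC_1$ and paired enumerations $\setA_1=\{\da_1^a,\ldots,\da_m^a\}$, $\setB_1=\{\db_1^a,\ldots,\db_m^a\}$, and analogously $\hypergraphG_{b}\in \sums{\Z}{\eqs{\{\hypergraphG\}}}$ with partition $\setA_2\cup\setB_2\cup\setC_2$ and enumerations $\da_i^b,\db_i^b$. Next I would pick fresh vertices $\setA=\{\da_1,\ldots,\da_m\}$, $\setB=\{\db_1,\ldots,\db_m\}$, and $\setC$ of cardinality $2(\setSizeK-m)-1$ (or $\setC=\emptyset$ when $m=\setSizeK$). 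By relabelling the vertices of $\hypergraphG_a$ and $\hypergraphG_b$ via $\da_i^a,\da_i^b\mapsto \da_i$ and $\db_i^a,\db_i^b\mapsto \db_i$, together with any chosen bijections $\setC_1\to\setC$ and $\setC_2\to\setC$, I obtain hypergraphs $\hypergraphG_a'$ and $\hypergraphG_b'$ that are isomorphic to $\hypergraphG_a$ and $\hypergraphG_b$ respectively and now share the common vertex set $\setA\cup\setB\cup\setC$.

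I would then set $\simpleGraph{m}{\vec{a}+\vec{b}}\eqdef \hypergraphG_a'+\hypergraphG_b'$ and verify the defining conditions of Definition~\ref{def:simple_hypergraphs} directly. Since each $\weight{setX}$ is a homomorphism, for any $m$-subset $\setX=\{x_1,\ldots,x_m\}\subseteq \setA\cup\setB$ with $x_i\in\{\da_i,\db_i\}$ the weight equals $(-1)^{\size{\setB\cap\setX}}\vec{a}+(-1)^{\size{\setB\cap\setX}}\vec{b}=(-1)^{\size{\setB\cap\setX}}(\vec{a}+\vec{b})$; any other $m$-subset gets weight $\vec{0}+\vec{0}=\vec{0}$; and $(m-1)$-isolatedness is inherited from both summands. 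Membership in $\sums{\Z}{\eqs{\{\hypergraphG\}}}$ then follows from Lemma~\ref{lem:sumofupto}, since the aligned copies lie in $\eqs{\sums{\Z}{\eqs{\{\hypergraphG\}}}}$ and their sum lies in $\sums{\Z}{\eqs{\sums{\Z}{\eqs{\{\hypergraphG\}}}}}=\sums{\Z}{\eqs{\{\hypergraphG\}}}$.

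The only delicate step is the alignment: because condition~3 of Definition~\ref{def:simple_hypergraphs} pins down $\size{\setC_j}=2(\setSizeK-m)-1$ (or $\setC_j=\emptyset$) for both witnesses, the required bijections $\setC_j\to\setC$ exist without needing to add or remove any isolated vertices, and keeping the pairing $\da_i\leftrightarrow\db_i$ consistent between the two summands is what ensures that the sign factors $(-1)^{\size{\setB\cap\setX}}$ combine coherently to yield the required weight $(-1)^{\size{\setB\cap\setX}}(\vec{a}+\vec{b})$ on each qualifying $m$-subset.
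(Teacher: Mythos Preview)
Your proof is correct and follows essentially the same approach as the paper: align the two \simple\ hypergraphs so that their $\setA$, $\setB$, $\setC$ parts and the $\da_i\leftrightarrow\db_i$ pairings coincide, then add them and read off the properties of Definition~\ref{def:simple_hypergraphs} via additivity of $\weight{setX}$. The only cosmetic difference is that the paper relabels one summand via a single bijection $\pi$ and distributes $\pi$ through the explicit $\Z$-sum presentation, whereas you relabel both summands onto a fresh common vertex set and appeal to Lemma~\ref{lem:sumofupto}; both routes amount to the same observation that $\eqs{\placeHolder}$ is closed under data permutations.
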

Indeed, using this two lemmas we can produce all elements in the simplification.
The proof of Lemma~\ref{lem:simple_sum_is_simple} is easy so we start from it and then we concentrate on the much more complicated proof of 
Lemma~\ref{lem:simple_graphs}.
}{}{NULL}

\begin{proof}[Proof of Lemma~\ref{lem:simple_sum_is_simple}]
Let ${\simpleGraph{m}{\vec{a}}}\in \sums{\Z}{\eqs{{\hypergraphG}}}$ \changed{is}{be}{985 \& 986} the $(m,\vec{a})-$simple hypergraph and 
$\simpleGraph{m}{\vec{b}}\in \sums{\Z}{\eqs{{\hypergraphG}}}$ \changed{is}{be}{985 \& 986} the $(m,\vec{b})-$simple hypergraph.
\changed{Thus we}{We}{986} can write 
\[
 {\simpleGraph{m}{\vec{a}}}=\sum_i a_i {\hypergraphG}_i \text{ where } a_i \in \Z \text{ and } {\hypergraphG}_i\in \eqs{{\hypergraphG}}
\]
\[
 \simpleGraph{m}{\vec{b}}=\sum_j b_{j} {\hypergraphG}_j \text{ where } b_{j} \in \Z \text{ and } {\hypergraphG}_j\in \eqs{{\hypergraphG}}
\]

We \changed{recall the definition}{recall the notation used in Definition~\ref{def:simple_hypergraphs}}{992} of \simple\  hypergraphs; the vertices of 
${\simpleGraph{m}{\vec{a}}}$ can be split into ${\setA}_{\simpleGraph{m}{\vec{a}}}, 
 {\setB}_{\simpleGraph{m}{\vec{a}}}, {\setC}_{\simpleGraph{m}{\vec{a}}}$ and similarly 
 vertices of $\simpleGraph{m}{\vec{b}}$ are in ${\setA}_{\simpleGraph{m}{\vec{b}}}, {\setB}_{\simpleGraph{m}{\vec{b}}}, {\setC}_{\simpleGraph{m}{\vec{b}}}$. 
\changed{We assume that sets ${\setA}_{{\placeHolder}}$ and ${\setB}_{{\placeHolder}}$ are ordered like in the definition of \simple\  hypergraphs.}{According 
to 
Definition~\ref{def:simple_hypergraphs}, the elements of sets ${\setA}_{\simpleGraph{m}{\vec{a}}}, 
 {\setB}_{\simpleGraph{m}{\vec{a}}}$ are paired, formally there is a bijection $f:{\setA}_{\simpleGraph{m}{\vec{a}}}\to{} 
 {\setB}_{\simpleGraph{m}{\vec{a}}}$, similarly elements of sets ${\setA}_{\simpleGraph{m}{\vec{b}}}$ and ${\setB}_{\simpleGraph{m}{\vec{b}}}$ are paired, 
formally there is a bijection $f':{\setA}_{\simpleGraph{m}{\vec{b}}}\to{} {\setB}_{\simpleGraph{m}{\vec{b}}}$.}{993} 

Then there is a bijection $\pi$ between vertices that transfers $\setA_{\simpleGraph{m}{\vec{a}}}$ to ${\setA}_{\simpleGraph{m}{\vec{b}}}$, 
$\setB_{\simpleGraph{m}{\vec{a}}}$ to ${\setB}_{\simpleGraph{m}{\vec{b}}}$, $C_{\simpleGraph{m}{\vec{a}}}$ to ${\setC}_{\simpleGraph{m}{\vec{b}}}$, and that 
\changed{preserves the orders 
on vertices in ${\setA}_{{\placeHolder}}$ and ${\setB}_{{\placeHolder}}$.}{it preserves the pairings i.e. $\pi(f(\da_i))=f'(\pi(\da_i))$ and 
$\pi(f^{-1}(\db_i))=f'^{-1}(\pi(\db_i))$ for 
every $\da_i\in \setA_{\simpleGraph{m}{\vec{a}}}$ and $\db_i\in \setB_{\simpleGraph{m}{\vec{a}}}$.}{993}
\changed{Thus }{Thus, }{177}
\[
 {{\simpleGraph{m}{\vec{a}}}} + {\simpleGraph{m}{\vec{b}}}\circ\pi = \sum_i a_i {\hypergraphG}_i + (\sum_j b_{j} {\hypergraphG}_j)\circ \pi = \sum_i a_i 
{\hypergraphG}_i 
+ \sum_j b_{j} ({\hypergraphG}_j)\circ \pi
\]
But this mean that \changed{${{\simpleGraph{m}{\vec{a}}}}+{\simpleGraph{m}{\vec{a}}} \circ \pi\in \sums{\Z}{\eqs{{\hypergraphG}}}$}
{
${{\simpleGraph{m}{\vec{a}}}}+{\simpleGraph{m}{\vec{b}}} \circ \pi\in \sums{\Z}{\eqs{{\hypergraphG}}}$}{1001}. It is not hard to see 
that 
\changed{${{\simpleGraph{m}{\vec{a}}}}+{\simpleGraph{m}{\vec{a}}} \circ \pi$}{
${{\simpleGraph{m}{\vec{a}}}}+{\simpleGraph{m}{\vec{b}}} \circ \pi$
}{1001}
is an $(m,\vec{a}+\vec{b})$-\simple\ hypergraph.
\end{proof}

\changed{}{
We may proceed to the proof of Lemma~\ref{lem:simple_graphs}. We start \changed{from}{with}{964} an 
operator that is used in the following proofs.
\begin{defi}\label{def:tou_trans}
Suppose ${\hypergraphG}\changedd{=({\setVertices}', \mu')}{}{405}$ is a ${\setSizeK}$-hypergraph and $\da \in 
\changedd{{\setVertices}'}{\Vertices{\hypergraphG}}{405}, \da'\notin 
\changedd{{\setVertices}'}{\Vertices{\hypergraphG}}{405}$ 
are two vertices. 
Let $\sigma_{\da} \colon \changedd{{\setVertices}'}{\Vertices{\hypergraphG}}{405} \cup \{\da'\} \rightarrow 
\changedd{{\setVertices}'}{\Vertices{\hypergraphG}}{405} \cup \{\da'\}$ such that:
\begin{displaymath}
\sigma_{\da}(x) \eqdef \left\{ \begin{array}{ll}
\da' & \textrm{when $x = {\da}$}\\
\da & \textrm{when $x = {\da}'$}\\
x & \textrm{otherwise}
\end{array} \right.
\end{displaymath}
A \emph{swap of $\da$ and $\da'$ in ${\hypergraphG}$} 
is defined as 
$\swap{\da}{\da'}{\hypergraphG} \eqdef {\hypergraphG} \circ \sigma_{\da}$.
\end{defi}
}{965}

The proof of Lemma~\ref{lem:simple_graphs} \changed{goes via}{is by}{1003} induction on ${\setSizeK}$. We encapsulate the most important steps of the proof in 
four lemmas. 
Lemma~\ref{lem:simple graph raising} is an auxiliary lemma. 
Lemma~\ref{lem:induction base} forms the induction base. Lemmas 
\ref{lem:induction step one}~and~\ref{lem:induction step two} cover the induction step.

\begin{lem}\label{lem:simple graph raising}
Let \changed{}{$\hypergraphG$ be a $\setSizeK$-hypergraph and}{1007} $\simpleGraph{m}{\vec{a}}\changed{=(\setVertices,\mu)}{}{NULL}$ be an 
\changed{$(m,\vec{a})-\simple$}{$(m,\vec{a})$-\simple}{20} 
${\setSizeK}$-hypergraph\changed{and let}{. Suppose,}{1007} $\da, \da'\notin 
\changed{{\setVertices}}{\changedd{\support{\simpleGraph{m}{\vec{a}}}}{\Vertices{\simpleGraph{m}{\vec{a}}}}{1169}}{NULL}$ 
\changed{be}{are}{1007} two 
vertices\changed{. 
 Suppose}{\ and\ }{1007}$\simpleGraph{m}{\vec{a}}\in \sums{\Z}{\eqs{\cut{{\hypergraphG}}{\da}}}$.
Then: \begin{enumerate}
 \item $\simpleGraph{m+1}{\vec{a}}=\revcut{\simpleGraph{m}{\vec{a}}}{\da} -
\revcut{\simpleGraph{m}{\vec{a}}}{\da'}$ is an \changed{$(m+1,\vec{a})-\simple$}{$(m+1,\vec{a})$-\simple}{20} 
 $(\setSizeK+1)$-hypergraph,
\item $\simpleGraph{m+1}{\vec{a}}\in \sums{\Z}{\eqs{{\hypergraphG}}}$.
 \end{enumerate}
\end{lem}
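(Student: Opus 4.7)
The plan is to verify the two claims in turn: claim (1) is a careful bookkeeping exercise about the weights of $\simpleGraph{m+1}{\vec{a}}$, while claim (2) rests on a single algebraic identity relating the enrich and cut operations.

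For claim (1), I would take the witnessing decomposition $\setA, \setB, \setC$ of $\simpleGraph{m}{\vec{a}}$ supplied by Definition~\ref{def:simple_hypergraphs} and propose $\setA' \eqdef \setA \cup \{\da\}$, $\setB' \eqdef \setB \cup \{\da'\}$ and $\setC' \eqdef \setC$ as witnesses for $\simpleGraph{m+1}{\vec{a}}$. The cardinalities are correct because $2(\setSizeK-m)-1 = 2((\setSizeK+1)-(m+1))-1$. The verification of Properties 4, 5 and 6 of Definition~\ref{def:simple_hypergraphs} then proceeds by case analysis on whether a candidate set $\setX'$ contains $\da$, $\da'$, both, or neither. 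The key general observation is that, because the enrich operation preserves weights on sets not containing the new vertex, for any $\setY$ disjoint from $\{\da,\da'\}$ we have $\weight{setY}(\revcut{\simpleGraph{m}{\vec{a}}}{\da}) = \weight{setY}(\revcut{\simpleGraph{m}{\vec{a}}}{\da'}) = \weight{setY}(\simpleGraph{m}{\vec{a}})$, so such contributions cancel in the signed difference. For $\setX'$ containing exactly one of $\da, \da'$, only one of the two enriched hypergraphs contributes, and its value equals $\weight{setY}(\simpleGraph{m}{\vec{a}})$ where $\setY \eqdef \setX' \setminus \{\da,\da'\}$; Properties 4--6 of $\simpleGraph{m}{\vec{a}}$ then supply the prescribed value, with the correct sign $(-1)^{\size{\setB' \cap \setX'}}$ because $\da \in \setA'$ and $\da' \in \setB'$.

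For claim (2), the central identity I will use is
\[
\revcut{\cut{\hypergraphG}{\da}}{\da} \;-\; \revcut{\cut{\hypergraphG}{\da}}{\da'} \;=\; \hypergraphG \;-\; \hypergraphG',
\]
where $\hypergraphG'$ denotes $\hypergraphG$ with the vertex $\da$ renamed to the fresh vertex $\da'$. This holds because the hyperedges of $\hypergraphG$ avoiding $\da$ are lost by the cut on the left and cancel pairwise on the right, while the hyperedges containing $\da$ contribute the same two signed terms on both sides. Both $\hypergraphG$ and $\hypergraphG'$ lie in $\eqs{\hypergraphG}$, so the right-hand side lies in $\sums{\Z}{\eqs{\hypergraphG}}$.

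To transfer the identity to the given expression $\simpleGraph{m}{\vec{a}} = \sum_i c_i \hypergraphF_i$ with $\hypergraphF_i \in \eqs{\cut{\hypergraphG}{\da}}$, I will invoke linearity of the enrich operation and observe that each $\hypergraphF_i$ equals $\cut{\hypergraphG_i}{\da}$, up to isolated vertices, for some $\hypergraphG_i \in \eqs{\hypergraphG}$ obtained by extending the witnessing data permutation $\pi_i$ so that it fixes both $\da$ and $\da'$. Applying the identity to each $\hypergraphG_i$ then gives $\revcut{\hypergraphF_i}{\da} - \revcut{\hypergraphF_i}{\da'} = \hypergraphG_i - \hypergraphG_i' \in \sums{\Z}{\eqs{\hypergraphG}}$, and summing with the coefficients $c_i$ yields the claim. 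The main obstacle I anticipate is the bookkeeping in claim (1), in particular Property~5 in the sub-cases where $\setX'$ meets $\setC'$ or doubles up on a single pair $\{a_i, b_i\}$; claim (2) is essentially immediate once the enrich/cut identity has been noticed.
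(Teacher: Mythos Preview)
Your proposal is correct and follows essentially the same route as the paper: for claim~(1) you use the same witnessing sets $\setA' = \setA \cup \{\da\}$, $\setB' = \setB \cup \{\da'\}$, $\setC' = \setC$ and the same four-way case split on $\setX' \cap \{\da,\da'\}$; for claim~(2) you isolate precisely the identity $\revcut{\cut{\hypergraphG}{\da}}{\da} - \revcut{\cut{\hypergraphG}{\da}}{\da'} = \hypergraphG - \swap{\da}{\da'}{\hypergraphG}$ that the paper proves (as Equation~\eqref{eq:simple graph raising1}) by the same hyperedge-level case analysis, and you transfer it through the permutations $\pi_i$ fixing $\da,\da'$ exactly as the paper does. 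Your handling of the ``neither'' sub-case in Property~5 (observing that the two enriched weights are equal and cancel, rather than asserting each is~$\vec 0$) is in fact slightly cleaner than the paper's phrasing.
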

\begin{proof} 
\changed{{\bf Claim 1.}}{We start by showing Point 1.}{1011} 
Let \changed{${\setVertices}= \setA' \cup \setB' \cup \setC'$, where $\setA', \setB', 
\setC'$}{$\changed{{\setVertices}}{\changedd{\support{\simpleGraph{m}{\vec{a}}}}{\Vertices{\simpleGraph{m}{\vec{a}}}}{1169}}{NULL}= \setA \cup \setB \cup 
\setC$, where $\setA, 
\setB, \setC$}{1011} are \changed{such}{}{1011} as in 
the definition of \changed{$(m,\vec{a})-\simple$}{$(m,\vec{a})$-\simple}{20} hypergraph (Definition~\ref{def:simple_hypergraphs}). 
Let \changed{$\setA' = \{\da_1, {\da}_2 \ldots {\da}_m\}$ and $\setB' = 
\{{\db}_1, {\db}_2 \ldots {\db}_m\}.$}
{
$\setA = \{\da_1, {\da}_2 \ldots {\da}_m\}$ and $\setB = 
\{{\db}_1, {\db}_2 \ldots {\db}_m\}.$
}{1011}
We define \changed{$\setA\eqdef \{{\da}_1, {\da}_2 \ldots {\da}_m, {\da}_{m+1}=\da\}$ and 
$\setB \eqdef 
\{{\db}_1,$ ${\db}_2 \ldots {\db}_m, {\db}_{m+1}=\da'\}.$}
{
$\setA'\eqdef \{{\da}_1, {\da}_2 \ldots {\da}_m, {\da}_{m+1}=\da\}$ and 
$\setB' \eqdef 
\{{\db}_1,$ ${\db}_2 \ldots {\db}_m, {\db}_{m+1}=\da'\}.$
}{1011}
To show that $\simpleGraph{m+1}{\vec{a}}$ is \changed{$(m+1,\vec{a})-\simple$}{$(m+1,\vec{a})$-\simple}{20} we split 
$\changed{{\setVertices}}{
\changedd{
\support{\simpleGraph{m}{\vec{a}}}
}{
\Vertices{\simpleGraph{m}{\vec{a}}}
}{1169}
}{NULL}\cup\{\da,\da'\}$ to 
\changed{$\setA, \setB, $}{$\setA', \setB', 
$}{1011} and \changed{$\setC'$}{$\setC$}{NULL} and we verify 
\changed{the properties}{Properties}{1014-1033} $1$ to $7$. \changed{The properties}{Properties}{1014-1033} $1$ to $3$ are trivial.
Properties $4$ to $6$ speak about function $\weight{setX}({\simpleGraph{m+1}{\vec{a}}})$ where:
\begin{itemize}
 \item \changed{The property}{Property}{1014-1033} $4$. ${\setX}$ contains exactly one vertex
from every pair ${\da}_i, {\db}_i$ for $0<i\leq m+1$. \changed{We consider only one of the two 
cases}{Here, there are two cases: $\da\in \setX$ or $\da'\in \setX$. We consider only one of them}{1018}
as the second one is \changed{very\ }{}{NULL}similar. Suppose\changed{,}{\ that}{1018} $\da'\in {\setX}$. 
Then $\weight{setX}({\revcut{\simpleGraph{m}{\vec{a}}}{\da}})=\vec{0}$ so 
\[
\weight{setX}({\simpleGraph{m+1}{\vec{a}}})=-\weight{setX}({\revcut{\simpleGraph{m}{\vec{a}}}{\da'}})=
-\weight{setX-da}({\simpleGraph{m}{\vec{a}}}).
\]\changed{Thus }{Thus, }{177}$\weight{setX}({\simpleGraph{m+1}{\vec{a}}})=-1\cdot (-1)^{\size{{\setX}\cap 
\changed{\setB'}{\setB}{1011}}}\vec{a}=(-1)^{\size{{\setX}\cap \changed{B}{\setB}{1022}}}\vec{a}$, as required.
\item \changed{The property}{Property}{1014-1033} $5$. ${\setX}$ has $m+1$ elements but it is not\changed{ of the type considered in}{\ one of the sets 
considered in}{1023}
\changed{the property}{Property}{1014-1033} $4$. In this case 
$\weight{setX}({\revcut{\simpleGraph{m}{\vec{a}}}{\da}})=\weight{setX}({\revcut{\simpleGraph{m}{\vec{a}}}{\da'}})=\vec{0}$, thus
\changed{$\weight{setX}({\revcut{\simpleGraph{m}{\vec{a}}}{\da}})=\vec{0}$}
{$\weight{setX}({\simpleGraph{m}{\vec{a}}})=\vec{0}$}
{1024}, as required.
\item \changed{The property}{Property}{1014-1033} $6$. $\size{{\setX}}\leq m$. We consider four cases: 
\changed{
${\setX}\cap \{\da,\da'\}=\{\da,\da'\}$, 
${\setX}\cap \set{\da,\da'}=\set{\da}$,
${\setX}\cap \set{\da,\da'}=\set{\da'}$, and
${\setX}\cap \{\da,\da'\}= \emptyset$.
}{
\begin{enumerate}
 \item $\da$ and $\da'$ belong to the set $\setX$,
 \item $\da$ belongs to the set $\setX$, but $\da'$ does not, 
 \item $\da'$ belongs to the set $\setX$, but $\da$ does not,
 \item both $\da,\da'$ do not belong $\setX$.
\end{enumerate}
}{1025}

\vspace{0.2cm}

\begin{itemize}
 \item In the first case, $\da$ is 
not a vertex of 
${\revcut{\simpleGraph{m}{\vec{a}}}{\da'}}$ 
and $\da'$ is not a vertex of 
${\revcut{\simpleGraph{m}{\vec{a}}}{\da}}.$ 
So, $\vec{0}=\weight{setX}({\revcut{\simpleGraph{m}{\vec{a}}}{\da}})=\weight{setX}({\revcut{\simpleGraph{m}{\vec{a}}}{ \da' }}).$ Thus, 
$\weight{setX}(\simpleGraph{m+1}{\vec{a}})=\vec{0}$, as 
required.
\item In the second case, $\vec{0}=\weight{setX}({\revcut{\simpleGraph{m}{\vec{a}}}{\da'}})$ and 
$\weight{setX-da}({\simpleGraph{m}{\vec{a}}})=\vec{0}$ as $\size{\setX\setminus \set{\da}}<m$. Thus, $\weight{setX}(\simpleGraph{m+1}{\vec{a}})=\vec{0}$, as 
required.
\item The third case is almost the same as second.
\item In the fourth case, 
$\weight{setX}({\revcut{\simpleGraph{m}{\vec{a}}}{\da}})=\weight{setX}({\revcut{\simpleGraph{m}{\vec{a}}}{\da'}})$, so 
$\weight{setX}({\simpleGraph{m+1}{\vec{a}}})=$ \\
$\weight{setX}({\revcut{\simpleGraph{m}{\vec{a}}}{\da}})-\weight{setX}({\revcut{\simpleGraph{m}{\vec{a}}}{\da'}})=\vec{0},$ as required. 
\end{itemize}

\end{itemize}

\changed{{\bf Claim 2.}}{We proceed to the proof of Point 2.}{1011} Let $\simpleGraph{m}{\vec{a}}=\sum_i a_i {\hypergraphG}_i$ where $a_i\in \Z$, 
${\hypergraphG}_i\in \eqs{\cut{{\hypergraphG}}{\da}},$ and 
additionaly $\da'\not\in \bigcup_i \changedd{
\support{\hypergraphG_i} \cup  \support{\hypergraphG}
}{
\Vertices{\hypergraphG_i} \cup  \Vertices{\hypergraphG}
}{1169}
.$ Suppose 
$\pi_i$ are bijections \changed{that are identity on $\set{\da,\da'}$ and that
witness
equivalence of summands and $\cut{{\hypergraphG}}{\da}$ i.e. ${\hypergraphG}_i=\cut{{\hypergraphG}}{\da}\circ \pi_i.$}{
such that $\pi_i$ are identity on $\set{\da,\da'}$ and 
${\hypergraphG}_i=\cut{{\hypergraphG}}{\da}\circ \pi_i.$
}{1037}%
\changed{They exist as sets of vertices of $\cut{{\hypergraphG}}{\da}$ and $\hypergraphG_i$ do not contain neither $\da$ nor $\da'$.}{
The bijections $\pi_i$ are well defined as sets of vertices of $\cut{{\hypergraphG}}{\da}$ and $\hypergraphG_i$ do not contain neither $\da$ nor $\da'$.
}{1037}%
\changed{Due to}{Because of}{200} \changed{point 1}{Point $1$}{1038}, we may write the following equation \[
\simpleGraph{m+1}{\vec{a}}=\sum_i a_i 
\revcut{(\cut{{\hypergraphG}}{\da}\circ \pi_i)}{\da}-a_i\revcut{(\cut{{\hypergraphG}}{\da}\circ \pi_i)}{\da'}.
\]
\changed{As}{}{1044} $\pi_i$ \changed{is 
identity}{is the identity}{1043} on 
$\da,\da'$ \changed{we may transform the equation}{, so}{1044}. 
\[
\simpleGraph{m+1}{\vec{a}}=\sum_i a_i \revcut{\cut{{\hypergraphG}}{\da}}{\da}\circ \pi_i-a_i \revcut{\cut{{\hypergraphG}}{\da}}{\da'}\circ \pi_i.
\]
To prove \changed{the second claim}{Point 2}{1011} it suffices to prove \changed{}{that\ }{NULL}for each $i$\changed{}{, it holds that}{NULL} 
\begin{equation}\label{eq:simple graph raising}
 a_i \revcut{\cut{{\hypergraphG}}{\da}}{\da}\circ \pi_i-a_i \revcut{\cut{{\hypergraphG}}{\da}}{\da'} \circ \pi_i= a_i {\hypergraphG}\circ \pi_i -a_i 
{\hypergraphG}'\circ \pi_i 
\end{equation}
where ${\hypergraphG}' = \swap{\da}{\da'}{\hypergraphG}$.
We simplify further
\begin{equation}\label{eq:simple graph raising1}
 \revcut{\cut{{\hypergraphG}}{\da}}{\da}\circ \pi_i-\revcut{\cut{{\hypergraphG}}{\da}}{\da'} \circ \pi_i= {\hypergraphG} \circ \pi_i- {\hypergraphG}' \circ 
\pi_i 
\end{equation}

\changed{We prove~Equation~\ref{eq:simple graph raising1} by showing equality of functions $\weight{setX}$ on both sides.}
{
To prove~Equation~\ref{eq:simple graph raising1} we have to show that for every $\setSizeK$-subset 
\changedd{
$\support{\revcut{\cut{\hypergraphG}{\da}}{\da}\circ 
\pi_i}\cup
\support{\revcut{\cut{\hypergraphG}{\da}}{\da'}\circ \pi_i}  \cup \support{\hypergraphG \circ \pi_i} \cup \support{\hypergraphG'\circ \pi_i}$
}
{
$\Vertices{\revcut{\cut{\hypergraphG}{\da}}{\da}\circ 
\pi_i}\cup
\Vertices{\revcut{\cut{\hypergraphG}{\da}}{\da'}\circ \pi_i}  \cup \Vertices{\hypergraphG \circ \pi_i} \cup \Vertices{\hypergraphG'\circ \pi_i}$
}{1169}
(for every 
potential 
hyperedge)
the weight of that subset on both sides of~Equation~\ref{eq:simple graph raising1} is the same. 
\changedd{
$\support{\revcut{\cut{\hypergraphG}{\da}}{\da}\circ 
{\pi_i}}\cup  
\support{\revcut{\cut{\hypergraphG}{\da}}{\da'}\circ \pi_i}  \cup \support{\hypergraphG \circ \pi_i} \cup \support{\hypergraphG'\circ \pi_i} 
={\pi_i}^{-1}(\support{\hypergraphG}\cup\support{\hypergraphG'}),$}
{
$\Vertices{\revcut{\cut{\hypergraphG}{\da}}{\da}\circ 
{\pi_i}}\cup  
\Vertices{\revcut{\cut{\hypergraphG}{\da}}{\da'}\circ \pi_i}  \cup \Vertices{\hypergraphG \circ \pi_i} \cup \Vertices{\hypergraphG'\circ \pi_i} 
={\pi_i}^{-1}(\Vertices{\hypergraphG}\cup\Vertices{\hypergraphG'}),$
}{1169}so we formalise this as a following equation. 
}{1054,1057}
\changed{So for}{For}{1054,1057} any 
\changedd{
${\setX}\subseteq \pi_i\changed{^{-1}}{}{NULL}(\support{{\hypergraphG}}\cup \support{{\hypergraphG}'})$
}
{
${\setX}\subseteq \pi_i\changed{^{-1}}{}{NULL}(\Vertices{{\hypergraphG}}\cup \Vertices{{\hypergraphG}'})$
}{1169}
we prove
\begin{equation}\label{eq:simple graph raising2}
\weight{setX} \left(\revcut{\cut{{\hypergraphG}}{\da}}{\da} \circ \pi_i- \revcut{\cut{{\hypergraphG}}{\da}}{\da'}\circ \pi_i\right)= \weight{setX} 
\left( {\hypergraphG} \circ \pi_i- 
{\hypergraphG}' \circ \pi_i\right). 
\end{equation}
We consider $4$ cases depending on whether $\da,\da'\in \setX$.
\begin{itemize}
 \item If $\setX$ does not contain $\da$ and $\da'$ then 
$\weight{setX}({\hypergraphG}\circ \pi_i)=\weight{setX}({\hypergraphG}'\circ \pi_i)$, so the \changed{right hand}{right}{1059} side \changed{equals}{is equal 
to}{1060} $\vec{0}$
\changed{exactly like the \changed{left-hand}{left}{1057} side of the equality}
{
. On the left side $\weight{setX}(\cut{\hypergraphG}{\da}\circ \pi_i) = \weight{setX}(\revcut{\cut{\hypergraphG}{\da}}{\da}\circ \pi_i)= 
\weight{setX}(\revcut{\cut{\hypergraphG}{\da}}{\da'}\circ \pi_i)$ so the left side is equal to $\vec{0}$.  
}{1060}.
\item If $\setX$ contains only $\da$ then $\weight{setX}(\revcut{\cut{{\hypergraphG}}{\da}}{\da}\circ \pi_i)=\weight{setX}({\hypergraphG}\circ \pi_i)$
and $\weight{setX}(\revcut{\cut{{\hypergraphG}}{\da}}{\da'}\circ \pi_i)=\vec{0}=\weight{setX}({\hypergraphG}'\circ \pi_i)$, so the equality holds.
\item \changed{Similarly for only $\da'$.}{The case where $\setX$ contains only $\da'$ is similar.}{1063} 
\item If $\setX$ contains both $\da$ and $\da'$ then
$\vec{0}=\weight{setX}(\revcut{\cut{{\hypergraphG}}{\da}}{\da}\circ \pi_i)=\weight{setX}({\hypergraphG}\circ \pi_i)=
\weight{setX}(\revcut{\cut{{\hypergraphG}}{\da}}{\da'}\circ \pi_i)=\weight{setX}({\hypergraphG}'\circ \pi_i)$. So both sides of \changed{the 
equality}{Equation~\ref{eq:simple graph raising2}}{1067} are $\vec{0}$.
\end{itemize}

\changed{Thus }{Thus, }{177}Equality~\ref{eq:simple graph raising1} holds for 
every $\setX$.
\end{proof}

\begin{lemC}[{\cite[Th 15]{DBLP:conf/lics/HofmanLT17}}]
\label{lem:induction base}
Let ${\hypergraphG}=({\setVertices\changed{}{'}{NULL}},\mu\changed{}{'}{NULL})$ be a $1$-hypergraph. 
Then for any $\setX\subseteq {\setVertices\changed{}{'}{NULL}}$ and $\size{\setX}\leq 1$ there is a 
\changed{$(\size{\setX},\weight{setX}(\hypergraphG))-\simple$}{$(\size{\setX},\weight{setX}(\hypergraphG))$-\simple}{20} 
$1$-hypergraph $\simpleGraph{setX}{\weight{setX}(\hypergraphG)}$, such that 
$\simpleGraph{setX}{\weight{setX}(\hypergraphG)}\in \sums{\Z}{\eqs{{\hypergraphG}}}$.
\end{lemC}
\begin{proof}
We consider two cases (i) $\size{\setX}=1$ and (ii) $\size{\setX}=0$.

The first case. Let ${\setX}=\{\da\}$, $\vec{a}=\weight{da}({\hypergraphG})$, and $\da'\notin {\setVertices\changed{}{'}{NULL}}$. 
We define $\simpleGraph{1}{\vec{a}} \eqdef {\hypergraphG}-\swap{\da}{\da'}{\hypergraphG}.$ 
For any vertex $\db \neq \da,\da'$ we have $\weight{db}(\simpleGraph{1}{\vec{a}}) = \vec{0}$. So $\simpleGraph{1}{\vec{a}}$ has two 
\changed{non-isolated}{nonisolated}{107} vertices 
$\{\da,\da'\}$ and satisfies all properties of a 
\changed{simple}{$(1,\vec{a})$-\simple\ }{NULL} 
$1$-hypergraph.

The second case. Let $\weight{emptyset}({\hypergraphG})= \vec{b}$, $\da'\notin {\setVertices\changed{}{'}{NULL}}$. Further, let 
$\simpleGraph{1}{\mu\changed{}{'}{NULL}(\db)} = 
{\hypergraphG}-\swap{\db}{\da'}{\hypergraphG}$ \changed{and}{}{1076} for any $\db\in {\setVertices\changed{}{'}{NULL}}$.
Then we put $\simpleGraph{0}{\vec{b}}\eqdef {\hypergraphG}-\left(\sum_{\db\in {\setVertices\changed{}{'}{NULL}}} \simpleGraph{1}{\mu\changed{}{'}{NULL}(\db)}    
 \right)$. 
Indeed, it is a hypergraph with only one \changed{non-isolated}{nonisolated}{107} 
vertex $\da'$ and 
$\weight{da'}(\simpleGraph{0}{\vec{b}})= -\sum_{\db\in 
{\setVertices\changed{}{'}{NULL}}}\weight{da'}({\simpleGraph{1}{\mu\changed{}{'}{NULL}(\db)}})=\weight{emptyset}({\hypergraphG}),$ as 
required. 
\end{proof}

\begin{lem}\label{lem:induction step one}
\changed{}{Let $\setSizeK\in \N$.}{1081}
Suppose that Theorem~\ref{th:simple_graphs} holds if restricted to ${\setSizeK}$-hypergraphs. Let 
${\hypergraphG}\changedd{=({\setVertices\changed{}{'}{NULL}},\mu\changed{}{'}{NULL})}{}{405}$ be a 
\changed{$\setSizeK+1$-}{$(\setSizeK+1)$-}{779}hypergraph.
Then for any nonempty ${\setX}\subseteq \changedd{{\setVertices\changed{}{'}{NULL}}}{\Vertices{\hypergraphG}}{405}$ \changed{and}{such that}{1082} $0<\size{\setX}\leq \setSizeK+1$ there is a 
\changed{$(\size{\setX},\weight{setX}({\hypergraphG}))-\simple$}{$(\size{\setX},\weight{setX}({\hypergraphG}))$-\simple}{20} 
\changed{$\setSizeK+1$-}{$(\setSizeK+1)$-}{779}hypergraph $\simpleGraph{setX}{\weight{setX}({\hypergraphG})}$, such that 
$\simpleGraph{setX}{\weight{setX}({\hypergraphG})}\in 
\sums{\Z}{\eqs{{\hypergraphG}}}$. 
\end{lem}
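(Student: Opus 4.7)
The plan is to reduce the problem to the inductive hypothesis via the cut-and-enrich operations of Definition~\ref{def:cut vertex}. Since $\setX$ is nonempty, I fix any $\da \in \setX$ and consider $\cut{\hypergraphG}{\da}$, which is a $\setSizeK$-hypergraph. Writing $\setY := \setX \setminus \{\da\}$ (so that $\setY \cap \{\da\} = \emptyset$), Lemma~\ref{lem:isolation_down} yields $\weight{setX}(\hypergraphG) = \weight{setY}(\cut{\hypergraphG}{\da})$. Set $\vec{a} := \weight{setX}(\hypergraphG)$; so $\vec{a}$ is realised inside the smaller hypergraph $\cut{\hypergraphG}{\da}$ as the weight of an $(\size{\setX}-1)$-element subset of its vertices.

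The key step is to invoke the inductive hypothesis on $\cut{\hypergraphG}{\da}$. By Theorem~\ref{th:simple_graphs} applied at arity $\setSizeK$, there exists a simple family $\setS'$ for $\{\cut{\hypergraphG}{\da}\}$ with $\setS' \subseteq \sums{\Z}{\eqs{\cut{\hypergraphG}{\da}}}$. Setting $m := \size{\setX} - 1$ and using the simplification property from Definition~\ref{def:simplification} (the vector $\vec{a}$ is itself one of the generating $m$-set weights of $\cut{\hypergraphG}{\da}$), the family $\setS'$ must contain an $(m,\vec{a})$-\simple\ $\setSizeK$-hypergraph $\simpleGraph{m}{\vec{a}} \in \sums{\Z}{\eqs{\cut{\hypergraphG}{\da}}}$.

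The final step is to raise $\simpleGraph{m}{\vec{a}}$ by one arity using Lemma~\ref{lem:simple graph raising}. I pick a vertex $\da' \notin \setVertices'$ and replace $\simpleGraph{m}{\vec{a}}$ by an isomorphic copy whose support is disjoint from $\{\da, \da'\}$; by relabelling the summands in its decomposition as a $\Z$-linear combination of elements of $\eqs{\cut{\hypergraphG}{\da}}$ accordingly, the new combination still lies in $\sums{\Z}{\eqs{\cut{\hypergraphG}{\da}}}$, because equivalence of hypergraphs is closed under vertex renaming. Lemma~\ref{lem:simple graph raising} then produces
\[
\simpleGraph{m+1}{\vec{a}} \ := \ \revcut{\simpleGraph{m}{\vec{a}}}{\da} \ - \ \revcut{\simpleGraph{m}{\vec{a}}}{\da'},
\]
an $(\size{\setX},\vec{a})$-\simple\ $(\setSizeK+1)$-hypergraph lying in $\sums{\Z}{\eqs{\hypergraphG}}$, which is exactly the required hypergraph $\simpleGraph{setX}{\weight{setX}(\hypergraphG)}$. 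The only genuinely delicate part of the argument is the support-management at the start of this last step; everything else is a clean composition of Lemmas~\ref{lem:isolation_down}~and~\ref{lem:simple graph raising} with the inductive hypothesis.
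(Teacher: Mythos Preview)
Your proof is correct and follows essentially the same route as the paper: pick $\da\in\setX$, cut along $\da$ to drop to arity $\setSizeK$, use the inductive hypothesis to obtain an $(\size{\setX}-1,\vec a)$-\simple\ $\setSizeK$-hypergraph in $\sums{\Z}{\eqs{\cut{\hypergraphG}{\da}}}$, and then apply Lemma~\ref{lem:simple graph raising} to raise it back to a $(\size{\setX},\vec a)$-\simple\ $(\setSizeK+1)$-hypergraph in $\sums{\Z}{\eqs{\hypergraphG}}$. One small caveat: you cite Lemma~\ref{lem:isolation_down} for the identity $\weight{setX}(\hypergraphG)=\weight{setY}(\cut{\hypergraphG}{\da})$, but that lemma is stated for \emph{nonempty} $\setY$, so when $\size{\setX}=1$ you should instead verify the equality $\weight{da}(\hypergraphG)=\weight{emptyset}(\cut{\hypergraphG}{\da})$ directly from the definitions (which is immediate).
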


\begin{proof}[Proof of Lemma~\ref{lem:induction step one}]
We show how to construct the hypergraph $\simpleGraph{setX}{\weight{setX}({\hypergraphG})}$. Let $\da\in {\setX}$.
Consider $\cut{{\hypergraphG}}{\da}$ and a set ${\setX}'={\setX}\setminus\{\da\}$. Observe, 
$\weight{setX'}({\cut{{\hypergraphG}}{\da}})=\weight{setX}({\hypergraphG})$.
\changed{So, \changed{due to}{because of}{200}}{Because of}{1087} the assumption we know that there is $\simpleGraph{setX'}{\weight{setX}(\hypergraphG)}$ a 
\changed{$(\size{{\setX}'}, \weight{setX}({\hypergraphG}))-\simple$}
{
$(\size{{\setX}'}, \weight{setX}({\hypergraphG}))$-\simple
}{1089,1093}
${\setSizeK}$-hypergraph, such  that 
$\simpleGraph{setX'}{\weight{setX}({\hypergraphG})}\in \sums{\Z}{\eqs{\cut{{\hypergraphG}}{\da}}}$. 
Now, \changed{due to}{because of}{200} Lemma~\ref{lem:simple graph raising},  for some $\da'\notin \changedd{\setVertices\changed{}{'}{NULL}}{\Vertices{\hypergraphG}}{405}$, the 
\changed{$\setSizeK+1$-}{$(\setSizeK+1)$-}{779}hypergraph 
$\simpleGraph{setX}{\weight{setX}({\hypergraphG})}=
\revcut{\simpleGraph{setX'}{\weight{setX}({\hypergraphG})}}{\da}-
\revcut{\simpleGraph{setX'}{\weight{setX}({\hypergraphG})}}{\da'}$ is \changed{$(\size{\setX}, 
\weight{setX}({\hypergraphG}))-\simple$}
{
$(\size{\setX}, 
\weight{setX}({\hypergraphG}))$-\simple
}{1089,1093}
(point 1) and $\simpleGraph{setX}{\weight{setX}({\hypergraphG})}\in \sums{\Z}{\eqs{{\hypergraphG}}}$ (point 2).
\end{proof}

\begin{lem}\label{lem:induction step two}
\changed{}{Let $\setSizeK\in \N$.}{1081}
Suppose that Theorem~\ref{th:simple_graphs} 
holds if restricted to ${\setSizeK}$-hypergraphs. Let ${\hypergraphG}$ be a \changed{$\setSizeK+1$-}{$(\setSizeK+1)$-}{779}hypergraph and 
$\weight{emptyset}({\hypergraphG})=\vec{a}.$
Then there is a \changed{$(0,\vec{a})-\simple$}{$(0,\vec{a})$-\simple}{20} 
\changed{$\setSizeK+1$-}{$(\setSizeK+1)$-}{779}hypergraph $\simpleGraph{0}{\vec{a}}$ such that $\simpleGraph{0}{\vec{a}}\in \sums{\Z}{\eqs{{\hypergraphG}}}$. 
\end{lem}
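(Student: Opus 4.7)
The plan is to mimic the construction of Lemma~\ref{lem:induction base}'s case $m=0$, scaled up to $(k+1)$-hypergraphs by using the $(1,\cdot)$-simple $(k+1)$-hypergraphs supplied by Lemma~\ref{lem:induction step one}, and then closing the gap by iterating the analogous cancellation at higher levels. Before starting, we can assume $|\setVertices'| \geq 2k+1$ by adding fresh isolated vertices; degenerate cases such as $|\setVertices'| < k+1$ force $\hypergraphG$ to have no hyperedges and hence $\vec{a} = \vec{0}$, in which case the empty $(k+1)$-hypergraph on $2k+1$ isolated vertices trivially serves as $\simpleGraph{0}{\vec{0}}$.

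For each vertex $\db \in \setVertices'$, set $\vec{b}_\db \eqdef \weight{db}(\hypergraphG)$; Lemma~\ref{lem:induction step one} with $\setX = \{\db\}$ yields a $(1, \vec{b}_\db)$-simple $(k+1)$-hypergraph $\simpleGraph{1}{\vec{b}_\db} \in \sums{\Z}{\eqs{\hypergraphG}}$ which, by vertex renaming (exploiting equivalence), we arrange to have $\setA = \{\db\}$ and $\setB = \{\da^*\}$ for one common fresh vertex $\da^*$. Setting
\[
\hypergraphG^{(1)} \eqdef \hypergraphG \;-\; \sum_{\db \in \setVertices'} \simpleGraph{1}{\vec{b}_\db}
\]
produces an element of $\sums{\Z}{\eqs{\hypergraphG}}$ with $\weight{emptyset}(\hypergraphG^{(1)}) = \vec{a}$ (every $(1,\cdot)$-simple summand has zero empty-set weight) and $\weight{db}(\hypergraphG^{(1)}) = \vec{0}$ for every $\db \in \setVertices'$, by properties $4$ and $5$ of $(1,\cdot)$-simple hypergraphs.

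The main obstacle is that for $k \geq 1$, zeroing singleton weights does not isolate vertices: the $\db \in \setVertices'$ may still appear in hyperedges with nonzero weights. My plan to overcome this is to iterate the cancellation at each level $i = 2, 3, \ldots, k+1$: for every nonempty $i$-subset $\setX \subseteq \setVertices'$ whose current weight $\vec{c}_\setX$ is nonzero, subtract an $(i, \vec{c}_\setX)$-simple $(k+1)$-hypergraph obtained from Lemma~\ref{lem:induction step one} together with Lemma~\ref{lem:simple_sum_is_simple}. Property~$6$ of $(i,\cdot)$-simple hypergraphs, namely $(i-1)$-isolation (equivalently Lemma~\ref{lem:adding simple graphs doesn't change some weights}), guarantees these subtractions preserve the zeros accumulated at lower levels. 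After level $k+1$, every nonempty subset of $\setVertices'$ has weight $\vec{0}$, so in particular no hyperedge of the final hypergraph lies entirely in $\setVertices'$; the only remaining non-isolated vertices are $\da^*$ together with auxiliary $\setC$-vertices of the subtracted simple hypergraphs. By consistently choosing those auxiliary vertices from a fixed pool of $2k$ fresh vertices (using the same equivalence-based renaming freedom as above), we keep the final non-isolated support to at most $2k+1$ vertices, matching the $(0, \vec{a})$-simple form up to equivalence. The delicate part of the full write-up will be the bookkeeping in the last step, ensuring that at every level $i$ the weight $\vec{c}_\setX$ actually belongs to the $\Z$-span of $\{\weight{setY}(\hypergraphG) : |\setY|=i\}$ so that Lemma~\ref{lem:simple_sum_is_simple} applies.
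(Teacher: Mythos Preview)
Your plan has a genuine gap at the final step. You correctly observe that after the level-$(k{+}1)$ pass every hyperedge lying entirely in $\setVertices'$ has weight $\vec 0$, but you then jump to ``the only remaining non-isolated vertices are $\da^*$ together with auxiliary $\setC$-vertices.'' That inference is false. Each $(i,\vec c_\setX)$-simple hypergraph you subtract has $\setA=\setX\subseteq\setVertices'$, and its hyperedges mix vertices of $\setA$ with vertices of the fresh sets $\setB$ and $\setC$. After the subtraction those mixed hyperedges remain with nonzero weight, so the vertices of $\setX$ --- hence potentially every vertex of $\setVertices'$ --- stay non-isolated. Zeroing $\weight{setX}$ for all nonempty $\setX\subseteq\setVertices'$ is a much weaker condition than isolating $\setVertices'$; for $k\ge 1$ the former merely forces cancellations among hyperedges, not their individual vanishing. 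Your construction therefore produces a hypergraph whose non-isolated support is $\setVertices'\cup P$ (original vertices plus fresh pool), which is larger than $2k+1$, not smaller.

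The paper's argument avoids this by never leaving the original support: it peels off one vertex $\da$ at a time, applying the induction hypothesis to the $k$-hypergraph $\cut{\hypergraphF}{\da}$ and Theorem~\ref{th:expresibility with simple} to write $\cut{\hypergraphF}{\da}$ as a $\Z$-sum of simple $k$-hypergraphs supported inside $\support{\hypergraphF}\setminus\{\da\}$. Enriching each summand once with $\da$ and once with some $\da'_i$ already in $\support{\hypergraphF}$ (via Lemma~\ref{lem:simple graph raising}) yields $\hypergraphK\in\sums{\Z}{\eqs{\hypergraphF}}$ with $\weight{emptyset}(\hypergraphK)=\vec 0$, $\support{\hypergraphK}\subseteq\support{\hypergraphF}$, and $\weight{setX}(\hypergraphK)=\weight{setX}(\hypergraphF)$ whenever $\da\in\setX$. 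Then $\hypergraphF'=\hypergraphF-\hypergraphK$ genuinely isolates $\da$ while keeping $\weight{emptyset}$ and the support bound intact. The key idea you are missing is this use of the cut/enrich pair to stay inside $\support{\hypergraphF}$; pushing weights onto fresh vertices, as your scheme does, moves in the wrong direction.
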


\begin{proof}[Proof of Lemma~\ref{lem:induction step two}]
First, observe that \changed{$(0, \vec{a})$-\simple\ \changed{$\setSizeK+1$-}{$(\setSizeK+1)$-}{779}hypergraph}{$\simpleGraph{0}{\vec{a}}$}{1099} has to satisfy 
only two properties\changed{,}{:}{1100} it has $2k+1$ vertices 
and the sum of weights of 
all hyperedges equals $\vec{a}$. 

Thus, the lemma is a consequence of a procedure (defined below) that takes a 
\changed{$\setSizeK+1$-}{$(\setSizeK+1)$-}{779}hypergraph $\hypergraphF$ with $l$ vertices and
if $l>2\setSizeK+1$ then it returns a 
\changed{$\setSizeK+1$-}{$(\setSizeK+1)$-}{779}hypergraph $\hypergraphF'$ such 
that $\weight{emptyset} ({\hypergraphF})=\weight{emptyset}({\hypergraphF'})$, $\size{\changedd{\support{\hypergraphF'}}{\Vertices{\hypergraphF'}}{1169}}<l$, 
and 
${\hypergraphF}'\in \sums{\Z}{\eqs{\hypergraphF}}$.
\changed{
The procedure applied many times starting from ${\hypergraphG}$ produces the required \simple\  hypergraph.
}
{
We start from a hypergraph $\hypergraphG$.
We apply the procedure until we produce a hypergraph with no more than than $2\setSizeK+1$ vertices. It is the required \simple\ hypergraph.
We are guaranteed to finish, as with each application the number of vertices decreases. Moreover, each application of the procedure does not change 
$\weight{emptyset}$, so
 we know that $\weight{emptyset}$ of the produced graph is equal to $\vec{a}$, as required.
}{1106}

{\bf The procedure.} \changed{Let $\hypergraphF=(\setVertices,\mu)$.}{}{NULL} \changed{Suppose $\da\in {\setVertices}$}{We pick any $\da$ in 
$\changed{{\setVertices}}{\changedd{\support{\hypergraphF}}{\Vertices{\hypergraphF}}{1169}}{NULL}$}{1107}. 
Our goal is to construct a \changed{$\setSizeK+1$-}{$(\setSizeK+1)$-}{779}hypergraph $\hypergraphF'$ such that 
\changed{}{$\hypergraphF'\in \sums{\Z}{\eqs{\hypergraphF}}$}{NULL}, \changedd{$\support{\hypergraphF'}\subseteq \support{\hypergraphF}\setminus 
\{\da\}$}{$
\Vertices{\hypergraphF'}\subseteq \Vertices{\hypergraphF}\setminus 
\{\da\}$
}{1169}\changed{}{ 
\ and $\weight{emptyset}(\hypergraphF')=\weight{emptyset}(\hypergraphF)$.}{1110}
As the first step, we construct a 
\changed{$\setSizeK+1$-}{$(\setSizeK+1)$-}{779}hypergraph \changed{$\hypergraphK\in \sums{\Z}{\eqs{\hypergraphF}}$}{$\hypergraphK$}{1131} such that
\begin{enumerate}
 \item \changed{}{$\hypergraphK\in \sums{\Z}{\eqs{\hypergraphF}}$}{1131}
 \item $\weight{emptyset}({\hypergraphK})=\vec{0}$.
 \item For every $\setX$, a set of vertices containing $\da$, 
we have that $\weight{setX}(\hypergraphF)=\weight{setX}({\hypergraphK})$. 
\item \changedd{$\support{{\hypergraphK}} \subseteq \support{\hypergraphF}$}{
$\Vertices{{\hypergraphK}} \subseteq \Vertices{\hypergraphF}$
}{1169}.
\end{enumerate}
Then, $\hypergraphF'=\hypergraphF-{\hypergraphK}$.
\changed{}{The first property guaranties that $\hypergraphF'\in \sums{\Z}{\eqs{\{\hypergraphF\}}}$.}{1113}
\changed{}{Because of the second property $\weight{emptyset}(\hypergraphF')=\weight{emptyset}(\hypergraphF)-\vec{0}=\weight{emptyset}(\hypergraphF)$.}{1113}
 The \changed{second}{third}{1131} property is responsible for deleting all hyperedges containing $\da$ and the \changed{third}{fourth}{1131} property for 
not adding any new vertex.

We define ${\hypergraphK}$ as follows.
Let $\setS$ be \changedd{a simple family for $\cut{\hypergraphF}{\da}$}{an $\{\cut{\hypergraphF}{\da}\}$-simplified family of hypergraphs}{1169}. 
\changed{By,}{By}{1115} $\hat{\setS}$ we denote \changed{a set}{the set}{1115} of all ${\setSizeK}$-hypergraphs \changed{supported by 
$\changed{{\setVertices}}{\support{\hypergraphF}}{NULL}\setminus\{\da\}$ and}{$\hypergraphS$ such that 
\changedd{$\support{\hypergraphS}=\changed{{\setVertices}}{\support{\hypergraphF}}{NULL}\setminus\{\da\}$}
{
$\Vertices{\hypergraphS}=\changed{{\setVertices}}{\Vertices{\hypergraphF}}{NULL}\setminus\{\da\}$
}{1169}
and }{759}
\changed{equivalent to elements of $\setS$}{$\hypergraphS\in \eqs{\setS}$}{1116}. 
\changed{Due to}{Because of}{200} Theorem~\ref{th:expresibility with simple} we know 
that $\cut{\hypergraphF}{\da}\in \sums{\changed{Z}{\Z}{1117}}{\eqs{\setS}}$ \changed{supported}{is supported}{1117} 
by $\changed{{\setVertices}}{\changedd{\support{\hypergraphF}}{\Vertices{\hypergraphF}}{1169}}{NULL}\setminus\{\da\}$\changed{, i.e.}{\ i.e.}{22}
\begin{equation}
\cut{\hypergraphF}{\da}=\sum_i a_i \simpleGraph{any}{}_i \text{ where } a_i \in \Z \text{ and }\simpleGraph{any}{}_i\in \hat\setS.
\end{equation}
\changed{}{Thus,}{1121}
\begin{equation}\label{eq:F description}
 \changed{\text{Thus, }}{}{1121}\revcut{\cut{\hypergraphF}{\da}}{\da}=\sum_i a_i \revcut{\simpleGraph{any}{}_i}{\da}.
\end{equation}
We define ${\hypergraphK}$ as
\begin{equation}\label{eq:Y description}
\begin{split}
{\hypergraphK} \eqdef \revcut{\cut{\hypergraphF}{\da}}{\da}-\sum_i a_i \revcut{\simpleGraph{any}{}_i}{{\da}_i'}=\sum_i a_i 
(\revcut{\simpleGraph{any}{}_i}{\da}- 
\revcut{\simpleGraph{any}{}_i}{{\da}_i'})\\
\changed{\text{ where each }{\da}_i'\in \changed{{\setVertices}}{\support{\hypergraphF}}{NULL}\setminus (\support{\simpleGraph{any}{}_i}\cup\{\da\}).}{}{1128} 
\end{split}
\end{equation}
\changed{}{ where each ${\da}_i'$ is any vertex in the set \changedd{$\changed{{\setVertices}}{
\support{\hypergraphF}}{NULL}\setminus 
(\support{\simpleGraph{any}{}_i}\cup\{\da\})$}
{$
\changed{{\setVertices}}{\Vertices{\hypergraphF}}{NULL}\setminus 
(\Vertices{\simpleGraph{any}{}_i}\cup\{\da\})$}{1169}
.}{1128} 
Observe, that \changedd{
$\size{\support{\simpleGraph{any}{}_i} \cup \{\da\}} \leq 2\setSizeK + 1< 
\size{\changed{{\setVertices}}{\support{\hypergraphF}}{NULL}}$ as 
$\size{\changed{{\setVertices}}{\support{\hypergraphF}}{NULL}}>2\setSizeK+1$}
{
$\size{\Vertices{\simpleGraph{any}{}_i} \cup \{\da\}} \leq 2\setSizeK + 1< 
\size{\changed{{\setVertices}}{\Vertices{\hypergraphF}}{NULL}}$ as 
$\size{\changed{{\setVertices}}{\Vertices{\hypergraphF}}{NULL}}>2\setSizeK+1$
}{1169}, so $\da_i'$ can be always picked.

${\hypergraphK}$ satisfies the required properties:
\begin{enumerate}
 \item ${\hypergraphK}\in \sums{\Z}{\eqs{\hypergraphF}}$. It is sufficient to show that 
$\revcut{\simpleGraph{any}{}_i}{\da}-\revcut{\simpleGraph{any}{}_i}{{\da}_i'}\in 
\sums{\Z}{\eqs{\hypergraphF}}$.
 \changed{Due to}{Because of}{200} 
the assumption that Theorem~\ref{th:simple_graphs}
holds for ${\setSizeK}$-hypergraphs we know that $\simpleGraph{any}{}_i\in \sums{\Z}{\eqs{\cut{\hypergraphF}{\da}}}$. If we combine this 
with Lemma~\ref{lem:simple graph raising} (point 2) then we get that each hypergraph 
\changed{$(\revcut{\simpleGraph{any}{}_i}{\da}-\revcut{\simpleGraph{any}{}_i}{{\da}_i'})\in \sums{\Z}{\eqs{\hypergraphF}}$.}
{
$(\revcut{\simpleGraph{any}{}_i}{\da}-\revcut{\simpleGraph{any}{}_i}{{\da}_i'})$ is in $\sums{\Z}{\eqs{\hypergraphF}}$.
}{1135}

 \item $\weight{emptyset}({\hypergraphK})=\vec{0}$. \changed{Due to}{Because of}{200} Lemma~\ref{lem:simple graph raising} (point 1) we know that 
$\revcut{\simpleGraph{any}{}_i}{\da}-\revcut{\simpleGraph{any}{}_i}{{\da}_i'}$ is
an element of \changedd{a simple family for $\hypergraphF$}{an $\hypergraphF$-simplified family}{1169} and it is 
\changed{$(l,\cdot)-\simple$}{$(l,\cdot)$-\simple}{20} for $l>0$. Thus,
$\weight{emptyset}({\revcut{\simpleGraph{any}{}_i}{\da}-\revcut{\simpleGraph{any}{}_i}{{\da}_i'}})=\vec{0}$ and $\weight{emptyset}({\hypergraphK})=\vec{0}$ as 
${\hypergraphK}$ is a sum 
of 
hypergraphs 
$\revcut{\simpleGraph{any}{}_i}{\da}-\revcut{\simpleGraph{any}{}_i}{{\da}_i'}$.

 \item $\weight{setX}({\hypergraphK})=\weight{setX}(\hypergraphF)$ for any $\setX$ such that $\da\in {\setX}$. Indeed, 
 \[
  \weight{setX}({\hypergraphK})=\sum_i \weight{setX}({\revcut{\simpleGraph{any}{}_i}{\da}})-\weight{setX}({\revcut{\simpleGraph{any}{}_i}{{\da}_i'}}) \ \text{ 
(\changed{}{by\ }{1143,1148}Equation~\ref{eq:Y 
description})}
 \]
 but
 each $\revcut{\simpleGraph{any}{}_i}{{\da}_i'}$ does not contain $\da$ thus 
 \[
 \weight{setX}({\hypergraphK})=\sum_i 
\weight{setX}(\revcut{\simpleGraph{any}{}_i}{\da})\changed{)}{}{1148}=\weight{setX}({\revcut{\cut{\hypergraphF}{\da}}{\da}}) \ \text{ 
 (\changed{}{by\ }{1143,1148}Equation~\ref{eq:F description}).}
 \]
 But, according to Definition~\ref{def:cut vertex} (of the $\cut{\cdot}{\da}$ operation) and Lemma~\ref{lem:isolation_down} we know that
for any set of vertices \changedd{${\setY}\subseteq \changed{{\setVertices}}{\support{\hypergraphF}}{NULL}\setminus\{\da\}$}
{
${\setY}\subseteq \changed{{\setVertices}}{\Vertices{\hypergraphF}}{NULL}\setminus\{\da\}$
}{1169}
: 
\begin{equation}\label{eq:remove v add v}
\weight{Ycup{da}}(\hypergraphF)=\weight{setY}({\cut{\hypergraphF}{\da}})=\weight{Ycup{da}}(\revcut{\cut{\hypergraphF}{\da}}{\da}).
\end{equation}
\changed{Thus }{Thus, }{177}for ${\setX}={\setY}\cup\{\da\}$ we get $\weight{setX}({\hypergraphK})=\weight{setX}(\hypergraphF)$.
 \item \changedd{
 $\support{{\hypergraphK}}\subseteq \changed{{\setVertices}}{\support{\hypergraphF}}{NULL}$.}{
 $\Vertices{{\hypergraphK}}\subseteq \changed{{\setVertices}}{\Vertices{\hypergraphF}}{NULL}$. 
 }{1169}Indeed, for every $\simpleGraph{any}{}_i$ 

 \changed{hold}{it 
holds that}{451} 
\changedd{$\support{\revcut{\simpleGraph{any}{}_i}{\da}}\subseteq \changed{{\setVertices}}{\support{\hypergraphF}}{NULL}$ and
$\support{\revcut{\simpleGraph{any}{}_i}{{\da}_i}}\subseteq \changed{{\setVertices}}{\support{\hypergraphF}}{NULL}$.}
{
$\Vertices{\revcut{\simpleGraph{any}{}_i}{\da}}\subseteq \changed{{\setVertices}}{\Vertices{\hypergraphF}}{NULL}$ and
$\Vertices{\revcut{\simpleGraph{any}{}_i}{{\da}_i}}\subseteq \changed{{\setVertices}}{\Vertices{\hypergraphF}}{NULL}$.
}{1169} \qedhere 
\end{enumerate}
\end{proof}

\begin{proof}[Proof of Lemma~\ref{lem:simple_graphs}]
The proof \changed{goes via}{is by}{1158} induction on ${\setSizeK}$. The base for the induction is given by Lemma~\ref{lem:induction base}.
The induction step is given by Lemmas~\ref{lem:induction step one}~and~\ref{lem:induction step two}.
\end{proof}

\section{The proof of Theorem~\ref{thm:core} itself.}\label{sec:core_proof}
\changed{Having proven theorems}{Theorems}{NULL}~\ref{th:simple_graphs}~and~\ref{th:expresibility with simple}\changed{}{\ are proven, so now,}{NULL} the proof 
of Theorem~\ref{thm:core} is easy.
Let us recall its statement. 

{\bf Theorem~\ref{thm:core}}
\emph{The following conditions are equivalent,
 for a finite set ${\setHypergraphsH}$ of hypergraphs and a hypergraph $\hypergraphH$, all of the same arity and dimension:
 \begin{enumerate}
 \item $\hypergraphH$ is a $\Z$-sum of $\eqs{{\setHypergraphsH}}$;
 \item $\hypergraphH$ is locally a $\Z$-sum of ${\setHypergraphsH}$.
 \end{enumerate}
}

\begin{proof}
\changed{From the left to the right.}{From Point $1$ to Point $2$.}{1163} 
\changed{Observe that $\weight{setX}$ is a homomorphism, for any set $\setX$. 
\changed{Thus }{Thus, }{177}if}{}{1165} ${\hypergraphH}\in \sums{\Z}{\eqs{{\setHypergraphsH}}}$ i.e.
${\hypergraphH}=\sum_i a_i {\hypergraphG}_i$ where $a_i\in \Z$ and ${\hypergraphG}_i\in\eqs{{\setHypergraphsH}},$
\changed{then}{thus}{1165} $\weight{setX}({\hypergraphH})=\sum_i a_i \weight{setX}({\hypergraphG}_i)$, for any set ${\setX}\subseteq 
\changedd{{\setVertices}}{\Vertices{\hypergraphH}}{405}$.
The above holds for any set ${\setX}\subseteq \changedd{\setVertices}{\Vertices{\hypergraphH}}{405}$ thus ${\hypergraphH}$ is locally a \changed{$\Z$ 
sum}{$\Z$-sum}{1166} of ${\setHypergraphsH}$.

\changed{From the right to the left.}{From Point $2$ to Point $1$.}{1163} If ${\hypergraphH}$ is locally a $\Z$-sum of ${\setHypergraphsH}$ then for any set 
${\setX}\subseteq \changedd{{\setVertices}}{\Vertices{\hypergraphH}}{405}$ there is a 
${\setSizeK}$-hypergraph ${\hypergraphG}_{\setX}$ such 
that
$\weight{setX}({\hypergraphH})=\weight{setX}({{\hypergraphG}_{\setX}})$ and ${\hypergraphG}_{\setX}\in \sums{\Z}{\eqs{{\setHypergraphsH}}}.$ 
\changed{Due to}{Because of}{200}
Theorem~\ref{th:simple_graphs} we conclude that there is \changedd{a family simple for \changed{${\hypergraphH}$}
{${\{\hypergraphH\}}$}{NULL}}{an $\hypergraphH$-simplified family}{1169} such that its elements are in 
$\sums{\Z}{\eqs{{\setHypergraphsH}}}$. 
Now\changed{}{,}{NULL} we can apply Theorem~\ref{th:expresibility with simple} together with~Lemma~\ref{lem:sumofupto} and conclude that ${\hypergraphH}\in 
\sums{\Z}{\eqs{{\setHypergraphsH}}}$. 
\end{proof}

\section{Proof of Theorem~\ref{thm:problem1}.}\label{sec:red}

\changed{}{Before we prove the theorem, let us recall its statement.
\\{\bf Theorem~\ref{thm:problem1}}
\emph{For every fixed arity ${\setSizeK}\in \N$, the $\N$-solvability problem is in \NETIME.}\\
}{1172}

We prove Theorem~\ref{thm:problem1} by showing that in \NETIME\ it is possible to reduce $\N$-solvability to $\Z$-solvability.
The produced instance of $\Z$-solvability is of exponential size, and can be solved in
\ETIME\ \changed{due to}{because of}{200}  Theorem~\ref{thm:problem2}. 

Before we start, we need to recall some facts about solution of systems of linear equations.

\section*{Hybrid linear sets.}

\begin{defi}\label{def:semilinera_set}
A set of vectors is called \emph{hybrid linear} if it is the smallest set that 
includes a finite set $\setB$, called a base, and that is closed under the addition of elements
from a finite set $\setP$, called periods.
\end{defi}
\begin{thmC}[\cite{Pottier}] 
Let $M$ be a \changed{$\dimension\times m$ -matrix}{$\dimension\times m$-matrix}{1182} with integer entries and $\vec{y}\in \ktuple{\dimension}{\Z}$.
The set of nonnegative integer solutions of linear equations 
\[
M \cdot \vec{x}=\vec{y}
\]
is a hybrid linear set. 
The \emph{base} $\setB$ and \emph{periods} \emph{$\setP$} are as follows:
\begin{itemize}
 \item {Base:} it is the set of minimal, in the pointwise sense, solutions \changed{}{of}{1186}
 $M\cdot \vec{x}=\vec{y}.$
 \item {Periods:} is the set of minimal nontrivial solutions of 
\begin{equation}\label{eq:Pottier1}
	M\cdot \vec{x}=\vec{0}.
\end{equation}
\end{itemize}
\end{thmC}

In the paper~\cite{Pottier}\changed{}{,}{1190} Pottier provides bounds on the norms of
$\setB$ and $\setP$. \changed{}{We present these bounds next.}{1190}

For a vector $\vec{v}\in \ktuple{m}{\Z} $ we introduce two norms: the infinity norm $\inorm{\vec{v}}$ and 
the norm one $\norm{\vec{v}}{1}$, defined as follows:
\begin{itemize}
 \item \changed{$\inorm{\vec{v}}\eqdef max (\setof{\abs{\vec{v}[i]}}{ \text{ for }1\leq i \leq m})$,}{
 $\inorm{\vec{v}}\eqdef max \setof{\abs{\vec{v}[i]}}{ \text{ for }1\leq i \leq m}$.}{1193}
 \item $\norm{\vec{v}}{1}\eqdef \sum_{i=1}^m \abs{\vec{v}[i]}$.
\end{itemize}
\changed{}{Let $\setU$ be a finite family of data vectors. We extend definitions of norms to families of data vectors.}{1196} 
Let \changed{$\norm{{\changed{\setHypergraphsH}{\setU}{1196}}}{\infty}\eqdef max(\setof{\inorm{\changed{}{\vec{v}}}}{ \changed{\vec{h}}{\vec{v}}{1196}\in 
{\changed{\setHypergraphsH}{\setU}{1196}}})$}
{$\norm{{\changed{\setHypergraphsH}{\setU}{1196}}}{\infty}\eqdef max\setof{\inorm{\changed{\vec{h}}{\vec{v}}{1196}}}{ \changed{\vec{h}}{\vec{v}}{1196}\in 
{\changed{\setHypergraphsH}{\setU}{1196}}}$}{1193}
and 
\changed{$\norm{{\changed{\setHypergraphsH}{\setU}{1196}}}{1,\infty}\eqdef max(\setof{\norm{\changed{\vec{h}}{\vec{v}}{1196}}{1}}{ 
\changed{\vec{h}}{\vec{v}}{1196}\in {\changed{\setHypergraphsH}{\setU}{1196}}})$}
{
$\norm{{\changed{\setHypergraphsH}{\setU}{1196}}}{1,\infty}\eqdef max\setof{\norm{\changed{\vec{h}}{\vec{v}}{1196}}{1}}{ \changed{\vec{h}}{\vec{v}}{1196}\in 
{\changed{\setHypergraphsH}{\setU}{1196}}}$
}{1193}.
Also, for a $d\times m$-matrix $M$ we introduce 
$\norm{M}{1,\infty}\eqdef \norm{\setM}{1,\infty}$ where $\setM$ is the set of columns of the matrix $M$.

\begin{lemC}[\cite{Pottier}]\label{lem:Pottier} 
Let $ M \cdot \vec{x} = \vec{y}$ be a system of linear equations such that $M$ is a $d\times m$-matrix.
Then the set of solutions in $\ktuple{m}{\N}$ is the hybrid linear set \changed{and is }{}{1200}described by 
the base $\setB$ and the set of periods $\setP$ such that:
\begin{itemize}
    \item $\setB,\setP\subset \ktuple{m}{\N}$,
\item $\size{\setP}\leq d\cdot m$
\item $\inorm{\setB}, \inorm{\setP}\leq (\norm{M}{1,\infty}+\inorm{\vec{y}}+2)^{d+m}.$
\end{itemize}
\end{lemC}

\begin{defi}\label{def:reversibility_vector}
\changed{}{Let $\setU$ be a family of vectors.}{1205} A vector $\vec{x}\in \setU$ is \emph{reversible} in a family of vectors $\setU$ if $-\vec{x}\in 
\sums{\N}{\setU}$.
\changed{Vectors that are not reversible we call}{We call vectors that are not reversible}{237} \emph{\changed{non-reversible}{nonreversible}{107}}.
\end{defi}

\changed{Thus }{Thus, }{177}from Equation~\ref{eq:Pottier1} and Lemma~\ref{lem:Pottier} we conclude\changed{.}{:}{1207}
\begin{lem}\label{lem:bound-non-rev}
Let $\setU_1$ and $\setU_2$ be two finite 
sets of vectors in $\ktuple{\dimension}{\Z}$ 
such that every vector in $\setU_2$ is \changed{non-reversible}{nonreversible}{107} in $\setU= \setU_1\cup\setU_2$.
Suppose $\vec{y}\in \sums{\N}{\setU}$. \changed{}{For any solution:}{1218}
\begin{equation}\label{eq:reversability}
\vec{y}=\sum_{\vec{v}\in\setU_1} 
a_{\vec{v}} 
\cdot \vec{v}+ \sum_{\vec{w}\in\setU_2} b_{\vec{w}} \cdot \vec{w}
\end{equation}
where $a_{\vec{v}}, b_{\vec{w}}\in \N$\changed{. 
Then}{, it holds that}{NULL}
\[
\sum_{\vec{w}\in\setU_2} b_{\vec{w}}\leq 
\size{\setU_2}\cdot \
(\norm{\setU}{1,\infty}+\inorm{\vec{y}}+2)^{d+\size{
\setU}}
\]
i.e. \changed{}{$\sum_{\vec{w}\in\setU_2} b_{\vec{w}}$}{1217} is bounded exponentially.
\end{lem}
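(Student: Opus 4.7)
The plan is to recast the equation~\eqref{eq:reversability} as a matrix equation and then apply the hybrid-linear-set structure guaranteed by Lemma~\ref{lem:Pottier}. Enumerate $\setU = \{\vec{u}_1, \ldots, \vec{u}_m\}$ where $m = \size{\setU}$, and let $M$ be the $d \times m$ matrix whose columns are the $\vec{u}_i$. Then the nonnegative integer solutions of $M\vec{x} = \vec{y}$ are in bijection with the decompositions~\eqref{eq:reversability}, and $\norm{M}{1,\infty} = \norm{\setU}{1,\infty}$. By Lemma~\ref{lem:Pottier}, the solution set is hybrid linear with a base $\setB$ and set of periods $\setP$ satisfying $\inorm{\setB} \leq (\norm{\setU}{1,\infty} + \inorm{\vec{y}} + 2)^{d+m}$, so every solution has the form $\vec{x} = \vec{b} + \sum_{\vec{p} \in \setP} c_{\vec{p}} \vec{p}$ with $\vec{b} \in \setB$ and $c_{\vec{p}} \in \N$.

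The key step, and the only place nonreversibility is used, is to show that each period vanishes on coordinates corresponding to $\setU_2$. Suppose for contradiction that some period $\vec{p} \in \setP$ satisfies $\vec{p}[i] \geq 1$ for an index $i$ with $\vec{u}_i = \vec{w} \in \setU_2$. From $M\vec{p} = \vec{0}$ we obtain
\[
-\vec{p}[i]\cdot \vec{w} \;=\; \sum_{j \neq i} \vec{p}[j]\cdot \vec{u}_j,
\]
which already lies in $\sums{\N}{\setU}$. Adding the nonnegative combination $(\vec{p}[i]-1)\cdot \vec{w}$ on both sides yields $-\vec{w} = \sum_{j\neq i}\vec{p}[j]\cdot\vec{u}_j + (\vec{p}[i]-1)\cdot\vec{w} \in \sums{\N}{\setU}$, contradicting the assumption that $\vec{w}$ is nonreversible in $\setU$. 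Hence $\vec{p}[i] = 0$ for every period $\vec{p}$ and every index $i$ pointing into $\setU_2$.

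As a consequence, for any solution $\vec{x}$, the components of $\vec{x}$ indexed by $\setU_2$ coincide with those of the base element $\vec{b}$ alone: the periods contribute nothing there. Therefore
\[
\sum_{\vec{w}\in\setU_2} b_{\vec{w}} \;=\; \sum_{i\,:\,\vec{u}_i \in \setU_2} \vec{b}[i] \;\leq\; \size{\setU_2}\cdot \inorm{\setB} \;\leq\; \size{\setU_2}\cdot (\norm{\setU}{1,\infty} + \inorm{\vec{y}} + 2)^{d+\size{\setU}},
\]
which is the claimed bound. The main conceptual point — and essentially the only nontrivial step — is the period-vanishing argument above; once that is in place, the bound follows immediately from Pottier's estimate on $\inorm{\setB}$. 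I do not anticipate any technical obstacle beyond checking that the chain of equalities derived from $M\vec{p}=\vec{0}$ really produces a nonnegative expression of $-\vec{w}$ in terms of elements of $\setU$, which it does because all $\vec{p}[j]$ are nonnegative integers and the residual term $(\vec{p}[i]-1)\cdot\vec{w}$ is a legitimate nonnegative multiple of $\vec{w}\in\setU$.
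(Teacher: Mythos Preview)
Your proof is correct and follows essentially the same approach as the paper: cast the equation as $M\vec{x}=\vec{y}$, invoke Lemma~\ref{lem:Pottier} for the hybrid-linear structure, use nonreversibility to conclude that every period vanishes on the $\setU_2$-coordinates, and bound the $\setU_2$-part of any solution by $\size{\setU_2}\cdot\inorm{\setB}$. The paper's version is terser on the period-vanishing step (it just asserts that reversibility forces all $b'_{\vec w}=0$ in the homogeneous equation), whereas you spell out explicitly how $-\vec{w}$ is expressed as an $\N$-combination; this is a correct elaboration of the same idea.
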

\begin{proof}[Proof of Lemma~\ref{lem:bound-non-rev}.]
The set of solutions of Equation~\ref{eq:reversability} is hybrid linear, and given by some $\setB,\setP \subset \ktuple{\size{\setU_1\cup 
{\setU}_2}}{\N}$.
Every period is a solution of the equation
\begin{equation}\label{eq:lemma5}
	\vec{0}=\sum_{\vec{v}\in\setU_1} 
a'_{\vec{v}} 
\cdot \vec{v}+ \sum_{\vec{w}\in\setU_2} b'_{\vec{w}} \cdot \vec{w}.   
\end{equation}
From the definition of reversibility we get that in any solution of Equation~\ref{eq:lemma5} all 
$b'_{\vec{w}}$ are equal to $0$. 
\changed{Thus }{Thus, }{177}in every solution of
\changed{the Equation~\ref{eq:reversability}}{Equation~\ref{eq:reversability},}{1124} 
the sum $\sum_{\vec{w}\in\setU_2} b_{\vec{w}}$ 
is bounded by $\size{\setU_2}\cdot \inorm{\setB}\leq \size{\setU_2}\cdot \
(\norm{\setU}{1,\infty}+\inorm{\vec{y}}+2)^{d+\size{
\setU}}$, where the last inequality is given by Lemma~\ref{lem:Pottier}.
\end{proof}

\section*{\changed{Proof of Theorem~\ref{thm:problem1}.}
{The main part of the proof of Theorem~\ref{thm:problem1}.}{NULL}}

\begin{figure*}[t]
\begin{tikzpicture}\label{fig:diagram_of_the_proof}
[place/.style={circle,draw=\mycolorOne!50,fill=\mycolorOne!20,thick,inner sep=0pt,minimum size=4mm},
stplace/.style={rectangle,draw=\mycolorDwa!20,fill=\mycolorDwa!20,thick,inner sep=0pt,minimum size=4mm}
]
\node[]		(P1) 	at (0,0)	{$\vec{v}\in \sums {\N}{\perm{\setInput}}$?};
\node[] 	(P2)	at (8.5,0)	{$h(\vec{v})\in \sums {\N}{h(\perm{\setInput}})$?};

\node[] 	(E1)	at (0,-3)	{\begin{minipage}{5cm}
        	    	         	  $\vec{v}=\sum_{\vec{v'}} a_{\vec{v'}} \vec{v'} + \sum_{\vec{w}} b_{\vec{w}} \vec{w}  $ where\\ 
        	    	         	  $a_{\vec{v'}}\in \N, b_{\vec{w}} \in \N$\\
        	    	         	  $\vec{v'}$ are data vectors reversible in $\perm{\setInput}$\\
        	    	         	  $\vec{w}$ are data vectors \changed{non-reversible}{nonreversible}{107} in $\perm{\setInput}$
        	    	         	 \end{minipage}
};
\node[] 	(E2)	at (8.5,-3) 	{\begin{minipage}{6cm}
        	    	         	  $h(\vec{v})=\sum_{\vec{v'}} a_{\vec{v'}} h(\vec{v'}) + \sum_{\vec{w}} b_{\vec{w}} h(\vec{w}) $\\ where 
        	    	         	  $a_{\vec{v'}}\in \N, b_{\vec{w}} \in \N$\\
        	    	         	  $h(\vec{v'})$ are data vectors reversible in $h(\perm{\setInput})$\\
        	    	         	  $h(\vec{w})$ are data vectors \changed{non-reversible}{nonreversible}{107} in $h(\perm{\setInput})$
        	    	         	 \end{minipage}
        	    	         	 };

\node[] 	(B1)at (0,-6)	{\begin{minipage}{3cm}
				    $\sum_{\vec{w}} b_{\vec{w}}$ is bounded \\
				    exponentially
				    \end{minipage}};
\node[]		(B2)	at (8.5,-6) 	{Lemma~\ref{lem:Pottier}: exponential bound on $\sum_{\vec{w}} b_{\vec{w}}$};

\draw[->] (P1) to node [above] {$h$} (P2) ;
\draw[-] (P1) to node [right] {a solution} (E1); 
\draw[->] (E1) to node [above, near end] {\begin{minipage}{3cm}$h$ preserves\\ reversibility \end{minipage} } (E2) ;

\draw [-] (P2) to node [ right]{a solution} (E2);
\draw [-] (E2) to node [right] {the bound} (B2);
\draw[->] (B2) to node [above] {transfer back} (B1) ;
\end{tikzpicture}  
  \caption{The diagram of the proof of Theorem~\ref{thm:problem1}.}\label{fig:5}
\end{figure*}

The idea of this reduction is as follows. We use $\setInput$ for a finite family of data vectors in
$\kset{\setSizeK}{\setD}\xrightarrow{}\ktuple{\dimension}{\Z}$ and $\vec{v}$ for a target vector.
Similarly to Definition~\ref{def:reversibility_vector} 
reversibility can be defined for data vectors. Namely, a data vector $\vec{y}$ is reversible in 
a set of data vectors $\setInput$ if $-\vec{y}\in \sums{\N}{\perm{\setInput}}$.

\changed{The schema of the proof may be tracked}{The schema of the proof may be followed}{1248} on the diagram in 
\changed{Figure\ref{fig:5}}{Figure~\ref{fig:5}}{1248}.
Next, we define a homomorphism from data vectors $\kset{\setSizeK}{\setD} \rightarrow \ktuple{\dimension}{\Z}$ to vectors in $\ktuple{\dimension}{\Z}$. 
The homomorphism is defined in such a way that it has \changed{ an additional property, namely,}{a property that}{1250} for every data vector $\vec{y}\in 
\setInput$ \changed{its image 
is reversible in the image of the set $\setInput$}{it holds that $h(\vec{y})$ is reversible in $h(\setInput)$}{1251}
if\changed{}{,}{325} and only if\changed{}{,}{325} $\vec{y}$ is reversible in $\setInput$. 
Now, the given instance \changed{}{of\ }{1252}$\N$-solvability problem for data vectors 
$\kset{\setSizeK}{\setD}\xrightarrow{} \ktuple{\dimension}{\Z}$ 
(called the first problem) is transformed, via the homomorphism, 
to the $\N$-solvability for vectors in $\ktuple{\dimension}{\Z}$\changed{, i.e.}{\ i.e.}{22} system of linear equations (called the second problem).
The homomorphic image of any solution to the first problem is also a solution \changed{of}{to}{1255} the second problem.
 \changed{Due to}{Because of}{200} Lemma~\ref{lem:Pottier}, 
 there is a bound on the \changed{usage}{number of appearances}{1256,1258} of \changed{non-reversible}{nonreversible}{107} vectors in any solution 
\changed{of}{to}{1255} the second problem. 
 This bound can be then transferred back through the homomorphism. This provides us with the bound 
 on the \changed{usage}{number of appearances}{1256,1258} of \changed{non-reversible}{nonreversible}{107} data vectors in any solution \changed{of}{to}{1255} 
the first problem.
With the bound, we can guess the \changed{non-reversible}{nonreversible}{107} part of the solution \changed{for}{to}{1255} the first problem. 
What remains is to verify that our guess is correct.
To do this we have to show that the target minus the guessed \changed{non-reversible}{nonreversible}{107} part 
can be expressed using the reversible data vectors. 
But this is exactly $\Z$-solvability.
Indeed, if data vectors can be reversed then we can subtract them freely.

\begin{defi}
Let $\vec{y}\colon \kset{\setSizeK}{\setD}\xrightarrow{} \ktuple{\dimension}{\Z}$ be a data vector. The \emph{\changed{ }{}{1264}data projection} of $\vec{y}$ 
is defined as 
$\Proj(\vec{y}) 
	\eqdef \sum_{x\in \kset{\setSizeK}{\setD}} \vec{y}(x)$. It is \changed{well defined}{well-defined}{1266} as $\vec{y}$ is almost everywhere equal to 
$\vec{0}$.
For a set of data vectors $\setInput$ we define 
\changed{$\Proj(\setInput)\eqdef\bigcup_{\vec{y}\in \setInput}\{\Proj(\vec{y})\}$}
{
    $\Proj(\setInput)\eqdef\setof{\Proj(\vec{y})}{\vec{y}\in \setInput}$
}
{1268}
to be the \emph{data projection} of $\setInput$.\changed{\ }{}{1268}\footnote{\changed{Data projection $\Proj$ is the same thing as 
$\weight{emptyset}({\placeHolder})$ but in the \changed{hypergraphs domain}{hypergraph domain}{1271}.}{
Data projection $\Proj$ is equal to a transformation that
takes a data vector and transforms it into a hypergraph (the transformation described in Section~\ref{sec:tohyp}) composed with
$\weight{emptyset}({\placeHolder})$.
}{1271}} 
\end{defi}
\begin{prop}
A data projection is a homomorphism from the group of data vectors with addition to $\ktuple{\dimension}{\Z}$.\qed
\end{prop}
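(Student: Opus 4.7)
The claim is essentially an immediate unfolding of the definitions, so the plan is short. To show that $\Proj$ is a homomorphism of abelian groups from data vectors to $\ktuple{\dimension}{\Z}$, the only real content is verifying additivity: $\Proj(\vec{y} + \vec{y'}) = \Proj(\vec{y}) + \Proj(\vec{y'})$ for any two data vectors $\vec{y}, \vec{y'} \colon \kset{\setSizeK}{\setD} \to \ktuple{\dimension}{\Z}$.

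First I would unfold the definition of $\Proj$ on the left-hand side to get $\sum_{x \in \kset{\setSizeK}{\setD}}(\vec{y} + \vec{y'})(x)$. Then I would apply the pointwise definition of addition of data vectors (recalled in Section~\ref{sec:eq}), which gives $(\vec{y} + \vec{y'})(x) = \vec{y}(x) + \vec{y'}(x)$. Finally, since each data vector is zero almost everywhere, both individual sums $\sum_x \vec{y}(x)$ and $\sum_x \vec{y'}(x)$ have only finitely many nonzero terms, so commutativity and associativity of addition in $\ktuple{\dimension}{\Z}$ allow me to split the combined sum, obtaining $\Proj(\vec{y}) + \Proj(\vec{y'})$.

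For completeness I would also note that $\Proj$ sends the zero data vector to $\vec{0} \in \ktuple{\dimension}{\Z}$ (immediate from the definition) and that $\Proj(-\vec{y}) = -\Proj(\vec{y})$ (which follows from additivity, or directly by the same pointwise argument applied to scalar multiplication by $-1$). There is no real obstacle here; the only thing to be careful about is the finiteness of the support of data vectors, which is what makes the sum $\sum_{x \in \kset{\setSizeK}{\setD}} \vec{y}(x)$ a well-defined element of $\ktuple{\dimension}{\Z}$ despite $\kset{\setSizeK}{\setD}$ being infinite.
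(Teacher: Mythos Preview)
Your proposal is correct. The paper does not actually prove this proposition; it simply marks it with \qed as self-evident, and your unfolding of the pointwise definition of addition together with the finite-support observation is exactly the routine verification the authors are leaving implicit.
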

\begin{defi}
	For a data vector \changed{$\vec{y}$}{$\vec{y}\in \kset{\setSizeK}{\setD}\to{}{\ktuple{\dimension}{\Z}}$}{NULL} its support denoted 
$\support{\vec{y}}\subset \setD$ is the set $\setof{\da}{\changed{\exists_{x\in \kset{\setSizeK}{\setD}}}{\exists {x\in \kset{\setSizeK}{\setD}}}{1274} \text{ 
such 
that }\da \in x \text{ and }\vec{y}(x)\neq 
\vec{0}}$. We say that the vector $\vec{y}$ is supported by a set $\setSupport\subset\setD$ if $\support{\vec{y}}\subseteq \setSupport$. 
This notion of the support comes from the concept that any data permutation which is identity on the support of a data vector does not modify the data vector 
itself.
The definition may 
be lifted to \changed{the set}{sets $\setInput$}{1275} of data vectors $\support{\setInput}\eqdef \bigcup_{\vec{y}\in\setInput} \support{\vec{y}}.$
\end{defi}
\changed{}{In the following definitions and lemmas we use notation introduced in the definition of the $\Numbers$-solvability problem (Section~\ref{sec:eq}), so $\vec{v}$ is a single data vector and 
$\setInput$ is a finite set of data vectors.}{1278}
\begin{defi}\label{def:smooth}
Let $\setSupport\eqdef \support{\setInput}\cup \support{\vec{v}}$ \changed{(as in the formulation of the $\N$-solvability problem)}{}{1278}. 
By $\Pi$ we denote \changed{a}{the}{1279} set of all data permutations \changed{that are identity outside of $\setSupport$ (all permutations of $\setSupport$)}{$\pi$ such that for any $\da\not\in\setSupport$ we have $\pi(\da)=\da$}{1279}. 
For a data vector $\vec{y}\in \setInput\cup\{\vec{v}\}$ we define the 
\emph{smoothing} operator, $\smooth{\vec{y}}\eqdef \sum_{\pi\in \Pi} \vec{y}\circ \pi$.
\end{defi}
\begin{exa}
\changed{}{Suppose $\vec{v}$ is a data vector $\setD\to{}\Z^{2}$ as follows: $\vec{\da}=[1,2], \vec{\db}=[1,3],$ and $\vec{\dc}=[0,0]$ for any $\dc\in \setD 
\setminus\{\da,\db\}$. Let $\setInput=\{\vec{v}\}$. Then $\setSupport=\{\da,\db\}$ and $\smooth{\vec{v}}(\da)=\smooth{\vec{v}}(\db)=[2,5]$
and $\smooth{\vec{v}}(\dc)=[0,0]$ for any $\dc\in \setD\setminus\{\da,\db\}$.
}{1282}
\end{exa}
\begin{lem}\label{lem:smoothing}
Let $\setSizeK, \setSupport$ be as defined above.
There is \changed{the}{a}{1283} constant $c$ depending on ${\setSizeK}$ and $\size{\setSupport}$, such that for any 
$x\in \kset{\setSizeK}{\setSupport}$ and any data vector $\vec{y}\in \setInput\cup\{\vec{v}\}$ the 
equality
$\smooth{\vec{y}}(x)=\Proj(\vec{y})\cdot c$ holds.
\end{lem}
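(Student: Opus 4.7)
The plan is to unfold the definitions and exploit the transitivity of the action of $\Pi$ on $\setSizeK$-subsets of $\setSupport$. First, I would rewrite the left-hand side using the definition of smoothing and the definition of $\vec{y}\circ\pi$ on $\setSizeK$-sets, obtaining
\[
\smooth{\vec{y}}(x) \;=\; \sum_{\pi\in\Pi} (\vec{y}\circ\pi)(x) \;=\; \sum_{\pi\in\Pi} \vec{y}(\pi(x)).
\]
Since $x\in \kset{\setSizeK}{\setSupport}$ and every $\pi\in\Pi$ is a permutation of $\setSupport$, the image $\pi(x)$ again lies in $\kset{\setSizeK}{\setSupport}$, so I can reorganise the sum by grouping permutations according to the value of $\pi(x)$.

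Next, I would use the fact that $\Pi$ is the full symmetric group on $\setSupport$, and hence acts transitively on $\kset{\setSizeK}{\setSupport}$. Concretely, for any $z\in\kset{\setSizeK}{\setSupport}$ the set $\{\pi\in\Pi : \pi(x)=z\}$ has exactly $\setSizeK!\cdot(\size{\setSupport}-\setSizeK)!$ elements — choose one of the $\setSizeK!$ bijections $x\to z$ and extend it by one of the $(\size{\setSupport}-\setSizeK)!$ bijections $\setSupport\setminus x\to \setSupport\setminus z$. This count depends only on $\setSizeK$ and $\size{\setSupport}$ and, crucially, is independent of the choice of $z$. Substituting gives
\[
\smooth{\vec{y}}(x) \;=\; \setSizeK!\cdot(\size{\setSupport}-\setSizeK)!\cdot\sum_{z\in\kset{\setSizeK}{\setSupport}} \vec{y}(z).
\]

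Finally, because $\vec{y}\in\setInput\cup\{\vec{v}\}$ satisfies $\support{\vec{y}}\subseteq\setSupport$, any $z\in\kset{\setSizeK}{\setD}$ with $\vec{y}(z)\neq\vec{0}$ is a subset of $\setSupport$ and thus already appears in the sum above. Consequently the inner sum equals $\Proj(\vec{y})=\sum_{z\in\kset{\setSizeK}{\setD}}\vec{y}(z)$, and setting $c \eqdef \setSizeK!\cdot(\size{\setSupport}-\setSizeK)!$ yields the claimed identity. There is no real obstacle here: the argument is a direct orbit-counting calculation, and the only delicate points are observing that the orbit of $x$ under $\Pi$ covers all of $\kset{\setSizeK}{\setSupport}$ (which is why the constant $c$ does not depend on $x$) and that the support condition on $\vec{y}$ lets one replace the sum over $\kset{\setSizeK}{\setSupport}$ by the full data projection.
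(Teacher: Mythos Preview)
Your proof is correct and is precisely the symmetry argument the paper alludes to, just made explicit: the paper's entire proof is ``it holds because of symmetry of the formula defining the smooth operator; we do not need to calculate the value of $c$'', whereas you carry out the orbit-counting and obtain the concrete constant $c=\setSizeK!\,(\size{\setSupport}-\setSizeK)!$.
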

\begin{proof}
\changed{}{It holds}{NULL} \changed{Due to}{because of}{200} symmetry\changed{}{\ of the formula defining the smooth operator}{1286}.
\changed{We do not need the explicit formula}{We do not need to calculate the value of $c$}{1286}.
\end{proof}

\begin{lem}\label{lem:reversibility-characterisation}
 $\vec{y}\in \setInput$ is reversible in 
$\perm{\setInput}$ if\changed{}{,}{325} and only if\changed{}{,}{325} $\Proj(\vec{y})$ is reversible in $\Proj(\setInput)$.
\end{lem}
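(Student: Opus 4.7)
The plan is to prove the two implications separately, with the forward direction being nearly immediate from the fact that $\Proj$ is a group homomorphism, and the reverse direction requiring the smoothing operator to transfer a $\ktuple{\dimension}{\Z}$-level decomposition back up to the level of data vectors.

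For the ``only if'' direction, assume $\vec{y}$ is reversible in $\perm{\setInput}$, so $-\vec{y} = \sum_i n_i \cdot (\vec{y_i}\circ\pi_i)$ with $n_i\in\N$, $\vec{y_i}\in\setInput$ and data permutations $\pi_i$. Since $\Proj$ is a homomorphism and $\Proj(\vec{y_i}\circ\pi_i)=\Proj(\vec{y_i})$ (permuting the indexing set does not change the total sum over all $x\in\kset{\setSizeK}{\setD}$), applying $\Proj$ yields $-\Proj(\vec{y}) = \sum_i n_i\cdot \Proj(\vec{y_i}) \in \sums{\N}{\Proj(\setInput)}$, as required.

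For the ``if'' direction, assume $-\Proj(\vec{y}) = \sum_i n_i\cdot \Proj(\vec{y_i})$ with $n_i\in\N$ and $\vec{y_i}\in\setInput$. Let $c$ be the constant supplied by Lemma~\ref{lem:smoothing}. Multiplying the identity above by $c$ and applying Lemma~\ref{lem:smoothing} coordinatewise gives
\[
-\smooth{\vec{y}} \;=\; \sum_i n_i\cdot \smooth{\vec{y_i}}.
\]
By the definition of the smoothing operator, each $\smooth{\vec{y_i}} = \sum_{\pi\in\Pi} \vec{y_i}\circ\pi$ lies in $\sums{\N}{\perm{\setInput}}$; hence so does $-\smooth{\vec{y}}$. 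Next observe that $\smooth{\vec{y}} = \vec{y} + \sum_{\pi\in\Pi,\,\pi\neq \mathrm{id}} \vec{y}\circ\pi$, so
\[
-\vec{y} \;=\; -\smooth{\vec{y}} \;+\; \sum_{\pi\in\Pi,\,\pi\neq \mathrm{id}} \vec{y}\circ\pi.
\]
Every summand on the right lies in $\sums{\N}{\perm{\setInput}}$ (the first by the preceding paragraph, the others trivially), so $-\vec{y}\in \sums{\N}{\perm{\setInput}}$, witnessing that $\vec{y}$ is reversible.

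The only nontrivial ingredient is the use of the smoothing operator together with Lemma~\ref{lem:smoothing}: it is what allows us to pull a scalar-level reversibility witness back up to the data-vector level without losing nonnegativity of the coefficients. The identity-term trick at the end (writing $\smooth{\vec{y}}$ as $\vec{y}$ plus other permutations of $\vec{y}$) is the only step where we need that $\vec{y}$ itself is recovered, rather than just its symmetrisation.
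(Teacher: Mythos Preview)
Your proof is correct and follows essentially the same route as the paper: the forward direction via the homomorphism property of $\Proj$ (together with its invariance under data permutations), and the backward direction by multiplying by the constant $c$ from Lemma~\ref{lem:smoothing} to lift the $\ktuple{\dimension}{\Z}$-identity to $-\smooth{\vec{y}} = \sum_i n_i\,\smooth{\vec{y_i}}$, then peeling off the identity term of $\smooth{\vec{y}}$ to isolate $-\vec{y}$. Your explicit remark that $\Proj(\vec{y_i}\circ\pi_i)=\Proj(\vec{y_i})$ is in fact a useful clarification the paper leaves implicit.
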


\begin{proof}
$\implies$
Trivial as $\Proj$ is a homomorphism.

\noindent $\impliedby$
Let $\setSupport$ and $\Pi$ be as in Definition~\ref{def:smooth}.
We list elements of $\setInput=\{\vec{v_1}\ldots \vec{v_m}\}$.
By $M$ we denote the $1\times m$-matrix with entries in $\ktuple{\dimension}{\Z}$
such that \changed{$X[1][j]=\Proj(\vec{w})$}{$M[1][j]=\Proj(\vec{v_j})$}{1290}.

From the assumptions we know that there is a 
vector $\vec{x}\in \ktuple{m}{\N}$ such that
\begin{align}\label{eq:pattern1}
M\cdot \vec{x}= -\Proj(\vec{y}).
\end{align}
 Let $c$ be the constant from Lemma~\ref{lem:smoothing}.
 We multiply both sides of Equation~\ref{eq:pattern1} by $c$ getting 
\begin{align*}
c\cdot M\cdot \vec{x}= -c\cdot \Proj(\vec{y}).
\end{align*}
But from this and Lemma~\ref{lem:smoothing} we conclude that for any $z\in \kset{\setSizeK}{\setSupport}$
\begin{align*}
\left( \sum_{\vec{v_i}\in{\setInput}}(\smooth{\vec{v_i}}\cdot \vec{x}[i]) \right)(z) = -\smooth{\vec{y}}(z)
\end{align*}
\changed{and what follows}{and it follows that}{1301} 
\begin{align*}\label{eq:pattern3}
\changed{\left(\sum_{\vec{v_i}\in{\setInput}} (\smooth{\vec{v_i}}\cdot \vec{x}[i]) \right) = -\smooth{\vec{y}}.}
{\sum_{\vec{v_i}\in{\setInput}} (\smooth{\vec{v_i}}\cdot \vec{x}[i])  = -\smooth{\vec{y}}.
}{1303}
\end{align*}
Using the definition of smoothing we rewrite further
\begin{align*}
\sum_{\vec{v_i}\in\setInput}(\smooth{\vec{v_i}}\cdot \vec{x}[i]) +
\sum_{\pi\in \Pi\setminus\{identity\}} \vec{y}\circ \pi
= -\vec{y}.
\end{align*}
As $\vec{y}\in \setInput$ we see that we expressed $-\vec{y}$ as a sum of elements in 
$\perm{\setInput}.$
\end{proof}

\begin{cor}\label{cor:fast test}
\changed{Non-reversible in $\perm{\setInput}$ elements of $\setInput$ can be identified in $\PTIME$ (independently from $\dimension$).}
{
For a given vector $\vec{y}$ in $\setInput$, in \PTIME\ we can answer if $\vec{y}$ is nonreversible in $\perm{\setInput}$. The complexity depends
polynomially from $\setSizeK$ and $\dimension$. 
}{1310, 1311}
\end{cor}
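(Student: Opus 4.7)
The plan is to apply Lemma~\ref{lem:reversibility-characterisation} in order to collapse the question about $\vec y$ and the (infinite) set $\perm{\setInput}$ to a question about the \emph{single} vector $\Proj(\vec{y})\in\ktuple{\dimension}{\Z}$ and the \emph{finite} set $\Proj(\setInput)\subseteq\ktuple{\dimension}{\Z}$, and then to reduce the resulting classical reversibility question to linear programming over the rationals. Note first that $\Proj$ is invariant under data permutations, so $\Proj(\perm{\setInput})=\Proj(\setInput)$; hence, by Lemma~\ref{lem:reversibility-characterisation}, testing whether $\vec{y}$ is reversible in $\perm{\setInput}$ is the same as testing whether $-\Proj(\vec{y})\in\sums{\N}{\Proj(\setInput)}$.

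The key observation is that, because $\vec{y}\in\setInput$ we have $\Proj(\vec{y})\in\Proj(\setInput)$, and for such a vector the nonnegative-integer and nonnegative-rational reversibility coincide. Indeed, if there exist nonnegative rationals $q_1\ldots q_m$ such that
\[
-\Proj(\vec{y})\ =\ \sum_{i=1}^m q_i\cdot\Proj(\vec{v_i}),\qquad \vec{v_i}\in\setInput,
\]
then choosing a common denominator $N\in\N_{>0}$ yields $-N\Proj(\vec{y})=\sum_i (Nq_i)\Proj(\vec{v_i})\in\sums{\N}{\Proj(\setInput)}$, and since $(N-1)\Proj(\vec{y})\in\sums{\N}{\Proj(\setInput)}$ as well, adding the two identities delivers $-\Proj(\vec{y})\in\sums{\N}{\Proj(\setInput)}$. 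The opposite direction is trivial.

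Consequently, the task reduces to checking feasibility of the linear program
\[
-\Proj(\vec{y})\ =\ \sum_{\vec{v}\in\setInput} q_{\vec{v}}\cdot \Proj(\vec{v}),\qquad q_{\vec{v}}\geq 0,
\]
over the rationals, which is decidable in polynomial time in $\dimension$ and in $\size{\setInput}$. It remains to note that computing each $\Proj(\vec{v})=\sum_{x\in\kset{\setSizeK}{\setD}}\vec{v}(x)$ from the explicit (finitely supported) encoding of $\vec{v}$ costs time polynomial in $\setSizeK$, $\dimension$, and the description length of $\vec{v}$; so the overall procedure runs in polynomial time with polynomial dependence on $\setSizeK$ and $\dimension$, as claimed. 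There is no real obstacle here; the only subtlety to be careful about is the passage from rational to integer reversibility, which crucially uses the hypothesis $\vec{y}\in\setInput$.
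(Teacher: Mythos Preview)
Your proof is correct and follows essentially the same route as the paper: reduce via Lemma~\ref{lem:reversibility-characterisation} to reversibility of $\Proj(\vec{y})$ in the finite set $\Proj(\setInput)$, observe that because $\Proj(\vec{y})\in\Proj(\setInput)$ one may pass freely between nonnegative-integer and nonnegative-rational coefficients (the paper phrases this as ``there exists $l$ with $-l\,\Proj(\vec{y})\in\sums{\N}{\Proj(\setInput)}$'', you phrase it via clearing a common denominator), and then invoke polynomial-time linear programming. Your write-up is slightly more explicit about the rational-to-integer step and about the cost of computing the projections, but the underlying argument is the same.
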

\begin{proof}[Proof of Corollary~\ref{cor:fast test}.]
Let $\vec{y}\in \setInput$\changed{.}{. The data vector}{1312} $\vec{y}$ is reversible in $\perm{\setInput}$ if\changed{}{,}{325} and only if\changed{}{,}{325} 
there is $l\in \N$ such that $-l\cdot \Proj({\vec{y}})\in 
\sums{\N}{\Proj({\setInput})}$. Indeed, it is sufficient to add $(l-1) \Proj({\vec{y}})$. \changed{Thus }{Thus, }{177}the question about reversibility is 
equivalent to the question 
\changed{if}{of whether}{1314} $\Proj({\vec{y}})\in \sums{\Q_+}{\Proj({\setInput})}$, where $\Q_+$ stands for nonnegative rationals. The last question is known as the linear programming 
problem and is known to be solvable in \PTIME\ \cite{khachiyan1979polynomial,DBLP:conf/stoc/CohenLS19}.
\end{proof}

\begin{lem}\label{lem:boundeddatasolutions}
Let $\setInput_1$ and $\setInput_2$ form the partition of the set $\setInput$ such that $\setInput_1$ is the set of data vectors 
reversible in $\perm{\setInput}$ and $\setInput_2$ is the set of data vectors \changed{non-reversible}{nonreversible}{1319} in $\perm{\setInput}$. 
Suppose $\vec{v}$ can be expressed as a sum
\changed{
\begin{equation}\label{eq:lembound}
\vec{v}=\left(\sum_{\vec{v_i}\in \perm{\setInput_1}} a_i \vec{v_i}
\right)+\left(\sum_{\vec{w}\in \perm{\setInput_2}}b_{\vec{w}} \vec{w}
\right) \text{ where } a_i,b_{\vec{w}}\in \N.
\end{equation}
}
{
\begin{equation}\label{eq:lembound}
\vec{v}=\left(\sum_{\vec{y}\in \perm{\setInput_1}} a_{\vec{y}} \vec{y}
\right)+\left(\sum_{\vec{w}\in \perm{\setInput_2}}b_{\vec{w}}\vec{w}
\right) \text{ where } a_{\vec{y}},b_{\vec{w}}\in \N.
\end{equation}
}{1322}
Then the sum $\sum_{\vec{w}\in \perm{\setInput_2}}b_{\vec{w}}$ is bounded exponentially,
precisely
$\sum_{\vec{w}\in\perm{\setInput_2}} b_{\vec{w}}\leq 
\size{\Proj(\setInput_2)}\cdot \
(\norm{\Proj(\setInput)}{1,\infty}+\inorm{\Proj(\vec{v})}+2)^{\dimension+\size{
\setInput}}.$

\end{lem}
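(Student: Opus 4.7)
The plan is to push the equation through the data projection homomorphism $\Proj$, reducing the statement to a plain vector-level instance to which Lemma~\ref{lem:bound-non-rev} directly applies. The whole strategy is a concrete realisation of the diagram of Figure~\ref{fig:5}.

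First I would apply $\Proj$ to both sides of Equation~\ref{eq:lembound}. Since $\Proj$ is linear and, crucially, invariant under data permutations (a straightforward check: $\Proj(\vec y\circ \pi)=\sum_{x\in\kset{\setSizeK}{\setD}}\vec{y}(\pi(x))=\sum_{x'}\vec{y}(x')=\Proj(\vec{y})$ because $\pi$ lifts to a bijection on $\kset{\setSizeK}{\setD}$), every term $\Proj(\vec{y})$ with $\vec{y}\in\perm{\setInput_j}$ depends only on the underlying element of $\setInput_j$. Grouping coefficients, I obtain an equation in $\ktuple{\dimension}{\Z}$ of the shape
\[
\Proj(\vec v)\;=\;\sum_{\vec{z}\in \setInput_1} a'_{\vec{z}}\cdot \Proj(\vec{z}) \;+\; \sum_{\vec{u}\in \setInput_2} b'_{\vec{u}}\cdot \Proj(\vec{u}),
\]
with $a'_{\vec{z}},b'_{\vec{u}}\in\N$, and, most importantly, $\sum_{\vec{u}\in\setInput_2}b'_{\vec{u}}=\sum_{\vec{w}\in\perm{\setInput_2}}b_{\vec{w}}$.

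Next I would use Lemma~\ref{lem:reversibility-characterisation} to transfer the reversibility partition of $\setInput$ through $\Proj$: elements of $\setInput_1$ project to vectors that are reversible in $\Proj(\setInput)$, and elements of $\setInput_2$ project to vectors that are nonreversible in $\Proj(\setInput)$. (As a sanity check, $\Proj(\setInput_1)$ and $\Proj(\setInput_2)$ must be disjoint, since a single vector cannot be both reversible and nonreversible in the same set.) Hence the two sums above match precisely the hypothesis of Lemma~\ref{lem:bound-non-rev} with $\setU_1=\Proj(\setInput_1)$, $\setU_2=\Proj(\setInput_2)$, and target $\Proj(\vec v)$.

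Finally I invoke Lemma~\ref{lem:bound-non-rev}, which yields
\[
\sum_{\vec{u}\in\setInput_2} b'_{\vec{u}}\;\leq\;\size{\Proj(\setInput_2)}\cdot \bigl(\norm{\Proj(\setInput)}{1,\infty}+\inorm{\Proj(\vec{v})}+2\bigr)^{\dimension+\size{\Proj(\setInput)}},
\]
and using $\size{\Proj(\setInput)}\leq \size{\setInput}$ together with the coefficient identity from the first step gives the claimed exponential bound on $\sum_{\vec{w}\in\perm{\setInput_2}}b_{\vec{w}}$. The entire argument is almost mechanical once the projection step is in place; the only point requiring care is the bookkeeping ensuring that permuted copies contribute identically under $\Proj$, and that Lemma~\ref{lem:reversibility-characterisation} is applied in the correct direction so that the reversible/nonreversible partition is preserved—this is the step I expect to be the main (though modest) technical obstacle.
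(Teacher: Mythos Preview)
Your proposal is correct and follows essentially the same route as the paper: apply the data projection $\Proj$ to Equation~\ref{eq:lembound}, use permutation invariance to regroup over the finite set $\setInput$, invoke Lemma~\ref{lem:reversibility-characterisation} to see that the reversible/nonreversible partition survives projection, and then read off the bound from Lemma~\ref{lem:bound-non-rev}. You are in fact slightly more explicit than the paper about two bookkeeping points (the permutation invariance of $\Proj$ and the inequality $\size{\Proj(\setInput)}\leq\size{\setInput}$ used to pass from the exponent in Lemma~\ref{lem:bound-non-rev} to the one in the claimed bound), but the argument is the same.
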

\begin{proof}
The proof is based on the bound for a solution in $\N$ of a system of linear equations (Lemma~\ref{lem:bound-non-rev}). By $\Auth$ we denote the set of data 
permutations (bijections $\setD \xrightarrow{} \setD$).
\changed{Indeed we}{We}{1328} can apply the homomorphism $\Proj$ to both sides of Equation~\ref{eq:lembound} and get
\begin{align*}
\Proj(\vec{v})=\Proj\left(\sum_{\vec{y}\in \perm{\setInput_1}} a_{\vec{y}}\vec{y}
\right)+\Proj\left(\sum_{\vec{w}\in \perm{\setInput_2}}b_{\vec{w}}\vec{w}
\right) 
\end{align*}
and further
\begin{align}\label{eq:projection}
\Proj(\vec{v})=\left(\sum_{\vec{y}\in \perm{\setInput_1}}a_{\vec{y}} \Proj(\vec{y})
\right)+\left(\sum_{\vec{w}\in \perm{\setInput_2}}b_{\vec{w}} \Proj(\vec{w})
\right)
\end{align}
\begin{align*}
\Proj(\vec{v})=\left(\sum_{\vec{y}\in \setInput_1} \Proj(\vec{y}) \sum_{\pi\in \Auth} a_{\vec{y},\pi} 
\right)+\left(\sum_{\vec{w}\in \setInput_2} \Proj(\vec{w}) \sum_{\pi\in \Auth} b_{\vec{w},\pi} 
\right)
\end{align*}
\changed{}{Where $\sum_{\pi\in \Auth} a_{\vec{y},\pi}$ and $\sum_{\pi\in \Auth} b_{\vec{w},\pi}$ are obtained by grouping
elements of the sums in Equations~\ref{eq:projection} by $\Proj(\vec{y})$ and $\Proj(\vec{w})$, respectively.}{1337}

But\changed{ due to}{, because of}{200} Lemma~\ref{lem:reversibility-characterisation} we know that $\Proj$ preserves reversibility so the bound on the sum 
$\sum_{\vec{w}\in \perm{\setInput_2}}b_{\vec{w}}= \sum_{\vec{w}\in \setInput_2}\sum_{\pi\in \Auth} b_{\vec{w},\pi}$ can be taken from Lemma~\ref{lem:bound-non-rev}, 
namely
$\sum_{\vec{w}\in \perm{\setInput_2}} b_{\vec{w}}\leq 
\size{\Proj(\setInput_2)}\cdot \
(\norm{\Proj(\setInput)}{1,\infty}+\inorm{\Proj(\vec{v})}+2)^{\dimension+\size{\setInput}}.$
\end{proof}

\newcommand{\setSbar}{\widetilde{\setSupport}}
\newcommand{\setShat}{\widehat{\setSupport}}

\begin{thm}\label{thm:prob2aux}
Let $\setInput_1, \setInput_2$ be a partition of $\setInput$ into elements reversible and \changed{non-reversible}{nonreversible}{1319} in $\perm{\setInput}$, 
respectively.   
Suppose, $s_{max}\eqdef max(\setof{\size{\support{\vec{w}}}}{\vec{w}\in \setInput_2}).$ 
Let $\setSbar\subset \setD$ \changed{is}{be}{NULL} a set such that
\changed{$\support{\vec{v}}\subseteq \setSbar$}{$\vec{v}$ is supported by $\setSbar$}{1345} and 
\begin{equation}
\size{\setSbar}=\size{\support{\vec{v}}} + \\
s_{max}\cdot \bigg(\size{\Proj(\setInput_2)}\cdot \
(\norm{\Proj(\setInput)}{1,\infty}+\inorm{\Proj(\vec{v})}+2)^{\dimension+\size{
\setInput}}\bigg).
\end{equation}
Finally, let $\widetilde{\setInput_2} \subset \perm{\setInput_2}$ \changed{is}{be}{1348} the set of all data vectors in 
\changed{$\perm{\setInput}$}{$\perm{\setInput_2}$}{1348} that are supported by $\setSbar$.

Then $\vec{v}\in \sums{\N}{\perm{\setInput}}$ if\changed{}{,}{325} and only if\changed{}{,}{325}
\[
\vec{v} =\sum_{\vec{y}\in \perm{\setInput_1}}  a_{\vec{y}} \vec{y} +\sum_{\vec{w}\in \widetilde{\setInput_2}} b_{\vec{w}} \vec{w} \text{ where } a_{\vec{y}}, b_{\vec{w}}\in \N. 
\]
Note that the second sum is only over vectors supported by $\setSbar.$
\end{thm}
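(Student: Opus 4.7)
The plan is to derive Theorem~\ref{thm:prob2aux} as an essentially direct consequence of Lemma~\ref{lem:boundeddatasolutions}, the bound on nonreversible contributions, combined with a simple counting/pigeonhole argument that lets us permute data values to fit within $\setSbar$.

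The ``if'' direction is immediate because $\perm{\setInput_1} \cup \widetilde{\setInput_2} \subseteq \perm{\setInput}$, so any expression of $\vec{v}$ of the specified form already witnesses $\vec{v}\in\sums{\N}{\perm{\setInput}}$. For the ``only if'' direction, I would start from an arbitrary $\N$-solution
\[
\vec{v}=\sum_{\vec{y}\in \perm{\setInput_1}} a_{\vec{y}}\,\vec{y} \;+\; \sum_{\vec{w}\in \perm{\setInput_2}} b_{\vec{w}}\,\vec{w},
\]
and invoke Lemma~\ref{lem:boundeddatasolutions} to bound the total multiplicity $N \eqdef \sum_{\vec{w}} b_{\vec{w}}$ by $\size{\Proj(\setInput_2)}\cdot(\norm{\Proj(\setInput)}{1,\infty}+\inorm{\Proj(\vec{v})}+2)^{\dimension+\size{\setInput}}$. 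The number of distinct $\vec{w}$ appearing with nonzero coefficient is at most $N$, and each has support size at most $s_{max}$, so the set $S \eqdef \bigcup_{\vec{w}\colon b_{\vec{w}}>0}\support{\vec{w}}$ has size at most $s_{max}\cdot N$.

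Next, I would construct a data permutation $\pi\colon \setD\to\setD$ that is the identity on $\support{\vec{v}}$ and maps $S\setminus\support{\vec{v}}$ injectively into $\setSbar\setminus\support{\vec{v}}$. This is possible because
\[
\size{S\setminus \support{\vec{v}}} \;\leq\; s_{max}\cdot N \;=\; \size{\setSbar}-\size{\support{\vec{v}}} \;=\; \size{\setSbar\setminus\support{\vec{v}}},
\]
and both $\setD\setminus\support{\vec{v}}$ and its complement in $\setD$ are infinite, so any partial injection of finite subsets of $\setD\setminus\support{\vec{v}}$ extends to a bijection of $\setD$ fixing $\support{\vec{v}}$ pointwise. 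Because $\pi$ fixes $\support{\vec{v}}$ pointwise and stabilizes $\setD\setminus\support{\vec{v}}$ setwise, we have $\vec{v}\circ\pi^{-1}=\vec{v}$.

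Finally, I would apply $\pi$ to the solution: $\vec{v}=\vec{v}\circ\pi^{-1}=\sum_{\vec{y}} a_{\vec{y}}(\vec{y}\circ\pi^{-1})+\sum_{\vec{w}} b_{\vec{w}}(\vec{w}\circ\pi^{-1})$. Each $\vec{y}\circ\pi^{-1}$ lies in $\perm{\setInput_1}$, and each $\vec{w}\circ\pi^{-1}$ lies in $\perm{\setInput_2}$ with support $\pi(\support{\vec{w}})\subseteq\setSbar$, hence in $\widetilde{\setInput_2}$. Relabelling indices gives the desired expression. There is no real obstacle here since all the technical work sits in Lemmas~\ref{lem:reversibility-characterisation}~and~\ref{lem:boundeddatasolutions}; the only thing to be careful with is the direction of composition in the identity $\support{\vec{w}\circ\pi^{-1}}=\pi(\support{\vec{w}})$ and verifying that $\pi$ being identity on $\support{\vec{v}}$ really yields $\vec{v}\circ\pi^{-1}=\vec{v}$, which follows by case analysis on whether a $\setSizeK$-set $x$ is contained in $\support{\vec{v}}$ or not.
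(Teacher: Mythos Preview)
Your proof is correct and follows essentially the same approach as the paper: bound the total nonreversible multiplicity via Lemma~\ref{lem:boundeddatasolutions}, bound the size of the union of their supports by $s_{max}\cdot N$, and then choose a data permutation fixing $\support{\vec{v}}$ that pushes those supports into $\setSbar$. Your write-up is in fact a bit more careful than the paper's (you explicitly justify the existence of $\pi$ and the identity $\vec{v}\circ\pi^{-1}=\vec{v}$), but the argument is the same.
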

\begin{proof}
Implication from right to left is trivial. Implication from left to right is \changed{the}{a}{1355} consequence of 
Lemma~\ref{lem:boundeddatasolutions}. 
Suppose 
$\vec{v} =\sum_{\vec{y}\in \perm{\setInput_1}}  a_{\vec{y}}' \vec{y} +\sum_{\vec{w}\in \perm{\setInput_2}} b_{\vec{w}}' \vec{w}.$
Let $\setShat$ be the union of supports of $\vec{w}\in \perm{\setInput_2}$ that appear with nonzero coefficients in the sum above.
Observe $\size{\setShat} +\size{\support{\vec{v}}}\leq \size{\setSbar}$ due to Lemma~\ref{lem:boundeddatasolutions} and the definition of $\size{\setSbar}$. 
Let $\pi$ be a data permutation that
is the identity on $\support{\vec{v}}$ and injects $\setShat$ into $\setSbar$.
Then, $\vec{v}=\vec{v}\circ \pi^{-1}=\sum_{\vec{y}\in \perm{\setInput_1}}  a_{\vec{y}}' \vec{y}\circ\pi^{-1} +\sum_{\vec{w}\in \perm{\setInput_2}} b_{\vec{w}}' 
\vec{w}\circ 
\pi^{-1},$
as required\changed{}{\ (since $\vec{w}\circ 
\pi^{-1}\in \widetilde{\setInput_2}$)}{1362}.
\end{proof}

\begin{proof}[Proof of Theorem~\ref{thm:problem1}]
\changed{Due to}{Because of}{200} Theorem~\ref{thm:prob2aux}, in \NETIME\ it is possible to construct an \changed{instance of the
$\Z$-solvability with data problem of exponential size}{exponential size instance of $\Z$-solvability with data problem}{1365}, that has a solution 
if\changed{}{,}{325} and only if\changed{}{,}{325} 
the original problem has a solution. Precisely, \changed{using the notation from Theorem~\ref{thm:prob2aux} we ask
if 
$\vec{v}-\sum_{\vec{w}\in \setInput_2\circ \Pi} b_{\vec{w}} \vec{w} \in \sums{\Z}{\perm{\setInput_1}}$}
{we ask
if
$\vec{v}-\sum_{\vec{w}\in \widetilde{\setInput_2}} b_{\vec{w}} \vec{w} \in \sums{\Z}{\perm{\setInput_1}}$, where $\widetilde{\setInput_2}$
is defined in Theorem~\ref{thm:prob2aux}
}{1367}. According to Theorem~\ref{thm:problem2}
it can be solved in \ETIME. \changed{Thus }{Thus, }{177}the algorithm works in \NETIME.
\end{proof}

\begin{rem}\label{rem:generalization}
The presented reduction works in the same way for a more general class of data vectors of the form $\ktuple{\setSizeK}{\setD}\xrightarrow{} 
\ktuple{\dimension}{\Z}$. 
\end{rem}

\section{Conclusions and future work.}\label{sec:conclusions}

We have shown \ETIME\ and \PTIME\ upper\changed{\ complexity}{}{1372} bounds for \changed{}{the\ }{1372}$\N$- and $\Z$-solvability problems 
over sets (unordered tuples) of data, respectively,
by (1) reducing the former problem to the latter one (with nondeterministic exponential blowup),
(2) reformulating the latter problem in terms of (weighted) hypergraphs, 
and (3) solving the corresponding hypergraph problem by providing a local characterisation
testable in polynomial time.

The characterisation of $\Z$-solvability provided by Theorem~\ref{thm:core} identifies several simple to test properties that all together are equivalent to 
$\Z$-solvability. \changed{Each of properties}{Each of the properties}{1378} is independent of others and may be used as a partial test for 
\changed{non-reachability}{nonreachability}{107} in data nets. 
It is not clear if in every \changed{}{industrial}{1380} application we should use all of 
them.

The proposed characterisation does not work for directed structures, as all \changed{weight}{$\weight{non}$ morphisms}{1381} of the following nontrivial \changed{graphs}{graph}{1381} are $\vec{0}$.

\begin{tikzpicture}
[place/.style={circle,draw=\mycolorOne!50,fill=\mycolorOne!20,thick,inner sep=0pt,minimum size=4mm}]
\node[place]      (A1)         {$\da$};
\node[place]      (B1)       [left=of A1]  {$\db$};
\draw [->] (B1) [bend left] to node [above] {$1$} (A1); 
\draw [->] (A1) [bend left]to node [below] {$-1$} (B1) ;
\end{tikzpicture}

\changed{Thus }{Thus, }{177}we should somehow refine the characterisation for data vectors going from ${\setSizeK}$-tuples to 
$\dimension$-dimensional vectors of \changed{integer 
numbers}{integers}{1387}, $\ktuple{\setSizeK}{\setD} \xrightarrow{} \ktuple{\dimension}{\Z}$. 
For example, for directed graphs the condition 
stated in Equation~\ref{eq:ass1} can be formulated as follows:
For every vertex $\da$ we define $\overrightarrow{\weight{da}}(\hypergraphH)=(out, in)$ where $out$ is the sum of all edges in $\hypergraphH$ \changed{outgoing from}{leaving}{1389} 
$\da$ 
and $in$ is the sum all 
edges 
in $\hypergraphH$ \changed{incoming to}{entering}{1390} $\da$. Now\changed{}{, instead of 
\begin{equation*}
{\weight{da}}(\hypergraphH) \in \sums \Z {\setof{{\weight{da'}}(\hypergraphH')}{\\ \hypergraphH' = ({\setVertices}', \mu') 
\in 
{\setInput}, \da' \in {\setVertices}'}} 
\end{equation*}
we have
}{NULL}
\begin{equation}
\label{eq:ass1_direc}
\overrightarrow{\weight{da}}(\hypergraphH) \in \sums \Z {\setof{\overrightarrow{\weight{da'}}(\hypergraphH')}{\\ \hypergraphH' = ({\setVertices}', \mu') 
\in 
{\setInput}, \da' \in {\setVertices}'}} 
\end{equation}

This is still not sufficient \changed{due to}{because of}{200} the following example with all weights equal \changed{}{to}{1393} $\vec{0}$:

\begin{tikzpicture}
[place/.style={circle,draw=\mycolorOne!50,fill=\mycolorOne!20,thick,inner sep=0pt,minimum size=4mm}]
\node[place]      (A)                       {$\da$};
\node[place]      (B)       [below left=of A]  {$\db$};
\node[place]      (C)      [below right=of A] {$\dc$};
\draw [->] (B) [bend left] to node [above] {$-1$} (A); 
\draw [->] (A) [bend left]to node [above] {$-1$} (C);
\draw [->] (C) [bend left]to node [below] {$-1$} (B);

\draw [->] (A) to node [below] {$1$} (B);
\draw [->] (C)  to node [below] {$1$} (A); 
\draw [->] (B)  to node [above] {$1$} (C); 
\end{tikzpicture} 

To deal with this example we need to define another invariant, namely for every pair of \changed{nodes}{vertices}{NULL}\changed{ its weight}{\ $\da,\db$ the 
value of
$\weight{{da,db}}$}{1399} is a pair of \changed{numbers,}{numbers:}{1399}
sum of \changed{edges}{two edges between vertices $\da$ and $\db$,}{1399} and the absolute value of \changed{their difference}{the difference of the 
weights of the two edges}{1399}. Unfortunately, it is not clear how to generalise such invariants for hypergraphs.

Another approach to \changed{define weights}{a definition of $\weight{non}$}{1401-1404} for directed hypergraphs is to linearly order vertices and \changed{add 
only hyperedges that are oriented in the same way with 
respect 
to the order.}{ group edges depending on their orientation with respect to the order. Calculating $\weight{non}$ we add edges within the 
groups.}{1401-1404}\changed{}{\ For example, if the order is $\db<\da<\dc$ then $\weight{da}$ of the triangle above is equal to $[-2,2]$.}{1401-1404} 
\changed{However, in}{In}{1401-1404} this approach the weight depends on the orientation, which does not look as a desired solution.

Although we do not have the theorem which states that the $\Z$-solvability is equivalent to some 
lifted version of local characterisation\changed{}{,\ the local characterizations may be useful.}{1406} \changed{}{Observe that}{1408} any of 
\changed{}{the}{1408} 
described invariants \changed{ may be used as a partial test for the nonexistence of the solution. Indeed, they are}{is}{1408} homomorphisms from 
data vectors in $\ktuple{\setSizeK}{\setD} \xrightarrow{} \ktuple{\dimension}{\Z}$ to \changed{$\ktuple{\dimension}{\Z} \cdot n$}{$\ktuple{\dimension\cdot 
n}{\Z 
}$}{1409} for some $n\in \N$ depending on the 
homomorphism (for example in case of the $(out, in)$ weight the value $n=2$). \changed{Thus }{Thus, }{177}we identified a family of 
new heuristics that 
can be used in the 
analysis of data nets.

We also should mention that although in the proof of Theorem~\ref{thm:problem2} we use data vectors with \changed{the}{}{1413} image in $\ktuple{\dimension}{\Z}$ all proofs 
\changed{works}{work}{1414} for any Abelian \changed{group}{groups}{1414} in which 
the system of equations~\ref{eq:unique_solutions} has a unique solution. In the general case, it corresponds to the Abelian group in which
Lemma~\ref{lem:main_matrix}\changed{~in~Section~\ref{sec:matrix}}{}{1415} holds. \changed{It}{Lemma~\ref{lem:main_matrix}}{1415} speaks about maximality of the rank of some specific family of matrices.

We leave two interesting open problems. 
First, concerning the complexity of \changed{the}{}{1417} $\N$-solvability, we leave \changed{the gap}{a gap}{1418} between
the \NPTIME\ lower bound and our \NETIME\ upper bound.
Currently, available methods seem not suitable for closing this gap.
Second, the line of research we establish in this paper calls for continuation, in particular for
investigation of the solvability problems over the \emph{ordered} tuples of data.
\changed{Due to}{Because of}{200} Remark~\ref{rem:generalization} we know that the problematic bit is a generalisation of the Theorem~\ref{thm:problem2}
In our opinion, the solution for this case should \changed{be possible by further development of
the techniques}{be within reach by further developing techniques}{1423} proposed in this paper. 

\section*{Acknowledgment}
Authors thank Prof. Sławomir Lasota as well as many anonymous conference reviewers for suggesting improvements.

\bibliographystyle{alphaurl}
\bibliography{refer.bib}

\end{document}